\documentclass{article}

\usepackage{graphicx} 
\usepackage[left=2cm,right=2cm,top=2cm,bottom=2cm]{geometry} 
\usepackage[x11names]{xcolor} 
\usepackage{enumitem}
\usepackage{pifont}  
\usepackage{comment} 
\usepackage{float} 
\usepackage{multicol} 

\usepackage{authblk}

\usepackage{soul}

\usepackage{mathtools} 
\usepackage{esvect} 
\usepackage{stackrel,amssymb} 
\usepackage{amsmath} 
\usepackage{amsthm} 
\usepackage{mathrsfs} 
\usepackage{cancel} 
\usepackage{mathabx} 
\usepackage{stackengine} 
\usepackage{tablefootnote}
\usepackage{caption}

\usepackage{pgfplots}
\pgfplotsset{compat=1.18}

\usepackage{slashbox} 
\usepackage{diagbox} 
\usepackage{multirow} 

\usepackage{graphicx}      
\usepackage{eso-pic}       
\graphicspath{ {./images/} }

 \usepackage[
     backend=biber,
     style=numeric-comp,
     sorting=none,
   ]{biblatex}
\usepackage{ifthen}
\usepackage{stackengine} 
\usepackage{scalerel}

\usepackage{listings} 

\definecolor{codegreen}{rgb}{0,0.6,0}
\definecolor{codegray}{rgb}{0.5,0.5,0.5}
\definecolor{codepurple}{rgb}{0.58,0,0.82}
\definecolor{backcolour}{rgb}{0.95,0.95,0.92}

\lstdefinestyle{mystyle}{
    backgroundcolor=\color{backcolour},   
    commentstyle=\color{codegreen},
    keywordstyle=\color{magenta},
    numberstyle=\tiny\color{codegray},
    stringstyle=\color{codepurple},
    basicstyle=\ttfamily\footnotesize,
    breakatwhitespace=false,         
    breaklines=true,                 
    captionpos=b,                    
    keepspaces=true,                 
    numbers=left,                    
    numbersep=5pt,                  
    showspaces=false,                
    showstringspaces=false,
    showtabs=false,                  
    tabsize=2
}

\usepackage{hyperref}
\hypersetup{
    colorlinks=true,
    linkcolor=black,
    citecolor=black,      
    urlcolor=black,
    pdftitle={Multipartite entanglement spread},
    pdfpagemode=FullScreen,
}

\usepackage{lscape}
\usepackage{graphicx}

\makeatletter 

\newcommand*\Neginternal[3]{\mathpalette\Neg@{{#1}{#2}{#3}}}
\newcommand*\Neg@[2]{\Neg@@{#1}#2}
\newcommand*\Neg@@[4]{%
  \mathrel{\ooalign{%
    $\m@th#1#4$\cr
    \hidewidth$\m@th#3{#1}\mkern\muexpr#2*2$\hidewidth\cr
  }}%
}

\newcommand*\negslash[1]{\m@th#1\not\mathrel{\phantom{=}}}
\newcommand*\snegslash[1]{\rotatebox[origin=c]{60}{$\m@th#1-$}}
\newcommand*\ssnegslash[1]{\rotatebox[origin=c]{60}{$\m@th#1{\dabar@}\mkern-7mu{\dabar@}$}}
\newcommand*\sssnegslash[1]{\rotatebox[origin=c]{60}{$\m@th#1\dabar@$}}
\makeatother  

\newcommand{\lat}[1]{\textsl{#1}}
\newcommand{\apriori}{\lat{a~priori} }
\newcommand{\ie}{\lat{i.e.}~}

\newcommand{\eg}{\lat{e.g.}~}
 
\renewcommand{\rm}[1]{\text{#1}}

\newcommand{\pa}[1]{\left( #1 \right)}
\newcommand{\pac}[1]{\left[ #1 \right]}
\newcommand{\paa}[1]{\left\{ #1 \right\}}

\DeclareMathOperator{\Tr}{Tr}          
\DeclareMathOperator{\tr}{Tr}          

\DeclareMathOperator{\id}{id}

\renewcommand{\tilde}[1]{\widetilde{#1}}

\DeclareMathOperator{\flip}{\ast}

\DeclareMathOperator{\op}{\circ}

\newcommand{\ot}{\otimes}

\DeclareMathOperator{\spec}{Spec}

\newcommand{\bra}[1]{\left\langle #1 \right\vert} 			
\newcommand{\ket}[1]{\left\vert #1 \right\rangle} 			
\newcommand{\mean}[1]{\left\langle #1 \right\rangle}

\newcommand{\abs}[1]{\left\lvert #1 \right\rvert}

\renewcommand{\bar}[1]{\overline{#1}}

\newcommand\eqdef{\mathrel{\overset{\makebox[0pt]{\mbox{\normalfont\tiny\sffamily def}}}{=}}}

\newcommand{\bb}[1]{\mathbb{#1}}

\renewcommand{\H}{\mathcal{H}}

\newcommand{\MD}[1]{\cal{M}}

\newcommand{\K}{\mathrm{K}}

\newcommand{\RM}{\mathrm{RM}}

\newcommand{\PT}{\mathrm{PT}}
\newcommand{\GHZ}{\mathrm{GHZ}}

\theoremstyle{plain}
\newtheorem{theo}{Theorem}[section]
\newtheorem{prop}[theo]{Proposition}
\newtheorem{lem}[theo]{Lemma}
\newtheorem{rem}[theo]{Remark}
\newtheorem{ex}[theo]{Example}
\newtheorem{defi}[theo]{Definition}
\newtheorem{cor}[theo]{Corollary}

\renewcommand{\cal}[1]{\mathcal{#1}}

\definecolor{cadmiumred}{rgb}{0.89, 0.0, 0.13}
\newboolean{ifprintcolor}	

\ifthenelse{\boolean{ifprintcolor}}{}{}

\newcommand{\LU}{\mathsf{LU}}
\newcommand{\LO}{\mathsf{LO}}
\newcommand{\LOCC}{\mathsf{LOCC}}

\newcommand{\I}{%
  \textup{\uppercase\expandafter{\romannumeral 1}}%
}
\newcommand{\II}{%
  \textup{\uppercase\expandafter{\romannumeral 2}}%
}

\newcommand{\simLU}{\underset{\LU}{\sim}}
\newcommand{\toLO}{\underset{\LO}{\longrightarrow}}

\DeclareMathOperator{\Span}{Span}

\DeclareMathOperator{\conn}{c}

\newcommand\restr[2]{{
  \left.\kern-\nulldelimiterspace 
  #1 
  \vphantom{\big|} 
  \right|_{#2} 
  }}

\newcommand{\cG}{\mathcal{G}}
\newcommand{\pos}{\mathcal{P}}
\newcommand{\dom}{\triangleleft}
\newcommand{\Mel}{\mathrm{M}}
\newcommand{\SpreadOne}{\mathsf{P}}
\newcommand{\SpreadTwo}{\mathsf{Q}}
\newcommand{\Seq}{\mathsf{Spec}}
\newcommand{\Comp}{\mathsf{Comp}}
\newcommand{\SpecFun}{\mathsf{SpecFun}}

\usepackage[normalem]{ulem}

\title{\bf{Multipartite entanglement spreads}}
\author[1]{Sylvain Carrozza\thanks{sylvain.carrozza@ube.fr}}
\author[1,2]{Johann Chevrier\thanks{johann.chevrier@ube.fr}}
\author[2]{Luca Lionni\thanks{luca.lionni@ens-lyon.fr}}
\affil[1]{\normalsize{Universit\'{e} Bourgogne Europe, CNRS, IMB UMR 5584, F-21000 Dijon, France}}
\affil[2]{\normalsize{CNRS, ENS de Lyon, LPENSL, UMR5672, 69342 Lyon cedex 07, France}}
\date{\today}

\begin{document}

\maketitle

\begin{abstract}
    \noindent The entanglement structure of a bipartite pure state is characterized by its entanglement spectrum, or equivalently, by the family of R\'{e}nyi-$k$ entanglement entropies with integer $k$. Taking the difference of two such entropies defines a quantity known as an \emph{entanglement spread}, that is monotonous under local operations, and that vanishes if and only if the entanglement spectrum of the state is flat. We introduce a multipartite generalization of this notion which, for a fixed number $D\geq 3$ of parties, results in a collection of maps indexed by pairs of $D$-edge-colored graphs (obeying some condition). Each member of this collection takes the form of a difference of two multipartite R\'{e}nyi entanglement entropies, each associated to a local unitary polynomial invariant (known in this context as a \emph{trace-invariant}), and is shown to be monotonous under local operations. We then argue that a previously introduced family of so-called \emph{hypergraph-tensor} (HT) states can be understood as a multipartite counterpart to the family of flat bipartite states. Indeed, we first prove that, when $D=3$, a tripartite pure state is HT if and only if it minimizes a particular (infinite) family of entanglement spreads. Second, we embed the set of HT states into the much larger family of \emph{spectral hypergraph-tensor} (SHT) states, defined by a new Ansatz we introduce. We then prove that HT states can be uniquely characterized as the minimizers of some fixed entanglement spread among SHT states, for arbitrary $D\geq 3$. We also investigate SHT states in their own right: in particular, we characterize their orbits under local unitary transformations. Finally, along the way, the asymptotic large-$N$ expectation values of a number of entanglement spreads are computed in the Haar-random state of local dimension $N$. 
\end{abstract}

\setcounter{tocdepth}{2}
\tableofcontents

\section{Introduction}

This paper is part of a series of works by the same authors dedicated to the study of multipartite entanglement \cite{Carrozza:2026qcf, Factorization2026}. Our approach is based on the \emph{local unitary} framework, whereby inequivalent multipartite entanglement structures are mathematically identified to distinct orbits of the group of \emph{local unitary} ($\LU$) transformations. Such orbits can be separated by a certain family of $\LU$-invariant polynomials known as \emph{trace-invariants} \cite{Bonzom:2012hw,gurau_random_2017,Collins2025}, which take center stage in our work. In the absence of a sufficiently universal concept of singular value for higher order tensors, evaluations of such trace-invariants on multipartite pure states can indeed play a similar role as the bipartite notion of \emph{entanglement spectrum} (which can be defined as the collection of non-zero singular values associated with the matrix representation of a bipartite pure state). More precisely, one can understand trace-invariant evaluations as multipartite generalizations of power sums of the entanglement spectrum. 

This perspective is quite rewarding: not only can one label each orbit of $\LU$-equivalent pure states by the collection of these trace-invariant evaluations, they can also be used to devise simple tests of separability in multipartite state spaces, and to characterize order relations in resource theories of entanglement (such as those based on the \emph{Local Operations} ($\LO$) \cite{Schmid:2020pgv, PhysRevA.53.2046} and \emph{Local Operations and Classical Communication} ($\LOCC$) \cite{PhysRevA.59.1070, PhysRevLett.83.436, PhysRevLett.83.3566, PhysRevLett.127.150503, Chitambar2014} frameworks). We refer the interested reader to \cite{Carrozza:2026qcf} for \eg: a characterization of separable $D$-partite pure states by means of so-called \emph{genuinely $D$-partite} trace-invariants (see Sec.~2.5, Thm.~2.18), a complete characterization of the $\LO$ preorder on $D$-partite pure states (see Sec.~2.7, Prop.~2.28), as well as a partial characterization of the $\LOCC$ preorder (see Sec.~5). 

While not in itself new (see \cite{PhysRevA.58.1833,AnthonySudbery_2001,Makhlin2002Nonlocal,10.1063/5.0028856,Vrana:2011ehx,Vrana:2011gaq,Turner2017} for a selection of relevant earlier works, and \cite{Dong:2021clv,Penington:2022dhr,SciPostPhys.20.3.086,Gadde:2022cqi,Gadde:2023zzj,Hu:2026bhg,Szalay:2026yli,Mallick:2026bji} for more recent developments that are complementary and/or closely related to ours), this approach to entanglement classification based on polynomial invariants becomes particularly effective with the help of a combinatorial toolbox allowing one to seamlessly represent and manipulate trace-invariants. This toolbox originates from a correspondence between combinatorial objects known as \emph{$D$-colored graphs} and triangulated topological (pseudo)manifolds of dimension $D$. First observed and taken advantage of in topology \cite{Crystallisation,Ferri1986,Gagliardi1981Regular,Gagliardi2,Ferri}, it was later leveraged in the context of the tensor models approach to random geometry and quantum gravity \cite{Gurau:2010ba, Bonzom:2011zz,Gurau:2011xp,Bonzom:2012hw,Gurau:2013cbh,gurau_random_2017,lionni_colored_2018,Gurau:2024nzv}. It emerged from this literature that $D$-colored graphs are also in one-to-one correspondence with tensor-invariants, which parametrize the Boltzmann weights allowed by the symmetries of particular random tensor ensembles \cite{Bonzom:2012hw,gurau_random_2017}. More broadly, $D$-colored graphs label interaction terms of certain types of tensorial field theories \cite{BenGeloun:2011rc,BenGeloun:2012pu} and group field theories \cite{Oriti_2012,Carrozza:2012uv,Samary:2012bw,Carrozza:2013wda,carrozza_tensorial_2014}, which can be thought of as quantum field theories based on an exotic notion of locality (that is precisely defined in terms of the combinatorial structure of $D$-colored graphs). 

Our work exploits the one-to-one correspondence between trace-invariants and $D$-colored graphs, together with the trove of combinatorial insights obtained in the random tensor literature, to investigate natural questions in the theory of multipartite entanglement. As is clear from our previous contributions \cite{Carrozza:2026qcf, Factorization2026}, and will be confirmed again in the present paper, the graph encoding of trace-invariants plays an essential role in our approach: it allows to effectively represent and reason about any such invariant, and most importantly, it greatly facilitates the \emph{design} of trace-invariants respecting certain desired properties. 

\medskip

In \cite{Carrozza:2026qcf}, an infinite family of $\LO$-monotones was introduced and investigated. Thought of as multipartite generalizations of R\'{e}nyi-$k$ entanglement entropies $\{S_\rm{R}^{(k)}\}_{k \geq 2}$ (with integer $k$), these monotones are labeled by $D$-colored graphs and support in this sense a much richer information landscape than in the bipartite setting (see also \cite{SciPostPhys.20.3.086,Gadde:2022cqi,Gadde:2023zzj} for a similar approach to ours). In the present work, we generalize this construction further with the introduction of a family of $\LO$-monotones labeled by \emph{pairs} of $D$-colored graphs. In the bipartite setting, this extended family contains certain differences of R\'{e}nyi entanglement entropies, $\{\mathsf{D}_{m,n} \eqdef S_\rm{R}^{(m)}- S_\rm{R}^{(n)}\}_{2\leq m <n}$ (with integer $m,n$), which are known in the literature as \emph{entanglement spreads} \cite{Schmid:2020pgv,Harrow:2010oah,Anshu:2020vyv}. These quantities, which define non-negative $\LO$-monotones, provide quantitative measures of the \emph{spread} of the entanglement spectrum associated with a bipartite pure state (hence their name). In particular, the entanglement spectrum of a bipartite pure state $\ket\psi$ is \emph{flat} (meaning that it consists in a unique non-zero singular value) if and only if its entanglement spread vanishes:
\begin{equation}\label{eq:bipartite_prop_1}
    \ket\psi \; \text{has a flat entanglement spectrum} \quad \Longleftrightarrow \quad \mathsf{D}_{m,n}(\ket\psi) =0\,.
\end{equation}
Another interesting property of the bipartite entanglement spread $\mathsf{D}_{m,n}$ is that it has a finite asymptotic expectation value in the Haar-random state of local dimension $N$. Namely, if $\ket{\varphi_N}$ denotes a random pure state chosen uniformly at random in the bipartite state space $(\mathbb{C}^N)^{\otimes 2}$ (or, more rigorously, in the underlying projective Hilbert space), the random variable $\mathsf{D}_{m,n} (\ket{\varphi_N})$ concentrates around a typical value
\begin{equation}\label{eq:bipartite_prop_2}
    \langle \mathsf{D}_{m,n} (\ket{\varphi_N}) \rangle \underset{N\to +\infty}{=} O(1)
\end{equation}
at large $N$. One can interpret this as saying that a random uniform state is indistinguishable from a state with flat entanglement spectrum at leading order in $N$ (and this is of course a consequence of Page's seminal observation \cite{Page:1993df} that a random uniform state is indistinguishable from a maximally entangled state at leading order in $N$). The main objective of this paper is to understand if and how these properties of bipartite entanglement spreads generalize to the multipartite setting. We will find out that: 1) $\LO$-monotonicity is always guaranteed by construction; 2) even though it is not universally valid, there is a sense in which the observation of \eqref{eq:bipartite_prop_1} can be generalized to particular families of mutipartite entanglement spreads; 3) property \eqref{eq:bipartite_prop_2} is even less universally valid in the multipartite context, but we will show how to compute the Haar expectation value of most of the multipartite entanglement spreads of interest to us (such calculations can be quite hard in general, as generalized R\'{e}nyi entropies may fail to concentrate around their mean in the multipartite setting; see \cite{Factorization2026}).  

\paragraph{Summary of our main results, and outline of the paper.} In {\bf{Sec.~\ref{sec:General}}}, we define multipartite entanglement spreads in general, and derive some of their basic properties. Subsection \ref{subsec:background} introduces our main notations regarding trace-invariants, and recalls some useful results from our previous works \cite{Carrozza:2026qcf, Factorization2026}; a reader already familiar with this background may safely skip all or parts of this discussion. The core results of the section are to be found in Subsec.~\ref{subsec:def_spreads}, where a family of $D$-partite entanglement spreads $\{R_{G,H}\}$ is introduced. Indexed by a pair $(G,H)$ of $D$-colored graphs obeying some condition (see Def.~\ref{def:entanglement_spread}), those $\LU$-invariant functions are shown in Prop.~\ref{prop:LO_monotones_more} to be non-increasing under $\LO$ transformations; they thus define $\LO$-monotones for $D$-partite pure states. We also discuss how the bipartite family of entanglement spreads $\{\mathsf{D}_{m,n} \}_{2\leq m <n}$ can be recovered in our framework, and introduce our first examples of spreads with $D\geq 3$. 

In {\bf{Sec.~\ref{sec:Examples}}}, we introduce and discuss more thoroughly a number of infinite families of entanglement spreads that will prove relevant in later parts of the paper. Subsection \ref{sec:spreads_binary_op} is focused on entanglement spreads that can be efficiently described in terms of two binary operations on colored graphs, which were previously defined and studied in \cite{Carrozza:2026qcf}: the \emph{vertex-contraction} and \emph{flip} operations. Those two operations can be used to associate an entanglement spread to any graph $G$ with a marked vertex or edge (see Prop.~\ref{prop:Spread_contractionOp} and Prop.~\ref{prop:ES_flip}). Any such spread vanishes on HT states (see Props.~\ref{prop:vertex_spread_HT} and \ref{prop:flip_spread_HT}), and more broadly, its minimizers can be characterized by simple graphical equations (see Eqs.~\eqref{eq:CS_minimizers} and \eqref{eq:CS_minimizers_flip}). Remarkably, its expectation value in the Haar-random state can also be evaluated explicitly at large $N$ (see Props.~\ref{prop:Haar_HoH}, Cor.~\ref{cor:Haar_flip-1}, and Cor.~\ref{cor:Haar_flip-2}), and is not always vanishing at leading order. Subsec.~\ref{subse:tripartite_spreads} introduces two infinite families of tripartite entanglement spreads, denoted $\{ \SpreadOne_{n}^{(c)} \}_{n \in \mathbb{N}^*, 1\leq c\leq 3}$ and $\{ \SpreadTwo_{m,n,p,q}^{(c)}\}_{m,n,p,q\in \mathbb{N}, 1\leq c\leq 3}$ (see Defs.~\ref{def:Pn} and \ref{def:Qmnpq}), which were designed so that their minimizers would obey particularly relevant graphical equations. The first family leads to graphical equations that can be interpreted as implementing local moves known as \emph{separating $1$-dipole} contractions or insertions (see Prop.~\ref{prop:graph_eq_move1}). Likewise, the minimizers of the second family can be analyzed in terms of graphical moves that insert or remove pairs of \emph{non-separating $1$-dipoles} along a common face (see Prop.~\ref{prop:move2}). Again, HT states are always among the minimizers of those entanglement spreads (see Props.~\ref{prop:SpreadOne_alpha_states} and \ref{prop:SpreadTwo_alpha_states}). Regarding random states, the large-$N$ Haar expectation value of any $\SpreadOne_n^{(c)}$ can be straightforwardly evaluated (see Prop.~\ref{prop:SpreadOne_Haar}), but not the same can be said of a generic spread of the form $\SpreadTwo_{m,n,p,q}^{(c)}$ (however, see Eq.~\eqref{eq:typical_value_Q0000} for an explicit evaluation of the typical value of $Q_{0,0,0,0}^{(c)}$ in the Haar-random state). 

{\bf{Sec.~\ref{sec:HTstates}}} develops alternative characterizations of tripartite HT states as minimizers of entanglement spreads. This question is motivated by the observation that, when $D=2$, an arbitrary bipartite pure state $\ket\psi$ is ($\LU$-equivalent to) an HT state if and only if it has a flat entanglement spectrum. By Eq.~\eqref{eq:bipartite_prop_1}, it follows that $\ket{\psi}$ is an HT state if and only if it minimizes any given entanglement spread $\mathsf{D}_{m,n}$ (for some fixed integers $m,n$ obeying $2\leq m <n$), if and only if it minimizes the whole family of entanglement spreads $\{\mathsf{D}_{m,n}\}_{2\leq m < n}$. Can this observation be generalized to $D=3$ (and, eventually, to arbitrary $D$)? Or, said differently, to which extent can one interpret the set of HT states as a legitimate multipartite analog to the family of flat bipartite states? To address these questions, we start out by reviewing key aspects of the $D=2$ story in a way that makes minimal use of the notion of singular value, and is thus susceptible to be generalized to $D=3$ (see subsection \ref{sec:bipartite_charac}). Such a generalization is obtained in subsection~\ref{sec:tripartite_HT}, where we actually provide two distinct (but related) characterizations of tripartite HT states. To begin with, Thm.~\ref{thm:first_charac} establishes that a tripartite state $\ket\psi$ is HT if and only if: (a) it is a minimizer of the whole family of spreads $\{\SpreadOne_n^{(c)}\}_{n \in \mathbb{N}^* , 1\leq c \leq 3}$, and (b) there exists some $N \in \mathbb{N}^*$ such that, for any \emph{maximally single-trace} $3$-colored graph $G$ with $2k$ vertices, $\tr_G (\ket\psi)= N^{1-k}$. The notion of maximally single-trace invariant was first formalized in \cite{Ferrari:2017jgw,,Valette:2019nzp} and is recalled in Sec.~\ref{subsec:background}. In $D=2$, there is exactly one maximally single-trace graph for any given value of $k$, and the associated invariant computes the $k$-purity of the reduced density matrix associated to either one of the two subsystems. As a result, one can check that, when $D=2$, condition (b) above is in fact equivalent to the vanishing of the family of entanglement spreads $\{D_{m,n}\}_{2\leq m <n}$, and thus, it characterizes on its own the set of bipartite HT states. By contrast, the two conditions (a) and (b) above are independent when $D=3$, and jointly characterize the set of tripartite HT states. We can thus interpret the result of Thm.~\ref{thm:first_charac} as a characterization based on two conditions that are equivalent when $D=2$ but become independent when $D=3$. Secondly, we propose another characterization in Thm.~\ref{thm:second_charac}, where condition (b) above is replaced by the following: (b') $\ket\psi$ is a minimizer of the whole family of entanglement spreads $\{ \SpreadTwo_{m,n,p,q}^{(c)}\}_{m,n,p,q\in \mathbb{N}, 1\leq c\leq 3}$. This results in a complete characterization of tripartite HT states as minimizers of entanglement spreads. In contrast to the bipartite case, where evaluating a single entanglement spread is sufficient to assess whether a bipartite state is flat or not, a countably infinite number of entanglement spreads needed to be invoked in this characterization. (Moreover, it is not true anymore that all the $(G,H)$-spreads defined in Def.~\ref{def:entanglement_spread} are minimized by HT states when $D\geq 3$.)

In {\bf{Sec.~\ref{sec:SHTstates}}}, we investigate again the question of the characterization of HT states as minimizers of entanglement spreads, in a context which is both more general and more specific than the previously described theorems. Indeed, while the results of that section apply to arbitrary $D\geq 3$, they are formulated within a restricted space of (inequivalent $\LU$ orbits of) $D$-partite states. Namely, we introduce the set of \emph{spectral hyper-graph states} (SHT) in Def.~\ref{def:SHT_states}, which are labeled by a \emph{spectral function} $\pmb{p}$. This object associates a \emph{spectrum} $\pmb{p}(B)$ to any subset of parties $B \subset\{1, \ldots , D\}$ with $\abs{B}\geq 2$. When the spectrum of $B$ is flat, the $|B|$ parties in $B$ share a $\GHZ$ state, and otherwise, they share a deformed $\GHZ$ state parametrized by the non-flat spectrum $\pmb{p}(B)$. Henceforth, the SHT associated to $\pmb{p}$ is an HT state if and only if $\pmb{p}(B)$ is flat for any $B$. The SHT Ansatz is interesting because it allows for an evaluation of any trace-invariant in closed form; see Eq.~\eqref{eq:Tr_SHT}. Moreover, it is easy to see that \emph{any} bipartite state is ($\LU$-equivalent) to a SHT state (see Lem.~\ref{lem:D2LU_eq_1}). While this is not true anymore of $D$-partite states with $D\geq 3$ (\eg the $\ket{W}$ state is not an SHT state, see Sec.~\ref{sec:LU_SHT}), the SHT Ansatz provides a significant generalization of the notion of HT state,  that we see as worth investigating in its own right. We prove in Thm.~\ref{th:LUSHT} that the spectral function $\pmb{p}$ labels the $\LU$-orbits of SHT states \ie that two SHT states are $\LU$-equivalent if and only if they have the same spectral function. We then introduce a new family of entanglement spreads $\{\mathsf{D}_G^{\mathrm{(MST)}}\}$, that is generated by the vertex-contraction operation on \emph{maximally single-trace} graphs $G$ with more than $2$ vertices (see Prop.~\ref{prop:def_Delta_G}). We finally prove in Prop.~\ref{prop:ES_SHT} that an SHT state $\ket{\psi}$ is an HT state if and only if it minimizes \emph{one} of the entanglement spreads from $\{\mathsf{D}_G^{\mathrm{(MST)}}\}$ above (if and only if it minimizes \emph{all} of them). Furthermore, in that case, we can find an integer $N \in \mathbb{N}^*$ such that $\tr_G \left( \ket\psi \right)= N^{1-k}$ for any maximally-single trace $G$ with $2k$ vertices. Prop.~\ref{prop:ES_SHT} is thus completely analogous to the characterization of flat bipartite states among all bipartite pure states we previously mentioned (where a maximally single-trace invariant with $2k$ vertices simply computes the $k$-purity of the reduced density matrix associated with any one of the two subsystems).

\section{Generalities about multipartite entanglement spreads} \label{sec:General}

\subsection{Background and notations}\label{subsec:background}

In the following, $D$ denotes the number of parties or subsystems in an entanglement scenario, and is therefore an integer larger or equal to $2$. Given two finite-dimensional $D$-partite pure states $\ket\psi$ and $\ket\phi$, we will write $\ket\psi \simLU \ket\phi$ whenever $\ket\psi$ and $\ket\phi$ are local-unitary equivalent, and $\ket\psi \toLO \ket\phi$ whenever $\ket\phi$ can be obtained from $\ket\psi$ by application of a $\LO$ transformation (\ie there is an arrow from $\ket\psi$ to $\ket\phi$ in the $\LO$ preorder on pure states).

\paragraph{Trace-invariants.} Given a $D$-partite pure state $\ket{\psi} \in \H= \H_1 \otimes \cdots \ot \H_D$ and local bases $\paa{\ket{i_c}}_{1 \leq i_c \leq N_c}$ with $N_c = \dim(\H_c) \in \bb{N}^*$, we let the tensor components $\psi_{i_1 \dots i_D}$ be the coefficient appearing in the decomposition of $\ket{\psi}$ in the product basis, namely:
\begin{equation}
    \ket{\psi} = \sum_{i_1=1}^{N_1} \cdots \sum_{i_D=1}^{N_D} \psi_{i_1 \dots i_D} \ket{i_1} \ot \cdots \ot \ket{i_D}\,.
\end{equation}
\textit{Trace-invariants} are peculiar $\LU$-invariant homogeneous polynomial functions of the tensor components and their complex conjugates. These invariants are obtained by index summation of $k\in \mathbb{N}^*$ copies of the tensor components $\psi_{i_1 \dots i_D}$ with $k$ copies of their complex conjugate $\bar \psi_{i_1 \dots i_D}$. In order for the result to be $\LU$-invariant, an index in position $c \in \paa{1,\dots,D}$ of $\psi = \paa{\psi_{i_1 \dots i_D}}$ is always summed with an index of $\bar \psi = \paa{\bar \psi_{i_1 \dots i_D}}$ in the same position. We refer to $k$ as the \emph{degree} of any invariant obtained in this way. The inner product $\displaystyle\langle \psi \vert \psi \rangle= \sum_{i_1=1}^{N_1} \cdots \sum_{i_D=1}^{N_D} \abs{\psi_{i_1\ldots i_D}}^2$ is the only trace-invariant of degree $1$, and we will assume throughout the paper that it always takes value $1$: this simply amounts to representing any pure state by a \emph{unit} vector $\ket{\psi} \in S(\H)$, where $S(\H)$ denotes the unit sphere in $\H$. This will sometimes be assumed implicitly in the rest of the text.

Throughout the paper, we will draw on \textit{bipartite regular $D$-edge-colored graphs} to study trace-invariants since these two sets are in one-to-one correspondence, as we will see later. A bipartite $D$-edge-colored graph $G$ (or \textit{$D$-colored graph} for simplicity) is described by two sets: a set of vertices and a set of edges. The set of vertices is partitioned into two disjoint subsets -- one for the white vertices and the other for the black vertices -- of the same size $k(G) \in \bb{N}^*$. As for the edge set, it is partitioned into $D$ disjoint subsets, one for each \textit{color} $c \in \paa{1,\dots,D}$. 
For any color $c$ and any vertex $v$, there is exactly one edge of color $c$ incident to $v$. In particular, the graph $G$ is $D$-regular, and its color-$c$ edge set has size $k(G)$. In addition, any edge is required to connect a black to a white vertex. Said otherwise, the graph $G$ is bipartite.

The correspondence between trace-invariants and colored graphs is given as follows: associate a copy of $\psi$ to every white vertex of $G$ as well as a copy of $\bar{\psi}$ to every black vertex of $G$. Index contractions are then encoded by colored edges, namely: for every edge of color $c \in \paa{1,\dots,D}$, if that edge is connecting a white vertex associated with $\psi_{i_1 \dots i_D}$ to a black vertex associated with $\bar \psi_{j_1 \dots j_D}$, the indices $i_c$ and $j_c$ are then forced to be equal and summed from $1$ to $N_c$. We will denote by $\Tr_G(\ket{\psi})$ the evaluation of a trace-invariant described by the colored graph $G$ on the pure quantum state $\ket{\psi} \in \H$. As a trivial example, note that the inner product $\langle \psi \vert \psi \rangle$, \ie the unique trace-invariant of degree $1$, is represented by the unique $D$-colored graph $G$ with $k(G)=1$ (which consists in two vertices, one white and one black, connected by $D$ edges of distinct colors); given our normalization convention, we always have $\tr_G(\ket\psi) = 1$ for this graph. More precisely, we have
\begin{equation}
    G = \vcenter{\hbox{\includegraphics[scale=1]{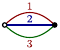}}} \quad \Longrightarrow \quad \Tr_{G}(\ket{\psi}) = \sum_{{\color{Red4}i_1} = 1}^{N_1} \sum_{{\color{Blue4}i_2} = 1}^{N_2} \sum_{{\color{Green4}i_3} = 1}^{N_3} \psi_{{\color{Red4}i_1} {\color{Blue4}i_2} {\color{Green4}i_3}} \bar{\psi}_{{\color{Red4}i_1} {\color{Blue4}i_2} {\color{Green4}i_3}} = \langle \psi \vert \psi \rangle = 1\,.
\end{equation}
The next equation provides a non-trivial example of a trace-invariant of degree three:
\begin{equation}
    G = \vcenter{\hbox{\includegraphics[scale=.6]{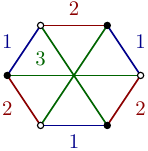}}} \quad \Longrightarrow \quad \Tr_{G}(\ket{\psi}) = \sum_{{\color{Red4}a} = 1}^{N_1} \cdots \sum_{{\color{Green4}i} = 1}^{N_3} \psi_{{\color{Blue4}a} {\color{Red4}d}  {\color{Green4}g}} \bar{\psi}_{{\color{Blue4}a} {\color{Red4}e} {\color{Green4}i}} \psi_{{\color{Blue4}b} {\color{Red4}e}  {\color{Green4}h}} \bar{\psi}_{{\color{Blue4}b} {\color{Red4}f} {\color{Green4}g}}  \psi_{{\color{Blue4}c} {\color{Red4}f}  {\color{Green4}i}} \bar{\psi}_{{\color{Blue4}c} {\color{Red4}d} {\color{Green4}h}} \,.
\end{equation}

Before we move on, we recall two properties of trace-invariants that will be particularly useful later. First, for any pure state $\ket{\psi} \in S(\H)$ and any $D$-colored graph $G$, we have the bound $\abs{\Tr_G(\ket{\psi})} \leq 1$ (see Refs.~\cite{Gadde:2025csh,Carrozza:2026qcf}). Secondly, if $\ket{\varphi} \in S(\H)$ is a second pure state, we have $\Tr_G(\ket{\psi} \ot \ket{\varphi}) = \Tr_G(\ket{\psi}) \Tr_G(\ket{\varphi})$.

\paragraph{Notations regarding colored graphs.} Across the paper, we let $\cG_D$ (resp.~$\cG_D^{\conn}$) be the set of $D$-colored graphs (resp.~connected $D$-colored graphs). What is more, if $G \in \cG_D$, we define $\cG_{D+1}(G)$ to be the set of $(D+1)$-colored graphs obtained by adding edges of \textit{color $0$} to $G$ in every possible way. Usually, we will denote an element of $\cG_{D+1}(G)$ by the use of a hat, \ie $\widehat G \in \cG_{D+1}(G)$.

For $G \in \cG_D$, we let $\Comp(G)$ denote the set of connected components of $G$, and $\kappa(G) \eqdef \abs{\Comp(G)}$ its number of connected components. Furthermore, if $B \subset\paa{1,\dots,D}$ with $\abs{B} \geq 2$, we let $G \vert_B$ be the $\abs{B}$-colored graph obtained by restricting $G$ to its subset of edges whose colors appear in the subset $B$. If $G \in \cG_D$, we let $\bar G$ be the graph obtained by flipping the color of any vertex in $G$ (\ie turning any white vertex into a black one, and conversely).

For a given $G \in \cG_D$ and two distinct integers $c,c' \in \paa{1,\dots,D}$, we define \textit{the number of faces of color} $(c,c')$ as the number of connected components of $G\vert_{\paa{c,c'}}$, \ie: $F_{cc'}(G) \eqdef \kappa(G\vert_{\paa{c,c'}})$. With this definition, we can introduce a number of combinatorial quantities. First, for a graph $G \in \cG_D$ and a graph $\widehat G \in \cG_{D+1}(G)$, we define the number of faces of $G$ and the number of faces of color $0$ of $\widehat G$ respectively as:
\begin{equation}
    F(G) \eqdef \sum_{1 \leq c < c' \leq D} F_{cc'}(G) \qquad \rm{ and } \qquad F_0(\widehat G) \eqdef \sum_{c=1}^D F_{0c}(\widehat G) \,.
\end{equation}
With this data, we can further define the so-called \textit{Gurau degree} by the following relation
\begin{equation}
    \omega(G) \eqdef (D-1) \kappa(G) + \frac{(D-1)(D-2)}{2} k(G) - F(G) \,,
\end{equation}
as well as the quantity $\Delta_0(\widehat G)$, defined as
\begin{equation}
    \Delta_0(\widehat G) \eqdef \frac{D(D-1)}{4} k(G) + \frac 1 2 F(G) - \frac{D-1}{2} F_0(\widehat G)
    \,. 
\end{equation}
This last quantity is closely related to the \emph{degree of compatibility} $\Delta(G)$ of $G$:
\begin{equation}
    \Delta(G) \eqdef \min_{\widehat G \in \cG_{D+1}(G)} \Delta_0(\widehat G) 
    \,. 
\end{equation}
Let $\pa{\ket{\varphi_N}}_{N \in \bb{N}^*}$ be a sequence of $D$-partite Haar random states of local dimension $N$. Thanks to Wick's theorem (see Refs.~\cite{Nechita2007,Collins2025} or App.~A of Ref.~\cite{Carrozza:2026qcf} for more details), the asymptotic behaviour of the expectation value $\mean{\Tr_G(\ket{\varphi_N})}$ at large $N$ can be related to the Gurau degree and the degree of compatibility. More precisely, one has: 
\begin{equation}
    \mean{\Tr_G(\ket{\varphi_N})} \underset{N \to \infty}{=} \mu_G N^{\kappa(G) - k(G) - \frac{\omega(G)}{D-1} - \frac{2 \Delta(G)}{D-1}}\pa{1 + O\pa{\frac 1 N}}\,,
\end{equation}
with $\mu_G \in \bb{N}^*$ a degeneracy factor that counts the number of Wick contractions contributing to the leading order.

Let $G\in\cG_D$ and $v$ a vertex of $G$ (resp.~$e$ an edge of $G$). We denote by $G^{(v)}$ (resp.~$G^{(e)}$) the graph obtained from $G$ by removing the vertex $v$ (resp.~cutting the edge $e$ open). The result is an \emph{open} $D$-colored graph, in which half-edges are allowed: $G^{(v)}$ is made up of $D$ half-edges, $k(G)-1$ vertices, and $D$ fewer edges than $G$; $G^{(e)}$ is made up of $2$ half-edges, $k(G)$ vertices, and $1$ fewer edge than $G$. We will use the notation $\Tr_{G^{(v)}}(\ket{\psi})$ (resp.~$\Tr_{G^{(e)}}(\ket{\psi})$) to refer to the tensor (resp.~matrix) obtained by the same encoding of contraction rules we previously described, namely: the complete edges of $G^{(v)}$ (resp. $G^{(e)}$) encode contraction patterns (or traces), while the number of half-edges of each type determine the nature of the tensor obtained once all such contractions have been implemented. As a result, $\tr_{G^{(v)}}(\ket{\psi})$ defines an element of $\H_1 \otimes \cdots \ot \H_D$ or its dual (depending on the color of the vertex $v$), while $\tr_{G^{(e)}}(\ket{\psi})$ can be understood as an element of $\H_c \otimes \H_c^*$ (in other words, it can be identified to a linear map on $\H_c$, or to its matrix representation).

To end the introduction of notations regarding colored graphs, we introduce some well-known invariants from either the quantum information or random tensor literature. 
\begin{itemize}
    \item Let $k \in \bb{N}^*$. To begin with, we let $C_k$ denotes the unique connected $2$-colored graph with $k$ vertices. The set $\paa{C_k}_{k\in \bb{N}^*}$ is known as the set of \textit{cyclic graphs}. When $k>1$, the graph $C_k$ is associated with the so-called \textit{Rényi-$k$ entanglement entropy}, defined as:
    \begin{equation}\label{eq:Renyi-k-C-k}
        S_\rm{R}^{(k)}(\ket{\psi}) \eqdef \frac{1}{1-k} \ln \Tr_{C_k}(\ket{\psi}) \,.
    \end{equation}
    In the multipartite context, cyclic graphs are associated with a bipartition of the Hilbert space $\H$ of the form $\H_B \ot \H_{\bar{B}}$ where $B \subset \paa{1,\dots,D}$ and $\H_B = \bigotimes_{c\in B} \H_c$. For any $k\in \mathbb{N}^*$, there is then a unique connected $D$-colored graph $C_k(B)$ with $k$ vertices, and such that $\kappa(C_k(B)|_B)= \kappa(C_k(B)|_{\bar{B}})=k$ (visually, this means that, edges with colors from the same block $B$ or $\bar{B}$ can be separated into $k$ distinct groups, in such a way that any pair of edges from the same group are ``parallel'' \ie they connect the same pair of vertices). Furthermore, the trace-invariant associated with $C_k(B)$ determines the R\'{e}nyi-$k$ entanglement entropy across the bipartition $\H_B \otimes \H_{\bar B}$, as in Eq.~\eqref{eq:Renyi-k-C-k}. Some examples of cyclic graphs are represented in Fig.~\ref{fig:Cyclic}.
    
    \begin{figure}[ht]
        \centering
        \includegraphics[width = .7\textwidth]{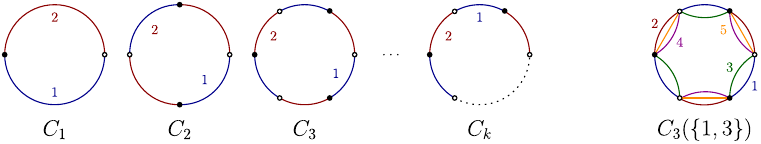}
        \caption{Left: examples of cyclic graphs for $D=2$. Right: example of a cyclic graph for $D=5$, and associated with the bipartition $\H_{\paa{1,3}} \ot \H_{\paa{2,4,5}}$.}
        \label{fig:Cyclic}
    \end{figure}
    
    \item In what follows, the $D$-colored graphs satisfying the equation $\omega(G) = 0$ are known as the \textit{melonic graphs} (see \eg Refs.~\cite{Bonzom:2012hw,gurau_random_2017}). They are constructed recursively by repeated insertions of $2$-vertex subgraphs along edges, as illustrated in Fig.~\ref{fig:melonic_insertion}.
    
    \begin{figure}[ht]
        \centering
        \includegraphics[width = .7\textwidth]{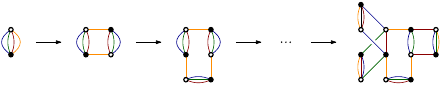}
        \caption{Illustration of the recursive construction of melonic graphs. The $2$-vertex graph is presented on the left.}
        \label{fig:melonic_insertion}
    \end{figure}
    
    \item In the tripartite setting, \ie $\H = \H_1 \ot \H_2 \ot \H_3$, there are two permutations of the indices of the reduced density matrix $\rho_{12} \eqdef \Tr_{\H_3}(\ket \psi \bra \psi)$ that are particularly studied in the field of quantum information (see \eg Refs.~\cite{HORODECKI19961,PhysRevLett.77.1413,Horodecki:2002jnu,Aubrun:2012ydp} and references therein). For $c = 1,2$ and $1 \leq i_c \leq N_c$, those permutations are known as the \textit{partial transpose} and the \textit{realignment}. They are defined respectively as:
    \begin{equation}\label{eq:PartialTransp_Realignm}
        \pa{\rho_{12}^{T_2}}_{i_1 i_2 ; j_1 j_2} = \pa{\rho_{12}}_{i_1 j_2 ; j_1 i_2} \qquad \rm{and} \qquad \pa{\rho_{12}^{R}}_{i_1 i_2 ; j_1 j_2} = \pa{\rho_{12}}_{i_1 j_1 ; i_2 j_2} \,,
    \end{equation}
    and represented in Fig.~\ref{fig:PT_RM_rho}.
    
    \begin{figure}[ht]
        \centering
        \includegraphics[width = .85\textwidth]{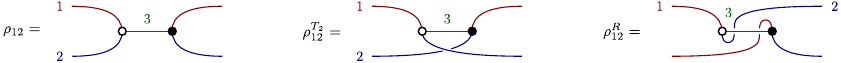}
        \caption{Illustration of the partial transpose and the realignment.}
        \label{fig:PT_RM_rho}
    \end{figure}
    
    We will be interested in the \textit{moments of the partial transpose} and the \textit{realignment moments} respectively associated with the $3$-colored graphs $\PT_k^{(3)}$ and $\RM_{2n}^{(3)}$ with $k,n \geq 1$. The moments are defined as: 
    \begin{equation}
        \Tr_{\PT_k^{(3)}}(\ket{\psi}) \eqdef \Tr\pac{\pa{\rho_{12}^{T_2}}^k} \qquad \rm{and} \qquad \Tr_{\RM_{2n}^{(3)}}(\ket{\psi}) \eqdef \Tr\pac{\pa{\rho_{12}^{R} (\rho_{12}^R)^\dagger}^n} \,.
    \end{equation}
    Examples of these moments are illustrated in Fig.~\ref{fig:PT_RM}. When the color $3$ is replaced by any color $c$ in $\paa{1,2,3}$ one has $\PT_k^{(c)}$ and $\RM_{2n}^{(c)}$. We observe that $\PT_3$ and $\RM_{4}$ are independent of the choice of color. \\
    In App.~\ref{A:proofTechLem}, we will make use of the moments of the partial transpose of $\rho_{ij} = \Tr_{\H_{\bar{\paa{i,j}}}}(\ket{\psi} \bra{\psi})$. These moments will then be associated with the colored graphs $\PT_k^{(\bar{\paa{i,j}})}$. 
    
    \begin{figure}[ht]
        \centering
        \includegraphics[width = .8\textwidth]{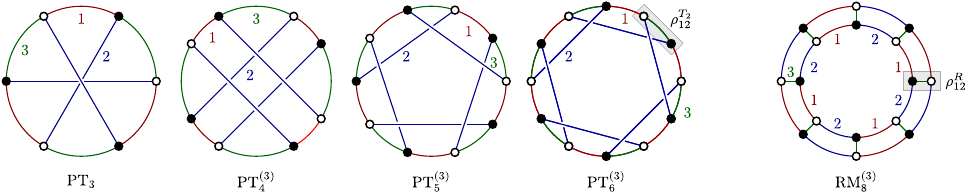}
        \caption{Left: four examples of moments of the partial transpose of $\rho_{12}$. Right: an example of realignment moment of $\rho_{12}$.}
        \label{fig:PT_RM}
    \end{figure}
    
    \item The set of graphs $G$ satisfying $\kappa(G\vert_{\paa{c,c'}}) = 1$ for every pair of distinct integers $c,c' \in \paa{1,\dots,D}$ (which implies $\kappa(G\vert_{B}) = 1$ for any $B \subset\{1, \ldots , B\}$ with $\abs{B}\geq 2$) is known as the set of \textit{maximally single-trace graphs} and will be denoted by  $\cG_D^{(\rm{MST})}$. Three examples of maximally single-trace graphs are provided in Fig.~\ref{fig:MST}.
    
    \begin{figure}[ht]
        \centering
        \includegraphics[width = .5\textwidth]{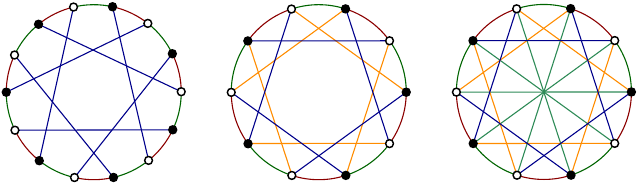}
        \caption{From left to right: examples of graphs from $\cG_3^{(\rm{MST})}$, $\cG_4^{(\rm{MST})}$ and $\cG_5^{(\rm{MST})}$ respectively.}
        \label{fig:MST}
    \end{figure}
\end{itemize}

\paragraph{Hypergraph-tensor states.} By taking tensor products of so-called \textit{Greenberger–Horne–Zeilinger} (GHZ) states \cite{Greenberger:1989tfe}, one can generate an infinite family of multipartite entangled states known as \textit{hypergraph-tensor states} (see \eg Refs.~\cite{Buhrman:2016tif,Vrana:2016edr,Christandl:2018cfb,Christandl:2019zrq,brand2026bilinearcomplexityworksbreaks} and references therein). For notational convenience, we let $\H_{\mathrm{loc}}$ denote a separable Hilbert space of infinite dimension, and $(\ket{i})_{i \in \mathbb{N}^*}$ a fixed orthonormal basis of $\H_{\mathrm{loc}}$. We will define $D$-partite \textit{hypergraph-tensor} states (and, more generally, \textit{spectral hypergraph-tensor} states in Sec.~\ref{sec:SHTstates}), as elements of $D$-partite Hilbert spaces of the form $\H_1\otimes \cdots \otimes \H_D$, where for any $c \in \mathbb{N}^*$, $\H_c = (\H_{\mathrm{loc}})^{\otimes n_c}$ with $n_c \in \mathbb{N}^*$. The particular values of $n_1 , \ldots , n_D$ will be left implicit for each such state $\ket\psi$, as they will be straightforwardly recoverable from our notations. With these notations at hand, the family of \emph{hypergraph-tensor} (HT) states is constructed as follows. Let $2 \leq b \leq D$, $B = \paa{c_1,\dots,c_b} \subset \paa{1,\dots,D}$, and $\alpha(B) \in \bb{N}^*$. We define $\ket{\GHZ}_{B,\alpha(B)}$ to be a GHZ state of dimension $\alpha(B)$ with support on the subsystems labeled by $B$. In other words, we have:
\begin{equation}
    \ket{\GHZ}_{B,\alpha(B)} \eqdef \frac{1}{\sqrt{\alpha(B)}} \sum_{i = 1}^{\alpha(B)} \ket{i}_{c_1} \ot \cdots \ot \ket{i}_{c_b}\,,
\end{equation}
where, for any $c \in \paa{1,\dots,D}$, the subscript $c$ on $\ket{i}_c$ indicates that this basis vector belongs to one of the copies of $\H_{\mathrm{loc}}$ constituting $\H_c$. Then, given a \textit{weight function} $\alpha  : \{B \subset\{1,\dots,D\}\,\vert \, \abs{B}\geq 2\}\to\mathbb{N}^*$, the $D$-partite HT state $\ket{\psi_\alpha}$ is defined as:\footnote{This definition makes it clear that any such $\ket{\psi_\alpha}$ is in fact ($\LU$-equivalent to) a \emph{finite-dimensional} $D$-partite state: indeed, one can always find some $N \in \mathbb{N}^*$ such that $\ket\psi \in (\H_{\mathrm{loc}, N})^{\otimes n_1} \otimes \cdots \otimes (\H_{\mathrm{loc}, N})^{\otimes n_D}$, where $\H_{\mathrm{loc}, N}\eqdef \Span \{\ket{i}\}_{1 \leq i \leq N}$ is finite-dimensional. Thus, invoking the infinite-dimensional local Hilbert space $\H_{\mathrm{loc}}$ and its preferred basis is only a matter of notational convenience, as anticipated.}
\begin{equation}
    \ket{\psi_\alpha} \eqdef \bigotimes_{B \in \paa{1,\dots,D}, \, \abs{B} \geq 2} \ket{\GHZ}_{B,\alpha(B)} \,.
\end{equation}
More broadly, any $D$-partite state that is $\LU$-equivalent to $\ket{\psi_\alpha}$ will be referred to as an HT state of weight function $\alpha$. In the remainder of this paper, we let $W_D$ be the set of weight functions of $D$-partite hypergraph-tensor states, namely
\begin{equation}
    W_D \eqdef \paa{ \alpha  : \{B \subset\{1,\dots,D\}\,\vert \, \abs{B}\geq 2\}\to\mathbb{N}^* }
\end{equation}

\paragraph{Technical lemmas.} For the reader's convenience, we also gather in this section some of the technical lemmas we will make use of in the main body of the text, which are either novel or adapted from our earlier work (see Ref.~\cite{Carrozza:2026qcf}).

First, in order to prove the $\LO$-monotonicity of the multipartite entanglement spreads we will introduce, we will invoke the following proposition and its corollary.

\begin{prop}[\cite{Carrozza:2026qcf}, Prop.~2.28]\label{prop:charac_LO}
    Let $\ket{\psi}$ and $\ket{\phi}$ be two finite-dimensional $D$-partite pure states. We have $\ket{\psi} \toLO \ket{\phi}$ if and only if there exists a finite-dimensional $D$-partite pure state $\ket{\eta}$ such that:\footnote{In Eq.~\eqref{eq:equivalence_LO}, $\cG_D^{\conn}$ can be equivalently replaced by $\cG_D$.}
    \begin{equation}\label{eq:equivalence_LO}
        \forall G \in \cG_D^{\conn}\,, \qquad \tr_G(\ket{\psi}) = \tr_G(\ket{\phi}) \tr_G(\ket{\eta})\,. 
    \end{equation}
\end{prop}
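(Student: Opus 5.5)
The plan is to prove the two implications separately, relying on three facts: that $\LO$ transformations on pure states can be modelled by appending local ancillas and applying local unitaries, that trace-invariants are multiplicative on tensor products, and that they are $\LU$-invariant. I take the standard definition: $\ket\psi \toLO \ket\phi$ when $\ket\phi$ is obtained from $\ket\psi$ by local operations without communication. On pure states, a deterministic outcome is then the same as $\ket\psi \otimes \ket{\chi} \simLU \ket\phi \otimes \ket{\eta}$ for suitable local ancilla and junk states.

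\emph{Forward direction.} Suppose $\ket\psi \toLO \ket\phi$. Model each local channel by a Stinespring dilation: party $c$ appends an ancilla $\ket{0}_{A_c}$ and applies a unitary $U_c$ on $\H_c \otimes \H_{A_c}$. This produces a pure state on the dilated system. Since the output is pure on the $\ket\phi$ registers, the global state must factor as
\begin{equation}
    (U_1\otimes\cdots\otimes U_D)\big(\ket\psi \otimes \ket{0\cdots 0}\big) = \ket\phi \otimes \ket\eta ,
\end{equation}
where $\ket\eta$ is a $D$-partite pure state built from the environment registers, each assigned to its own party. The ancilla is a product state, so every trace-invariant of it equals $1$; this uses only the normalization and multiplicativity. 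Applying $\LU$-invariance and multiplicativity to both sides then gives $\tr_G(\ket\psi) = \tr_G(\ket\phi)\tr_G(\ket\eta)$ for all $G$. A finite-dimensionality argument ensures $\ket\eta$ is finite-dimensional.

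\emph{Converse direction.} Suppose the identity holds for all connected $G$. Multiplicativity over connected components extends it to all $G$, so $\ket\psi$ and $\ket\phi\otimes\ket\eta$ share every trace-invariant. The key step is the separation theorem: trace-invariants separate $\LU$-orbits of pure states, after padding the local dimensions so that both states live in the same space. Padding does not change trace-invariants. This gives $\ket\psi \otimes \ket{0} \simLU \ket\phi \otimes \ket\eta$, with the ancilla absorbed by padding. Party $c$ can then append the ancilla, apply its local unitary, and discard the $\eta$ registers. This is a local operation mapping $\ket\psi$ to $\ket\phi$.

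\emph{Main obstacle.} The step needing most care is the forward direction's claim that purity of the final state forces factorization with the environment on each side. It must be justified that a pure output state of the reduced system implies the dilated state is a product $\ket\phi\otimes\ket\eta$. This holds because a pure marginal forces the global state to be a product with it. One must also confirm that this remains valid when outcomes are post-selected; if $\LO$ allows probabilistic branches, one would restrict to the deterministic definition used in the cited reference.
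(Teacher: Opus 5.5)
Your proposal is correct. The forward direction (Stinespring dilation, the factorization into $\ket\phi\otimes\ket\eta$ forced by a pure marginal, then $\LU$-invariance and multiplicativity) and the converse (extension to disconnected graphs by multiplicativity, separation of $\LU$-orbits by trace-invariants after local isometric padding, then discarding the $\eta$ registers) are both sound; the only bookkeeping is that you may also need to enlarge the local spaces of $\ket\eta$ so that $\dim\H_c\cdot\dim\H_{A_c}$ matches the output-times-environment dimension, and this changes no trace-invariant. The paper does not reprove this statement here but imports it from its companion reference, so there is no in-paper proof to compare against; your reduction of $\ket\psi\toLO\ket\phi$ to $\ket\psi\simLU\ket\phi\otimes\ket\eta$ up to local isometries, followed by $\LU$-separation via trace-invariants, is the natural argument and uses the same separation property (Prop.~2.8 of that reference) that the paper invokes elsewhere.
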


\begin{cor}[\cite{Carrozza:2026qcf}, Cor.~2.33]
    \label{cor:LO_monotones}
     Let $\ket{\psi}$ and $\ket{\phi}$ be two finite-dimensional $D$-partite pure states. If $\ket{\psi} \toLO \ket{\phi}$, then
    \begin{equation}
        \forall G \in \cG_D\,, \qquad |\tr_G(\ket{\psi})| \leq |\tr_G(\ket{\phi})|\,.
    \end{equation}
\end{cor}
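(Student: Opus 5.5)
Since Prop.~\ref{prop:charac_LO} is available, the plan is to derive the corollary directly from it, combined with the uniform bound $\abs{\Tr_G(\ket{\eta})} \leq 1$ recalled earlier. Assume $\ket{\psi} \toLO \ket{\phi}$. Prop.~\ref{prop:charac_LO} then provides a finite-dimensional $D$-partite pure state $\ket{\eta}$, which by our normalization convention is a unit vector, such that $\tr_G(\ket{\psi}) = \tr_G(\ket{\phi}) \tr_G(\ket{\eta})$ for every connected $G \in \cG_D^{\conn}$.

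The first step is to extend this factorization from connected graphs to arbitrary $G \in \cG_D$, as the footnote to Eq.~\eqref{eq:equivalence_LO} indicates. A disconnected $G$ is the disjoint union of its connected components. The index contractions of distinct components involve disjoint sets of tensor copies, so the trace-invariant factorizes:
\begin{equation}
    \tr_G(\ket{\chi}) = \prod_{G' \in \Comp(G)} \tr_{G'}(\ket{\chi})
\end{equation}
for any state $\ket{\chi}$. We apply this to $\ket{\psi}$, use the connected identity on each factor, and regroup the factors. This gives $\tr_G(\ket{\psi}) = \tr_G(\ket{\phi})\tr_G(\ket{\eta})$ for all $G \in \cG_D$.

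The second step is to take absolute values and use $\abs{\tr_G(\ket{\eta})} \leq 1$, which holds for any unit vector and any colored graph \cite{Gadde:2025csh,Carrozza:2026qcf}. This yields
\begin{equation}
    \abs{\tr_G(\ket{\psi})} = \abs{\tr_G(\ket{\phi})}\,\abs{\tr_G(\ket{\eta})} \leq \abs{\tr_G(\ket{\phi})}\,,
\end{equation}
which is the claim.

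The only point requiring care is that $\ket{\eta}$ is normalized, since the bound $\abs{\tr_G}\leq 1$ relies on normalization. Normalization follows from Prop.~\ref{prop:charac_LO} with $G$ the degree-one graph: $1 = \tr_G(\ket{\psi}) = \tr_G(\ket{\phi})\tr_G(\ket{\eta}) = \langle \eta\vert\eta\rangle$, because $\ket{\psi}$ and $\ket{\phi}$ are unit vectors. Beyond this, I expect no real obstacle; all the substantive content is contained in Prop.~\ref{prop:charac_LO} and the uniform bound on trace-invariants.
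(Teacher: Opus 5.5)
Your proof is correct: the factorization of Prop.~\ref{prop:charac_LO}, extended to disconnected graphs by multiplicativity, combined with the uniform bound $\abs{\tr_G(\ket{\eta})}\leq 1$ is the standard way this corollary follows, and your check via the degree-one graph confirms that $\ket{\eta}$ is normalized. The paper does not reprove the result and only cites Ref.~\cite{Carrozza:2026qcf}, but it uses the same factorization-plus-bound argument in its proof of Prop.~\ref{prop:LO_monotones_more}.
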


Then, let us recall the definition of \textit{the large-$N$ factorization} and the \textit{power law} criteria already introduced in Sec.~6.3.1 of Ref.~\cite{Carrozza:2026qcf}, which are relevant to study entanglement spreads of random states in an asymptotic large-$N$ regime. We focus here on Haar-random states of local dimension $N$, but the same criteria can be applied to other distributions (see Sec.~6.3.1 of \cite{Carrozza:2026qcf} for more broadly applicable definitions).

\begin{defi}[\cite{Carrozza:2026qcf}, Defs.~6.6 and 6.15]\label{def:fact_power}
    Let $D \geq 2$, $G \in \cG_D$, and let $\varphi=\pa{\ket{\varphi_N}}_{N \in \bb{N}^*}$ denote a sequence of Haar-random states of local dimension $N$.\footnote{Concretely, for every $N \in \mathbb{N}^*$, $\ket{\varphi_N}$ can be defined as the unique random $D$-partite pure state on $\left(\mathbb{C}^N\right)^{\otimes D}$ whose distribution is invariant under the group of unitary transformations of $\left(\mathbb{C}^N\right)^{\otimes D}$.} 
    \begin{enumerate}
        \item We say that $\Tr_G$ (or $G$) satisfies  \emph{the large-$N$ factorization criterion} if:\footnote{In Ref.~\cite{Carrozza:2026qcf}, we introduced a more general definition of this criterion, that can be applied to arbitrary sequences of random pure states. Since we will only apply this criterion to $\varphi$ in the present paper, we have simplified the definition a little bit.}
        \begin{equation}
            \mean{\Tr_{G}(\ket{\varphi_N}) \Tr_{\bar G}(\ket{\varphi_N})} \underset{N \to \infty}{\sim} \mean{\Tr_{G}(\ket{\varphi_N})} \mean{ \Tr_{\bar G}(\ket{\varphi_N})}
        \end{equation}
        \item We say that $\Tr_G$ (or $G$) \emph{follows a power law} if there exists $K>0$ and $r \geq 0$ such that: for any $N \in \mathbb{N}^*$, and any $D$-partite pure state $\ket\psi \in S\left(\left( \mathbb{C}^N\right)^{\otimes D}\right)$ (here, the normalization condition $\langle\psi \vert  \psi \rangle=1$ is crucial), 
        \begin{equation}
            \abs{\Tr_G(\ket{\psi})} \geq K N^{-r}\,.
        \end{equation}
    \end{enumerate}
\end{defi}
\begin{rem}
    The second definition of Def.~\ref{def:fact_power} is equivalent to the following. $\tr_G$ obeys a power law if there exists $K > 0$ and $r\geq 0$ such that: for any finite-dimensional $D$-partite state space $\H = \H_1 \otimes \cdots \otimes \H_D$, and any normalized pure state $\ket\psi \in S(\H)$,
    \begin{equation}
        \abs{\tr_G\left( \ket\psi \right)} \geq K \left(\min_{1 \leq c \leq D} \dim\left( \H_c \right)\right)^{- r}\,.
    \end{equation}
This formulation is manifestly stable under isometric embedding of any local Hilbert space $\H_c$ into a higher-dimensional Hilbert space $\tilde\H_c$, and as a result, one may check that it is indeed equivalent to the definition recalled from \cite{Carrozza:2026qcf} in Def.~\ref{def:fact_power}.
\end{rem}

Furthermore, it was proven in Ref.~\cite{Carrozza:2026qcf} that, taken together, the large-$N$ factorization and the power law criteria are sufficient conditions enabling the exchange of the logarithm and the expectation value in the asymptotic large-$N$ evaluation of $\mean{\ln \abs{\Tr_G(\ket{\varphi_N})}}$. Concretely, the next proposition was shown in our earlier work.

\begin{prop}[\cite{Carrozza:2026qcf}, Prop.~6.18] \label{prop:mean_log}
    Let $D \geq 2$, $G \in \cG_D$, and let $\varphi= \pa{\ket{\phi_N}}_{N \in \bb{N}^*}$ denote a sequence of Haar-random states of local dimension $N$. If 1) $G$ satisfies the large-$N$ factorization criterion and 2) $G$ follows a power law, then:\footnote{Note that $\mean{\Tr_G(\ket{\varphi_N})}$ is always strictly positive in the Haar-random state, as a consequence of Wick's theorem (see Ref.~\cite{Nechita2007} or App.~A of Ref.~\cite{Carrozza:2026qcf} for more details).}
    \begin{equation}
        \mean{\ln \abs{\Tr_G(\ket{\varphi_N})}} \underset{N \to \infty}{=} \ln \mean{\Tr_G(\ket{\varphi_N})} + o(1) \,.
    \end{equation}
\end{prop}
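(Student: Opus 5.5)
Write $T_N \eqdef \Tr_G(\ket{\varphi_N})$, $X_N \eqdef \abs{T_N}$ and $m_N \eqdef \mean{T_N}$, which is real and strictly positive by Wick's theorem. The plan is to show first that $X_N/m_N \to 1$ in probability, using the factorization criterion, and then to upgrade this to convergence of $\mean{\ln(X_N/m_N)}$ to $0$. For the upgrade, the upper tail is controlled by an elementary inequality and the lower tail by the power law. The statement then follows by adding $\ln m_N$ to both sides.

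\emph{Step 1: from factorization to concentration.} Flipping all vertex colours complex-conjugates the invariant, so $\Tr_{\bar G}(\ket\psi)=\overline{\Tr_G(\ket\psi)}$. Hence $\mean{T_N\Tr_{\bar G}(\ket{\varphi_N})}=\mean{X_N^2}$ and $\mean{\Tr_{\bar G}(\ket{\varphi_N})}=m_N$. The large-$N$ factorization criterion therefore reads $\mean{X_N^2}\sim m_N^2$, that is,
\begin{equation}
\mean{\abs{T_N/m_N-1}^2}=\frac{\mean{X_N^2}}{m_N^2}-1 \underset{N\to\infty}{=} o(1).
\end{equation}
So $T_N/m_N\to1$ in $L^2$, and in particular $X_N/m_N\to 1$ in $L^2$ and in probability.

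\emph{Step 2: splitting the logarithm.} Fix $\varepsilon\in(0,1/2)$ and set $A_\varepsilon\eqdef\{\abs{X_N/m_N-1}\le\varepsilon\}$. On $A_\varepsilon$ we have $\abs{\ln(X_N/m_N)}\le 2\varepsilon$. On $A_\varepsilon^c$ we treat the two directions separately.

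For the positive part, $\ln x\le x-1$ and Cauchy--Schwarz give
\begin{equation}
\mean{\ln^+(X_N/m_N)\mathbf 1_{A^c_\varepsilon}}\le \mean{\abs{X_N/m_N-1}^2}^{1/2},
\end{equation}
and the right-hand side is $o(1)$ by Step 1.

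For the negative part, the power law gives $X_N\ge KN^{-r}$ deterministically, while $m_N\le 1$ because $\abs{\Tr_G}\le1$. Hence $\ln^-(X_N/m_N)\le C\ln N$ for some constant $C$ and all large $N$. It therefore suffices to prove
\begin{equation}
\mathbb P(A_\varepsilon^c)\,\ln N\longrightarrow 0 .
\end{equation}
After that, let $N\to\infty$ and then $\varepsilon\to0$.

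\emph{Step 3, the main obstacle: a tail bound beating $\ln N$.} Chebyshev together with Step 1 only yields $\mathbb P(A_\varepsilon^c)=o(1)$, which is not enough against the factor $\ln N$. To obtain a quantitative rate I would use concentration of measure on the unit sphere of $(\mathbb C^N)^{\otimes D}\simeq\mathbb R^{2N^D}$ (L\'evy's lemma). The function $\psi\mapsto T(\psi)$ is a polynomial of degree $2k(G)$ with $\abs{T}\le1$ on the sphere. Its gradient should be bounded by $2k(G)$, which follows from multilinearity and the bound $\abs{\Tr_{G'}}\le1$ applied to the open graphs $G^{(v)}$. This gives
\begin{equation}
\mathbb P(\abs{T_N-\mean{T_N}}\ge \varepsilon m_N)\le 2\exp\!\left(-c\,N^{D}\varepsilon^2m_N^2\right).
\end{equation}
The Wick asymptotics recalled in Sec.~\ref{subsec:background} give $m_N\sim\mu_G N^{-s}$ with an explicit exponent $s$. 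The bound is then super-polynomially small provided $2s<D$, which I expect to follow from the bounds on $\omega$ and $\Delta$.

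If that inequality on $s$ fails for some $G$, the fallback is to avoid concentration altogether and replace the $L^2$ estimate of Step 1 by higher moments. Compute $\mean{X_N^{2p}}/m_N^{2p}-1$ via Wick's theorem, which amounts to factorization for the disjoint union $G^{\sqcup p}\sqcup\bar G^{\sqcup p}$. The aim is to show this quantity is $O(N^{-1})$ rather than merely $o(1)$; Chebyshev then gives $\mathbb P(A^c_\varepsilon)=O(N^{-1})$, which beats $\ln N$. This rate improvement, that is, deriving an $O(1/N)$ correction from the $1/N$ expansion of the Wick sums, is the step I expect to need the most care.
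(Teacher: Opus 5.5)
Your Steps 1 and 2 are correct, and you have identified the real difficulty. The power law only gives $\ln^-(X_N/m_N)\le C\ln N$, so you need $\mathbb P(A_\varepsilon^c)=o(1/\ln N)$. The factorization criterion as stated only says that the relative variance $\langle X_N^2\rangle/m_N^2-1$ is $o(1)$. \emph{Step 3 never delivers the required rate, so the proof is incomplete at its only non-trivial point.}

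Your primary route does not work in general. L\'evy's lemma with an $O(1)$ Lipschitz constant only controls fluctuations on the scale $N^{-D/2}$, so it says nothing once $m_N\ll N^{-D/2}$. Your expectation that $2s<D$ follows from the bounds on $\omega$ and $\Delta$ is backwards. Since $\omega(G),\Delta(G)\ge 0$,
$$s=k(G)-\kappa(G)+\frac{\omega(G)+2\Delta(G)}{D-1}\ge k(G)-\kappa(G),$$
so $2s<D$ fails for every connected $G$ with $k(G)\ge D/2+1$. For instance, for $D=3$ and $G=\mathrm{RM}_{2n}^{(c)}$ with $n\ge 2$, one has $m_N\sim N^{-2n}$ and hence $N^3m_N^2\to 0$. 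Your fallback is only stated as an aim. As formulated, it also requires factorization of $G^{\sqcup p}\sqcup\bar G^{\sqcup p}$, which is not among the hypotheses.

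The missing observation is that you need neither concentration of measure nor higher moments: in the Haar ensemble, the $p=1$ estimate comes with an $O(1/N)$ rate for free.

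\begin{itemize}
\item The moments of the uniform measure on the unit sphere of $\mathbb C^{d}$, with $d=N^D$, are $\frac{1}{d(d+1)\cdots(d+k-1)}\sum_{\sigma\in S_k}\prod_{l}\delta_{i_l j_{\sigma(l)}}$.
\item Contracting with a colored graph turns each term of this sum into an integer power of $N$. Hence $\langle X_N^2\rangle=\langle\Tr_{G\sqcup\bar G}(\ket{\varphi_N})\rangle$ and $m_N^2$ are both rational functions of $N$. Equivalently, each has the form $\mu N^{e}(1+O(1/N))$ recalled in Sec.~\ref{subsec:background}.
\item A rational function of $N$ that tends to $1$ equals $1+O(1/N)$. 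So the factorization hypothesis automatically upgrades to $\langle|T_N/m_N-1|^2\rangle=O(1/N)$.
\item Chebyshev then gives $\mathbb P(A_\varepsilon^c)=O(\varepsilon^{-2}N^{-1})$, hence $\mathbb P(A_\varepsilon^c)\ln N\to 0$.
\end{itemize}

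With this, your Steps 1--2 complete the proof.

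Some input of this kind is unavoidable. Consider an abstract sequence with $Y_N\ge N^{-r}$ and $\langle Y_N\rangle=1$, but relative variance merely $o(1)$. For example, let $Y_N=N^{-r}$ with probability $1/\sqrt{\ln N}$ and equal to the appropriate constant otherwise. Then $\langle \ln Y_N\rangle\to-\infty$.
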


The following lemma is new; it provides a simple but useful way of establishing the power law property of large families of graphs.

\begin{lem}\label{lem:power-law}
   Let $D \geq 2$, $G\in \cG_D$, $n \in \mathbb{N}^*$ and $H_1, \ldots , H_n \in \cG_D$. Suppose that $G$ obeys a power law, and that there exist $r_1, \ldots , r_n \in \mathbb{R}_+^*$ such that: for any finite-dimensional $D$-partite pure state $\ket\psi$,
    \begin{equation}
        \abs{\Tr_G(\ket{\psi})} \leq \prod_{i=1}^N \abs{\Tr_{H_i}(\ket{\psi})}^{r_i}\,.
    \end{equation}
Then: for any $i \in \{1,\ldots , D\}$, $\tr_{H_i}$ obeys a power law.
\end{lem}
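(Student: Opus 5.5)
The plan is to combine the hypothesis with the universal bound $\abs{\Tr_{H}(\ket\psi)}\leq 1$ recalled above. Note first that the statement has typos: the product should run over $i=1,\ldots,n$, and the conclusion should hold for every $i\in\{1,\ldots,n\}$. Fix an index $j$. We want constants $K_j>0$ and $s_j\geq 0$ such that $\abs{\Tr_{H_j}(\ket\psi)}\geq K_j N^{-s_j}$ for every $N\in\mathbb{N}^*$ and every normalized $\ket\psi\in S\big((\mathbb{C}^N)^{\otimes D}\big)$.

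Since $G$ obeys a power law, there are $K>0$ and $r\geq 0$ with $KN^{-r}\leq\abs{\Tr_G(\ket\psi)}$. The hypothesis then gives
\begin{equation*}
KN^{-r}\leq \prod_{i=1}^n \abs{\Tr_{H_i}(\ket\psi)}^{r_i}.
\end{equation*}
For each $i\neq j$, we have $\abs{\Tr_{H_i}(\ket\psi)}\leq 1$, and because $r_i>0$ this implies $\abs{\Tr_{H_i}(\ket\psi)}^{r_i}\leq 1$. Dropping these factors leaves
\begin{equation*}
KN^{-r}\leq \abs{\Tr_{H_j}(\ket\psi)}^{r_j}.
\end{equation*}
Raising both sides to the power $1/r_j>0$, which preserves order on nonnegative reals, gives
\begin{equation*}
\abs{\Tr_{H_j}(\ket\psi)}\geq K^{1/r_j}N^{-r/r_j}.
\end{equation*}
So $\Tr_{H_j}$ obeys a power law with constants $K_j=K^{1/r_j}>0$ and $s_j=r/r_j\geq 0$.

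There is no real obstacle. The only point needing care is that the strict positivity $r_i>0$ is what makes both steps work: it lets us bound the factors $i\neq j$ by $1$, and it lets us take the root $1/r_j$. A minor check is that the hypothesis, stated for all finite-dimensional states, covers in particular the normalized states on $(\mathbb{C}^N)^{\otimes D}$ used in Def.~\ref{def:fact_power}, so the inequality applies where we need it.
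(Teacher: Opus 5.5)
Your proof is correct and follows the paper's argument essentially verbatim. You combine the power-law lower bound on $\abs{\Tr_G}$ with the hypothesis and the universal bound $\abs{\Tr_{H_i}(\ket\psi)}\leq 1$ to isolate a single factor, then take the $1/r_j$-th root to obtain the constants $K^{1/r_j}$ and $r/r_j$. You are also right that the product should run over $i=1,\ldots,n$ and that the conclusion should read $i\in\{1,\ldots,n\}$.
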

\begin{proof}
    Since $G$ follows a power law, there exists $(K,r) \in \bb{R_+^*} \times \bb{R}_+$ such that $\abs{\Tr_G(\ket{\psi})} \geq K N ^{-r}$ for any (normalized) $D$-partite pure state $\ket{\psi}\in \left(\mathbb{C}^{N}\right)^{\otimes D}$. But we also know that the state $\ket\psi$ obeys $\abs{\tr_{H}\left( \ket\psi \right)}\leq 1$ for any $H \in \cG_D$. As a result, we find that, for any $i \in \{1, \ldots , n\}$,     \begin{equation}
        K N ^{-r} \leq \abs{\Tr_G(\ket{\psi})} \leq \abs{\Tr_{H_i}(\ket{\psi})}^{r_i} \,,
    \end{equation}
    hence $\abs{\Tr_{H_i}(\ket{\psi})} \geq K^{1/r_i} N^{-r/r_i}$, implying that $\tr_{H_i}$ obeys a power law.
\end{proof}

Finally, inspired by the proof of Thm.~5.3. of Ref.~\cite{Carrozza:2026qcf}, we prove the following lemma in App.~\ref{A:proofTechLem}.
\begin{lem}\label{lem:alpha_integer_valued}
    Let $\ket\psi$ a finite-dimensional $D$-partite pure state. Suppose there exists a function $\alpha : \{B \subset\{1,\dots,D\}\,\vert \, \abs{B}\geq 2\}\to \mathbb{R}_+^*$ such that:
    \begin{equation}
        \forall G \in \cG_D^{\conn}\,, \quad \tr_G(\ket{\psi})= \prod_{B \subset\{1,\dots,D\}\,, \,\abs{B}\geq 2} \alpha(B)^{\kappa(G\vert_B) - k(G)}\,. 
    \end{equation}
    Then, the function $\alpha$ necessarily takes value in $\mathbb{N}^*$, and identifying it to a weight function $\alpha  : \{B \subset\{1,\dots,D\}\,\vert \, \abs{B}\geq 2\}\to\mathbb{N}^*$, we have:
    \begin{equation}
        \ket\psi \simLU \ket{\psi_\alpha}\,.
    \end{equation}
\end{lem}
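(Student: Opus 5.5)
The first step is to compute the trace-invariants of HT states in closed form. Since $\tr_G$ is multiplicative under tensor products, $\tr_G(\ket{\psi_\alpha}) = \prod_B \tr_G(\ket{\GHZ}_{B,\alpha(B)})$. A direct index count then gives $\tr_G(\ket{\GHZ}_{B,\alpha}) = \alpha^{\kappa(G\vert_B) - k(G)}$. Indeed, all indices in colors of $B$ carry the same value along each connected component of $G\vert_B$, and each of the $2k(G)$ vertices contributes a factor $\alpha^{-1/2}$. Consequently the target state $\ket{\psi_\alpha}$ has exactly the invariants appearing in the hypothesis. The task therefore reduces to showing that $\alpha$ is integer-valued, and that a state with the same connected trace-invariants as $\ket{\psi_\alpha}$ is $\LU$-equivalent to it.

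For integrality, I would mimic the argument of Thm.~5.3 of \cite{Carrozza:2026qcf}: isolate each $\alpha(B)$ as the rank or Schmidt-type quantity of some reduced object attached to $\ket\psi$. Concretely, I would proceed by induction on $\abs{B}$, from $\abs{B}=D$ downwards (or from $\abs{B}=2$ upwards).

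For a pair $B=\{i,j\}$, the natural tools are the partial-transpose moments $\PT_k^{(\bar{\{i,j\}})}$, which are mentioned as being used in App.~\ref{A:proofTechLem}. By hypothesis their values are $\prod_{B'} \alpha(B')^{\kappa(\PT_k\vert_{B'})-k}$. Comparing the growth in $k$ of these moments, for $k$ even and odd, with the expansion $\sum_\lambda \lambda^k$ over the eigenvalues of $\rho_{ij}^{T_j}$ should force a spectrum with flat modulus, in which negative eigenvalues appear with a multiplicity related to $\alpha(\{i,j\})$. The multiplicities are integers, and this gives $\alpha(\{i,j\})\in\bb{N}^*$.

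The same approach applies to the bipartite Rényi data. The cyclic graphs $C_k(A)$ give $\tr\rho_A^k = \beta_A^{1-k}$, where $\beta_A = \prod_{B \cap A \neq \emptyset,\, B\cap \bar A\neq\emptyset} \alpha(B)$. Equality for all $k$ forces $\rho_A$ to be flat of integer rank $\beta_A$. From the integrality of all the $\beta_A$ and of the pairwise $\alpha$'s, one then tries to deduce integrality of each $\alpha(B)$ by Möbius-type inversion, possibly supplemented by genuinely multipartite invariants.

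I expect this integrality step to be the main obstacle. Products of integers can have rational quotients, so Möbius inversion on the $\beta_A$ alone is not enough, and one needs finer spectral objects whose multiplicities directly equal $\alpha(B)$.

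Once integrality is known, $\ket{\psi_\alpha}$ is a genuine finite-dimensional state with $\tr_G(\ket\psi) = \tr_G(\ket{\psi_\alpha})$ for all $G \in \cG_D^{\conn}$. I would then apply Prop.~\ref{prop:charac_LO} in both directions, taking $\ket\eta$ to be a product state, whose invariants are all equal to $1$. This gives $\ket\psi \toLO \ket{\psi_\alpha}$ and $\ket{\psi_\alpha}\toLO\ket\psi$. Since these are pure states with mutual LO convertibility, they are $\LU$-equivalent; alternatively, this follows from the fact that trace-invariants separate $\LU$-orbits of finite-dimensional pure states. This concludes the proof.
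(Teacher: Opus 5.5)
\textbf{The reduction is fine, but integrality is not established.} Your reduction is sound: the closed form $\tr_G(\ket{\psi_\alpha})=\prod_B\alpha(B)^{\kappa(G\vert_B)-k(G)}$, multiplicativity over disjoint unions, and the fact that trace-invariants separate $\LU$-orbits give $\ket\psi\simLU\ket{\psi_\alpha}$ as soon as $\alpha$ is integer-valued. Use this route rather than ``mutual $\LO$-convertibility implies $\LU$-equivalence'', which would need its own justification. But integrality is the entire content of the lemma, and your proposal does not establish it. For $\abs{B}\geq 3$ you leave it open, and the pair step as you state it fails.

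Carrying out your partial-transpose computation gives $\spec(\rho_{ij}^{T_j})\setminus\{0\}=\{\pm 1/N\}$ with multiplicities $m_\pm=(\alpha(\{i,j\})\pm 1)N/2$. Here $N=\alpha(\{i,j\})\prod_{c\neq i,j}\alpha(\{c,i\})\alpha(\{c,j\})\prod_{\abs{B}>2,\,B\cap\{i,j\}\neq\emptyset}\alpha(B)$ itself contains $\alpha(\{i,j\})$ and other weights not yet known to be integers. Integrality of $m_\pm$ then only yields $N=m_+-m_-\in\mathbb{N}^*$ and $\alpha(\{i,j\})N=m_++m_-\in\mathbb{N}^*$, i.e.\ $\alpha(\{i,j\})\in\mathbb{Q}_+^*$. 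For instance, $\alpha(\{i,j\})=3/2$ with $N=4$ gives the admissible multiplicities $m_+=5$, $m_-=1$. The partial transpose thus gives rationality, not integrality.

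\textbf{The missing idea.} You need a one-parameter family of matrices that isolates $\alpha(B_0)^m$ times a cofactor already known to be integral, combined with \emph{downward} induction on $\abs{B_0}$.

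For $B_0=\{1,\dots,D\}$, the open graph $\tilde X$ yields a unit-trace positive matrix with $\tr(X^n)=\alpha(\{1,\dots,D\})^{1-n}$, so Lem.~\ref{lem:Resolvent_pap} gives integrality directly.

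For $2<\abs{B_0}<D$, the open graphs $\tilde Z_m$ are designed so that $\kappa(G_n(\tilde Z_m)\vert_{B_0})=m$, $\kappa(G_n(\tilde Z_m)\vert_B)=1$ for $B\supsetneq B_0$, and $\kappa(G_n(\tilde Z_m)\vert_B)$ is linear in $n$ for all other $B$. After normalization, $\tr(Z_m^n)=[\alpha(B_0)^m b(B_0)]^{1-n}$ with $b(B_0)=\prod_{B\supsetneq B_0}\alpha(B)\in\mathbb{N}^*$ by the induction hypothesis. Lem.~\ref{lem:Resolvent_pap} then gives $\alpha(B_0)^m b(B_0)\in\mathbb{N}^*$ for every $m$. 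Hence $\alpha(B_0)=r/s$ is rational with $\gcd(r,s)=1$, and $s^m\mid b(B_0)$ for all $m$ forces $s=1$.

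For pairs, the analogous family $\tilde Y_m$ only produces $\alpha(\{i,j\})^{2m}b(\{i,j\})$, whence $\alpha(\{i,j\})^2\in\mathbb{N}^*$. It is this, combined with the rationality coming from the partial transpose, that closes the argument. Since the mechanism requires the cofactor over \emph{supersets} to be integral, the upward induction you mention as an alternative would not work.
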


\begin{proof}
    See App.~\ref{A:proofTechLem}.
\end{proof}

\subsection{Definition and \texorpdfstring{$\LO$}{LO}-monotonicity of entanglement spreads}\label{subsec:def_spreads}

With these prerequisites at hand, we can now introduce the notion of entanglement spread that is the main focus of the present work. 
\begin{defi}\label{def:entanglement_spread}
    Let $G, H \in \cG_D$ be two $D$-colored graphs. We will say that $H$ \emph{dominates} $G$ -- noted $G \dom H$ -- if: for any finite-dimensional $D$-partite pure state $\ket\psi$,
\begin{equation}\label{def:dom}
    |\tr_{G}(\ket{\psi})|\leq |\tr_{H}(\ket{\psi})|\,.
\end{equation}     
If $G \dom H$, for any finite-dimensional $D$-partite pure state $\ket\psi$, we define $R_{G , H}(\ket{\psi}) \in [0, +\infty]$ as
\begin{equation}
    R_{G , H}(\ket{\psi})\eqdef \begin{cases} \ln\abs{\tr_{H}(\ket{\psi})} - \ln\abs{\tr_{G}(\ket{\psi})} &\; \mathrm{if}\; \tr_{G}(\ket{\psi}) \neq 0 \; (\mathrm{and}\;\mathrm{thus}\; \tr_{H}(\ket{\psi}) \neq 0) \\ 
    +\infty &\; \mathrm{otherwise}\end{cases}
\end{equation}
We will refer to the resulting function $R_{G,H}$ as $(G,H)$\emph{-spread}. More generally, any function that is a constant multiple of some $(G,H)$-spread, for some positive and non-zero constant, will be referred to as an \emph{entanglement spread}.
\end{defi}

\begin{ex}\label{ex:R_G}
A first class of examples of entanglement spreads is obtained by fixing $H$ to be the trivial melon graph (\ie the unique $D$-colored graph with two vertices), while keeping $G$ arbitrary. One then has $\tr_H(\ket\psi) = 1$ for any finite-dimensional $D$-partite pure state $\ket\psi$, implying that the condition \eqref{def:dom} is verified (since $\abs{\tr_G (\cdot)} \leq 1$ holds for any graph $G$). The resulting $(G,H)$-spread was denoted by $R_G$ in our previous work \cite{Carrozza:2026qcf}, and interpreted as a multipartite generalization of the notion of entanglement R\'{e}nyi entropy. 
\end{ex}

\begin{rem}
    The definition we adopted in Def.~\ref{def:entanglement_spread} does not necessarily lead to a continuous $R_{G,H}$ in case $\tr_G$ has zeroes. Indeed, we cannot \apriori exclude the existence of \eg a one-parameter continuous family of states $\{\ket{\psi_t}\}_{t \in ]-1,1[}$ such that: a) $\tr_G(\ket{\psi_t}) \neq 0$ for any $t \neq 0$, b) $\tr_G(\ket{\psi_0}) = 0$ and c) $\underset{t \to 0}{\lim}\, \frac{\abs{\tr_H (\ket{\psi_t})}}{\abs{\tr_G (\ket{\psi_t})}} = \ell \in [1, + \infty[$. In such a situation, we would have $\underset{t \to 0}{\lim} \, R_{G,H}(\ket{\psi_t}) = \ln(\ell) \neq +\infty = R_{G,H}(\ket{\psi_0})$, hence $R_{G,H}$ would not be continuous. There exists other potentially pathological scenarios of this type. Whether or not it is possible to prevent all such failures of continuity by suitably refining Def.~\ref{def:entanglement_spread} is a corny question that hinges on the respective geometries of the projective subvarieties defined by the two  polynomial equations $\tr_G = 0$ and $\tr_H = 0$, as well as their relations. We leave this question open for future work.
\end{rem}

By virtue of Cor.~\ref{cor:LO_monotones}, it was observed in \cite{Carrozza:2026qcf} that $R_G$ is non-increasing under any $\LO$ transformation between pure states, and thus defines a $\LO$-monotone on pure states (which justifies its interpretation as a multipartite entanglement entropy). The main fundamental observation the present work relies on is that this property continues to hold for an arbitrary $(G,H)$-spread, as a consequence of our assumption that $G \dom H$. 
\begin{prop}\label{prop:LO_monotones_more}
Let $G, H \in \cG_D$ be two $D$-colored graphs such that $G \dom H$.
If $\ket{\psi}$ and $\ket{\phi}$ are two finite-dimensional $D$-partite pure states such that $\ket{\psi}\toLO \ket{\phi}$, then:
\begin{equation}
    R_{G , H}(\ket{\psi}) \geq R_{G , H}(\ket{\phi})\,.   
\end{equation}
In other words, $R_{G,H}$ defines a $\LO$-monotone on the space of finite-dimensional $D$-partite pure states.
\end{prop}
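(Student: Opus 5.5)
The plan is to combine the characterization of the $\LO$ preorder in Prop.~\ref{prop:charac_LO} with the domination assumption $G \dom H$, applied to an auxiliary state. Assume $\ket{\psi}\toLO\ket{\phi}$. By Prop.~\ref{prop:charac_LO} (with $\cG_D^{\conn}$ replaced by $\cG_D$, as permitted by its footnote), there is a finite-dimensional $D$-partite pure state $\ket{\eta}$ such that $\tr_K(\ket\psi)=\tr_K(\ket\phi)\tr_K(\ket\eta)$ for every $K\in\cG_D$. We apply this identity with $K=G$ and $K=H$ and take absolute values:
\begin{equation*}
|\tr_G(\ket\psi)| = |\tr_G(\ket\phi)|\,|\tr_G(\ket\eta)|\,,\qquad |\tr_H(\ket\psi)| = |\tr_H(\ket\phi)|\,|\tr_H(\ket\eta)|\,.
\end{equation*}

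We then split into cases according to whether $\tr_G(\ket\psi)$ vanishes. If $\tr_G(\ket\psi)=0$, then $R_{G,H}(\ket\psi)=+\infty$ and the inequality is trivial. Otherwise, the first identity gives $\tr_G(\ket\phi)\neq 0$ and $\tr_G(\ket\eta)\neq 0$. The domination $G\dom H$, applied to $\ket\phi$ and to $\ket\eta$, then gives $\tr_H(\ket\phi)\neq0$ and $\tr_H(\ket\eta)\neq0$. Hence all the logarithms below are finite, and taking logarithms in the two identities and subtracting yields
\begin{equation*}
R_{G,H}(\ket\psi) = R_{G,H}(\ket\phi) + \Big(\ln|\tr_H(\ket\eta)| - \ln|\tr_G(\ket\eta)|\Big) = R_{G,H}(\ket\phi) + R_{G,H}(\ket\eta)\,.
\end{equation*}

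Finally, $R_{G,H}(\ket\eta)\geq 0$ because $G\dom H$ holds for \emph{every} finite-dimensional state, in particular for $\ket\eta$. This gives $R_{G,H}(\ket\psi)\geq R_{G,H}(\ket\phi)$. The only point that needs care is this last step: $\ket\eta$ lives in a possibly different finite-dimensional space, so the domination hypothesis must be the universal one stated in Def.~\ref{def:entanglement_spread}, which it is. The other subtle point is handling the zero and infinite cases consistently, and the case split above takes care of that.
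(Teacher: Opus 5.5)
Your proposal is correct and follows essentially the paper's argument: obtain $\ket\eta$ from Prop.~\ref{prop:charac_LO}, dispose of the infinite case, and identify $R_{G,H}(\ket\psi)-R_{G,H}(\ket\phi)$ with the non-negative quantity $R_{G,H}(\ket\eta)$ via the domination hypothesis. The only cosmetic difference is that you read off the non-vanishing of $\tr_G(\ket\phi)$ directly from the factorization identity (and that of $\tr_H(\ket\phi)$ from domination), whereas the paper invokes Cor.~\ref{cor:LO_monotones} for that step.
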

\begin{proof}
   Let $\ket{\psi}$ and $\ket{\phi}$ be two finite-dimensional $D$-partite pure states such that $\ket{\psi}\toLO \ket{\phi}$. If $R_{G, H}(\ket{\psi})= +\infty$, then we have trivially $R_{G , H}(\ket{\psi}) \geq R_{G , H}(\ket{\phi})$ since $R_{G , H}(\ket{\phi}) \in [0 , +\infty]$. Suppose then that $R_{G , H}(\ket{\psi}) < + \infty$, and therefore that both $\tr_{G}(\ket{\psi})$ and $\tr_{H}(\ket{\psi})$ are non-vanishing. According to Cor.~\ref{cor:LO_monotones}, so are $\tr_{G}(\ket{\phi})$ and $\tr_{H}(\ket{\phi})$. It follows that
   \begin{equation}
       R_{G, H}(\ket{\psi})- R_{G, H}(\ket{\phi}) = \ln\left( \frac{|\tr_{H}(\ket{\psi})|}{|\tr_{G}(\ket{\psi})|} \frac{|\tr_{G}(\ket{\phi})|}{|\tr_{H}(\ket{\phi})|}  \right)
   \end{equation}
Furthermore, by Prop.~\ref{prop:charac_LO}, we can find $\ket{\eta}$ such that we have $\tr_{G}(\ket{\psi}) = \tr_{G}(\ket{\phi}) \tr_{G}(\ket{\eta})$ and $\tr_{H}(\ket{\psi}) = \tr_{H}(\ket{\phi}) \tr_{H}(\ket{\eta})$. In particular, both $\tr_{G}(\ket{\eta})$ and $\tr_{H}(\ket{\eta})$ must be non-vanishing. As a result, we find that
\begin{equation}
       R_{G, H}(\ket{\psi})- R_{G, H}(\ket{\phi}) = \ln\left( \frac{|\tr_{H}(\ket{\eta})|}{|\tr_{G}(\ket{\eta})|}  \right)\,,
   \end{equation}
and the right-hand-side is non-negative given our assumption that $\vert \tr_{G} (\cdot)\vert \leq \vert \tr_{H} (\cdot)\vert$.
\end{proof}

\begin{ex}\label{ex:entanglement_spread}
    In the bipartite case ($D=2$), we can fix $m\,,n\in \bb{N}$ with $2 \leq m < n$. Define $G$ to be the graph with $(n-1)$ connected components, each consisting of a cycle graph $C_m$. Likewise, define $H$ to be the graph with $(m-1)$ connected components, each consisting of a cycle graph $C_n$. We can verify that $G$ and $H$ obey the assumption of Prop.~\ref{prop:LO_monotones_more}. Indeed, if the bipartite pure state $\ket{\psi}$ has entanglement spectrum $\pmb{p}_\psi =(p_1 , \ldots , p_N)$,\footnote{Here, the $p_i$ denote the non-zero eigenvalues of the reduced density matrix on either subsystem, which are positive and sum to $1$. By convention, they are usually ordered in such a way that $p_1 \geq p_2 \geq \cdots \geq p_N > 0$.} then
    \begin{equation}
        \tr_{G}\left( \ket{\psi}\right) = \left( \sum_{i=1}^N (p_i)^m \right)^{n -1} \quad \mathrm{and} \quad \tr_{H}\left( \ket{\psi}\right) = \left( \sum_{i=1}^N (p_i)^n \right)^{m -1}\,.
    \end{equation} Defining $p=\frac{n -1}{m-1}$ and $q=\frac{n -1}{n - m}$, which are both larger or equal to $1$ and obey $\frac{1}{p}+\frac{1}{q}=1$, we can invoke H\"{o}lder's inequality to write
    \begin{align}
        \sum_{i=1}^N (p_i)^m = \sum_{i=1}^N (p_i)^{m-1/q} (p_i)^{1/q} \leq \left(\sum_{i=1}^N \left((p_i)^{m-1/q}\right)^p\right)^{1/p} \left(\sum_{i=1}^N \left((p_i)^{1/q}\right)^q\right)^{1/q} = \left(\sum_{i=1}^N (p_i)^{n}\right)^{\frac{m - 1}{n -1}}\,,
    \end{align}
which proves that $\tr_{G}(\ket{\psi})\leq \tr_{H}(\ket{\psi})$. According to Prop.~\ref{prop:LO_monotones_more}, $R_{G, H}$ defines a $\LO$-monotone, and a direct computation shows that
\begin{equation}\label{eq:ent_spread}
    R_{G, H}(\ket{\psi}) = (1-m)(1-n) \left( S_\rm{R}^{(m)}(\ket{\psi})- S_\rm{R}^{(n)}(\ket{\psi})\right)\,. 
\end{equation}
The quantity $\mathsf{D}_{m,n}(\cdot)\eqdef S_R^{(m)}(\cdot)- S_R^{(n)}(\cdot)$ has already been introduced as a $\LO$-monotone in \cite{Schmid:2020pgv}, as well as in \cite{Harrow:2010oah} in the limiting case $(m = 0, n = \infty)$, where it was referred to as \emph{entanglement spread}. It provides a quantitative measure of the ``spread'' of the probability distribution $\pmb{p}_\psi$, hence its name.
\end{ex}
The previous family of bipartite examples justifies in part our choice to refer to $R_{G,H}$ as a ``spread''. In the multipartite setting, even though we cannot in general characterize the $\LU$-orbit of a multipartite pure state $\ket\psi$ by a probability distribution $\pmb{p}_\psi$, we will see that there is still a sense in which our choice of terminology can be justified (see Thms.~\ref{thm:first_charac}, \ref{thm:second_charac} and Prop.~\ref{prop:ES_SHT}). 

Prop.~\ref{prop:LO_monotones_more} allows for a number of generalizations of the previous example to $D\geq 3$, based on the same recipe: 1) start from a graph $G$ with a combinatorial structure that makes it adapted to the use of an H\"older-type inequality; 2) deduce from that inequality that $|\tr_{G}(\cdot)|$ is upper-bounded by some $|\tr_{H}(\cdot)|$; 3) compute the $\LO$-monotone $R_{G , H}$. Let us provide two simple examples with $D=3$.
\begin{ex}\label{ex:monotone_K33_melon} 
Let $c \in \{1,2,3\}$. We take for $G$ the complete bipartite graph, usually denoted by $\PT_3$ in this context,\footnote{Indeed, $\tr_{\PT_3}(\ket\psi)$ computes the third moment of a \emph{partially transposed} (hence $\PT$) reduced density matrix associated with a pure state $\ket\psi$.} and for $H$ a particular melon graph with $6$ vertices, which we denote $\Mel_{3}^{(c)}$:
\begin{equation}
G = \PT_3 = \vcenter{\hbox{\includegraphics[scale=.9]{images/K_33}}}\,,\qquad H = \Mel_{3}^{(c)} = \vcenter{\hbox{\includegraphics[scale=.9]{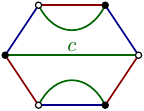}}}\,.
\end{equation}
For any finite-dimensional $3$-partite pure state $\ket{\psi}$, we can view $\tr_{\PT_3} (\ket{\psi})$ as an inner product between two tensors, and apply the Cauchy-Schwarz inequality to infer that (see Fig.~\ref{fig:graphical_proof_K33} for a graphical proof):
\begin{equation}
    \vert \tr_{\PT_3}\left(\ket{\psi}\right)\vert^2 \leq \left( \tr_{\Mel_{3}^{(c)}}\left( \ket{\psi}\right) \right)^2\,.
\end{equation}
Since $\tr_{\Mel_{3}^{(c)}}(\ket{\psi})> 0$ (it is the norm squared of some non-zero tensor),\footnote{The obvious axial symmetry of $\Mel_{3}^{(c)}$ justifies the interpretation as a norm squared. To see that this norm cannot vanish, one can invoke the general argument put forward in the proof of Prop.~\ref{prop:graph_eq_move1}.} this implies the looked for relation
\begin{equation}
    \vert \tr_{\PT_3}\left(\ket{\psi}\right)\vert \leq |\tr_{\Mel_{3}^{(c)}}\left( \ket{\psi}\right)| = \tr_{\Mel_{3}^{(c)}}\left( \ket{\psi}\right)\,.
\end{equation}
It follows that the quantity $R_{\PT_3, \Mel_{3}^{(c)}}$ defined by 
\begin{equation}
    R_{\PT_3, \Mel_{3}^{(c)}} (\ket{\psi}) = \ln \tr_{\Mel_{3}^{(c)}}(\ket{\psi}) - \ln |\tr_{\PT_3}(\ket{\psi})|
\end{equation}
when $|\tr_{\PT_3}(\ket{\psi})|\neq 0$ and $+\infty$ otherwise, is a $\LO$-monotone.
\end{ex}
\begin{figure}
    \centering
    \includegraphics[scale=.8]{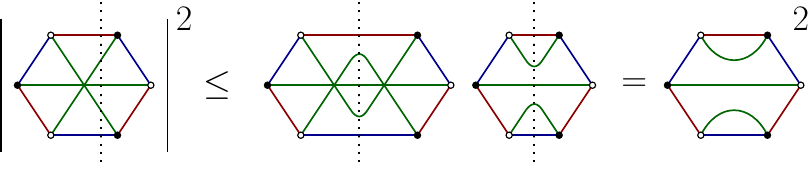}
    \caption{Graphical proof of the bound used in Ex.~\ref{ex:monotone_K33_melon}: the dotted lines are a visual guide indicating to which vectors the Cauchy-Schwarz inequality is being applied.}
    \label{fig:graphical_proof_K33}
\end{figure}

\begin{ex} By analogy with Example~\ref{ex:entanglement_spread}, let us fix again two integers $m, n$ such that $2 \leq m < n$, and introduce one possible tripartite generalization of the notion of entanglement spread. We let $G$ (resp. $H$) denote the graph with $(n-1)$ (resp. $(m-1)$) connected components, each consisting of a graph $\PT_{2m}^{(3)}$ (resp. $\PT_{2n}^{(3)}$). If $\rho_{12}$ denotes the reduced density matrix on $\H_1 \otimes \H_2$ of a finite-dimensional $3$-partite pure state $\ket\psi \in S(\H_1 \otimes \H_2 \otimes \H_3)$, we then have:
\begin{equation}
    \tr_{G} (\ket\psi) = \tr\left( (\rho_{12}^{T_2})^{2m}\right)^{(n-1)}=\left( \sum_{i=1}^N (\lambda_i)^m \right)^{n - 1}\quad \mathrm{and} \quad \tr_{H} (\ket\psi) = \tr\left( (\rho_{12}^{T_2})^{2n}\right)^{(m-1)} = \left( \sum_{i=1}^N (\lambda_i)^n \right)^{m - 1}\,,
\end{equation}
where $(\lambda_1 , \ldots , \lambda_N)$ denote the eigenvalues of the Hermitian operator $(\rho_{12}^{T_2})^2$. Since $(\rho_{12}^{T_2})^2= \rho_{12}^{T_2} \left( \rho_{12}^{T_2} \right)^\dagger$ is manifestly positive non-zero,\footnote{To check that $(\rho_{12}^{T_2})^2\neq0$, one may compute its trace, which reduces to $\tr_{\vcenter{\hbox{\includegraphics[scale=.3]{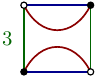}}}}(\ket\psi)$. This quantity is positive and can be lower-bounded (thanks to a Cauchy-Schwarz inequality) by $\frac{1}{N_3}\tr_{\vcenter{\hbox{\includegraphics[scale=.3]{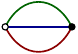}}}}(\ket\psi)=\frac{1}{N_3}>0$, where $N_3 = \dim (\H_3)$.} the eigenvalues $\{ \lambda_i \}$ are all positive (and not all vanishing). We can therefore apply H\"older's inequality as in Example~\ref{ex:entanglement_spread}, to obtain: 
\begin{equation}
0 <   \tr_{G} (\ket\psi) \leq \tr_{H} (\ket\psi)\,. 
\end{equation}
By Proposition \ref{prop:LO_monotones_more}, it follows that $R_{G , H}$ defines a $\LO$ monotone. Equivalently, the quantity
\begin{equation}\label{eq:3partite_spread}
    \mathsf{D}^{\PT^{(3)}}_{m, n} (\ket\psi) \eqdef S_{\PT^{(3)}}^{(m)} (\ket\psi) - S_{\PT^{(3)}}^{(n)}(\ket\psi)\,,
\end{equation}
where, for any $k\in\mathbb{N}$, 
\begin{equation}
    S_{\PT^{(3)}}^{(k)}(\ket\psi)\eqdef \frac{1}{1-k} \ln \tr_{\PT_{2k}^{(3)}}(\ket\psi)\,,
\end{equation}
defines a $\LO$-monotone. Equation \eqref{eq:3partite_spread} thus defines a tripartite generalization of the notion of entanglement spread from \eqref{eq:ent_spread}. Finally, we note that $\mathsf{D}^{\PT^{(3)}}_{m, n} (\ket\psi)$ vanishes if and only if $(\rho_{12}^{T_2})^2$ has a single non-zero eigenvalue $\lambda >0$, in which case $\spec(\rho_{12}^{T_2}) \setminus \{0\}\subset \{\sqrt{\lambda} , - \sqrt{\lambda}\}$ (and the multiplicities of each non-zero eigenvalue are constrained by the normalization condition $\tr(\rho_{12}^{T_2}) =1$).
\end{ex}
It is clear that the previous two constructions can be generalized to other families of graphs. Indeed, any graph can be viewed as an inner product (often times in many inequivalent ways), and one can therefore apply the Cauchy-Schwarz inequality to it (as in Ex.~\ref{ex:monotone_K33_melon}). Moreover, many invariants can be written in the form $\tr(M^m)$ with $M$ a positive matrix constructed out of $\ket{\psi}$, and we can apply H\"{o}lder's inequality to such graphs, as in Ex.~\ref{ex:entanglement_spread}. For instance, one can straightforwardly apply the latter construction to so-called \emph{realignment moments} and \emph{reflected entropies} (see \eg Refs.~\cite{Akers:2021pvd,Akers:2022zxr,Akers:2024pgq}), and thereby obtain other notions of ``tripartite entanglement spread". We will take advantage of these strategies in Sec.~\ref{sec:Examples} to introduce infinite families of entanglement spreads that will prove of particular interest in the remainder of the paper. 

While multipartite entanglement spreads are $\LO$-monotone by construction, it is natural to wonder how they behave in the context of other resource theories of entanglement, such as $\LOCC$. This question can be readily addressed with the tools developed in the present paper. However, in the interest of conciseness, the detailed analysis of the $\LOCC$ properties of multipartite entanglement spreads is deferred to a future publication (which will also include a similar analysis of other families of entropic functionals).

\section{Examples of multipartite entanglement spreads}\label{sec:Examples}

In this section, we introduce a number of infinite families of entanglement spreads that we view as particularly natural, and investigate their basic properties. Some of these families will prove of concrete relevance for the remainder of the paper. For each example of entanglement spread, we will be interested in: 1) characterizing the states for which the entanglement spread vanishes by means of graphical equations; 2) evaluating the entanglement spread of HT states; 3) evaluating the large-$N$ expectation value of the entanglement spread of the Haar-random state $\ket{\varphi_N}$. We will sometimes limit ourselves to providing partial answers to these questions. 

\subsection{Entanglement spreads from binary operations on \texorpdfstring{$D$}{D}-colored graphs}\label{sec:spreads_binary_op}

\subsubsection{Vertex-contraction operation} 

The first family of multipartite entanglement spreads we will be considering is constructed by means of binary operations on $D$-colored graphs. Using the terminology of Ref.~\cite{Carrozza:2026qcf}, we will concentrate on two such operations: the so-called \textit{disjoint union} and \textit{vertex-contraction} of graphs.

\begin{defi}
    Let $D \geq 2$ be an integer, and $G_1, G_2 \in \cG_D$ be two $D$-colored graphs. 

    The \emph{disjoint union} (or simply the \emph{union}) of $G_1$ and $G_2$, noted $G_1 \sqcup G_2$, is the $D$-colored graph whose connected components are precisely those of $G_1$ and $G_2$. In particular, one has that for any subset $B \subset \paa{1,\dots,D}$ with $\abs{B} \geq 2$: 
    \begin{equation}
        \kappa(G_1 \sqcup G_2\vert_B) = \kappa(G_1\vert_B) + \kappa(G_2\vert_B) \quad \rm{ and } \quad k(G_1 \sqcup G_2) = k(G_1) + k(G_2)\,.
    \end{equation}

    Let $v_{G_1}$ be a vertex of $G_1$ and $v_{G_2}$ a vertex of $G_2$, assumed to be of distinct colors (for definiteness, we may assume that $v_{G_1}$ is white, while $v_{G_2}$ is black). The \emph{vertex-contraction} of $G_1$ and $G_2$ along the pair $(v_{G_1}, v_{G_2})$, denoted by $G_1 \op_{(v_{G_1},v_{G_2})} G_2$, is obtained as follows: given the graphs $G_1^{(v_{G_1})}$ and $G_2^{(v_{G_2})}$ with $D$-colored open legs obtained respectively by removing the vertex $v_{G_1}$ in $G_1$ and the vertex $v_{G_2}$ in $G_2$, $G_1 \op_{(v_{G_1},v_{G_2})} G_2$ is the unique $D$-colored graph obtained by gluing the open legs of $G_1^{(v_{G_1})}$ and $G_2^{(v_{G_2})}$ while respecting the colored structure. This construction is illustrated in Fig.~\ref{fig:op}. Moreover, we observe that for any subset $B \subset \paa{1,\dots,D}$ with $\abs{B} \geq 2$:
    \begin{equation}
        \kappa(G_1 \op_{(v_{G_1},v_{G_2})} G_2\vert_B) = \kappa(G_1\vert_B) + \kappa(G_2\vert_B) - 1 \quad \rm{ and } \quad k(G_1 \op_{(v_{G_1},v_{G_2})} G_2) = k(G_1) + k(G_2) - 1\,.
    \end{equation}
    Whenever the information relative to the choice of pair of vertices $(v_{G_1}, v_{G_2})$ is not relevant to a particular argument, we will allow ourselves to keep this choice implicit and write $G_1 \op G_2$ instead of $G_1 \op_{(v_{G_1},v_{G_2})} G_2$.
    \begin{figure}[ht]
		\centering
		\includegraphics[height = 1.7cm]{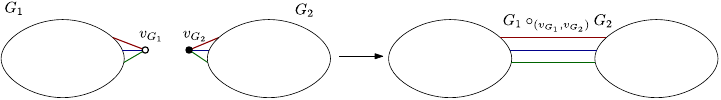}
		\caption{Left: the graphs $G_1$ and $G_2$ with the vertices $v_{G_1}$ and $v_{G_2}$ highlighted. Right: the result of the operation $G_1 \op_{(v_{G_1},v_{G_2})} G_2$.}
		\label{fig:op}
	\end{figure}
\end{defi}

Given a $D$-colored graph $H$ along with a marked vertex $v$, we will denote by $\bar{H}$ the graph obtained from $H$ by exchanging all the white vertices with black vertices and vice versa. The marked vertex $v$ (say white) of $H$ is mapped by this operation to the marked vertex $\bar v$ (thus black) of $\bar H$. With this convention in place, we can construct the vertex-contraction $H \op_{(v ,\bar v)} \bar H$ from the data of $H$ together with its marked vertex $v$. From the concrete illustration provided in Fig.~\ref{fig:mirror_op}, we may name this operation a \textit{mirror vertex-contraction of $H$ with respect to $v$}.
\begin{figure}
    \centering
    \includegraphics[height = 3.5cm]{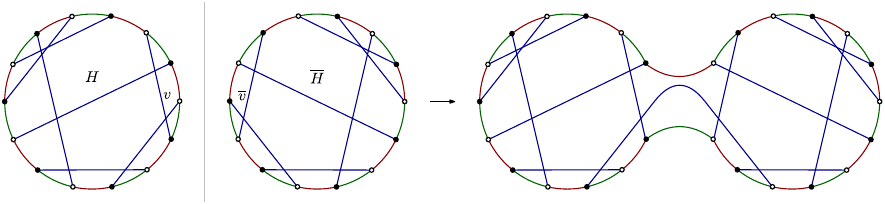}
    \caption{Left: for a given graph $H$ with a marked vertex $v$ we represented the associated graph $\bar H$ with the vertex $\bar v$ and we highlighted the mirror construction with the vertical gray line. Right: example of mirror vertex-contraction $H \op_{(v ,\bar v)} \bar H$ obtained from the previous choice of $H$ and $v$.}
    \label{fig:mirror_op}
\end{figure}

Colored graphs constructed by the use of the mirror vertex-contraction operation satisfy an interesting property as presented in the following lemma.

\begin{lem} \label{lem:dom_HopH}
    For any $H \in \cG_D$ and $v$ a marked vertex of $H$, $H \sqcup \bar H$ is dominated by $H \circ_{(v , \bar v)}\bar H$, \ie 
    \begin{equation}
        H \sqcup \bar H \dom H \circ_{(v , \bar v)}\bar H\,. 
    \end{equation}
\end{lem}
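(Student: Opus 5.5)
The plan is to read both sides of the inequality as inner products of a single tensor and then apply Cauchy--Schwarz. Fix a finite-dimensional $D$-partite pure state $\ket\psi$, and assume for definiteness that $v$ is white. Removing $v$ from $H$ gives the open graph $H^{(v)}$. Its evaluation $T \eqdef \tr_{H^{(v)}}(\ket\psi)$ is a vector in the dual of $\H_1\otimes\cdots\otimes\H_D$, with one free index of each color. Closing the $D$ half-edges of $H^{(v)}$ on a copy of $\psi$ restores the removed vertex, which gives
\begin{equation}
\tr_H(\ket\psi) = \langle \bar T, \psi\rangle = \sum_{i_1,\dots,i_D} T_{i_1\dots i_D}\,\psi_{i_1\dots i_D}\,.
\end{equation}

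Next I would compute the corresponding objects for $\bar H$. Exchanging the colors of all vertices swaps the roles of $\psi$ and $\bar\psi$, so the open graph $\bar H^{(\bar v)}$ evaluates to the complex conjugate $\bar T$. The conjugation leaves the positions of the indices unchanged. Two consequences follow. First, $\tr_{\bar H}(\ket\psi)=\overline{\tr_H(\ket\psi)}$, so that
\begin{equation}
\tr_{H\sqcup \bar H}(\ket\psi) = \abs{\tr_H(\ket\psi)}^2 = \abs{\langle \bar T,\psi\rangle}^2\,.
\end{equation}
Second, the mirror vertex-contraction glues the color-$c$ half-edge of $H^{(v)}$ to the color-$c$ half-edge of $\bar H^{(\bar v)}$ for each $c$. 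This yields
\begin{equation}
\tr_{H\circ_{(v,\bar v)}\bar H}(\ket\psi) = \sum_{i_1,\dots,i_D} T_{i_1\dots i_D}\,\bar T_{i_1\dots i_D} = \|T\|^2 \geq 0\,.
\end{equation}
The hat-free identity $\tr_{G_1\sqcup G_2}=\tr_{G_1}\tr_{G_2}$ used in the first consequence is immediate from the definition of trace-invariants.

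With these two identities in hand, Cauchy--Schwarz together with the normalization $\|\psi\|=1$ gives
\begin{equation}
\abs{\tr_{H\sqcup\bar H}(\ket\psi)} = \abs{\langle \bar T,\psi\rangle}^2 \leq \|T\|^2\|\psi\|^2 = \abs{\tr_{H\circ_{(v,\bar v)}\bar H}(\ket\psi)}\,.
\end{equation}
This is exactly the domination condition \eqref{def:dom}. If $v$ is black, the same argument applies after exchanging $\psi$ and $\bar\psi$ throughout.

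I do not expect a real obstacle. The only step needing care is the index bookkeeping showing that $\bar H^{(\bar v)}$ evaluates to exactly $\bar T$, with matching index positions, so that the mirror gluing produces $\|T\|^2$ rather than some permuted pairing. This holds because the vertex-color flip conjugates every tensor but leaves every edge color, and hence every contraction pattern, unchanged.
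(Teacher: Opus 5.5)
Your proposal is correct and follows the paper's proof: both write $\tr_H(\ket\psi)$ as the pairing of the open-graph tensor $\tr_{H^{(v)}}(\ket\psi)$ with $\psi$, identify $\tr_{H\circ_{(v,\bar v)}\bar H}(\ket\psi)$ as its squared norm, and conclude $\abs{\tr_H(\ket\psi)}^2 \leq \tr_{H\circ_{(v,\bar v)}\bar H}(\ket\psi)$ by Cauchy--Schwarz with $\langle\psi\vert\psi\rangle=1$. The only difference is that you do explicitly, with indices, the bookkeeping the paper carries out graphically.
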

\begin{proof}
    The proof relies mainly on the Cauchy-Schwarz inequality, which we represent graphically below in the special case $D=3$. Indeed, we first remark that:
    \begin{equation}
        \Tr_{H \sqcup \bar H}(\ket{\psi}) = \Tr_H(\ket{\psi}) \cdot \Tr_{\bar H}(\ket{\psi}) = \Tr_H(\ket{\psi}) \cdot \bar{\Tr_{H}(\ket{\psi})} = \abs{\Tr_H(\ket{\psi})}^2 \geq 0\,.
    \end{equation}
    Then, the use of the Cauchy-Schwarz inequality yields, for any state $\ket\psi = \vcenter{\hbox{\includegraphics[scale=.5]{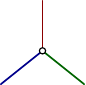}}}$:
    \begin{equation} \label{eq:ineq_R_H_barH}
        \Tr_{H\sqcup \bar H}(\ket{\psi}) \quad =\quad  \vcenter{\hbox{\includegraphics[scale=1.5]{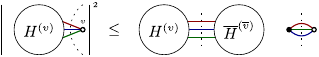}}} \quad = \quad \Tr_{H \op_{(v,\bar v)}\bar H}(\ket{\psi}) \cdot 1 \,.
    \end{equation}
    It follows that $\Tr_{H \op_{(v,\bar v)}\bar H}(\ket{\psi})$ is also positive (which is manifest given the axial symmetry of $H \op_{(v,\bar v)}\bar H$) and $H \sqcup \bar H \dom H \circ_{(v , \bar v)}\bar H$. In summary:
    \begin{equation}
        \Tr_{H \sqcup \bar{H}}(\cdot) \leq \Tr_{H \op_{(v,\bar v)}\bar H}(\cdot) \quad \rm{ and } \quad  \abs{\Tr_{H \sqcup \bar{H}}(\cdot)} \leq \vert\Tr_{H \op_{(v,\bar v)}\bar H}(\cdot)\vert\,. 
    \end{equation}
\end{proof}

Thanks to the previous lemma, one can associate a multipartite entanglement spread with any given graph $H$ together with a marked vertex $v$. The next proposition formalizes this construction.

\begin{prop}\label{prop:Spread_contractionOp}
    Let $D \geq 2$ and $H \in \cG_D$ with a marked vertex $v$. For any finite-dimensional $D$-partite pure state $\ket{\psi}$, we define:
    \begin{equation}\label{def:vertex_cont_spread}
        C_H^{[v]}(\ket \psi) \eqdef R_{H \sqcup \bar H , H \circ_{(v , \bar v)}\bar H}(\ket{\psi})= \begin{cases} \ln\left(\tr_{H \circ_{(v , \bar v)}\bar H}(\ket{\psi})\right) - 2 \ln\abs{\tr_{H}(\ket{\psi})} &\; \mathrm{if}\; \tr_{H}(\ket{\psi}) \neq 0 \,, \\ 
        +\infty &\; \mathrm{otherwise.}\end{cases}
    \end{equation}
    The spread $C_H^{[v]}$ takes value in $[0,+\infty]$, is $\LU$-invariant, and defines a $\LO$-monotone on pure states.
\end{prop}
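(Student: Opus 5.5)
The plan is to realize $C_H^{[v]}$ as a $(G,H')$-spread with $G = H \sqcup \bar H$ and $H' = H \circ_{(v,\bar v)} \bar H$, and then read off each claimed property from results already stated. First, Lem.~\ref{lem:dom_HopH} gives the domination $H \sqcup \bar H \dom H \circ_{(v,\bar v)} \bar H$. Hence the pair satisfies the hypothesis of Def.~\ref{def:entanglement_spread}, and $R_{H\sqcup\bar H, H\circ_{(v,\bar v)}\bar H}$ is well defined with values in $[0,+\infty]$.

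Second, I would check that the explicit formula in \eqref{def:vertex_cont_spread} agrees with Def.~\ref{def:entanglement_spread}. Trace-invariants are multiplicative over disjoint unions, and $\tr_{\bar H}(\ket\psi) = \overline{\tr_H(\ket\psi)}$ because exchanging vertex colors swaps $\psi$ and $\bar\psi$. Together these give $\tr_{H\sqcup\bar H}(\ket\psi) = \abs{\tr_H(\ket\psi)}^2$. Consequently $\tr_{H\sqcup\bar H}(\ket\psi) \neq 0$ if and only if $\tr_H(\ket\psi)\neq 0$, and in that case $\ln\abs{\tr_{H\sqcup \bar H}(\ket\psi)} = 2\ln\abs{\tr_H(\ket\psi)}$. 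From the proof of Lem.~\ref{lem:dom_HopH}, $\tr_{H\circ\bar H}(\ket\psi) \geq \abs{\tr_H(\ket\psi)}^2$. So whenever $\tr_H(\ket\psi)\neq0$, the quantity $\tr_{H\circ\bar H}(\ket\psi)$ is a positive real, its absolute value may be dropped, and the logarithm is finite. This reproduces both cases of \eqref{def:vertex_cont_spread}, and non-negativity follows at the same time.

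Third, $\LU$-invariance is immediate: $C_H^{[v]}$ is a function of the trace-invariants $\tr_H$ and $\tr_{H\circ\bar H}$ alone, and these are $\LU$-invariant polynomials. Finally, $\LO$-monotonicity is exactly Prop.~\ref{prop:LO_monotones_more} applied to the dominating pair from the first step.

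None of these steps is a real obstacle, since everything reduces to earlier results. The only point needing care is the identification $\tr_{\bar H} = \overline{\tr_H}$ and the resulting positivity of $\tr_{H\circ\bar H}$. I would verify this directly from the graph-to-invariant encoding: swapping white and black vertices replaces every $\psi$ by $\bar\psi$ and vice versa, which conjugates the full index sum.
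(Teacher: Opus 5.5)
Your proposal is correct and follows the same route as the paper. The paper omits the proof as immediate from Def.~\ref{def:entanglement_spread}, Prop.~\ref{prop:LO_monotones_more} and Lem.~\ref{lem:dom_HopH}, which are exactly the three ingredients you combine; your extra check that the explicit formula matches $R_{H\sqcup\bar H,\,H\circ_{(v,\bar v)}\bar H}$ correctly supplies the details the paper leaves implicit. That check uses $\tr_{H\sqcup\bar H}=\abs{\tr_H}^2$ and the positivity of $\tr_{H\circ_{(v,\bar v)}\bar H}$ coming from the Cauchy--Schwarz bound.
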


\begin{proof}
    Being clear from Def.~\ref{def:entanglement_spread}, Prop.~\ref{prop:LO_monotones_more} and Lem.~\ref{lem:dom_HopH}, the proof is omitted.
\end{proof}

Multipartite entanglement spreads constructed by the mirror vertex-contraction operation will be particularly relevant in Sec.~\ref{sec:SHTstates}. 

To illustrate the previous construction, let us first see what it entails in the bipartite case, \ie for $D=2$.
\begin{ex}\label{ex:toyEx-bipartite}
     Given an integer $n \geq 1$, a cyclic graph $C_n$ and a vertex $v$ of $C_n$, one can show that the mirror vertex-contraction operation on $C_n$ with respect to $v$, \ie $C_n \op_{(v, \bar v)} \bar{C_n}$, is independent of $v$ and is equal to $C_{2n-1}$. Remarkably, the entanglement spread constructed by the use of $C_n \sqcup \bar{C_n}$ and $C_n \op_{(v, \bar v)} \bar{C_n}$ can naturally be related to the bipartite entanglement spread introduced in Ex.~\ref{ex:entanglement_spread}, thanks to the following relation: for any finite-dimensional tripartite pure state $\ket\psi$,
    \begin{equation}
        C_{C_n}^{[v]}(\ket \psi) = 2(n-1) \mathsf{D}_{n,2n-1}(\ket \psi)\,.
    \end{equation}
    The same remark goes for any $D$-colored cyclic graph $C_n(B)$ associated with the bipartition of $\paa{1,\dots,D}$ of the form $B \sqcup \bar{B}$. In particular, one has: $C_n(B) \op_{(v, \bar v)} \bar{C_n}(B) = C_{2n-1}(B)$ for any vertex $v$ of $C_n(B)$. 
\end{ex}
Let us now move to the multipartite setting, with a toy example for $D=3$.
\begin{ex} \label{ex:toyEx}    
    We assume $D=3$ and consider the graph associated with the third moment of the partial transpose denoted by $\PT_3$ in Ref.~\cite{Carrozza:2026qcf} and sometimes referred to as the complete bipartite graph $\K_{3,3}$. By means of a recursive use of the vertex-contraction operation with $\PT_3$, we construct an infinite family of graphs $\cal{I} = \{I_n\}_{n\in\bb{N}^*}$. The recursive construction is defined by the relation:
    \begin{equation}
        \forall n \in \bb{N}^*, \quad I_n = \begin{cases} 
        \PT_3 & \text{if } n=1 \\ 
        I_1 \op_{(v,\bar v)} I_{n-1} & \text{otherwise} 
        \end{cases} \,,
    \end{equation}
    where $v$ (resp.~$\bar{v}$) is the vertex of $I_1$ (resp.~$I_{n-1}$) highlighted in Fig.~\ref{fig:I_n} by the use of a gray disc.\footnote{The index $n$ counts the genus of the underlying orientable surface into which the graph $I_n$ can be embedded (while preserving the orientation of the edges around each vertex). Indeed, $I_n$ is a connected graph with $V(I_n) = 2(2n + 1)$ vertices and $E(I_n) = 3V(I_n)/2$ edges. What is more, $I_1$ is a maximally single-trace graph and in Sec.~6.2 of Ref.~\cite{Carrozza:2026qcf}, it was shown that the vertex-contraction operation preserves this property. Since, by definition, any $G \in \cG_3^{(\rm{MST})}$ has $F(G) = 3$, we have: 
    \begin{equation}
        g(I_n) \eqdef 1  - \frac 1 2 \chi(I_n) = 1 - \frac 1 2 \pac{V(I_n) - E(I_n) + F(I_n)} = n \,,
    \end{equation}
    with $\chi$ the so-called Euler characteristic.}
    Since $I_n^{(c)} \op_{(v,\bar v)} \bar{I_n^{(c)}} = I_{2n}^{(c)}$, one can define the following multipartite entanglement spread: for any finite dimensional tripartite pure state $\ket\psi$, 
    \begin{equation}
        C_{I_n}^{[v]}(\ket \psi) = \ln \Tr_{I_{2n}}(\ket{\psi}) - 2 \ln \vert \Tr_{I_n}(\ket{\psi}) \vert
    \end{equation}
    if $\tr_{I_n}(\ket\psi ) \neq 0$, and is set to $\infty$ otherwise.
    This toy example illustrates the variety of multipartite entanglement spreads that can be constructed by using the mirror vertex-contraction operation.
    
    \begin{figure}
        \centering
        \includegraphics[width = .65\textwidth]{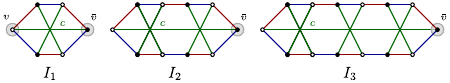}
        \caption{From left to right: the colored graphs associated with $I_1$, $I_2^{(c)}$ and $I_3^{(c)}$ where $c$ is the black color.}
        \label{fig:I_n}
    \end{figure}
\end{ex}

Interestingly $C_{H}^{[v]}$ evaluates to $0$ on any HT state. This is in analogy with the bipartite setting, in the context of which any entanglement spread $\mathsf{D}_{m,n}$ (defined in Ex.~\ref{ex:entanglement_spread}) does indeed vanish for any HT state: indeed, bipartite HT states are nothing but states with flat entanglement spectrum, and $\mathsf{D}_{m,n} (\ket \psi) = 0$ if and only if $\ket\psi$ has flat entanglement spectrum. 
\begin{prop}\label{prop:vertex_spread_HT}
    For any $H \in \cG_D$ and any HT state $\ket{\psi_\alpha}$ (with weight function $\alpha$), we have:
    \begin{equation}\label{eq:vanishing_alpha_1}
        C_{H}^{[v]}(\ket{\psi_\alpha}) = 0\,.
    \end{equation}
\end{prop}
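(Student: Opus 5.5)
The plan is to compute both trace-invariants entering $C_H^{[v]}$ explicitly on $\ket{\psi_\alpha}$, using a closed formula for trace-invariants of HT states, and then to use the combinatorial identities for $\kappa$ and $k$ under disjoint union and vertex-contraction. The first step is to establish the formula: for any $G \in \cG_D$,
\begin{equation*}
    \tr_G(\ket{\psi_\alpha}) = \prod_{B \subset\{1,\dots,D\},\, \abs{B}\geq 2} \alpha(B)^{\kappa(G\vert_B) - k(G)}\,.
\end{equation*}
This follows from multiplicativity of trace-invariants under tensor products, $\tr_G(\ket\psi\otimes\ket\varphi)=\tr_G(\ket\psi)\tr_G(\ket\varphi)$, which reduces the claim to a single factor $\ket{\GHZ}_{B,\alpha(B)}$. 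For that factor, the colors outside $B$ carry trivial (one-dimensional) indices. Each of the $2k(G)$ tensor entries contributes $\alpha(B)^{-1/2}$, for a total of $\alpha(B)^{-k(G)}$. The GHZ tensor forces all indices of colors in $B$ at a vertex to coincide, so these indices are constant on each connected component of $G\vert_B$, and the free sum produces $\alpha(B)^{\kappa(G\vert_B)}$.

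The second step is to apply this formula to $H \sqcup \bar H$ and to $H\circ_{(v,\bar v)}\bar H$. Since $\bar H$ has the same combinatorics as $H$, we have $\kappa(\bar H\vert_B)=\kappa(H\vert_B)$ and $k(\bar H)=k(H)$. The identities stated in the definition of union and vertex-contraction then give the following exponents of $\alpha(B)$:
\begin{itemize}
    \item for $H\sqcup\bar H$, the exponent is $2\kappa(H\vert_B)-2k(H)$;
    \item for the contraction, it is $(2\kappa(H\vert_B)-1)-(2k(H)-1)$, which is the same number.
\end{itemize}
Hence
\begin{equation*}
    \tr_{H\circ_{(v,\bar v)}\bar H}(\ket{\psi_\alpha}) = \tr_{H\sqcup \bar H}(\ket{\psi_\alpha}) = \abs{\tr_H(\ket{\psi_\alpha})}^2\,.
\end{equation*}

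The third step is to note that $\tr_H(\ket{\psi_\alpha})>0$, since it is a product of positive powers of positive integers. We are therefore in the finite case of Eq.~\eqref{def:vertex_cont_spread}, and taking logarithms gives $C_H^{[v]}(\ket{\psi_\alpha})=0$. Because $C_H^{[v]}$ is $\LU$-invariant, the result extends to every state $\LU$-equivalent to $\ket{\psi_\alpha}$.

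The main obstacle is the first step, and specifically the index-counting for a single GHZ factor on a possibly disconnected $G$. I will handle it directly by the "indices constant on components of $G\vert_B$" argument. This needs care only in bookkeeping the normalization, and there are no analytic difficulties.
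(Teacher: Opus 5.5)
Your proposal is correct and follows essentially the same route as the paper, which likewise evaluates both invariants through $\tr_G(\ket{\psi_\alpha})=\prod_{B}\alpha(B)^{\kappa(G\vert_B)-k(G)}$ and observes that the exponents for $H\sqcup\bar H$ and $H\circ_{(v,\bar v)}\bar H$ coincide, because $\kappa$ and $k$ each drop by exactly one under vertex-contraction. The only slip is cosmetic: the exponents $\kappa(H\vert_B)-k(H)$ are non-positive rather than positive, but $\tr_H(\ket{\psi_\alpha})>0$ still holds since each $\alpha(B)\in\mathbb{N}^*$.
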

\begin{proof}
    For any $B \subset\{1, \ldots , D\}$ with $\abs{B} \geq 2$, we note that: $\kappa\left((H \sqcup \bar H)\vert_B\right)= \kappa\left((H \circ_{(v , \bar v)}\bar H)\vert_B \right) + 1$ and $k(H \sqcup \bar H)= 2k(H) =k \left( H \circ_{(v , \bar v)}\bar H \right)+1$. As a result,
    \begin{align}
        \tr_{H \sqcup \bar H} \left( \ket{\psi_\alpha} \right) &= \prod_{B \subset\{1, \ldots, D\}\, |\, |B|\geq 2} \alpha(B)^{\kappa\left((H \sqcup \bar H)\vert_B\right)- 2k(H)} \\ &= \prod_{B \subset\{1, \ldots, D\}\, |\, |B|\geq 2} \alpha(B)^{\kappa\left((H \circ_{(v , \bar v)}\bar H)\vert_B\right)- k(H \circ_{(v , \bar v)}\bar H)} \nonumber = \tr_{H \circ_{(v , \bar v)}\bar H} \left( \ket{\psi_\alpha} \right)\,, 
    \end{align}
from which Eq.~\eqref{eq:vanishing_alpha_1} follows.   
\end{proof}

More broadly the quantum states that minimize $R_{H \sqcup \bar{H}, H \op_{(v,\bar{v})} \bar H}$ make it vanish and can be characterized by a graphical equation. Indeed, the minimizers must saturate the inequality Eq.~\eqref{eq:ineq_R_H_barH}, and this occurs precisely when the vectors $H^{(v)}$ and $\ket{\psi}$ are linearly dependent. Since $\ket{\psi}$ is a non-zero vector, we can find a complex number $\rho\in\bb{C}$ such that:
\begin{equation} \label{eq:minimize_eq}
    \pa{\tr_{H^{(v)}}(\ket\psi)}_{i_1 \dots i_D} = \rho \, \psi_{i_1 \dots i_D} \,.  
\end{equation}
The state $\ket{\psi}$ being normalized, contracting the previous equation with $\bar{\psi}_{j_1 \dots j_D}$ leads to: $\rho = \Tr_{H}(\ket{\psi})$. In particular, one must have $\rho \neq 0$, otherwise one would have $R_{H \sqcup \bar H , H \circ_{(v , \bar v)}\bar H}(\ket{\psi})= + \infty$. Graphically, Eq.~\eqref{eq:minimize_eq} thus translates (for $D=3$) into 
\begin{equation} \label{eq:CS_minimizers}
    \vcenter{\hbox{\includegraphics[scale=1.5]{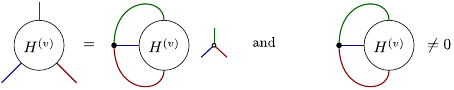}}}\,.
\end{equation}

We start with a concrete example for $D=2$ and then focus on the previous multipartite toy examples.
\begin{ex}\label{ex:vertex_contrac_bipartite}
    From Ex.~\ref{ex:toyEx-bipartite}, we observe that assuming, for example, $C_{C_4}^{[v]}(\ket \psi) = R_{C_4 \sqcup \bar{C_4}, C_7}(\ket{\psi}) = 0$ leads to the following graphical equation:
    \begin{equation}
        \vcenter{\hbox{\includegraphics[scale=1.8]{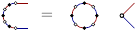}}}\,,
    \end{equation}
    Thus, starting from $C_n$, one may iteratively replace portions of the cycle to obtain a collection of cycles with $k=4$ (\ie $C_4$), and a remainder with $k\in \{1, 2, 3\}$. Precisely, we have:
    \begin{equation}
        \forall n \geq 4, \, \quad \Tr_{C_n}(\ket{\psi}) = \Tr_{C_4}(\ket{\psi})^{\lfloor\frac{n-1}{3} \rfloor}\begin{cases}
            1 & \rm{ if } \, n \equiv 1 \pmod{3} \,, \\
            \Tr_{C_2}(\ket{\psi}) & \rm{ if }\,  n \equiv  2 \pmod{3} \,, \\
            \Tr_{C_3}(\ket{\psi}) & \rm{ if }\,  n \equiv  0 \pmod{3} \,.
        \end{cases}
    \end{equation}
    Although a complete analysis of the minimizers' entanglement spectrum could be performed as presented later in Sec.~\ref{sec:bipartite_charac}, it is beyond the scope of this example.
\end{ex}
\begin{ex}
When $D=3$, the simplest example of non-trivial pair $(H,v)$ one might consider is the marked quartic melonic graph
    \begin{equation}
     H= \vcenter{\hbox{\includegraphics[scale=.5]{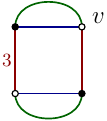}}}    
    \end{equation}
A minimizer $\ket\psi = \vcenter{\hbox{\includegraphics[scale=.5]{images/psi.pdf}}}$ of the associated entanglement spread then obeys the graphical equation
\begin{equation}
 \vcenter{\hbox{\includegraphics[scale=.5]{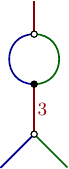}}} =      \vcenter{\hbox{\includegraphics[scale=.5]{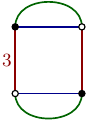}}} \; \vcenter{\hbox{\includegraphics[scale=.5]{images/psi.pdf}}} \,.
\end{equation}
If we reinterpret this equation in the coarse-grained bipartite Hilbert space $\left(\H_1 \otimes \H_2\right)\otimes \H_3$, we can equivalently represent it as:
\begin{equation}
 \vcenter{\hbox{\includegraphics[scale=.5]{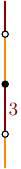}}} =      \vcenter{\hbox{\includegraphics[scale=.5]{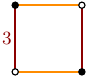}}} \; \vcenter{\hbox{\includegraphics[scale=.5]{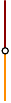}}}
 \,,
\end{equation}
where the color orange is now associated with the coarse-grained Hilbert space $\H_1 \otimes \H_2$. Thus, this example falls in the same category as Ex.~\ref{ex:vertex_contrac_bipartite}, namely: it only captures information about bipartite entanglement.  
\end{ex}
To generate examples of our construction that do not reduce to properties of bipartite entanglement, one needs to start from a graph $H$ that is \emph{genuinely multipartite} (see Ref.~\cite{Carrozza:2026qcf} for a precise definition of that notion). An infinite family of such examples is provided next. 
\begin{ex}
    From the previous multipartite toy example introduced in Ex.~\ref{ex:toyEx}, one observes that if $\ket{\psi}$ is such that $C^{[v]}_{I_1}(\ket \psi)=0$, \ie $\ket{\psi}$ is a minimizer of $C^{[v]}_{I_1}$, then, one has that:
    \begin{equation}
        \vcenter{\hbox{\includegraphics[scale=1.8]{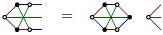}}}\,,
    \end{equation}
    thus leading, by recursion, to:
    \begin{equation}
        \forall n \in \bb{N}^*, \quad \Tr_{I_n}(\ket{\psi}) = \pac{\Tr_{\PT_3}(\ket{\psi})}^n \,.
    \end{equation}
    As a direct consequence, one has: $C^{[v]}_{I_n}(\ket{\psi}) = 0$ for any $n\geq 1$.
\end{ex}

The next subsection will focus on finding graphical equations satisfied by minimizers of entanglement spreads, typically in the same spirit of Eq.~\eqref{eq:minimize_eq}. Prior to that, we will be interested in computing -- both generally and for specific examples -- the expectation value $\langle C^{[v]}_{H}(\ket{\varphi_N})\rangle$ for random pure states $\ket{\varphi_N}$ drawn according to the Haar measure on $\left(\mathbb{C}^N\right)^{\otimes D}$. 

We first recall that for any graphs $G,H$, the quantity $R_{G,H}$ involves logarithms of the trace-invariants associated with the graphs $G$ and $H$. However, as highlighted in Sec.~6.3.1 of Ref.~\cite{Carrozza:2026qcf}, computing the expectation value of $R_{G,H}$ becomes easier whenever both $G$ and $H$ satisfy the conditions presented in Prop.~\ref{prop:mean_log}. Namely, the lemma is requiring that $G$ and $H$: 1) satisfy the large-$N$ factorization criterion, and 2) follow a power law. 

In Cor.~6.20 of Ref.~\cite{Carrozza:2026qcf}, several examples of trace-invariants satisfying both conditions of Prop.~\ref{prop:mean_log} were presented. These include, among others, the cyclic graphs $C_k(B)$ associated with arbitrary bipartitions $B \cup \bar B$, as well as the graphs corresponding to the realignment moments and those associated with the even moments of the partial transpose.

Regarding the study of the multipartite entanglement spreads constructed using the mirror vertex-contraction operation, the following lemma will come in handy. 

\begin{lem} \label{lem:powerlawHoH}
    Let $D \geq 2$, $H \in \cG_D$ be a $D$-colored graph with a marked vertex $v$. If $H$ follows a power law, then, so does $H \op_{(v,\bar{v})} \bar H$.
\end{lem}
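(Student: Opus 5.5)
The plan is to combine the domination inequality of Lem.~\ref{lem:dom_HopH} with the assumed lower bound on $\abs{\tr_H}$. By Lem.~\ref{lem:dom_HopH}, for any finite-dimensional $D$-partite pure state $\ket\psi$ we have
\begin{equation}
    \abs{\Tr_H(\ket\psi)}^2 = \Tr_{H\sqcup\bar H}(\ket\psi) \leq \Tr_{H\op_{(v,\bar v)}\bar H}(\ket\psi) = \abs{\Tr_{H\op_{(v,\bar v)}\bar H}(\ket\psi)}\,.
\end{equation}
Here we use the identity $\Tr_{\bar H}(\ket\psi)=\bar{\Tr_H(\ket\psi)}$ already noted in the proof of that lemma. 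We also use the fact that $\Tr_{H\op\bar H}$ is non-negative, which the Cauchy--Schwarz argument there makes manifest.

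Next, assume $H$ follows a power law. Then there are constants $K>0$ and $r\geq 0$ such that, for every $N\in\bb{N}^*$ and every normalized $\ket\psi\in S((\bb{C}^N)^{\otimes D})$, we have $\abs{\Tr_H(\ket\psi)}\geq K N^{-r}$. Plugging this into the display above gives
\begin{equation}
    \abs{\Tr_{H\op_{(v,\bar v)}\bar H}(\ket\psi)} \geq K^2 N^{-2r}
\end{equation}
for all such $\ket\psi$. This is exactly the power-law property of Def.~\ref{def:fact_power}, with constants $(K^2,2r)$. Note that $K^2>0$ and $2r\geq 0$, so the constants are of the required type.

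I do not expect a genuine obstacle here. The only point requiring care is that Lem.~\ref{lem:dom_HopH} must hold for states in $(\bb{C}^N)^{\otimes D}$ with the same local dimension $N$ appearing in the power law. This is immediate, since the lemma applies to arbitrary finite-dimensional states, and the Cauchy--Schwarz step is performed within a single fixed Hilbert space. Note that Lem.~\ref{lem:power-law} goes in the opposite direction: it transfers a power law from a dominated graph to a dominating family. For that reason I would argue directly as above rather than through that lemma.
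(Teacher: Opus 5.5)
Your argument is correct and is exactly the paper's proof: Lem.~\ref{lem:dom_HopH} gives $\abs{\Tr_{H \op_{(v,\bar v)} \bar H}(\ket\psi)} \geq \abs{\Tr_H(\ket\psi)}^2 \geq K^2 N^{-2r}$. One small correction to your closing aside: Lem.~\ref{lem:power-law} in fact applies directly here, with $G=H$, $H_1 = H \op_{(v,\bar v)} \bar H$ and $r_1 = 1/2$, since it transfers a power law from the dominated graph $H$ to the dominating one, and its proof is the same computation you wrote.
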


\begin{proof}
    Assume $H$ follows a power law, \ie ther exists $(K,r) \in \bb{R}_+^* \times \bb{R_+}$ such that $\abs{\Tr_H(\ket{\psi})} \geq K N^{-r}$. Using Lem.~\ref{lem:dom_HopH}, one directly arrives at the relation: 
    \begin{equation}
        \vert \Tr_{H \op_{(v,\bar{v})} \bar H}(\ket{\psi}) \vert \geq \abs{\Tr_H(\ket{\psi})}^2 \geq K^2 N^{-2r} \,,
    \end{equation}
    which proves that $H \op_{(v,\bar{v})} \bar H$ follows a power law.
\end{proof}

Once it is known that $H$ follows a power law, the computation of the expectation value of $C^{[v]}_{H}$ in the Haar-random state $\ket{\varphi_N}$, requires a focus on the large-$N$ factorization criterion. Clearly, if, in addition to the power law condition, both $H$ and $H \op_{(v,\bar{v})} \bar H$ satisfy the large-$N$ factorization criterion, then the expectation and the logarithms can be exchanged in the large dimension limit $N\to \infty$, which simplifies the computation of $\langle C^{[v]}_{H}(\ket{\varphi_N}) \rangle$ (at leading order). Interestingly, as presented in the next proposition, using mainly Thm.~4.1 of Ref.~\cite{Factorization2026}, one can extract a condition that depends only on the graph $H$.

\begin{prop} \label{prop:Haar_HoH}
    Let $D \geq 2$, $H\in \cG_D$ with a marked vertex $v$, and let $(\ket{\varphi_N})_{N\in\bb{N}^*}$ be a sequence of $D$-partite Haar-random states of local dimension $N$. If $H$ follows a power law and $\Delta(H) < \frac{D(D-1)}{8}$, then: 
    \begin{equation} \label{eq:meanR_HopH}
        \langle C^{[v]}_{H}(\ket{\varphi_N}) \rangle \underset{N \to \infty}{=} \frac{2}{D-1}\pa{2 \Delta(H) - \Delta(H \op_{(v,\bar{v})} \bar H)} \ln N + \ln\left( \mu_{H \circ_{(v,\bar v)} \bar H}\right) - 2 \ln\left( \mu_H \right) + o(1) \,.
    \end{equation}
\end{prop}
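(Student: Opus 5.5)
We reduce the expectation of the spread to expectations of the two trace-invariants it involves. Since $\tr_H$ follows a power law, it never vanishes on normalized states, so $C^{[v]}_H$ is finite on every state. Hence
\begin{equation}
\langle C^{[v]}_{H}(\ket{\varphi_N})\rangle = \langle \ln \Tr_{H\op_{(v,\bar v)}\bar H}(\ket{\varphi_N})\rangle - 2\langle \ln\abs{\Tr_H(\ket{\varphi_N})}\rangle\,.
\end{equation}
The plan is to apply Prop.~\ref{prop:mean_log} to both $H$ and $H\op_{(v,\bar v)}\bar H$. This replaces each $\langle\ln\abs{\cdot}\rangle$ by $\ln\langle\cdot\rangle + o(1)$. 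We then plug in the Wick asymptotics $\mean{\Tr_G(\ket{\varphi_N})} \sim \mu_G N^{\kappa(G)-k(G)-\frac{\omega(G)}{D-1}-\frac{2\Delta(G)}{D-1}}$ for $G\in\{H, H\op\bar H\}$.

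The power-law hypothesis of Prop.~\ref{prop:mean_log} is immediate. For $H$ it is assumed, and for $H\op_{(v,\bar v)}\bar H$ it follows from Lem.~\ref{lem:powerlawHoH}. The remaining hypothesis is the large-$N$ factorization criterion for both graphs, and this is where the condition $\Delta(H)<\frac{D(D-1)}{8}$ must enter. I would invoke Thm.~4.1 of Ref.~\cite{Factorization2026}, which (as I expect) states that a graph $G$ factorizes in the Haar ensemble whenever $\Delta(G)$ lies below a threshold of order $\frac{D(D-1)}{4}$. I further expect that $\Delta(H\op_{(v,\bar v)}\bar H)\le 2\Delta(H)$, which I would obtain from the following construction. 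Take optimal color-$0$ completions $\widehat H$ of $H$ and $\widehat{\bar H}$ (the mirror image) of $\bar H$. Glue them along the vertex-contraction: the color-$0$ edges at $v$ and $\bar v$ are cut and reconnected into a single color-$0$ edge. Then track the changes in $k$, $F$ and $F_0$ in $\Delta_0$. Under vertex-contraction, $k$ drops by $1$ and the faces through $v$ merge with those through $\bar v$, so $F$ drops by $\binom{D}{2}$ and $F_0$ drops by $D$. This gives
\begin{equation}
\Delta_0 \le 2\Delta(H) - \tfrac{D(D-1)}{4} - \tfrac{1}{2}\tbinom{D}{2} + \tfrac{D-1}{2}D = 2\Delta(H)\,.
\end{equation}
So the threshold condition on $H$ controls both graphs, and this step is where I expect the precise form of the cited theorem to matter most. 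If Thm.~4.1 is instead phrased directly in terms of $H$ for mirror-contracted graphs, I would cite it in that form.

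Once factorization is secured, it remains to compute the exponent of $N$. We have $k(H\op\bar H)=2k(H)-1$ and $\kappa(H\op\bar H)=2\kappa(H)-1$. For the faces, $F(H\op\bar H)=2F(H)-\binom{D}{2}$, since every bicolored face through $v$ merges with its mirror. Hence
\begin{equation}
\omega(H\op\bar H) = (D-1)(2\kappa-1)+\tfrac{(D-1)(D-2)}{2}(2k-1)-2F+\tfrac{D(D-1)}{2} = 2\omega(H)\,.
\end{equation}
Consequently the $\kappa - k - \omega/(D-1)$ parts of the two exponents cancel in the combination ``$\ln$ of the contraction minus $2\ln$ of $H$'', and only the $\Delta$ terms survive:
\begin{equation}
-\frac{2\Delta(H\op\bar H)}{D-1} + \frac{4\Delta(H)}{D-1} = \frac{2}{D-1}\bigl(2\Delta(H)-\Delta(H\op\bar H)\bigr)\,.
\end{equation}
This is multiplied by $\ln N$. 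Adding the constant $\ln\mu_{H\op\bar H}-2\ln\mu_H$ gives Eq.~\eqref{eq:meanR_HopH}.

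The main obstacle is the factorization step, specifically matching the hypothesis $\Delta(H)<\frac{D(D-1)}{8}$ to the exact criterion of Thm.~4.1 in Ref.~\cite{Factorization2026} for both $H$ and its mirror contraction.
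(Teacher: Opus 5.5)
Your proposal is correct and follows essentially the same route as the paper. The cited factorization result is indeed stated for $\Delta(G)<\frac{D(D-1)}{4}$. Your gluing of optimal color-$0$ completions actually gives $\Delta_0(\widehat H\op_{(v,\bar v)}\widehat{\bar H})=2\Delta(H)$ with equality (the $\le$ you wrote is harmless), hence $\Delta(H\op_{(v,\bar v)}\bar H)\le 2\Delta(H)<\frac{D(D-1)}{4}$; combined with Lem.~\ref{lem:powerlawHoH}, Prop.~\ref{prop:mean_log} and the relations $k\mapsto 2k-1$, $\kappa\mapsto 2\kappa-1$, $\omega\mapsto 2\omega$, this yields Eq.~\eqref{eq:meanR_HopH} exactly as in the paper.
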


\begin{proof}
    Let $G \in \cG_D$ be a $D$-colored graph. It was shown in Ref.~\cite{Factorization2026} that if $\Delta(G) < \frac{D(D-1)}{4}$, then, the large-$N$ factorization criterion is satisfied, \ie 
    \begin{equation}
        \mean{\Tr_G(\ket{\varphi_N})\Tr_{\bar G}(\ket{\varphi_N})} \underset{N \to \infty}{\sim}\mean{\Tr_G(\ket{\varphi_N})}\mean{\Tr_{\bar G}(\ket{\varphi_N})} \,.
    \end{equation}
    Hence, the condition $\Delta(H) < \frac{D(D-1)}{8}$ ensures that $H$ satisfies the large-$N$ factorization criterion.
    
    What is more, from a graph $\widehat H \in \cG_{D+1}(H)$ satisfying $\Delta_0(\widehat H) = \Delta(H)$, one can construct the graph $\widehat H \op_{(v,\bar v)} \widehat{\bar H} \in \cG_{D+1}(\widehat{H \op_{(v,\bar v)} \bar H})$, which obeys: 
    \begin{align}
        \Delta_0(\widehat H \op_{(v,\bar v)} \widehat{\bar H}) &=  \frac{D(D-1)}{4} k (H \op_{(v,\bar v)} \bar H) + \frac{1}{2} F(H \op_{(v,\bar v)} \bar H) - \frac{D-1}{2}F_0 (\widehat H \op_{(v,\bar v)} \widehat{\bar H}) \\
        &= \frac{D(D-1)}{4} \left( 2k (H) - 1\right)+ \frac{1}{2} \left( 2F(H) - \frac{D(D-1)}{2}\right)- \frac{D-1}{2} \left( 2F_0 (\widehat H) - D \right) \\
        &=2 \Delta(H)\,. 
    \end{align}
    It follows that    
    \begin{equation}
        \Delta(H \op_{(v,\bar v)} \bar H) \leq \Delta_0(\widehat H \op_{(v,\bar v)} \widehat{\bar H}) = 2 \Delta(H) \,,
    \end{equation}
    and, together with the condition $\Delta(H) < \frac{D(D-1)}{8}$, we obtain
    \begin{equation}
        \Delta(H \op_{(v,\bar v)} \bar H) < \frac{D(D-1)}{4}\,.
    \end{equation}
    As a consequence, $H \op_{(v,\bar v)} \bar H$ also satisfies the large-$N$ factorization criterion. 

    Finally, for any $G \in \cG_D$, the asymptotic expectation value of $\tr_G$ in the Haar-random state $\ket{\varphi_N}$ obeys:
    \begin{equation}
        \mean{\Tr_G(\ket{\varphi_N})} \underset{N \to \infty}{\sim} \mu_G N^{\kappa(G) - k(G) - \frac{\omega(G)}{D-1} - \frac{2\Delta(G)}{D-1}} \,,
    \end{equation}
    where $\mu_G \in \mathbb{N}^*$ is a degeneracy factor. Furthermore, it was shown in Sec.~6.2 of Ref.~\cite{Carrozza:2026qcf} that $\kappa(H \op_{(v,\bar v)} \bar H) = 2\kappa(H) -1$, $k(H \op_{(v,\bar v)} \bar H) = 2k(H) - 1$ and $\omega(H \op_{(v,\bar v)} \bar H) = 2\omega(H)$. Using these relations, together with the fact that the logarithm and expectation values commute at large $N$ for these graphs, leads to Eq.~\eqref{eq:meanR_HopH}.
\end{proof}

\begin{ex} 
    For instance, the even moments of the partial transpose $\PT_{2n}^{(c)}$ satisfy the power law condition and are compatible, \ie $\Delta(\PT_{2n}^{(c)}) = 0$. Hence, the last proposition yields:
    \begin{equation}
        \forall n \geq 1, \, \forall c \in \paa{1,2,3}, \quad \langle C^{[v]}_{\PT_{2n}^{(c)}}(\ket{\varphi_N}) \rangle \underset{N \to \infty}{=} O(1) \,.
    \end{equation}
    In the case $n=2$, point 5. of Thm.~6.4 of Ref.~\cite{Carrozza:2026qcf}  allows us to refine the last result to: 
    \begin{equation}
        \langle C^{[v]}_{\PT_{4}^{(c)}}(\ket{\varphi_N}) \rangle \underset{N \to \infty}{=} o(1)\,.
    \end{equation}
    Indeed, the graph $\PT_{4}^{(c)}$ is compatible (see Ref.~\cite{Dong:2024gud}) and has only faces of size four or less,\footnote{The size of a face is defined as the number of edges along it.} and this result implies $\mu_{\PT_{4}^{(c)} \op_{(v,\bar v)} \bar{\PT_{4}^{(c)}}} = (\mu_{\PT_4^{(c)}})^2$. 
\end{ex}

\subsubsection{Flip operation}
Albeit not relevant for Secs.~\ref{sec:HTstates}--\ref{sec:SHTstates}, we remark that an appropriate use of the Cauchy-Schwarz inequality reveals another well-known binary operation on graphs: the \emph{flip} (see \eg Ref.~\cite{Carrozza:2026qcf}). First, we formally define the flip and mirror flip operations in a manner analogous to the vertex-contraction operation.

\begin{defi}\label{def:mirror_flip}
    Let $D \geq 2$, and let $G_1,G_2 \in \cG_D$ be two $D$-colored graphs.
    
    Let $c\in \paa{1,\dots,D}$ be a color and $e_{G_1}$ (resp.~$e_{G_2}$) be an edge of color $c$ of $G_1$ (resp.~$G_2$). The \emph{flip} of $G_1$ and $G_2$ along the edges $(e_{G_1},e_{G_2})$, denoted by $G_1 \flip_{(e_{G_1},e_{G_2})} G_2$, is obtained by cutting the edges $e_{G_1}$ and $e_{G_2}$ in half and gluing them back together in the only other way that preserves the $\LU$-invariance. This construction is illustrated in Fig.~\ref{fig:flip}. For any subset $B \subset \paa{1,\dots,D}$ with $\abs{B} \geq 2$, we observe that:
    \begin{equation} \label{eq:flipOnKappa}
        \kappa(G_1 \flip_{(e_{G_1},e_{G_2})} G_2 \vert_B) =  \begin{cases} 
        \kappa(G_1\vert_B) + \kappa(G_2\vert_B) - 1 & \text{if } c \in B \\ 
        \kappa(G_1\vert_B) + \kappa(G_2\vert_B)  & \text{otherwise} 
        \end{cases} \quad \rm{ and } \quad k(G_1 \flip_{(e_{G_1},e_{G_2})} G_2) = k(G_1) + k(G_2)\,.
    \end{equation}
    
    \begin{figure}[ht]
		\centering
		\includegraphics[height = 1.7cm]{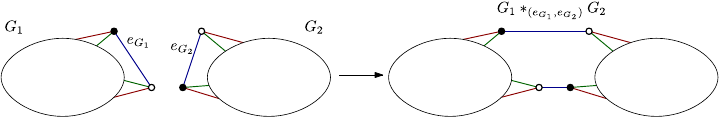}
		\caption{Left: the graphs $G_1$ and $G_2$ with the edges $e_{G_1}$ and $e_{G_2}$ highlighted. Right: the result of the operation $G_1 \flip_{(e_{G_1},e_{G_2})} G_2$.}
		\label{fig:flip}
	\end{figure}
    
    Given $G \in \cG_D$ a colored graph with a marked edge $e$ of color $c$, we let $\bar{G}$ be the graph obtained by exchanging white vertices of $G$ with black vertices and vice-versa. We also denote by $\bar{e}$ the image of $e$ in $\bar G$ under this operation (in other words, $\bar e$ is the ``mirror'' of $e$ in $\bar G$). The \emph{mirror flip operation} on $G$ is the graph operation producing the graph $G \flip_{(e,\bar{e})} \bar G$. This operation is exemplified in Fig.~\ref{fig:mirror_flip}.
    
    \begin{figure}
        \centering
        \includegraphics[height = 3.5cm]{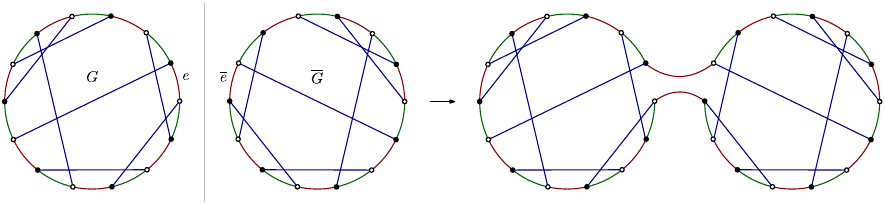}
        \caption{Left: for a given graph $G$ with a marked edge $e$ we represented the associated graph $\bar G$ with the edge $\bar e$ and we highlighted the mirror construction with the vertical gray line. Right: example of mirror flip $G \flip_{(e ,\bar e)} \bar G$ obtained from the previous choice of $G$ and $e$.}
        \label{fig:mirror_flip}
    \end{figure}
\end{defi}

From the mirror flip operation, one can construct multipartite entanglement spreads that are $\LU$-invariant, positive and $\LO$-monotone as exemplified by the following proposition.
\begin{prop} \label{prop:ES_flip}
    Let $D \geq 2$, and $G \in \cG_D$ with a marked edge $e$. For any finite-dimensional $D$-partite pure state $\ket\psi$, we define:
    \begin{equation} \label{eq:ES_flip}
        F_{G}^{[e]}(\ket{\psi}) \eqdef \begin{cases} \ln\left(\tr_{G \flip_{(e , \bar e)}\bar G}(\ket{\psi})\right) - 2 \ln\abs{\tr_{G}(\ket{\psi})} + \ln \mathrm{rank}\left( \tr_{G^{(e)}}(\ket\psi)\right) &\; \mathrm{if}\; \tr_{G}(\ket{\psi}) \neq 0 \,, \\ 
    +\infty &\; \mathrm{otherwise.}\end{cases}
    \end{equation}
    $F_{G}^{[e]}$ takes value in $[0,+\infty]$, is $\LU$-invariant, and defines a $\LO$-monotone on pure states.
\end{prop}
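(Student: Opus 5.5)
The plan is to rewrite the three terms of $F_G^{[e]}$ in terms of the single matrix $M_\psi \eqdef \tr_{G^{(e)}}(\ket\psi) \in \H_c\otimes\H_c^*$, where $c$ is the color of $e$. Cutting $e$ open and contracting the two half-edges gives back $G$, so $\tr_G(\ket\psi) = \tr(M_\psi)$. Next I would identify $\tr_{G \flip_{(e,\bar e)} \bar G}(\ket\psi)$. Since $\bar G^{(\bar e)}$ is obtained from $G^{(e)}$ by exchanging $\psi$ and $\bar\psi$, we have $\tr_{\bar G^{(\bar e)}}(\ket\psi) = \bar{M_\psi}$ entrywise. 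The flip reconnects the half-edges of $G^{(e)}$ with those of $\bar G^{(\bar e)}$ crosswise, so
\begin{equation*}
\tr_{G \flip_{(e,\bar e)}\bar G}(\ket\psi) = \tr\big(M_\psi M_\psi^\dagger\big) = \|M_\psi\|_F^2 \geq 0 ,
\end{equation*}
in the same way that the mirror vertex-contraction produced a norm squared in Lem.~\ref{lem:dom_HopH}. The precise index bookkeeping still has to be checked against Fig.~\ref{fig:mirror_flip}.

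For non-negativity, let $P$ be the orthogonal projector onto the range of $M_\psi$, so that $M_\psi = P M_\psi$ and $\|P\|_F^2 = \mathrm{rank}(M_\psi)$. Cauchy--Schwarz for the Hilbert--Schmidt inner product then gives
\begin{equation*}
|\tr_G(\ket\psi)|^2 = |\tr(P M_\psi)|^2 \leq \mathrm{rank}(M_\psi)\, \|M_\psi\|_F^2 .
\end{equation*}
Taking logarithms shows $F_G^{[e]}(\ket\psi)\geq 0$ whenever $\tr_G(\ket\psi)\neq 0$; note that $\tr_G\neq 0$ forces $M_\psi\neq 0$, so every term is finite. The value $+\infty$ covers the remaining case.

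For $\LU$-invariance, under $U_1\otimes\cdots\otimes U_D$ all contracted indices are unaffected, and only the two open color-$c$ legs transform. Hence $M_\psi \mapsto U_c M_\psi U_c^\dagger$, which preserves trace, Frobenius norm and rank. The same argument shows these three quantities are unchanged under an isometric embedding $V_c$ of a local space, where $M\mapsto V_c M V_c^\dagger$; this will be needed below.

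The main step is $\LO$-monotonicity. The rank term is not a trace-invariant, so Prop.~\ref{prop:LO_monotones_more} cannot be applied directly, and I will instead use a tensor-product structure. Suppose $\ket\psi \toLO \ket\phi$. Prop.~\ref{prop:charac_LO} gives $\ket\eta$ with $\tr_{G'}(\ket\psi) = \tr_{G'}(\ket\phi\otimes\ket\eta)$ for every $G'$. Since trace-invariants separate $\LU$-orbits, up to isometric embeddings of the local spaces, this means $\ket\psi \simLU \ket\phi\otimes\ket\eta$. I expect this to be the main obstacle, and I would try to extract it from the proof of Prop.~2.28 in \cite{Carrozza:2026qcf}, where this is presumably how $\ket\eta$ is constructed. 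Granting it, the open network factorizes: $M_{\phi\otimes\eta} = M_\phi\otimes M_\eta$. Consequently trace and Frobenius norm are multiplicative, and so is rank, since $\mathrm{rank}(A\otimes B)=\mathrm{rank}(A)\,\mathrm{rank}(B)$. By the invariance just established, it follows that
\begin{equation*}
F_G^{[e]}(\ket\psi) = F_G^{[e]}(\ket\phi) + F_G^{[e]}(\ket\eta)
\end{equation*}
whenever $\tr_G(\ket\psi)\neq 0$. In that case $\tr_G(\ket\phi)$ and $\tr_G(\ket\eta)$ are also nonzero, by the factorization itself or by Cor.~\ref{cor:LO_monotones}. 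Non-negativity of $F_G^{[e]}(\ket\eta)$ then yields $F_G^{[e]}(\ket\psi)\geq F_G^{[e]}(\ket\phi)$. If $\tr_G(\ket\psi)=0$, the inequality is trivial because $F_G^{[e]}(\ket\psi)=+\infty$.
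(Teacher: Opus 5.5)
Your proposal is correct and follows essentially the paper's route. Non-negativity comes from the range projector and the Hilbert--Schmidt Cauchy--Schwarz bound $|\tr_G(\ket\psi)|^2\leq \mathrm{rank}(M_\psi)\,\tr_{G\flip_{(e,\bar e)}\bar G}(\ket\psi)$, and monotonicity comes from the decomposition of Prop.~\ref{prop:charac_LO}, as in Prop.~\ref{prop:LO_monotones_more}. Your explicit argument that the rank term factorizes, via $\ket\psi\simLU\ket\phi\otimes\ket\eta$, supplies a detail the paper leaves implicit. The step you flag as the main obstacle is already settled by your own remark that trace-invariants separate $\LU$-orbits up to local isometric embeddings.
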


\begin{proof} 
    Let $G\in \cG_D$ be a $D$-colored graph with a marked edge $e$, say of color $c\in \{1, \ldots , D\}$. Consider a $D$-partite pure state $\ket\psi \in \H_1 \otimes \cdots \otimes \H_D$ (with $\H_{c'}$ finite-dimensional for any $c'\in \{1, \ldots , D\}$). Since $\Tr_{G^{(e)}}(\ket{\psi})$ belongs to $\H_c \otimes \H_c^*$, we can view it as a linear map on $\H_c$. By application of the singular value decomposition theorem, we can find an integer $N_\psi \in \bb{N}^*$, two orthonormal bases $(\vert u_i^{\psi} \rangle)_{1 \leq i \leq N_\psi}$ and $(\vert v_i^{\psi} \rangle)_{1 \leq i \leq N_\psi}$ of $\H_c$, and positive numbers $(\lambda_i^{\psi})_{1 \leq i \leq N_\psi}$ such that:
    \begin{equation}
        \Tr_{G^{(e)}}(\ket{\psi}) = \sum_{i=1}^{N_\psi} \lambda_i^{\psi} \vert u_i^{\psi} \rangle \langle v_i^{\psi} \vert\,.
    \end{equation}
    In particular, we observe that $N_{\psi} = \rm{rank}(\Tr_{G^{(e)}}(\ket{\psi}))$. Let us then introduce the orthogonal projector
    \begin{equation}
        P_\psi \eqdef \sum_{i=1}^{N_\psi} \vert u_i^{\psi} \rangle \langle u_i^{\psi} \vert\,,\quad \rm{obeying} \quad \Tr P_\psi = N_\psi\,.
    \end{equation}
Employing a graphical representation of this linear operator as $P_\psi  = \vcenter{\hbox{\includegraphics[scale=2]{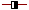}}}$, we can derive the following Cauchy-Schwarz inequality:
    \begin{equation} \label{eq:CS_flip}
        \abs{\tr_G(\ket\psi)}^2 = \vcenter{\hbox{\includegraphics[scale=1.3]{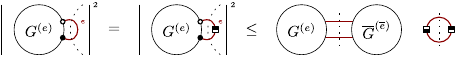}}} = \tr_{G \flip_{(e , \bar e)}\bar G}(\ket{\psi}) \times N_\psi \,,
    \end{equation}
    where we have made use of the relations $P_\psi \cdot \Tr_{G^{(e)}}(\ket{\psi}) = \Tr_{G^{(e)}}(\ket{\psi})$ and  $\tr P_\psi^2 = \tr P_\psi = N_\psi$. 
    
    Now, for any finite-dimensional $D$-partite pure state $\ket\psi$, let us define\footnote{This quantity is well-defined since the trace of the linear map $\Tr_{G^{(e)}}(\ket{\psi})$ is $\tr_G (\ket\psi) \neq0$, hence $\mathrm{rank}(\Tr_{G^{(e)}}(\ket{\psi})) >0$.}
    \begin{equation}
        W_{G^{(e)}}(\ket{\psi}) \eqdef \frac{\abs{\Tr_G(\ket{\psi})}^2}{ \mathrm{rank}(\Tr_{G^{(e)}}(\ket{\psi})) }\,.
    \end{equation}
$W_{G^{(e)}}$ is a (non-polynomial) $\LU$-invariant: indeed, the numerator is clearly invariant, and so is the denominator since $\Tr_{G^{(e)}}(\ket{\psi})$ is mapped to the conjugate element $U_c \Tr_{G^{(e)}}(\ket{\psi}) U_c^{-1}$ under a local unitary transformation $U_1 \otimes \cdots \otimes U_D$. Furthermore, for any finite-dimensional $D$-partite pure state $\ket\psi$ such that $\tr_G(\ket\psi)\neq 0$, equation \eqref{eq:CS_flip} implies:
\begin{equation}
 0 \leq W_{G^{(e)}}(\ket\psi) \leq \tr_{G \flip_{(e , \bar e)}\bar G}(\ket{\psi}) \leq 1\,.
\end{equation}
We can thus apply the same reasoning as in Prop.~\ref{prop:LO_monotones_more} to the pair of invariants $(W_{G^{(e)}}, \tr_{G \flip_{(e , \bar e)}\bar G})$ to conclude that the function $F_G^{[e]}\eqdef \ln \circ \tr_{G \flip_{(e , \bar e)}\bar G} - \ln \circ W_{G^{(e)}}$ defines a $\LO$-monotone valued in $[0, +\infty[$ on the set of pure states $\ket\psi$ obeying $\tr_G (\ket\psi ) \neq 0$ (more precisely, this function is non-increasing under $\LO$ transformations on that set). Furthermore, extending $F_G^{[e]}$ to the full state space by attributing the value $+ \infty$ to any state $\ket\psi$ such that $\tr_G (\ket\psi ) = 0$ preserves the monotonicity.\footnote{Indeed, by Cor.~\ref{cor:LO_monotones}, it is not possible to have a $\LO$-relation of the form $\ket\psi \toLO \ket\phi$ with $\tr_G(\ket\psi)\neq 0$ (thus $F_G^{[e]} (\ket\psi)\in [0, +\infty[$) and $\tr_G(\ket\phi)= 0$ (thus $F_G^{[e]} (\ket\phi)=+\infty$).}
\end{proof}

Without going into too much detail, we will review a few results about this flip operation, in analogy with what we did for the vertex-contraction operation. We will firstly be interested in the minimizers of the entanglement spreads defined in Prop.~\ref{prop:ES_flip}. The minimizers satisfy a graphical equation presented below
\begin{equation} \label{eq:CS_minimizers_flip}
    \vcenter{\hbox{\includegraphics[scale=1.5]{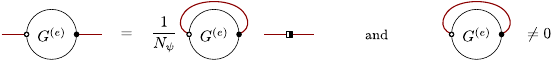}}}\,,
\end{equation}
and -- as presented in the following proposition -- it is not difficult to use Sec.~6.2 of Ref.~\cite{Carrozza:2026qcf} to prove that the HT states are among the minimizers.

\begin{prop}\label{prop:flip_spread_HT}
    For any $G \in \cG_D$ with a marked edge $e$, and any HT state $\ket{\psi_\alpha}$ (with weight function $\alpha$), we have:
    \begin{equation}
        F_{G}^{[e]}(\ket{\psi_\alpha}) = 0\,.
    \end{equation}
\end{prop}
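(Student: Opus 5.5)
The plan is a direct computation on the product structure $\ket{\psi_\alpha}=\bigotimes_{B}\ket{\GHZ}_{B,\alpha(B)}$. We already know the closed formula $\tr_G(\ket{\psi_\alpha})=\prod_{B}\alpha(B)^{\kappa(G\vert_B)-k(G)}$, used in the proof of Prop.~\ref{prop:vertex_spread_HT}. In particular $\tr_G(\ket{\psi_\alpha})>0$, so $F_G^{[e]}(\ket{\psi_\alpha})$ is given by the finite branch of Eq.~\eqref{eq:ES_flip}. We then evaluate its three terms separately and check that they cancel. Throughout, let $c$ denote the color of the marked edge $e$.

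First, for the flipped graph we apply Eq.~\eqref{eq:flipOnKappa} with $G_1=G$, $G_2=\bar G$. Since $\kappa(\bar G\vert_B)=\kappa(G\vert_B)$ and $k(\bar G)=k(G)$, we get
\begin{equation}
\tr_{G\flip_{(e,\bar e)}\bar G}(\ket{\psi_\alpha})=\prod_{B}\alpha(B)^{2\kappa(G\vert_B)-2k(G)}\prod_{B\ni c}\alpha(B)^{-1}=\tr_G(\ket{\psi_\alpha})^2\prod_{B\ni c}\alpha(B)^{-1}.
\end{equation}
Hence $\ln\tr_{G\flip\bar G}-2\ln\abs{\tr_G}=-\sum_{B\ni c}\ln\alpha(B)$. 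It then suffices to show that
\begin{equation}
\mathrm{rank}\bigl(\tr_{G^{(e)}}(\ket{\psi_\alpha})\bigr)=\prod_{B\ni c}\alpha(B).
\end{equation}

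For the rank, we use that trace-invariants, including open ones, factorize over tensor products. The open graph $G^{(e)}$ evaluated on $\ket{\psi_\alpha}$ is therefore a tensor product over $B$ of $G^{(e)}$ evaluated on the individual GHZ factors. For $B\not\ni c$, the factor $\ket{\GHZ}_{B,\alpha(B)}$ has no leg on the copies composing $\H_c$, so it contributes a nonzero scalar, namely $\alpha(B)^{\kappa(G\vert_B)-k(G)}$. For $B\ni c$, the GHZ tensor forces all indices of colors in $B$ to coincide on each connected component of $G\vert_B$. The two half-edges of $e$ lie in the same component of $G^{(e)}\vert_B$, or at least they are glued through the same vertices of that component, so their free indices are forced to be equal, to some $i\le\alpha(B)$. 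The resulting factor is therefore a positive scalar times the projector $\sum_{i=1}^{\alpha(B)}\ket{i}\bra{i}$ on the corresponding copy of $\H_{\mathrm{loc}}$ in $\H_c$. Since $\H_c$ is the tensor product of exactly these copies (one for each $B\ni c$), $\tr_{G^{(e)}}(\ket{\psi_\alpha})$ is a nonzero multiple of a tensor product of projectors of ranks $\alpha(B)$. Its rank is thus $\prod_{B\ni c}\alpha(B)$, and summing the three contributions gives $F_G^{[e]}(\ket{\psi_\alpha})=0$.

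The main point requiring care is the claim that, for $B\ni c$, cutting $e$ yields exactly an identity on the first $\alpha(B)$ basis vectors (up to a scalar) rather than something of smaller rank. This amounts to checking that, after opening $e$, both of its endpoint indices remain tied to the common index $i$ of the single component containing $e$, and that the sum over the other components only produces a scalar factor. Alternatively, one can bypass the explicit matrix. The trace of $\tr_{G^{(e)}}(\ket{\psi_\alpha})$ equals $\tr_G(\ket{\psi_\alpha})$ and is nonzero, and a matrix proportional to a projector of rank $r$ with trace $t$ satisfies $t^2=r\,\tr(M^\dagger M)$. Here $\tr(M^\dagger M)$ equals the flip value, so $r=\prod_{B\ni c}\alpha(B)$ follows by comparing with the flip computation once proportionality to a projector is established. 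This also shows that the Cauchy--Schwarz inequality \eqref{eq:CS_flip} is saturated, consistent with Eq.~\eqref{eq:CS_minimizers_flip}.
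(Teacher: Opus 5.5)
Your route is the paper's: evaluate $\tr_{G\flip_{(e,\bar e)}\bar G}(\ket{\psi_\alpha})$ via Eq.~\eqref{eq:flipOnKappa}, factorize $\tr_{G^{(e)}}(\ket{\psi_\alpha})$ over the blocks $B$, obtain nonzero scalars for $B\not\ni c$ and a multiple of $\mathrm{id}_{\mathbb{C}^{\alpha(B)}}$ for $B\ni c$, and conclude that the rank is $\prod_{B\ni c}\alpha(B)$.

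\textbf{The gap.} The one step you never prove, and hedge on with ``or at least they are glued through the same vertices'', is that the two endpoints of $e$ lie in the same connected component of $G^{(e)}\vert_B$. This is not a technicality. If the endpoints lay in different components, the GHZ constraint would tie the two free indices to two independent summation indices. The block-$B$ factor would then be a multiple of $\ket{u}\bra{u}$ with $\ket{u}=\sum_{i\le\alpha(B)}\ket{i}$, which has rank $1$ rather than $\alpha(B)$.

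\textbf{The fix.} The paper closes this with the absence of bridges, and here the relevant graph is $G\vert_B$, a bipartite $\abs{B}$-regular graph with $\abs{B}\geq 2$. Suppose deleting $e$ split off a piece $A$ containing its white endpoint but not its black one, with $w_A$ white and $b_A$ black vertices. Counting the edges of $A$ from the white side gives $\abs{B}\,w_A-1$, and from the black side gives $\abs{B}\,b_A$. Equating them yields $\abs{B}(w_A-b_A)=1$, which is impossible. Adding this argument completes your proof.

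\textbf{Your bypass.} The proposed alternative does not really avoid the issue. For the mirror flip, the ``$-1$'' in Eq.~\eqref{eq:flipOnKappa} for $c\in B$ holds exactly when $e$ is not a bridge of $G\vert_B$. So the comparison $\abs{t}^2=r\,\tr(M^\dagger M)$ yields $r=\prod_{B\ni c}\alpha(B)$ only because that equation already encodes the same fact. If you take Eq.~\eqref{eq:flipOnKappa} as given, the bypass is valid. You would still need to show that $M$ is proportional to a projector. That does follow from the component-wise index forcing, which gives a positive multiple of a projector whether or not the endpoints are connected, but as written you leave it unestablished too.
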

\begin{proof}
    Let $c \in \paa{1,\dots,D}$ be the color of the marked edge $e$ of $G$ and $N_c = \dim \H_c$. From its definition, the matrix  $\Tr_{G^{(e)}}(\ket{\psi_{\alpha}}) \in \cal{M}_{N_c}(\bb{C})$ can be decomposed as
    \begin{equation}
        \Tr_{G^{(e)}}(\ket{\psi_{\alpha}}) = \bigotimes_{\substack{B \in \paa{1,\dots,D}\\\abs{B} \geq 2}}  \Tr_{G^{(e)}}(\vert\GHZ\rangle_{B, \alpha(B)} )\,,
    \end{equation}
    or more explicitly,
    \begin{equation}
        \Tr_{G^{(e)}}(\ket{\psi_{\alpha}}) = \left( \prod_{\substack{B \in \paa{1,\dots,D}\\\abs{B} \geq 2, \, c \not\in B}} \alpha(B)^{\kappa(G|_B) - k(G)} \right) \left( \bigotimes_{\substack{B \in \paa{1,\dots,D}\\\abs{B} \geq 2, \, c \in B}}  \Tr_{G^{(e)}}(\vert\GHZ\rangle_{B, \alpha(B)} ) \right)\,,
    \end{equation}
    where, the first term being a non-vanishing number, we can focus exclusively on the second term. The rank being multiplicative under tensor products, we can focus on one block $B_0 \in \paa{1,\dots,D}$ with $\abs{B_0} \geq 2$ and $c \in B_0$. Using the definition of trace-invariants, the matrix $\Tr_{G^{(e)}}(\vert\GHZ\rangle_{B_0 , \alpha(B_0)})$ is a product of contraction of copies of tensor components $\pa{\psi_{B_0 , \alpha(B_0)}}_{a_1 \dots a_D}$ and $\pa{\bar{\psi_{B_0 , \alpha(B_0)}}}_{a_1 \dots a_D}$, that one can define as follows: we can assume that $a_1, \ldots , a_D \in \{1, \ldots , \alpha(B_0)\}$ and let
    \begin{equation}\label{eq:tensor_comp_flip}
        \pa{\psi_{B_0, \alpha(B_0)}}_{a_1 \dots a_D} \eqdef \frac{1}{\sqrt{\alpha(B_0)}} \sum_{s=1}^{\alpha(B_0)} \prod_{c' \in B_0} \delta_{s a_{c'}}\prod_{c'' \notin B_0} \varphi^{(c'')}_{a_{c''}}\,,
    \end{equation}
    where $\{\varphi^{(c'')}\}_{c''\not\in B_0}$ are normalized vectors in $\mathbb{C}^{\alpha(B_0)}$. Now, note that the two vertices adjacent to $e$ in $G$ are necessarily in the same connected component of $G^{(e)}$, because the existence of a bridge in $G$ would be inconsistent with its $D$-colored structure. As a result, by explicit contraction of the tensor components from \eqref{eq:tensor_comp_flip}, we can check that 
    \begin{equation}\label{eq:proof_flip-1}
        \Tr_{G^{(e)}}(\vert\GHZ\rangle_{B_0 , \alpha(B_0)} ) = \alpha(B_0)^{\kappa(G\vert_{B_0})- k(G)-1}\, \mathrm{id}_{\mathbb{C}^{\alpha(B_0)}} 
    \end{equation}
    The rank of this map is $\alpha(B_0)$, thus proving that:
    \begin{equation}
       \mathrm{rank}\left( \Tr_{G^{(e)}}(\ket{\psi_{\alpha}})\right) = \prod_{\substack{B \in \paa{1,\dots,D}\\\abs{B} \geq 2, \, c \in B}} \alpha(B) \,.
    \end{equation}
    Moreover, using Eq.~\eqref{eq:flipOnKappa}, we observe that:
    \begin{align}
        \abs{\Tr_G(\ket{\psi_\alpha})}^2 &= \prod_{B \subset\{1, \ldots, D\},\, |B|\geq 2} \alpha(B)^{2\kappa(G\vert_B)- 2k(G)} \,, \\ 
        &= \left( \prod_{B \subset\{1, \ldots, D\},\, |B|\geq 2} \alpha(B)^{\kappa(G \flip_{(e,\bar e)} \bar G\vert_B) - k(G \flip_{(e,\bar e)} \bar G)} \right) \left(  \prod_{\substack{B \in \paa{1,\dots,D}\\\abs{B} \geq 2, \, c \in B}} \alpha(B) \right)\,, \\
        &=\Tr_{G \flip_{(e,\bar e)} \bar G}(\ket{\psi_{\alpha}}) \cdot \mathrm{rank}\left( \Tr_{G^{(e)}}(\ket{\psi_{\alpha}})\right) \,,
    \end{align}
    thus showing that $F_{G}^{[e]}(\ket{\psi_\alpha}) = 0$.
\end{proof}

Regarding the computation of the expectation value, we can first focus on the power law condition of Lem.~\ref{prop:mean_log} by the use of the next lemma.

\begin{lem} \label{lem:powerlawGflipG}
    Let $D \geq 2$, $G \in \cG_D$, and $e$ a marked edge in $G$. If $G $ follows a power law, then, so does $G \flip_{(e,\bar e)} \bar G$.
\end{lem}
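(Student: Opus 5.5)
The plan mirrors the proof of Lem.~\ref{lem:powerlawHoH}, with Lem.~\ref{lem:dom_HopH} replaced by the Cauchy--Schwarz inequality \eqref{eq:CS_flip} from the proof of Prop.~\ref{prop:ES_flip}. There is one extra ingredient: we must control the rank factor $N_\psi = \mathrm{rank}\left(\tr_{G^{(e)}}(\ket\psi)\right)$ uniformly in terms of the local dimension.

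Assume $G$ follows a power law. Then there exist $K>0$ and $r\geq 0$ such that, for every $N\in\mathbb{N}^*$ and every normalized $\ket\psi \in S\left((\mathbb{C}^N)^{\otimes D}\right)$, we have $\abs{\tr_G(\ket\psi)} \geq K N^{-r}$. In particular $\tr_G(\ket\psi)\neq 0$, so the construction in the proof of Prop.~\ref{prop:ES_flip} applies. Equation \eqref{eq:CS_flip} then gives
\begin{equation}
\abs{\tr_G(\ket\psi)}^2 \leq \tr_{G \flip_{(e,\bar e)} \bar G}(\ket\psi)\, \mathrm{rank}\left(\tr_{G^{(e)}}(\ket\psi)\right)\,.
\end{equation}
Let $c$ be the color of $e$. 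The operator $\tr_{G^{(e)}}(\ket\psi)$ acts on $\H_c = \mathbb{C}^N$, so its rank is at most $N$. Combining the two bounds yields
\begin{equation}
\tr_{G \flip_{(e,\bar e)} \bar G}(\ket\psi) \geq \frac{\abs{\tr_G(\ket\psi)}^2}{N} \geq K^2 N^{-(2r+1)}\,.
\end{equation}
Since $\tr_{G \flip_{(e,\bar e)} \bar G}(\ket\psi)$ is non-negative, this quantity equals its absolute value. The bound therefore holds for all $N$ and all normalized $\ket\psi$ with constants $K^2>0$ and $2r+1\geq 0$. This is precisely the statement that $G \flip_{(e,\bar e)} \bar G$ follows a power law.

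The only step that needs care is the positivity of $\tr_{G \flip_{(e,\bar e)}\bar G}(\ket\psi)$. I would justify it by the mirror (axial) symmetry of $G \flip_{(e,\bar e)} \bar G$: cutting along the symmetry axis exhibits the invariant as the squared Hilbert--Schmidt norm of $\tr_{G^{(e)}}(\ket\psi)$. Alternatively, the chain of inequalities $0 \leq W_{G^{(e)}} \leq \tr_{G\flip_{(e,\bar e)}\bar G}$ already established in the proof of Prop.~\ref{prop:ES_flip} gives positivity directly.

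The rank bound $N_\psi \leq \dim \H_c$ is where the exponent worsens from $2r$ to $2r+1$. This is harmless, because the definition of a power law allows any exponent $r\geq 0$. I do not anticipate any genuine obstacle beyond these two points.
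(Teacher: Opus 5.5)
Your proof is correct and is essentially the argument the paper has in mind when it says the proof closely follows that of Lem.~\ref{lem:powerlawHoH}. You replace Lem.~\ref{lem:dom_HopH} by the Cauchy--Schwarz bound \eqref{eq:CS_flip}, and you absorb the rank factor via $\mathrm{rank}\left(\tr_{G^{(e)}}(\ket\psi)\right)\leq N$, which only shifts the exponent from $2r$ to $2r+1$; identifying $\tr_{G\flip_{(e,\bar e)}\bar G}(\ket\psi)$ as the squared Hilbert--Schmidt norm of $\tr_{G^{(e)}}(\ket\psi)$ correctly settles positivity.
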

\begin{proof}
    We omit the proof, which closely follows that of Lem.~\ref{lem:powerlawHoH}.
\end{proof}

If, in addition to the fact that $G$ follows a power law, we know that $G$ and $G \flip_{(e,\bar e)} \bar G$ factorize at large $N$, we could exchange the logarithm with the average for the first two terms of $F_{G}^{[e]}$. However, as highlighted by the following proposition, one can extract a condition uniquely based on the graph $G$ for which the large-$N$ factorization criterion is satisfied. 

\begin{prop} \label{prop:exchange_flip}
    Let $D \geq 3$, and let $G \in \cG_D$ be a $D$-colored graph with a marked edge $e$. If $\Delta(G) < \frac{D(D-1)}{8}$, then both $G$ and $G \flip_{(e,\bar e)} \bar G$ satisfy the large-$N$ factorization criterion.
\end{prop}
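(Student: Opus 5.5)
The plan mirrors the proof of Prop.~\ref{prop:Haar_HoH}. I rely on the result of Ref.~\cite{Factorization2026} already invoked there: any $G'\in\cG_D$ with $\Delta(G')<\frac{D(D-1)}{4}$ satisfies the large-$N$ factorization criterion. Since $\Delta(G)<\frac{D(D-1)}{8}<\frac{D(D-1)}{4}$, the claim for $G$ is immediate. It therefore suffices to prove the inequality
\begin{equation*}
\Delta(G\flip_{(e,\bar e)}\bar G)\leq 2\Delta(G)+\frac{D-1}{2}\pa{\text{something nonpositive}},
\end{equation*}
that is, to show $\Delta(G\flip_{(e,\bar e)}\bar G)\le 2\Delta(G)$. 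The hypothesis then gives $\Delta(G\flip_{(e,\bar e)}\bar G)<\frac{D(D-1)}{4}$.

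To bound the degree of compatibility, I pick $\widehat G\in\cG_{D+1}(G)$ with $\Delta_0(\widehat G)=\Delta(G)$. Let $\widehat{\bar G}$ be its mirror image, obtained by adding the same color-$0$ edges to $\bar G$. I then form the $(D+1)$-colored graph $\widehat G\flip_{(e,\bar e)}\widehat{\bar G}$. Performing the flip on the color-$c$ edges $e,\bar e$ inside the $(D+1)$-colored graphs yields an element of $\cG_{D+1}(G\flip_{(e,\bar e)}\bar G)$, because the color-$0$ edges are untouched. So $\Delta(G\flip_{(e,\bar e)}\bar G)\le\Delta_0(\widehat G\flip_{(e,\bar e)}\widehat{\bar G})$, and it remains to compute the latter.

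The computation tracks each term of $\Delta_0$. First, $k(G\flip\bar G)=2k(G)$. For faces, applying Eq.~\eqref{eq:flipOnKappa} with $B=\{c,c'\}$ and $B=\{0,c\}$ shows that faces not containing color $c$ are doubled, while the faces through $e$ and $\bar e$ merge into one. This gives $F(G\flip\bar G)=2F(G)-(D-1)$, since there are $D-1$ face colors involving $c$ among colors $1,\dots,D$, and $F_0(\widehat G\flip\widehat{\bar G})=2F_0(\widehat G)-1$. Substituting,
\begin{equation*}
\Delta_0(\widehat G\flip\widehat{\bar G})=2\cdot\tfrac{D(D-1)}{4}k(G)+\tfrac12\pa{2F(G)-(D-1)}-\tfrac{D-1}{2}\pa{2F_0(\widehat G)-1}=2\Delta(G).
\end{equation*}
The two $\frac{D-1}{2}$ corrections cancel exactly, which gives $\Delta(G\flip_{(e,\bar e)}\bar G)\le 2\Delta(G)<\frac{D(D-1)}{4}$, and factorization follows.

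The main point requiring care is the face count. Eq.~\eqref{eq:flipOnKappa} is stated for $\kappa$ of restrictions to $B$ with $c\in B$, and it says each such $\kappa$ drops by exactly one relative to the disjoint union. Applied to $B=\{c,c'\}$ this is precisely the face-merging statement, and the same reasoning carries over verbatim to $(D+1)$-colored graphs with $B=\{0,c\}$. I would verify that this equation genuinely extends to $\widehat G$, which it does because the flip is a local operation on two color-$c$ edges. The assumption $D\ge3$ is not needed for the counting itself; it simply matches the regime of the cited factorization theorem, which I would double-check.
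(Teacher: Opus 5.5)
Your proposal is correct and follows the paper's proof essentially verbatim. You choose $\widehat G$ realizing $\Delta(G)$, form $\widehat G \flip_{(e,\bar e)} \widehat{\bar G}\in\cG_{D+1}(G\flip_{(e,\bar e)}\bar G)$, show that its $\Delta_0$ equals $2\Delta(G)$, and invoke the $\frac{D(D-1)}{4}$ factorization threshold of Ref.~\cite{Factorization2026}; your explicit counts $F\mapsto 2F(G)-(D-1)$ and $F_0\mapsto 2F_0(\widehat G)-1$ (via Eq.~\eqref{eq:flipOnKappa}) fill in the intermediate step the paper leaves implicit, and the stray first display with ``something nonpositive'' is unnecessary and can be dropped.
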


\begin{proof}
    The proof follows closely that of Prop.~\ref{prop:Haar_HoH}. Indeed, from a graph $\widehat G \in \cG_{D+1}(G)$ satisfying $\Delta_0(\widehat G) = \Delta(G)$, one can construct the graph $\widehat G \flip_{(e,\bar e)} \widehat{\bar G} \in \cG_{D+1}(G \flip_{(e,\bar e)} \bar G)$, which satisfies the relation:
    \begin{align}
        \Delta_0(\widehat G \flip_{(e,\bar e)} \widehat{\bar G}) &= \frac{D(D-1)}{4} k( G \flip_{(e,\bar e)} {\bar G}) + \frac 1 2 F( G \flip_{(e,\bar e)} {\bar G}) - \frac{D-1}{2} F_0(\widehat G \flip_{(e,\bar e)} \widehat{\bar G})  \\
        &= 2 \pa{\frac{D(D-1)}{4}  k( G) + \frac 1 2 F( G) - \frac{D-1}{2} F_0(\widehat G)} \\
        &= 2 \Delta(G) \,.
    \end{align}
    Then, this implies that
    \begin{equation}
        \Delta(G \flip_{(e,\bar e)}\bar G) \leq \Delta_0(\widehat G \flip_{(e,\bar e)} \widehat{\bar G}) = 2 \Delta(G) \,.
    \end{equation}
    Making the hypothesis that $\Delta(G) < \frac{D(D-1)}{8}$ -- from which one directly has that $G$ satisfies the large-$N$ factorization criterion (see Thm.~4.1 of Ref.~\cite{Factorization2026})  -- leads to
    \begin{equation}
        \Delta(G \flip_{(e,\bar e)}\bar G) < \frac{D(D-1)}{4} \,.
    \end{equation}
    Thus, $G \flip_{(e,\bar e)}\bar G$ also satisfies the large-$N$ factorization criterion.
\end{proof}

Under a different condition, the averages of the trace-invariants can be related to each other as presented in the following proposition.

\begin{prop} \label{prop:strong_cond_flip}
    Let $D \geq 3$, $G \in \cG_D$ a $D$-colored graph with a marked edge $e$, and let $(\ket{\varphi_N})_{N\in\bb{N}^*}$ be a sequence of $D$-partite Haar-random states of local dimension $N$. If $\Delta(G) < \frac{(D-1)(D-2)}{4}$, then:
    \begin{equation}
        N \cdot \mean{\Tr_{G \flip_{(e,\bar e)} \bar G}(\ket{\varphi_N})} \underset{N \to \infty}{=} \mean{\abs{\Tr_G(\ket{\varphi_N})}^2}\pa{1 + O\pa{\frac{1}{N}}}\,.
    \end{equation}
    In particular, we have $\Delta(G \flip_{(e,\bar e)} \bar G) = 2 \Delta(G)$.
\end{prop}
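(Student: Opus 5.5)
We compare the two sides through their Wick expansions. Write $\widehat{K}$ for a graph in $\cG_{D+1}(K)$, that is, a Wick pairing encoded by color-$0$ edges. Wick's theorem gives
\[
\mean{\Tr_K(\ket{\varphi_N})} = \frac{(N^D-1)!}{(N^D+k(K)-1)!}\sum_{\widehat K\in\cG_{D+1}(K)} N^{F_0(\widehat K)}.
\]
Since $\abs{\Tr_G}^2=\Tr_{G\sqcup\bar G}$, the left side involves graphs with $2k(G)$ vertices. The flipped graph $G\flip_{(e,\bar e)}\bar G$ also has $2k(G)$ vertices, so both normalization prefactors are identical. The claim therefore reduces to a comparison of the pairing sums: $N$ times the sum for $G\flip\bar G$ should equal the sum for $G\sqcup\bar G$, up to relative corrections $O(1/N)$.

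\emph{Step 1: the bijection.} Pairings of $G\sqcup\bar G$ and of $G\flip\bar G$ are the same objects: bijections between white and black vertices of the common vertex set. Fix such a pairing. The two $(D+1)$-colored graphs differ only by the flip of the color-$c$ edges $e,\bar e$, so only $F_{0c}$ can change. Flipping two edges changes the number of $(0,c)$-faces by exactly $\pm1$:
\begin{itemize}
\item if $e$ and $\bar e$ lie on distinct $(0,c)$-faces, the flip merges them, giving $-1$;
\item if they lie on the same face, the flip either splits it, giving $+1$, or keeps it (orientation-dependent), giving $0$.
\end{itemize}
For bipartite graphs one checks that the nontrivial cases are exactly $\pm1$. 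Hence $F_0(\widehat{G\sqcup\bar G})-F_0(\widehat{G\flip\bar G})\in\{+1,-1\}$ for each pairing.

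\emph{Step 2: the dominant pairings.} We show the leading pairings of both sides are those realizing the $+1$ case. Since $\Delta(G)<\frac{(D-1)(D-2)}{4}<\frac{D(D-1)}{4}$, Thm.~4.1 of \cite{Factorization2026} says the leading pairings of $G\sqcup\bar G$ do not connect $G$ to $\bar G$; these are products $\widehat G\sqcup\widehat{\bar G}$. For those, $e$ and $\bar e$ lie on different connected components, hence on distinct faces, so the flip merges the faces and
\[
F_0(\widehat G\flip\widehat{\bar G})=F_0(\widehat G)+F_0(\widehat{\bar G})-1.
\]
Thus each leading term of the left side maps to a term of the right side smaller by exactly a factor $N$. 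This also gives $\Delta_0(\widehat G\flip\widehat{\bar G})=2\Delta(G)$, as computed in Prop.~\ref{prop:exchange_flip}.

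\emph{Step 3: the obstacle.} The main difficulty is showing that pairings mixing $G$ and $\bar G$ contribute to $G\flip\bar G$ at no higher order than $N^{-1}$ times the product contributions. Equivalently, one needs $\Delta(G\flip\bar G)\ge 2\Delta(G)$; the reverse inequality is already known, so this yields the stated equality. The plan is as follows.
\begin{itemize}
\item Take a mixing pairing and reroute: undo the flip, which changes $F_0$ by at most $1$, and bound the mixing pairing of $G\sqcup\bar G$.
\item Use the bound behind Thm.~4.1 of \cite{Factorization2026}: a connecting pairing costs at least a factor $N^{-(D-1)(D-2)/2+2\Delta(G)}$ relative to the leading order. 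This relative loss, in the variable $\Delta_0$, is where the threshold $\frac{(D-1)(D-2)}{4}$ in the hypothesis comes from.
\item Under that hypothesis the loss strictly exceeds the at most one-face gain from the flip. Mixing pairings are then suppressed by an extra $1/N$, and the $-1$ cases among non-mixing pairings are already subleading.
\end{itemize}

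The precise bookkeeping of this exponent against the flip's $\pm1$ face change is the step I expect to be delicate. I would first try to reproduce the face-counting inequality from \cite{Factorization2026} for connected $(D+1)$-colored graphs $\widehat{G\sqcup\bar G}$, tracking only the $(0,c)$ faces through $e$. Summing over pairings then gives the ratio $1+O(1/N)$. Comparing exponents also gives $\Delta(G\flip\bar G)=2\Delta(G)$.
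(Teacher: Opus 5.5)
Your route is a direct pairing-by-pairing comparison of the two Wick sums. It differs from the paper's, which simply invokes item 4 of Thm.~6.4 of \cite{Carrozza:2026qcf} with $G_1=G$, $G_2=\bar G$. Steps 1 and 2 are correct: the prefactors coincide, the flip changes $F_0$ by exactly $\pm1$, and the optimal non-mixing products lose exactly one face. Note that the ``$-1$ cases among non-mixing pairings'' do not exist: for non-mixing pairings $e$ and $\bar e$ always lie in different components, so the flip always merges.

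\textbf{The gap is Step 3.} It carries all the content, it is not carried out, and the bookkeeping you sketch would not close it.

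First, $\Delta(G\flip\bar G)\ge 2\Delta(G)$ is not what is needed. To get $N\mean{\Tr_{G\flip\bar G}}=\mean{\abs{\Tr_G}^2}(1+O(1/N))$, every pairing of $G\flip\bar G$ that connects $G$ to $\bar G$ must have $\Delta_0>2\Delta(G)$ \emph{strictly}. Otherwise extra leading terms push the coefficient above $\mu_G^2$.

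Second, the cost of a connecting pairing of $G\sqcup\bar G$ is $\binom{D}{2}-2\Delta(G)$ in $\Delta_0$-units. This is what produces the factorization threshold $D(D-1)/4$. The quantity $\frac{(D-1)(D-2)}{2}-2\Delta(G)$ you wrote is the \emph{net} cost after the flip.

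Third, the cost must beat more than a one-face gain. A mixing term can gain one $(0,c)$-face while every leading non-mixing term loses one. The relative gain is therefore two faces, i.e.\ $D-1$ in $\Delta_0$-units. Comparing with a single face would give the wrong threshold $\Delta(G)<(D-1)^2/4$.

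\textbf{How to close it.} Euler's formula on each $3$-colored subgraph $\widehat K\vert_{\{0,c,c'\}}$ gives
\[
\Delta_0(\widehat K)=\sum_{1\le c<c'\le D}\Big(F_{cc'}(K)-\kappa\big(\widehat K\vert_{\{0,c,c'\}}\big)+g_{cc'}(\widehat K)\Big),\qquad g_{cc'}(\widehat K)\ge 0 .
\]
Take a pairing of $K=G\sqcup\bar G$ that connects $G$ and $\bar G$. Then every summand is $\ge 1$. Indeed, if each component of $\widehat K\vert_{\{0,c,c'\}}$ contained a single $(c,c')$-face, every color-$0$ edge would join two vertices of one face of $K$. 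It would then stay inside $G$ or inside $\bar G$, a contradiction. Hence mixing pairings satisfy $\Delta_0(\widehat{G\sqcup\bar G})\ge\binom D2$.

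Next, $F(G\flip\bar G)=2F(G)-(D-1)$ independently of the pairing. So $\Delta_0(\widehat{G\flip\bar G})-\Delta_0(\widehat{G\sqcup\bar G})$ equals $0$ for a merge and $-(D-1)$ for a split.

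It follows that mixing pairings of $G\flip\bar G$ have $\Delta_0\ge\binom{D-1}{2}=\frac{(D-1)(D-2)}{2}>2\Delta(G)$ under your hypothesis. Non-mixing pairings have $\Delta_0=\Delta_0(\widehat G)+\Delta_0(\widehat{\bar G})\ge 2\Delta(G)$, with equality exactly on the $\mu_G^2$ optimal products.

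On each fixed graph, $\Delta_0$ takes values in $2\Delta(G)+\frac{D-1}{2}\mathbb N$. All remaining terms on both sides are therefore suppressed by at least $1/N$. This gives the stated asymptotics and $\Delta(G\flip\bar G)=2\Delta(G)$.

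With this lemma inserted your argument becomes a self-contained proof. It also makes visible where the threshold $\frac{(D-1)(D-2)}{4}$ comes from, which the paper's citation does not.
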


\begin{proof}
    Prop.~\ref{prop:strong_cond_flip} is a special case of item 4 of Thm.~6.4 in Ref.~\cite{Carrozza:2026qcf} with $G_1 = G$ and $G_2 = \bar G$. The proof of Thm.~6.4 can be found in App.~D of that reference.
\end{proof}

\begin{rem} \label{rem:regimes}
    We emphasize that Prop.~\ref{prop:exchange_flip} and Prop.~\ref{prop:strong_cond_flip} are not always simultaneously valid. \\
    Indeed, for $D=3$ and $D=4$, we observe three distinct regimes because Prop.~\ref{prop:strong_cond_flip} implies Prop.~\ref{prop:exchange_flip} in these cases (see Fig.~\ref{fig:Summary}).
    \begin{enumerate}
        \item If $0 \leq \Delta(G) < \frac{(D-1)(D-2)}{4}$, then both Prop.~\ref{prop:exchange_flip} and Prop.~\ref{prop:strong_cond_flip} hold. 
        \item If $\frac{(D-1)(D-2)}{4} \leq \Delta(G) < \frac{D(D-1)}{8}$, then only Prop.~\ref{prop:exchange_flip} holds. 
        \item If $ \frac{D(D-1)}{8} \leq \Delta(G)$, then neither Prop.~\ref{prop:exchange_flip} nor Prop.~\ref{prop:strong_cond_flip} holds. 
    \end{enumerate}
    Interestingly, for $D \geq 5$, the direction of the implication is reversed, as Prop.~\ref{prop:exchange_flip} now implies Prop.~\ref{prop:strong_cond_flip}. Consequently, we observe the following regimes (see Fig.~\ref{fig:Summary}).
    \begin{enumerate}
        \item If $0 \leq \Delta(G) < \frac{D(D-1)}{8}$, then both Prop.~\ref{prop:exchange_flip} and Prop.~\ref{prop:strong_cond_flip} hold. 
        \item If $\frac{D(D-1)}{8} \leq \Delta(G) < \frac{(D-1)(D-2)}{4}$, then only Prop.~\ref{prop:strong_cond_flip} holds. 
        \item If $\frac{(D-1)(D-2)}{4} \leq \Delta(G)$, then neither Prop.~\ref{prop:exchange_flip} nor Prop.~\ref{prop:strong_cond_flip} holds. 
    \end{enumerate}
    
    \begin{figure}
        \centering
        \includegraphics[width = .9 \textwidth]{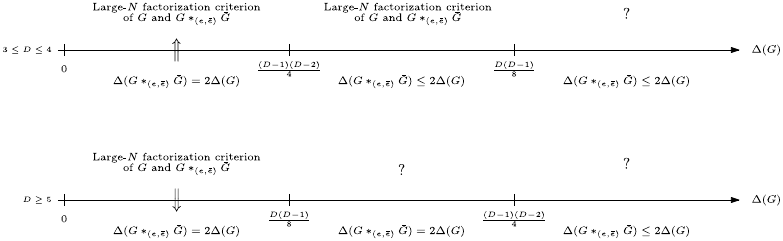}
        \caption{Top: Illustration of the three regimes presented in Rem.~\ref{rem:regimes} that appears for $3 \leq D \leq 4$. Bottom: The three regimes presented in the same remark and appearing for $D \geq 5$.}
        \label{fig:Summary}
    \end{figure}
\end{rem}

To determine the large-$N$ expectation value of $F_G^{[e]}(\ket{\varphi_N})$, we need to get a handle on the expectation value of $\mathrm{rank}\left( \tr_{G^{(e)}}(\ket{\varphi_N})\right)$. Since this quantity defines a \emph{non-polynomial} $\LU$-invariant, evaluating its expectation value might in principle require methods that go significantly beyond the framework developed in \cite{Carrozza:2026qcf} and \cite{Factorization2026}. Luckily, this potential issue is easily circumvented: we can indeed argue that the rank of the matrix $\tr_{G^{(e)}}(\ket{\varphi_N})$ is almost everywhere maximal with respect to any ``reasonable'' measure (including the Haar measure), and thus the looked-for expectation value is easily evaluated. This observation is made precise in the next lemma.
\begin{lem}\label{lem:larg_N_rank}
    Let $D\geq 2$, $G \in \cG_D$ with marked edge $e$, and let $\ket{\varphi_N}$ be a $D$-partite Haar-random state of local dimension $N$. The random variable $\mathrm{rank}\left( \tr_{G^{(e)}}(\ket{\varphi_N})\right)$ takes value $N$ with probability $1$. In particular, we have
    \begin{equation}
        \langle \ln \left( \mathrm{rank}\left( \tr_{G^{(e)}}(\ket{\varphi_N})\right) \right) \rangle = \ln(N)\,.
    \end{equation}

\end{lem}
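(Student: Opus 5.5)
The plan is to show that $\det\left(\tr_{G^{(e)}}(\psi)\right)$, viewed as a function of the tensor components of $\psi \in (\mathbb{C}^N)^{\otimes D}$, is a polynomial in $(\psi,\bar\psi)$, or equivalently a real polynomial in the real and imaginary parts of the components, that does not vanish identically. Its zero set $Z$ is then a proper real algebraic subvariety of $\mathbb{C}^{N^D}\cong \mathbb{R}^{2N^D}$. Since the Haar measure on the unit sphere is absolutely continuous with respect to the surface measure, the key fact to establish is that $Z\cap S(\H)$ has surface measure zero. Granting this, $\det \neq 0$ almost surely, so the rank equals $N$ with probability $1$, and the expectation of $\ln(\mathrm{rank})$ is $\ln N$.

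The measure-zero claim reduces to a standard homogeneity argument. Each matrix entry of $\tr_{G^{(e)}}(\psi)$ is homogeneous of bidegree $(k,k)$ in $(\psi,\bar\psi)$, so $\det$ is homogeneous of bidegree $(kN,kN)$. Hence $Z$ is a cone, and $Z\cap S(\H)$ is null on the sphere if and only if $Z$ is Lebesgue-null in $\mathbb{R}^{2N^D}$. The latter holds for the zero set of any nonzero real polynomial, by induction on the number of variables and Fubini.

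The substantive step is producing a single state at which $\det\neq 0$. The natural choice is an HT state of local dimension $N$, namely $\ket{\GHZ}_{\{1,\dots,D\},N}$, or any $\ket{\psi_\alpha}$ whose local dimension on $\H_c$ is exactly $N$. This is exactly the computation already carried out in the proof of Prop.~\ref{prop:flip_spread_HT}. Eq.~\eqref{eq:proof_flip-1} shows that for the GHZ block containing $c$, the matrix $\tr_{G^{(e)}}$ is a nonzero multiple of the identity. That proof also notes that the two endpoints of $e$ lie in the same connected component of $G^{(e)}$, so no bridge occurs.

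For $\ket{\GHZ}_{\{1,\dots,D\},N}$, which lives in $(\mathbb{C}^N)^{\otimes D}$, Eq.~\eqref{eq:proof_flip-1} gives
\[
\tr_{G^{(e)}} = N^{\kappa(G)-k(G)-1}\,\mathrm{id}_{\mathbb{C}^N},
\]
which has full rank $N$. So the polynomial $\det$ is nonzero at this point, and hence not identically zero.

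I expect the main obstacle to be bookkeeping rather than substance. Two points need checking. First, the $\GHZ$ example must fit the ambient space $(\mathbb{C}^N)^{\otimes D}$ directly; with $B=\{1,\dots,D\}$ and $\alpha(B)=N$ it does, with $\varphi^{(c'')}$ absent. Second, $\det$ must be recognized as a genuine polynomial in the real coordinates, and not merely holomorphic in $\psi$. After that, the conclusion $\langle\ln\mathrm{rank}\rangle=\ln N$ is immediate, since the random variable is almost surely constant.
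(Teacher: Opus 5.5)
Your proposal is correct and follows essentially the same route as the paper: $\det\circ\tr_{G^{(e)}}$ is a homogeneous polynomial in $(\psi,\bar\psi)$ that does not vanish at $\ket{\GHZ}_{\{1,\dots,D\},N}$, because there $\tr_{G^{(e)}}$ is $N^{\kappa(G)-k(G)-1}$ times the identity (using that $G$ has no bridges), so its zero locus is Haar-null. Your cone-plus-Fubini justification of the measure-zero step is a more explicit version of the paper's appeal to the zero locus being Zariski-closed.
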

\begin{proof}
    Let us define $\ket{\varphi_N}$ as a random variable taking value in the projective Hilbert space\footnote{That is, $\mathcal{P}(\H_1 \otimes \cdots \otimes \H_D)$ is the unit sphere of $\H_1 \otimes \cdots \otimes \H_D$ modulo phases.} $\mathcal{P}(\H_1 \otimes \cdots \otimes \H_D)$, with $\H_c \eqdef \mathbb{C}^N$ for any $c\in \{1, \ldots , D\}$. Let us also denote by $\{\ket{i}\}_{1\leq i \leq N}$ the canonical basis of $\mathbb{C}^N$. Consider the map $f: \mathcal{P}(\H_1 \otimes \cdots \otimes \H_D) \to M_N (\mathbb{C}), \ket\psi \mapsto \left( \tr_{G^{(e)}}(\ket\psi )\right)$, which sends a pure state to a matrix $f(\ket\psi)$. The matrix elements of this map are homogeneous polynomials in $\ket\psi$ (or, equivalently, in its tensor coordinates $\{\psi_{a_1 \cdots a_D}\}$ and $\{\bar\psi_{a_1 \cdots a_D}\}$). Accordingly, the map $\det\circ f :\mathcal{P}(\H_1 \otimes \cdots \otimes \H_D)\to \mathbb{C}$ is a homogeneous polynomial in $\ket\psi$, and it is moreover non-zero. To justify this last claim, notice first that  the vertices adjacent to $e$ in $G$ are in the same connected component of $G^{(e)}$ (as there can be no bridge in a $D$-colored graph). As a result, we have $f (\ket{\GHZ}_{\{1, \ldots , D\}, N})= N^{\kappa(G)-k(G)-1} \sum_{i=1}^N \ket{i}\bra{i}$ (remember Eq.~\eqref{eq:proof_flip-1}), which is clearly invertible in $M_N (\mathbb{C})$, thus implying $\det\circ f(\ket{\GHZ}_{\{1, \ldots , D\}, N})\neq 0$. It follows that the zero-locus of $\det \circ f$, namely
    \begin{equation}
        V= \{\ket\psi \in \mathcal{P}(\H_1 \otimes \cdots \otimes \H_D) \, \vert \, \det \circ f (\ket\psi) = 0 \} =  \{\ket\psi \in \mathcal{P}(\H_1 \otimes \cdots \otimes \H_D) \, \vert \, \mathrm{rank} \left( f (\ket\psi)\right) < N \} \,, 
    \end{equation}
is a measure zero set with respect to the Haar measure on $\mathcal{P}(\H_1 \otimes \cdots \otimes \H_D)$ (it is an example of Zariski-closed set in $\mathcal{P}(\H_1 \otimes \cdots \otimes \H_D)$). In other words, the complement $\mathcal{P}(\H_1 \otimes \cdots \otimes \H_D) \setminus V$ has full measure, which means that $\mathrm{rank}\left(\tr_{G^{(e)}}(\ket{\varphi_N} )\right)=N$ with probability $1$. Taking a logarithm, it follows that $\ln\left(\mathrm{rank}\left(\tr_{G^{(e)}}(\ket{\varphi_N} )\right)\right)=\ln(N)$ with probability $1$. 
\end{proof}

The following two corollaries summarize the last propositions and lemmas. We make, in particular, a case distinction between $3 \leq D \leq 4$ and $D \geq 5$.

\begin{cor}\label{cor:Haar_flip-1}
    Let $D \in\{3,4\}$, $G \in \cG_D$ a $D$-colored graph with a marked edge $e$, and let $(\ket{\varphi_N})_{N\in\bb{N}^*}$ be a sequence of $D$-partite Haar-random states of local dimension $N$.
    \begin{enumerate}
        \item If $G$ follows a power-law and $\frac{(D-1)(D-2)}{4} < \Delta(G) < \frac{D(D-1)}{8}$, then:
        \begin{equation}
            \mean{F_G^{[e]}(\ket{\varphi_N})} \underset{N \to \infty}{=} \frac{2}{D-1} \pa{2\Delta(G) - \Delta(G \flip_{(e,\bar e)} \bar G)} \ln N + \ln \frac{\mu_{G \flip_{(e,\bar e)} \bar G)}}{\mu_G^2} + o(1)\,.
        \end{equation}
        \item If $G$ follows a power-law and $\Delta(G) < \frac{(D-1)(D-2)}{4}$, then:
        \begin{equation}
            \mean{F_G^{[e]}(\ket{\varphi_N})} \underset{N \to \infty}{=} o(1)\,.
        \end{equation}
    \end{enumerate}
\end{cor}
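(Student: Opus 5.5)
The proof splits $F_G^{[e]}(\ket{\varphi_N})$ into its three logarithmic terms from Eq.~\eqref{eq:ES_flip} and evaluates each expectation value at large $N$. First, $\tr_G(\ket{\varphi_N})\neq 0$ almost surely, because $G$ follows a power law. Hence, with probability $1$,
\begin{equation*}
F_G^{[e]}(\ket{\varphi_N}) = \ln \tr_{G\flip_{(e,\bar e)}\bar G}(\ket{\varphi_N}) - \ln \abs{\tr_G(\ket{\varphi_N})}^2 + \ln \mathrm{rank}\left(\tr_{G^{(e)}}(\ket{\varphi_N})\right).
\end{equation*}
By Lem.~\ref{lem:larg_N_rank}, the last term has expectation exactly $\ln N$.

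For the first two terms we want to commute $\ln$ and $\langle\cdot\rangle$ using Prop.~\ref{prop:mean_log}. The power law holds for $G$ by assumption, for $G\sqcup\bar G$ since $\abs{\tr_{G\sqcup \bar G}}=\abs{\tr_G}^2$, and for $G\flip_{(e,\bar e)}\bar G$ by Lem.~\ref{lem:powerlawGflipG}. For factorization, note that in both cases of the corollary (with $D\in\{3,4\}$) we have $\Delta(G)<\frac{D(D-1)}{8}$. This is immediate in item 1, and in item 2 it follows from $\frac{(D-1)(D-2)}{4}\le\frac{D(D-1)}{8}$ for $D\le 4$. So Prop.~\ref{prop:exchange_flip} gives factorization for $G$ and $G\flip_{(e,\bar e)}\bar G$. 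Prop.~\ref{prop:mean_log} then yields $\langle \ln\abs{\tr_G}\rangle=\ln\langle\tr_G\rangle+o(1)$, and the same for the flipped graph.

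We then insert the Wick asymptotics $\mean{\Tr_{G'}}\sim\mu_{G'}N^{\kappa-k-\frac{\omega}{D-1}-\frac{2\Delta}{D-1}}$ for $G'=G$ and $G'=G\flip_{(e,\bar e)}\bar G$. For the flip we need $k(G\flip\bar G)=2k(G)$ and $\kappa(G\flip\bar G)=2\kappa(G)-1$; the latter holds because the marked edges lie in the same connected components, so one component merges. We also need $\omega(G\flip\bar G)=2\omega(G)+(D-1)-\ldots$, which we would get from the face count: a flip on color $c$ merges two faces for each pair $(c,c')$, so $F$ decreases by $(D-1)$ relative to $2F(G)$. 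A short computation then shows $\kappa-k-\frac{\omega}{D-1}$ equals $2(\kappa(G)-k(G)-\frac{\omega(G)}{D-1})-1$. This $-1$ exactly cancels the $+\ln N$ from the rank. The result is
\begin{equation*}
\frac{2}{D-1}\left(2\Delta(G)-\Delta(G\flip_{(e,\bar e)}\bar G)\right)\ln N+\ln\frac{\mu_{G\flip\bar G}}{\mu_G^2}+o(1),
\end{equation*}
which is item 1.

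For item 2, Prop.~\ref{prop:strong_cond_flip} gives $\Delta(G\flip\bar G)=2\Delta(G)$, which kills the $\ln N$ term. It also gives $N\langle\tr_{G\flip\bar G}\rangle=\langle\abs{\tr_G}^2\rangle(1+O(1/N))$. Combined with factorization of $G$, i.e.\ $\langle\abs{\tr_G}^2\rangle\sim\langle\tr_G\rangle^2$, this yields $\mu_{G\flip\bar G}=\mu_G^2$, so the constant vanishes and the total is $o(1)$. Alternatively, one can compare the logs directly without the $\mu$'s.

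The main obstacle is the bookkeeping of $\kappa$, $k$, $\omega$ under the mirror flip. The $\ln N$ from the rank must cancel exactly, which requires the precise shift in face number and the fact that the two sides of $e$ are connected. The rest is assembly of earlier results.
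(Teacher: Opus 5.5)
Your proof is correct and follows the route the paper intends. The paper gives no separate proof; it presents the corollary as a summary of Lem.~\ref{lem:powerlawGflipG}, Props.~\ref{prop:exchange_flip}, \ref{prop:strong_cond_flip} and \ref{prop:mean_log}, Lem.~\ref{lem:larg_N_rank} and the Wick asymptotics, and you assemble exactly these (including the check that $\frac{(D-1)(D-2)}{4}\le\frac{D(D-1)}{8}$ for $D\le 4$ secures factorization in item 2). The only slip is cosmetic: the $\omega$ bookkeeping should simply read $\omega(G\flip_{(e,\bar e)}\bar G)=2\omega(G)$, since the $-(D-1)$ from $\kappa$ cancels the $+(D-1)$ from the face count, and this is what your exponent computation actually uses.
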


\begin{cor}\label{cor:Haar_flip-2}
    Let $D \geq 5$, $G \in \cG_D$ a $D$-colored graph with a marked edge $e$, and let $(\ket{\varphi_N})_{N\in\bb{N}^*}$ be a sequence of $D$-partite Haar-random states of local dimension $N$. If $G$ follows a power-law and $\Delta(G) < \frac{D(D-1)}{8}$, then:
    \begin{equation}
        \mean{F_G^{[e]}(\ket{\varphi_N})} \underset{N \to \infty}{=} o(1)\,.
    \end{equation}
\end{cor}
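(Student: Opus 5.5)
The statement to prove is Cor.~\ref{cor:Haar_flip-2}. I would assemble it directly from Prop.~\ref{prop:exchange_flip}, Prop.~\ref{prop:strong_cond_flip}, Lem.~\ref{lem:powerlawGflipG}, Lem.~\ref{lem:larg_N_rank} and Prop.~\ref{prop:mean_log}.

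\textbf{Step 1: expand the spread.} Since $G$ follows a power law, $\tr_G(\ket{\varphi_N})\neq 0$ for every normalized state, so $F_G^{[e]}(\ket{\varphi_N})$ is always finite. By Eq.~\eqref{eq:ES_flip},
\begin{equation}
\mean{F_G^{[e]}(\ket{\varphi_N})} = \mean{\ln \tr_{G \flip_{(e,\bar e)}\bar G}(\ket{\varphi_N})} - 2\mean{\ln\abs{\tr_G(\ket{\varphi_N})}} + \ln N\,,
\end{equation}
where the last term comes from Lem.~\ref{lem:larg_N_rank}.

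\textbf{Step 2: exchange logarithm and expectation.} By Lem.~\ref{lem:powerlawGflipG}, $G\flip_{(e,\bar e)}\bar G$ also follows a power law. Since $\Delta(G)<D(D-1)/8$, Prop.~\ref{prop:exchange_flip} shows that both $G$ and $G\flip_{(e,\bar e)}\bar G$ satisfy the large-$N$ factorization criterion. Prop.~\ref{prop:mean_log} then gives, up to $o(1)$ corrections,
\begin{equation}
\mean{F_G^{[e]}(\ket{\varphi_N})} = \ln\left( \frac{N\,\mean{\tr_{G \flip_{(e,\bar e)}\bar G}(\ket{\varphi_N})}}{\mean{\tr_G(\ket{\varphi_N})}^2}\right) + o(1)\,.
\end{equation}

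\textbf{Step 3: use the regime $D\geq 5$.} For $D\geq5$ we have $D(D-1)/8 \leq (D-1)(D-2)/4$, since this is equivalent to $D\leq 2D-4$, i.e.\ $D\geq 4$. Hence Prop.~\ref{prop:strong_cond_flip} applies and gives
\begin{equation}
N\,\mean{\tr_{G \flip_{(e,\bar e)}\bar G}(\ket{\varphi_N})} = \mean{\abs{\tr_G(\ket{\varphi_N})}^2}\,(1+O(1/N))\,.
\end{equation}

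\textbf{Step 4: conclude.} Write $\mean{\abs{\tr_G}^2} = \mean{\tr_G\,\tr_{\bar G}}$, using $\tr_{\bar G}=\overline{\tr_G}$. The factorization criterion for $G$ gives $\mean{\tr_G\,\tr_{\bar G}}\sim \mean{\tr_G}\,\mean{\tr_{\bar G}}$. Moreover $\mean{\tr_{\bar G}} = \overline{\mean{\tr_G}} = \mean{\tr_G}$, because the latter is real and positive. So $\mean{\abs{\tr_G}^2}\sim \mean{\tr_G}^2$, the ratio in Step 2 tends to $1$, and its logarithm is $o(1)$, as claimed.

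\textbf{Expected obstacle.} The delicate point is bookkeeping rather than new estimates: Prop.~\ref{prop:mean_log} must be applied to both graphs with matching $o(1)$ errors. It is also worth noting that Prop.~\ref{prop:strong_cond_flip} already contains the leading factor, so the degeneracy factors $\mu$ cancel without being computed.
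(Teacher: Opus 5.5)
Your proposal is correct and is essentially the argument the paper has in mind: the paper presents this corollary as a summary of Lem.~\ref{lem:powerlawGflipG}, Lem.~\ref{lem:larg_N_rank}, Prop.~\ref{prop:mean_log}, Prop.~\ref{prop:exchange_flip} and Prop.~\ref{prop:strong_cond_flip}, relying on the observation of Rem.~\ref{rem:regimes} that for $D\geq 5$ the threshold $\frac{D(D-1)}{8}$ lies below $\frac{(D-1)(D-2)}{4}$, so both propositions apply at once. Your Step~4, which uses the factorization of $G$ itself to turn $\mean{\abs{\Tr_G(\ket{\varphi_N})}^2}$ into $\mean{\Tr_G(\ket{\varphi_N})}^2$, is exactly the cancellation that makes the $\ln N$ term and the degeneracy factors drop out.
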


\subsection{Families of tripartite entanglement spreads whose minimizers obey useful graphical equations}\label{subse:tripartite_spreads}

In this section, we fix $D=3$. By convention (and by slight abuse of notation), for any $c \in \{ 1,2,3\}$ we will denote by $c'$ and $c''$ the unique elements of $\{1,2,3\}$ such that $c' \equiv c+1 \mod 3$ and $c'' \equiv c+2 \mod 3$.

\subsubsection{First family: minimizers obey separating \texorpdfstring{$1$}{1}-dipole contraction moves}

For any $c \in \{1,2,3\}$ and any $n \in \mathbb{N}^*$, we define
\begin{itemize}
    \item the realignment moment $\RM_{2n}^{(c)}$, defined as
    \begin{equation}
\RM_{2n}^{(c)} = \vcenter{\hbox{\includegraphics[scale=.5]{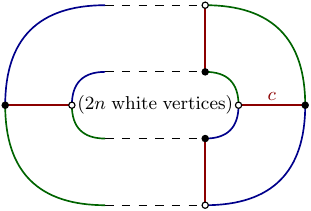}}}\,,
    \end{equation}
obeying  
\begin{equation}\label{eq:comb_RM}
    \begin{aligned}
        &k(\RM_{2n}^{(c)})=2n\,, &&\kappa\left( \RM_{2n}^{(c)}\right) = 1\,, \\
        &\kappa\left( \RM_{2n}^{(c)}\vert_{c,c'}\right) = \kappa\left( \RM_{2n}^{(c)}\vert_{c,c''}\right) = n \,, \qquad \quad &&\kappa\left( \RM_{2n}^{(c)}\vert_{c',c''}\right)=2\,;
    \end{aligned}
\end{equation}
    \item the realignment moment decorated by one melonic insertion on one of the edges of color $c$, noted $\widetilde{\RM}_{2n+1}^{(c)}$, defined as \begin{equation}
\widetilde{\RM}_{2n+1}^{(c)} = \vcenter{\hbox{\includegraphics[scale=.5]{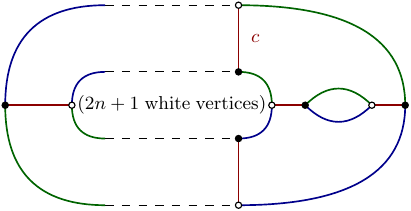}}}\,, 
    \end{equation}
obeying
\begin{equation}\label{eq:comb_RMtilde}
    \begin{aligned}
        &k(\widetilde{\RM}_{2n+1}^{(c)})=2n+1\,, &&\kappa\left( \widetilde{\RM}_{2n+1}^{(c)} \right) = 1\,, \\
        &\kappa\left( \widetilde{\RM}_{2n+1}^{(c)}\vert_{c,c'}\right) = \kappa\left( \widetilde{\RM}_{2n+1}^{(c)}\vert_{c,c''}\right) = n \,, \qquad \quad &&\kappa\left( \widetilde{\RM}_{2n+1}^{(c)}\vert_{c',c''}\right)=3\,;
    \end{aligned}
\end{equation}    
    \item the melonic diagram\footnote{$\Mel_{2n-1}^{(c)}$ can be obtained from $\RM_{2n}^{(c)}$ by removing one edge of color $c$ together with its adjacent vertices, and reconnecting the resulting open half-edges in the only way that respects the color structure.}
            \begin{equation}
\Mel_{2n-1}^{(c)} = \vcenter{\hbox{\includegraphics[scale=.5]{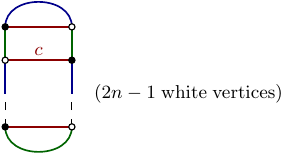}}}\,,
    \end{equation}
obeying
\begin{equation}\label{eq:comb_Mel}
    \begin{aligned}
        &k(\Mel_{2n-1}^{(c)})=2n-1\,,  &&\kappa\left( \Mel_{2n-1}^{(c)} \right) = 1\,, \\
        &\kappa\left( \Mel_{2n-1}^{(c)}\vert_{c,c'}\right) = \kappa\left( \Mel_{2n-1}^{(c)}\vert_{c,c''}\right) = n \,, \qquad\quad  &&\kappa\left( \Mel_{2n-1}^{(c)}\vert_{c',c''}\right)=1\,.
    \end{aligned}
\end{equation}
\end{itemize}

\begin{lem}\label{lem:SpreadOne}
    For any $n \in \mathbb{N}^*$ and $c\in \{1,2,3\}$, we have:
    \begin{equation}
        \RM_{2n}^{(c)}\sqcup \overline{\RM_{2n}^{(c)}} \dom \Mel_{2n-1}^{(c)}\sqcup \widetilde{\RM}_{2n+1}^{(c)}
    \end{equation}
Furthermore, the invariants $\tr_{\RM_{2n}^{(c)}}$, $\tr_{\Mel_{2n-1}^{(c)}}$ and $\tr_{\widetilde{\RM}_{2n+1}^{(c)}}$ are positive and never vanish. 
\end{lem}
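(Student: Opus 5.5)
The plan is to exhibit $\tr_{\RM_{2n}^{(c)}}(\ket\psi)$ as an inner product $\langle A, B\rangle$ of two tensors built from $\ket\psi$, chosen so that $\|A\|^2 = \tr_{\Mel_{2n-1}^{(c)}}(\ket\psi)$ and $\|B\|^2 = \tr_{\widetilde{\RM}_{2n+1}^{(c)}}(\ket\psi)$. The Cauchy--Schwarz inequality will then give
\begin{equation*}
\tr_{\RM_{2n}^{(c)}\sqcup\overline{\RM_{2n}^{(c)}}}(\ket\psi)=\abs{\tr_{\RM_{2n}^{(c)}}(\ket\psi)}^2\leq \tr_{\Mel_{2n-1}^{(c)}}(\ket\psi)\,\tr_{\widetilde{\RM}_{2n+1}^{(c)}}(\ket\psi),
\end{equation*}
which is the domination claim, since traces are multiplicative under disjoint union.

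To find the cut, I would use the structure of $\RM_{2n}^{(c)}$ read as $\tr[(\rho^R(\rho^R)^\dagger)^n]$ with $\rho=\rho_{c'c''}$ (realignment pairing the $c'$ and $c''$ indices). I would isolate one $\psi$ vertex $w$ adjacent to a chosen color-$c$ edge. On one side, take $A=\ket\psi$ itself, viewed appropriately; on the other, take $B=\tr_{\RM_{2n}^{(c)(w)}}(\ket\psi)$, the open graph with $w$ removed. This gives $\tr_{\RM}=\langle\psi,B\rangle$, but then $\|A\|=1$ and $\|B\|^2$ is the mirror vertex-contraction, which is not the statement. I would therefore instead cut through the color-$c$ edge and a pair of $c',c''$ edges so that both halves are "half-graphs" whose mirror doubles are the claimed graphs.

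Concretely, write $\RM_{2n}^{(c)}$ as $\tr(M^n)$ with $M=\rho^R(\rho^R)^\dagger\geq 0$. One half, $A$, should be $\psi$ with its $c$-index contracted against $n-1$ copies of $M$-type blocks. Its norm squared merges the two copies along the open $c',c''$ legs into a single $(c',c'')$-face; this matches $\kappa(\Mel|_{c',c''})=1$ and $\kappa(\Mel|_{c,c'})=n$. The other half, $B$, should carry an extra $\psi\bar\psi$ pair joined by a color-$c$ edge, creating the melonic insertion and the third $(c',c'')$-face of $\widetilde{\RM}_{2n+1}^{(c)}$. I would check the choice by counting vertices, $2(2n-1)+2(2n+1)=2\cdot 4n$, and by counting faces of each color type in each case using Eqs.~\eqref{eq:comb_RM}, \eqref{eq:comb_RMtilde} and \eqref{eq:comb_Mel}. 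Finding the correct asymmetric cut, and verifying graphically that the doubles are exactly $\Mel_{2n-1}^{(c)}$ and $\widetilde{\RM}_{2n+1}^{(c)}$, is the main obstacle. I would try the cut along the dotted-line picture analogous to Fig.~\ref{fig:graphical_proof_K33}: one side contains $n-1$ of the color-$c$ "rungs", and the other contains $n$ rungs plus the cut edge.

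For positivity, $\tr_{\RM_{2n}^{(c)}}=\tr M^n>0$, since $M\geq 0$ and $M\neq 0$; indeed $\tr M=\tr\rho^2>0$. By the same Cauchy--Schwarz structure, $\tr_{\Mel_{2n-1}^{(c)}}=\|A\|^2\geq 0$ and $\tr_{\widetilde{\RM}_{2n+1}^{(c)}}=\|B\|^2\geq 0$. The inequality above, together with $\tr_{\RM}>0$, then forces both of these to be strictly positive, which completes the argument.
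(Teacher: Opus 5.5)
Your overall strategy is the one the paper uses: a Cauchy--Schwarz inequality $\abs{\langle A,B\rangle}^2\le\|A\|^2\|B\|^2$ with $\|A\|^2=\tr_{\Mel_{2n-1}^{(c)}}(\ket\psi)$ and $\|B\|^2=\tr_{\widetilde{\RM}_{2n+1}^{(c)}}(\ket\psi)$. Your positivity argument is also correct, and more self-contained than the paper's citation: $\tr_{\RM_{2n}^{(c)}}=\tr M^n$ with $M=\rho^R(\rho^R)^\dagger\ge 0$ and $\tr M=\tr\rho^2>0$.

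The gap is that the entire content of the lemma lies in the choice of $A$ and $B$. You leave this open, and the concrete guesses you offer cannot work:
\begin{itemize}
\item A tensor ``$\psi$ contracted against $n-1$ $M$-blocks'' has $4n-3$ vertices, whereas $A$ must have $2n-1$.
\item A split with $n-1$ complete colour-$c$ edges on the $\Mel$ side is impossible for $n\ge 2$. The double $A\cup\bar A$ carries the colour-reversing involution $v\mapsto\bar v$, and every edge between $A$ and $\bar A$ joins some $v$ to $\bar v$. In $\Mel_{2n-1}^{(c)}$ the two end colour-$c$ edges carry caps of different colours ($c'$ and $c''$). Hence every automorphism preserves one end edge, and is either the identity or the map exchanging the two endpoints of every colour-$c$ edge. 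So on the $\Mel$ side all $2n-1$ colour-$c$ edges must be cut. By counting, the $\widetilde{\RM}$ side then contains exactly one complete colour-$c$ edge.
\end{itemize}
Your vertex count $2(2n-1)+2(2n+1)=8n$ holds for any split and cannot detect this.

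The split that works runs along the rails rather than across them. Draw $\RM_{2n}^{(c)}$ as a closed ladder: two parallel $(c',c'')$-faces (rails) of length $2n$, with the $2n$ colour-$c$ edges as rungs.
\begin{itemize}
\item Let $B$ be the open graph on one entire rail plus one vertex $u$ of the other rail.
\item Let $A$ be (the conjugate of) the open graph on the remaining $2n-1$ vertices of that other rail.
\item The cut edges are the $2n-1$ rungs not incident to $u$, together with the two rail edges (of colours $c'$ and $c''$) at $u$.
\end{itemize}
Doubling $A$ glues two open paths along $2n-1$ rungs and closes them at both ends by the $c'$- and $c''$-legs into a single $(c',c'')$-face; this is $\Mel_{2n-1}^{(c)}$. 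Doubling $B$ restores $2n-1$ rungs and joins $u$ to its mirror by a $c'$-edge and a $c''$-edge. That is, it inserts an elementary melon on the remaining rung, giving $\widetilde{\RM}_{2n+1}^{(c)}$.

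With this decomposition in hand, the rest of your argument goes through as in the paper: both traces are non-negative as norms squared, and they are strictly positive because $\tr_{\RM_{2n}^{(c)}}>0$.
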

\begin{proof}
Given a $3$-partite pure state $\ket\psi = \vcenter{\hbox{\includegraphics[scale=.5]{images/psi.pdf}}}$, we have the following Cauchy-Schwarz inequality:
    \begin{align}
        \abs{\tr_{\RM_{6}^{(c)}}(\ket\psi)}^2= \abs{ \vcenter{\hbox{\includegraphics[scale=.5]{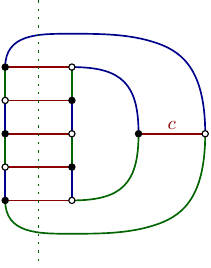}}} }^2 &\leq \vcenter{\hbox{\includegraphics[scale=.5]{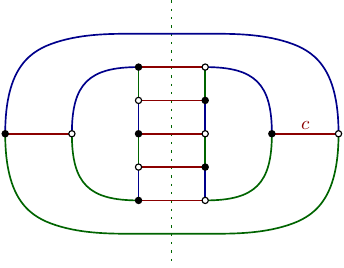}}} \times \vcenter{\hbox{\includegraphics[scale=.5]{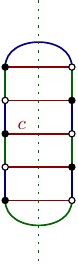}}} \label{eq:CS_move1_n3} \\&= \vcenter{\hbox{\includegraphics[scale=.5]{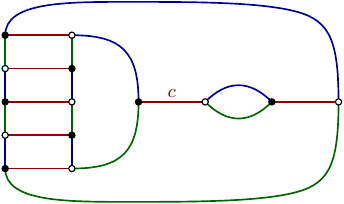}}} \times \vcenter{\hbox{\includegraphics[scale=.5]{images/CS_move1_c.pdf}}} \\   
        &= \tr_{\widetilde{\RM}_{7}^{(c)}}(\ket\psi) \times \tr_{\Mel_{5}^{(c)}}(\ket\psi) \,,
    \end{align}
which yields the looked-for inequality for $n=3$. This graphical proof generalizes straightforwardly to an arbitrary $n \in \mathbb{N}^*$, leading to:
\begin{equation}\label{eq:CS_RM_RMtilde_M}
        \abs{\tr_{\RM_{2n}^{(c)}}(\ket\psi)}^2\leq \tr_{\widetilde{\RM}_{2n+1}^{(c)}}(\ket\psi) \times \tr_{\Mel_{2n-1}^{(c)}}(\ket\psi)\,. 
\end{equation}
Furthermore, it was proven in \cite{Carrozza:2026qcf} (see Prop.~6.16) that the invariant $\tr_{\RM_{2n}^{(c)}}$ is positive and never vanishes. According to the Cauchy-Schwarz inequality we just established, $\tr_{\Mel_{2n-1}^{(c)}}$ and $\tr_{\widetilde{\RM}_{2n+1}^{(c)}}$ cannot vanish either. Finally, as we have seen while establishing the Cauchy-Schwarz inequality, we can interpret  $\tr_{\Mel_{2n-1}^{(c)}}(\ket\psi)$ (resp.~$\tr_{\widetilde{\RM}_{2n+1}^{(c)}}(\ket\psi)$) as the norm squared of some vector; it is therefore positive. 
\end{proof}
\begin{defi}\label{def:Pn}
    For any $n \in \mathbb{N}^*$ and $c\in \{1,2,3\}$, we define
    \begin{equation}
    \SpreadOne_{n}^{(c)} \eqdef R_{\RM_{2n}^{(c)}\sqcup \overline{\RM_{2n}^{(c)}}, \Mel_{2n-1}^{(c)}\sqcup \widetilde{\RM}_{2n+1}^{(c)}}\,,   
    \end{equation}
which, according to Lem.~\ref{lem:SpreadOne}, is a well-defined entanglement spread. Lem.~\ref{lem:SpreadOne} also implies that: for any finite-dimensional $D$-partite pure state $\ket\psi$,
\begin{equation}
    \SpreadOne_{n}^{(c)} \left( \ket\psi \right) = \ln \left( \tr_{\Mel_{2n-1}^{(c)}} (\ket\psi)  \right) + \ln \left( \tr_{\widetilde{\RM}_{2n+1}^{(c)}} (\ket\psi)  \right) - 2 \ln \left( \tr_{\RM_{2n}^{(c)}} (\ket\psi)  \right) \,.  
    \end{equation}
\end{defi}
\begin{rem}\label{rem:P1}
    Specializing to $n=1$, we observe that: for any $c \in \{1,2,3\}$ and any tripartite pure state $\ket\psi \in \H_1 \otimes \H_2 \otimes \H_3$, 
    \begin{equation}
        \SpreadOne_{1}^{(c)}\left( \ket\psi \right) = \ln\left(\tr_{C_3(\{c\})}\left( \ket\psi\right)\right) - 2 \ln\left(\tr_{C_2(\{c\})}\left( \ket\psi\right)\right)= 2 \mathsf{D}_{2,3} \left( \ket\psi \right)\,,
    \end{equation}
    where $\ket\psi$ in the last equality is understood as a bipartite state on $\H_c \otimes \left( \H_{c'} \otimes \H_{c''}\right)$.
\end{rem}    
\begin{prop}\label{prop:SpreadOne_alpha_states}
    For any $c \in \{1,2,3\}$, any $n \in \mathbb{N}^*$ and any HT state $\ket{\psi_\alpha}$ (with weight function $\alpha$), we have:
    \begin{equation}
        \SpreadOne_{n}^{(c)} \left( \ket{\psi_\alpha} \right) = 0\,.
    \end{equation}
\end{prop}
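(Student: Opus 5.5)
The plan is to compute each of the three trace-invariants entering $\SpreadOne_{n}^{(c)}$ on $\ket{\psi_\alpha}$ through the HT evaluation formula, and to show that the two sides of the domination relation give the same value. The formula, already used in the proofs of Props.~\ref{prop:vertex_spread_HT} and \ref{prop:flip_spread_HT}, reads
\begin{equation*}
\tr_G(\ket{\psi_\alpha}) = \prod_{B \subset\{1,2,3\},\, \abs{B}\geq 2} \alpha(B)^{\kappa(G\vert_B)-k(G)}\,.
\end{equation*}
It holds for any $G\in\cG_3$, and it is multiplicative under disjoint unions since both $\kappa(\cdot\vert_B)$ and $k$ are additive. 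The invariants $\tr_{\RM_{2n}^{(c)}}$, $\tr_{\Mel_{2n-1}^{(c)}}$ and $\tr_{\widetilde{\RM}_{2n+1}^{(c)}}$ are positive by Lem.~\ref{lem:SpreadOne}, so the spread is given by the finite expression of Def.~\ref{def:Pn}. It therefore suffices to show that the exponent of each $\alpha(B)$ is the same on both sides, namely
\begin{equation*}
2\big(\kappa(\RM_{2n}^{(c)}\vert_B)-2n\big) = \big(\kappa(\Mel_{2n-1}^{(c)}\vert_B)-(2n-1)\big)+\big(\kappa(\widetilde{\RM}_{2n+1}^{(c)}\vert_B)-(2n+1)\big)
\end{equation*}
for each of the four subsets $B\in\{\{c,c'\},\{c,c''\},\{c',c''\},\{1,2,3\}\}$.

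The vertex counts already balance, since $2\cdot 2n = (2n-1)+(2n+1)$. It then remains to compare the connected-component counts using the combinatorial data \eqref{eq:comb_RM}, \eqref{eq:comb_RMtilde} and \eqref{eq:comb_Mel}.
\begin{itemize}
\item For $B=\{1,2,3\}$, all three graphs are connected, giving $2 = 1+1$.
\item For $B=\{c,c'\}$ and $B=\{c,c''\}$, each graph has $n$ components, giving $2n = n+n$.
\item For $B=\{c',c''\}$, the counts give $2\cdot 2 = 1+3$.
\end{itemize}
Hence every exponent matches. This yields
\begin{equation*}
\tr_{\RM_{2n}^{(c)}\sqcup\overline{\RM_{2n}^{(c)}}}(\ket{\psi_\alpha}) = \tr_{\Mel_{2n-1}^{(c)}\sqcup\widetilde{\RM}_{2n+1}^{(c)}}(\ket{\psi_\alpha})\,,
\end{equation*}
and therefore $\SpreadOne_n^{(c)}(\ket{\psi_\alpha})=0$.

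There is no real obstacle here. The only point needing care is that the component counts in \eqref{eq:comb_RM}--\eqref{eq:comb_Mel} are taken as given, so it is worth checking them briefly from the pictures. Also, $\overline{\RM_{2n}^{(c)}}$ has the same combinatorics as $\RM_{2n}^{(c)}$, which justifies the factor $2$ on the left-hand side.
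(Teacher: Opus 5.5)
Your proof is correct and follows the paper's own argument. You evaluate $\tr_{\RM_{2n}^{(c)}}$, $\tr_{\widetilde{\RM}_{2n+1}^{(c)}}$ and $\tr_{\Mel_{2n-1}^{(c)}}$ on $\ket{\psi_\alpha}$ through $\prod_B \alpha(B)^{\kappa(G\vert_B)-k(G)}$ with the data \eqref{eq:comb_RM}--\eqref{eq:comb_Mel}, and match the exponent of each $\alpha(B)$, which is exactly the paper's check that $\tr_{\RM_{2n}^{(c)}}(\ket{\psi_\alpha})^2=\tr_{\widetilde{\RM}_{2n+1}^{(c)}}(\ket{\psi_\alpha})\,\tr_{\Mel_{2n-1}^{(c)}}(\ket{\psi_\alpha})$.
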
    
\begin{proof}
    Using the combinatorial relations from \eqref{eq:comb_RM}, \eqref{eq:comb_RMtilde} and \eqref{eq:comb_Mel}, we infer that
    \begin{align}
\tr_{\RM_{2n}^{(c)}}\left(\ket{\psi_\alpha}\right) &= \alpha\left( \{1,2,3\}\right)^{1-2n} \times [ \alpha\left( \{c,c'\}\right) \alpha\left( \{c,c''\}\right)]^{-n} \times \alpha\left( \{c',c''\}\right)^{2-2n}
\,,\\
\tr_{\widetilde{\RM}_{2n+1}^{(c)}}\left(\ket{\psi_\alpha}\right) &= \alpha\left( \{1,2,3\}\right)^{-2n} \times [ \alpha\left( \{c,c'\}\right) \alpha\left( \{c,c''\}\right)]^{-n-1} \times \alpha\left( \{c',c''\}\right)^{2-2n} \,,\\
\tr_{\Mel_{2n-1}^{(c)}}\left(\ket{\psi_\alpha}\right) &= \alpha\left( \{1,2,3\}\right)^{2-2n} \times [ \alpha\left( \{c,c'\}\right) \alpha\left( \{c,c''\}\right)]^{-n+1} \times \alpha\left( \{c',c''\}\right)^{2-2n} \,,
    \end{align}
from which it is clear that 
\begin{equation}
\tr_{\RM_{2n}^{(c)}}\left(\ket{\psi_\alpha}\right)^2 =  \tr_{\widetilde{\RM}_{2n+1}^{(c)}}\left(\ket{\psi_\alpha}\right) \tr_{\Mel_{2n-1}^{(c)}}\left(\ket{\psi_\alpha}\right)\,,    
\end{equation}
and thus $\SpreadOne_{n}^{(c)} \left( \ket{\psi_\alpha} \right) = 0$.
\end{proof}
     
\begin{prop}\label{prop:graph_eq_move1}
 Let $n\in \mathbb{N}^*$, $c\in \{1,2,3\}$ and $\ket\psi$ a finite-dimensional $3$-partite pure state. We have $\SpreadOne_{n}^{(c)} \left( \ket\psi \right) = 0$ if and only if there exists $\rho_{c}^{(n)} \in \mathbb{C}\setminus \{0\}$ such that $\ket\psi = \vcenter{\hbox{\includegraphics[scale=.5]{images/psi.pdf}}}$ obeys the following graphical equation:
 \begin{equation}\label{eq:graph_eq_move1}
 \vcenter{\hbox{\includegraphics[scale=.5]{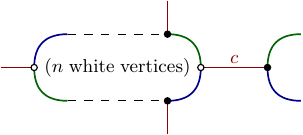}}}= \rho_{c}^{(n)} \vcenter{\hbox{\includegraphics[scale=.5]{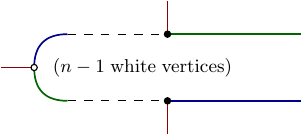}}}
 \end{equation}
In particular, for $n=1$, $2$ and $3$, we have, respectively:
\begin{align}
\vcenter{\hbox{\includegraphics[scale=.5]{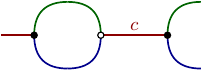}}} &= \rho_{c}^{(1)} \vcenter{\hbox{\includegraphics[scale=.5]{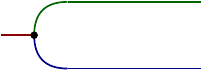}}} \,,\\
    \vcenter{\hbox{\includegraphics[scale=.5]{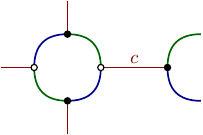}}} &= \rho_{c}^{(2)} \vcenter{\hbox{\includegraphics[scale=.5]{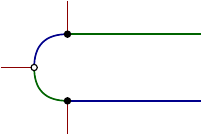}}} \,,\\
    \vcenter{\hbox{\includegraphics[scale=.5]{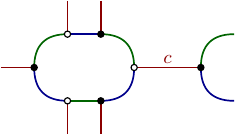}}} &= \rho_{c}^{(3)} \vcenter{\hbox{\includegraphics[scale=.5]{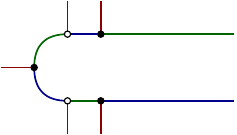}}}\,. \label{eq:equality_move1_3}\\
\end{align}
\end{prop}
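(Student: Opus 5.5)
The plan is to characterize the equality case of the Cauchy--Schwarz inequality \eqref{eq:CS_RM_RMtilde_M} established in the proof of Lem.~\ref{lem:SpreadOne}. By Def.~\ref{def:Pn} and Lem.~\ref{lem:SpreadOne}, all three invariants $\tr_{\RM_{2n}^{(c)}}$, $\tr_{\Mel_{2n-1}^{(c)}}$, $\tr_{\widetilde{\RM}_{2n+1}^{(c)}}$ are strictly positive. Hence $\SpreadOne_n^{(c)}(\ket\psi)$ is always finite, and it vanishes if and only if \eqref{eq:CS_RM_RMtilde_M} is an equality. In that proof, $\tr_{\RM_{2n}^{(c)}}(\ket\psi)$ is written as an inner product $\langle X , Y\rangle$ of two tensors built from $\ket\psi$. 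The first, $X$, is obtained by opening $\RM_{2n}^{(c)}$ along a cut separating one white vertex (one copy of $\psi$) together with its adjacent structure from the rest. The second, $Y$, is the complementary open graph. The norms satisfy $\|X\|^2=\tr_{\Mel_{2n-1}^{(c)}}(\ket\psi)$ and $\|Y\|^2=\tr_{\widetilde{\RM}_{2n+1}^{(c)}}(\ket\psi)$, or the other way round depending on the cut. I will make this splitting explicit, as a pair of open graphs with the same set of coloured half-edges, directly from the $n=3$ picture \eqref{eq:CS_move1_n3}. The two pieces are exactly the left- and right-hand sides of the graphical equation \eqref{eq:graph_eq_move1}.

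For the forward direction, equality in Cauchy--Schwarz holds if and only if $X$ and $Y$ are linearly dependent. Since $\|X\|^2,\|Y\|^2>0$ by Lem.~\ref{lem:SpreadOne}, both vectors are non-zero. So equality is equivalent to $X=\rho\,Y$ for some $\rho\in\mathbb{C}\setminus\{0\}$, which is \eqref{eq:graph_eq_move1} with $\rho_c^{(n)}=\rho$. I will also record the value of $\rho$ by contracting both sides with the complementary open graph. Pairing with $Y$ gives $\rho=\tr_{\RM_{2n}^{(c)}}(\ket\psi)/\tr_{\widetilde{\RM}_{2n+1}^{(c)}}(\ket\psi)$, which is real and positive. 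This step mirrors the argument following Eq.~\eqref{eq:minimize_eq}.

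For the converse, assume \eqref{eq:graph_eq_move1} holds with some $\rho\neq0$. Then $X$ and $Y$ are proportional, so the Cauchy--Schwarz bound is saturated. Hence $\tr_{\RM_{2n}^{(c)}}^2=\tr_{\Mel_{2n-1}^{(c)}}\tr_{\widetilde{\RM}_{2n+1}^{(c)}}$, and therefore $\SpreadOne_n^{(c)}(\ket\psi)=0$. The special cases $n=1,2,3$ follow by drawing the general open graphs for these values. For $n=1$, $\RM_2^{(c)}=C_2(\{c\})$ and $\Mel_1^{(c)}$ is the trivial melon, so the equation says that a $C_3$-type open graph is proportional to $\psi$, consistent with Rem.~\ref{rem:P1}.

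I expect the main obstacle to be combinatorial bookkeeping rather than analysis. One must check that the splitting used for $n=3$ extends to all $n$: gluing $X$ to its mirror really yields $\Mel_{2n-1}^{(c)}$, and gluing $Y$ to its mirror yields $\widetilde{\RM}_{2n+1}^{(c)}$, with the colour/face counts \eqref{eq:comb_RM}--\eqref{eq:comb_Mel} as a consistency check. One must also confirm that the open half-edges of $X$ and $Y$ match, so that the equation is well posed. I would do this by induction on $n$, inserting one $(c',c'')$-ladder rung at a time.
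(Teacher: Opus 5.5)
Your proposal is correct and follows the paper's argument. Since $\tr_{\RM_{2n}^{(c)}}$, $\tr_{\Mel_{2n-1}^{(c)}}$ and $\tr_{\widetilde{\RM}_{2n+1}^{(c)}}$ are all positive, $\SpreadOne_n^{(c)}(\ket\psi)=0$ holds exactly when \eqref{eq:CS_RM_RMtilde_M} is saturated, i.e.\ when the two open-graph vectors are proportional, and since neither vanishes the constant is non-zero. You get the non-vanishing from the positivity of their norms, while the paper gets it from $\tr_{\RM_{2n}^{(c)}}(\ket\psi)\neq 0$ for their inner product, which amounts to the same thing; your treatment of the converse and of the value of $\rho$ are harmless additions.

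One small correction: your verbal description of the split (a cut around one copy of $\psi$) is only accurate for $n=1$. For general $n$ the cut runs essentially along the colour-$c$ edges and separates the two $(c',c'')$-faces of $\RM_{2n}^{(c)}$. Since you take the decomposition from the picture in Lem.~\ref{lem:SpreadOne}, this does not affect the argument.
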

\begin{proof}
Let us fix $n=3$ for concreteness. Given a $3$-partite pure state $\ket\psi = \vcenter{\hbox{\includegraphics[scale=.5]{images/psi.pdf}}}$, we have equality in \eqref{eq:CS_move1_n3} if and only if the vectors
    \begin{equation}
    \vcenter{\hbox{\includegraphics[scale=.5]{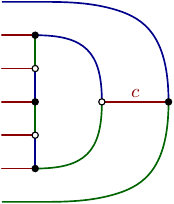}}} \quad    \mathrm{and} \quad     \vcenter{\hbox{\includegraphics[scale=.5]{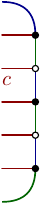}}}
    \end{equation}
are linearly-dependent. The inner product between these two vectors is $\tr_{\RM_{6}^{(c)}}(\ket\psi)$ and this quantity cannot vanish (according to Prop.~6.16 of \cite{Carrozza:2026qcf}), hence neither of these two vectors is the zero vector. We can thus find a complex number $\rho_c^{(3)} \neq 0$ such that:
\begin{equation}
        \vcenter{\hbox{\includegraphics[scale=.5]{images/CS_move1_vector1.pdf}}}= \rho_c^{(3)}\times  \vcenter{\hbox{\includegraphics[scale=.5]{images/CS_move1_vector2.pdf}}}\,,
\end{equation}
which we recognize as \eqref{eq:equality_move1_3}. The case of an arbitrary $n \in \mathbb{N}^*$ can be dealt with in a similar way. 
\end{proof}

In the existing literature on $D$-colored graphs and tensor models \cite{Crystallisation,Ferri1986,Gurau:2011xp, gurau_random_2017}, a single edge of color $c$ connecting two vertices is known as a \emph{$1$-dipole} of color $c$. When those two vertices belong to distinct faces of color $(c',c'')$, as in the left-hand-side of Eq.~\eqref{eq:graph_eq_move1}, we will more precisely refer to such a pattern as a \emph{separating $1$-dipole} (of color $c$), following the nomenclature of \eg Ref.~\cite{LINS1985261}.\footnote{In the literature on tensor models (see \eg Refs.~\cite{Gurau:2010ba,Carrozza:2012uv}), such a $1$-dipole is sometimes referred to as a \emph{non-degenerate} $1$-dipole. However, since the dichotomy between  degenerate and non-degenerate $1$-dipoles sometimes refers to something else entirely in the early literature on $D$-colored graphs \cite{Crystallisation,Ferri1986}, we will rely here on the nomenclature non-separating/separating from \cite{LINS1985261}.} Furthermore, the \emph{contraction} of such a $1$-dipole is defined as the graph operation that consists in: 1) removing the dipole edge and its adjacent vertices; and 2) reconnecting the open half-edges resulting from the previous operation in the unique way compatible with the colored structure of the graph. This leads to the structure shown on the right-hand-side of Eq.~\eqref{eq:graph_eq_move1}. Hence, the minimizers of the family of entanglement spreads $\{\SpreadOne_{n}^{(c)}\}$ obey equations that enable the recursive application of separating $1$-dipole contractions. In the interpretation of $3$-colored graphs as triangulated orientable surfaces, such separating $1$-dipole contractions preserve the topology of the underlying surface (\ie they do not affect its genus). Hence, there is a limit to how much we can simplify the structure of the initial graph by means of such moves. This motivates the introduction of a second family of spreads, which will in turn enable us to implement graphical moves that decrease the value of the genus. Before turning to this question in Sec.~\ref{sec:second_family}, we observe that it is straightforward to determine the large-$N$ expectation values of the spreads $\{\SpreadOne_{n}^{(c)}\}$ in the Haar-random state of local dimension $N$.
\begin{prop}\label{prop:SpreadOne_Haar}
    Let $(\ket{\varphi_N})_{N \in \mathbb{N}^*}$ denote a sequence of $3$-partite Haar-random states of local dimension $N$. For any $c \in \{1,2,3\}$, we have:
    \begin{equation}
        \langle \SpreadOne_{1}^{(c)}(\ket{\varphi_N}) \rangle \underset{N\to +\infty}{=} o(1)\,,
    \end{equation}
    and
    \begin{equation}
\forall n \geq 2\,,     \qquad \langle \SpreadOne_{n}^{(c)}(\ket{\varphi_N}) \rangle \underset{N\to +\infty}{=} \ln(N) + o(1)\,.
    \end{equation}
\end{prop}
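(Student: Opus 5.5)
The plan is to reduce each of the three logarithms in $\SpreadOne_{n}^{(c)}$ to the logarithm of an expectation value using Prop.~\ref{prop:mean_log}, and then to compare the large-$N$ scalings of the three averages given by the Wick formula
\[
\mean{\Tr_G(\ket{\varphi_N})} \sim \mu_G N^{\kappa(G)-k(G)-\frac{\omega(G)}{2}-\Delta(G)}.
\]
For $D=3$, and since $\omega(G)=2\kappa(G)+k(G)-F(G)$, a connected $G$ gives the exponent $-\tfrac{3}{2}k(G)+\tfrac12 F(G)-\Delta(G)$.

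\textbf{Step 1: hypotheses of Prop.~\ref{prop:mean_log}.} For the power law, I would use that $\RM_{2n}^{(c)}$ follows a power law (Cor.~6.20 of \cite{Carrozza:2026qcf}). The Cauchy--Schwarz bound \eqref{eq:CS_RM_RMtilde_M} then shows, via Lem.~\ref{lem:power-law} with $r_i=1/2$, that $\Mel_{2n-1}^{(c)}$ and $\widetilde{\RM}_{2n+1}^{(c)}$ also follow power laws. For large-$N$ factorization, I would invoke Thm.~4.1 of \cite{Factorization2026}, which needs $\Delta<3/2$ for each graph; this follows from the explicit color-$0$ completions constructed in Step 2. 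Alternatively, $\RM_{2n}^{(c)}$ is already covered by Cor.~6.20.

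\textbf{Step 2: the exponents.} The face counts \eqref{eq:comb_RM}, \eqref{eq:comb_RMtilde}, \eqref{eq:comb_Mel} give the ``compatible'' exponents
\[
-2n+1 \ \ \text{for } \RM_{2n}^{(c)},\qquad -2n \ \ \text{for } \widetilde{\RM}_{2n+1}^{(c)},\qquad -2n+2 \ \ \text{for } \Mel_{2n-1}^{(c)}.
\]
These combine to zero in $\SpreadOne_n^{(c)}$. The $\ln N$ term must therefore come entirely from the degrees of compatibility:
\[
\langle\SpreadOne_n^{(c)}\rangle = \bigl(2\Delta(\RM_{2n}^{(c)})-\Delta(\Mel_{2n-1}^{(c)})-\Delta(\widetilde{\RM}_{2n+1}^{(c)})\bigr)\ln N+O(1).
\]
Melonic graphs are compatible, so $\Delta(\Mel_{2n-1}^{(c)})=0$.

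For $n=1$, all three graphs are cyclic graphs $C_k(\{c\})$ (Rem.~\ref{rem:P1}), hence compatible. For such bipartite-type cycles the degeneracies are all $1$ at leading order, so the constant cancels as well and the result is $o(1)$. Equivalently, this is the standard fact that $\mathsf{D}_{2,3}$ of a Haar state with unbalanced dimensions $N\times N^2$ tends to $0$.

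For $n\geq2$, heuristically $\rho_{12}\approx \mathbb{1}/N^2$. Its realignment therefore carries one outlier singular value of order $1/N$ coming from $\mathrm{vec}(\mathbb{1})$, while the bulk is much smaller. This outlier dominates $\tr\bigl((\rho^R_{12}\rho^{R\dagger}_{12})^n\bigr)\sim N^{-2n}$, which suggests $\Delta(\RM_{2n}^{(c)})=\tfrac12$ for $n\ge 2$. The same outlier dominates the decorated graph, which suggests $\Delta(\widetilde{\RM}_{2n+1}^{(c)})=0$ with exponent $-2n$, the melonic insertion on a color-$c$ edge not disturbing the outlier. The resulting coefficient is $2\cdot\tfrac12-0-0=1$, as claimed. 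I would make this rigorous by exhibiting color-$0$ completions $\widehat G$ that realize these values of $\Delta_0$: for $\RM_{2n}^{(c)}$, pair vertices across the two $(c',c'')$-faces. The constant term would then follow from $\mu_{\RM}^2=\mu_{\Mel}\,\mu_{\widetilde{\RM}}$, which I would check by counting the optimal Wick pairings, each expected to be unique.

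\textbf{Main obstacle.} The hard part is proving the lower bounds $\Delta(\RM_{2n}^{(c)})\ge\tfrac12$ and $\Delta(\widetilde{\RM}_{2n+1}^{(c)})\geq 0$ sharply, that is, showing that no completion beats the proposed ones, and also fixing the degeneracy factors. I would first try a direct bound on $F_0(\widehat G)$, using the fact that the two $(c',c'')$-faces of $\RM_{2n}^{(c)}$ force at least one ``defect'' in any color-$0$ matching. Failing that, I would compute the averages directly from the Wishart picture of $\rho_{12}$, namely the outlier plus bulk decomposition of the realigned matrix, to extract both exponents and constants.
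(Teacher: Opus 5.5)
Your Step 1 and your $n=1$ argument follow the paper's proof. That means: the power law for $\RM_{2n}^{(c)}$ transferred to $\Mel_{2n-1}^{(c)}$ and $\widetilde{\RM}_{2n+1}^{(c)}$ via \eqref{eq:CS_RM_RMtilde_M} and Lem.~\ref{lem:power-law}, factorization from \cite{Factorization2026}, then Prop.~\ref{prop:mean_log}, and unit degeneracies for the cyclic graphs.

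The gap is in Step 2 for $n\geq 2$. Both degrees of compatibility you propose are wrong, and the coefficient $1$ comes out only because the two errors cancel.
\begin{itemize}
\item For $D=3$, $\Delta_0(\widehat G)=\frac{3}{2}k(G)+\frac{1}{2}F(G)-F_0(\widehat G)$. For $\RM_{2n}^{(c)}$ this equals $4n+1-F_0(\widehat G)\in\mathbb{Z}$, so $\Delta(\RM_{2n}^{(c)})=\frac{1}{2}$ is impossible. Your own outlier heuristic ($N^{-2n}$ against the compatible exponent $-2n+1$) in fact gives $\Delta(\RM_{2n}^{(c)})=1$.
\item The melon on a color-$c$ edge inserts the reduced density matrix $\rho_c\approx \id/N$, so it does cost a factor $N^{-1}$. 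Hence $\mean{\tr_{\widetilde{\RM}_{2n+1}^{(c)}}(\ket{\varphi_N})}\sim N^{-2n-1}$ and $\Delta(\widetilde{\RM}_{2n+1}^{(c)})=1$, not $0$. The insertion lowers the compatible exponent by one and leaves $\Delta$ unchanged.
\end{itemize}
The correct bookkeeping is $2\cdot 1-1-0=1$, which is what the paper's three asymptotics $N^{2-2n}$, $N^{-2n}$, $N^{-2n-1}$ encode. The completions with $\Delta_0=\frac{1}{2}$ and $\Delta_0=0$ that you plan to exhibit do not exist, so the rigorous part of your plan cannot be carried out as written.

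You can repair it in permutation language: $F_0(\sigma)=3k-\sum_{c}d(\sigma,\tau_c)$, with $d$ the Cayley distance and $\tau_c$ the color-$c$ edges. For $\RM_{2n}^{(c)}$:
\begin{itemize}
\item the pairwise distances of $\tau_c,\tau_{c'},\tau_{c''}$ are $n,n,2n-2$;
\item $\sigma=\tau_c$ gives $F_0=4n$;
\item $F_0=4n+1$ would force $\sigma=\tau_c t$ with $t$ a transposition common to the two perfect matchings $\tau_c^{-1}\tau_{c'}$ and $\tau_c^{-1}\tau_{c''}$, and these share none when $n\geq 2$.
\end{itemize}

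On the constant term, your expectation that optimal Wick pairings are unique is right for $n\geq 3$, where $\tau_c$ is the unique maximizer for $\RM_{2n}^{(c)}$. It fails at $n=2$. Here the paper's proof, which asserts $\mean{\tr_{\RM_{2n}^{(c)}}(\ket{\varphi_N})}=N^{-2n}(1+O(1/N))$ for all $n\geq 2$, has the same oversight.
\begin{itemize}
\item By \eqref{eq:comb_RM}, $\RM_4^{(c)}$ has $F_{12}=F_{13}=F_{23}=2$, so each of $\tau_1,\tau_2,\tau_3$ gives $F_0=4+2+2=8$.
\item All pairwise distances equal $2$, so $F_0=9$ would require distance $1$ to all three, which is excluded by the argument above.
\item A parity and triangle-inequality check then shows that $F_0=8$ forces $\sigma\in\{\tau_1,\tau_2,\tau_3\}$.
\end{itemize}
Hence $\mu_{\RM_4^{(c)}}=3$. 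The same count gives $\mu_{\widetilde{\RM}_5^{(c)}}=3$, while $\mu_{\Mel_3^{(c)}}=1$. Therefore $\mean{\SpreadOne_2^{(c)}(\ket{\varphi_N})}=\ln N-\ln 3+o(1)$. The leading $\ln N$ is correct for every $n\geq 2$, but the stated $o(1)$ remainder holds only for $n=1$ and $n\geq 3$.
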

\begin{proof}
    It was proven in our earlier work \cite{Carrozza:2026qcf} that $\RM_{2n}^{(c)}$ obeys a power law for every $n$ and $c$; by Lem.~\ref{lem:power-law} and the Cauchy-Shwarz inequality \eqref{eq:CS_RM_RMtilde_M}, it follows that $\widetilde{\RM}_{2n+1}^{(c)}$ and $\Mel_{2n-1}^{(c)}$ also obey power laws. Moreover, these three invariants are known to obey the large-$N$ factorization criterion (see Def.~\ref{def:fact_power} and Ref.~\cite{Factorization2026}). Henceforth, Prop.~6.18 of \cite{Carrozza:2026qcf} enables us to commute the logarithm with the expectation value of any such invariant at leading and subleading orders in the asymptotic large-$N$ regime. Finally, it can be shown that \cite{Bonzom:2012hw,Aubrun:2012ydp,lionni_colored_2018,Carrozza:2026qcf}, for any $n\geq2$ and any $c \in \{1,2,3\}$,
    \begin{align}
        \langle \tr_{   \Mel_{2n-1}^{(c)}}(\ket{\varphi_N})\rangle   &\underset{N\to+\infty}{=} N^{2-2n} \left( 1+ O(1/N)\right) \,,\\
        \langle \tr_{   \RM_{2n}^{(c)}}(\ket{\varphi_N})\rangle   &\underset{N\to+\infty}{=} N^{-2n} \left( 1+ O(1/N)\right) \,,\\
        \langle \tr_{   \widetilde{\RM}_{2n+1}^{(c)}}(\ket{\varphi_N})\rangle   &\underset{N\to+\infty}{=} N^{-1-2n} \left( 1+ O(1/N)\right) \,,
    \end{align}
from which we can conclude that
\begin{align}
        \langle \SpreadOne_n^{(c)}(\ket{\varphi_N})\rangle  &\underset{N\to+\infty}{=} \ln\left(\langle \tr_{   \Mel_{2n-1}^{(c)}}(\ket{\varphi_N})\rangle \right) + \ln\left(\langle \tr_{   \widetilde{\RM}_{2n+1}^{(c)}}(\ket{\varphi_N})\rangle \right)  - 2 \ln\left(\langle \tr_{   \RM_{2n}^{(c)}}(\ket{\varphi_N})\rangle \right)+ o(1)\\
        &\underset{N\to+\infty}{=} \ln(N) + o(1) \nonumber
\end{align}        
The remaining case $n=1$ is special because $\RM_2^{(c)}$ and $\tilde{\RM}_3^{(c)}$ are both cyclic and melonic, while $\Mel_1^{(c)}$ reduces to the trivial two-vertex graph (see Rem.~\ref{rem:P1}). We then find
\begin{align}
        \langle \tr_{   \RM_{2}^{(c)}}(\ket{\varphi_N})\rangle   &\underset{N\to+\infty}{=} N^{-1} \left( 1+ O(1/N)\right) \,,\\
        \langle \tr_{   \widetilde{\RM}_{3}^{(c)}}(\ket{\varphi_N})\rangle   &\underset{N\to+\infty}{=} N^{-2} \left( 1+ O(1/N)\right) \,,
    \end{align}
and therefore
\begin{align}
        \langle \SpreadOne_1^{(c)}(\ket{\varphi_N})\rangle  &\underset{N\to+\infty}{=}  \ln\left(\langle \tr_{   \widetilde{\RM}_{3}^{(c)}}(\ket{\varphi_N})\rangle \right)  - 2 \ln\left(\langle \tr_{   \RM_{2}^{(c)}}(\ket{\varphi_N})\rangle \right)+ o(1)\\ &\underset{N\to+\infty}{=} o(1) \nonumber
\end{align} 
\end{proof}

\subsubsection{Second family: minimizers obey equations enabling the elimination of non-separating \texorpdfstring{$1$}{1}-dipoles}\label{sec:second_family}

 For any $c\in \{1,2,3\}$ and any $m,n,p,q \in \mathbb{N}$, we define the $3$-colored graphs
\begin{equation}
    G_{m,n,p,q}^{(c)}= \vcenter{\hbox{\includegraphics[scale=.6]{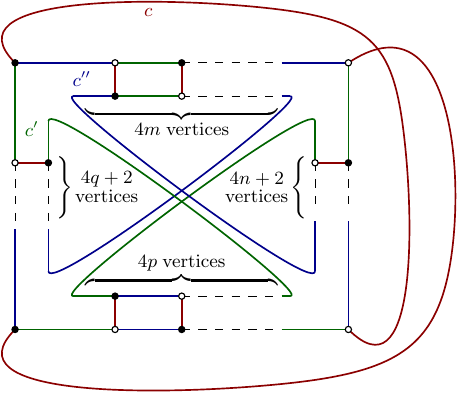}}}
\end{equation}
and
\begin{equation}
    H_{m,n,p,q}^{(c)}= \vcenter{\hbox{\includegraphics[scale=.6]{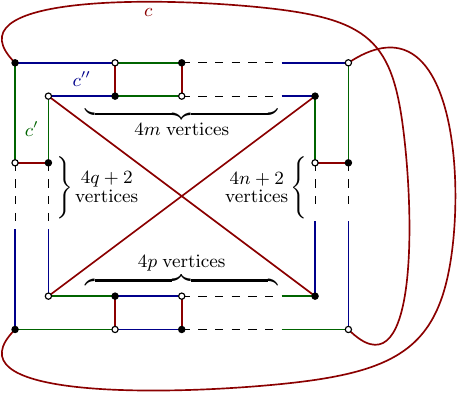}}} \,.
\end{equation}
For example, one has:
\begin{equation}
    \vcenter{\hbox{\includegraphics[scale=.6]{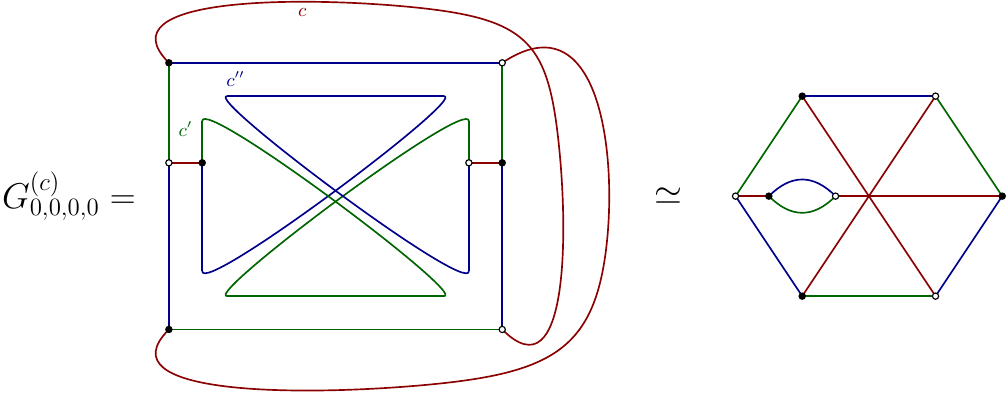}}}
\end{equation}
and
\begin{equation}
        \vcenter{\hbox{\includegraphics[scale=.6]{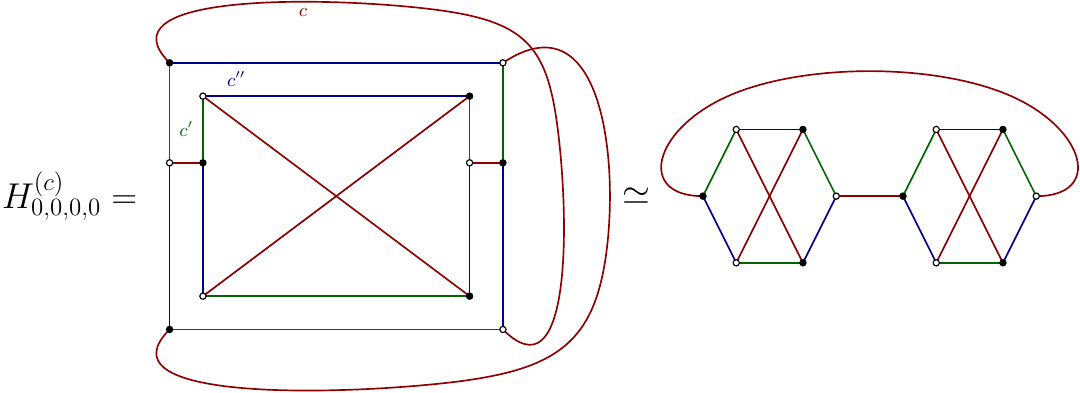}}}
\end{equation}
We also note that 
\begin{align}\label{eq:comb_H-G_mnpq}
    k(G_{m,n,p,q}^{(c)})&= 2(m+n+p+q)+4\,, \qquad k(H_{m,n,p,q}^{(c)})= 2(m+n+p+q)+6\,, \nonumber \\
    \kappa\left( G_{m,n,p,q}^{(c)} \right) &=  \kappa\left( H_{m,n,p,q}^{(c)} \right) = 1 \,, \nonumber \\
    \kappa\left( G_{m,n,p,q}^{(c)}\vert_{c,c'} \right) &= \kappa\left( G_{m,n,p,q}^{(c)}\vert_{c,c''} \right) = \kappa\left( H_{m,n,p,q}^{(c)}\vert_{c,c'} \right) = \kappa\left( H_{m,n,p,q}^{(c)}\vert_{c,c''} \right) = m+n+p+q+1\,, \nonumber \\
    \kappa\left( G_{m,n,p,q}^{(c)}\vert_{c',c''} \right) &= \kappa\left( H_{m,n,p,q}^{(c)}\vert_{c',c''} \right) = 2\,.
\end{align}

\begin{lem}\label{lem:SpreadTwo}
    For any $m,n,p,q \in \mathbb{N}$ and $c\in \{1,2,3\}$, we have:
    \begin{equation}
        G_{m,n,p,q}^{(c)}\sqcup \overline{G_{m,n,p,q}^{(c)}} \dom \RM_{2(m+n+p+q)}^{(c)}\sqcup H_{m,n,p,q}^{(c)}\,.
    \end{equation}
\end{lem}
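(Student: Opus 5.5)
The plan mirrors the proof of Lem.~\ref{lem:SpreadOne}. The target inequality is
\begin{equation}
\abs{\tr_{G_{m,n,p,q}^{(c)}}(\ket\psi)}^2 \leq \tr_{\RM_{2(m+n+p+q)}^{(c)}}(\ket\psi)\, \tr_{H_{m,n,p,q}^{(c)}}(\ket\psi)\,.
\end{equation}
This is the required domination, since $\tr_{G\sqcup \bar G} = \abs{\tr_G}^2$, $\tr_{\RM\sqcup H}=\tr_{\RM}\tr_H$, and both factors on the right will be shown to be non-negative.

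\textbf{Step 1: the Cauchy--Schwarz cut.} View $\tr_{G_{m,n,p,q}^{(c)}}(\ket\psi)$ as an inner product $\langle u , w\rangle$ of two open tensors. These are obtained by cutting $G_{m,n,p,q}^{(c)}$ along a curve that crosses a small number of edges, chosen so that the two halves $u$ and $w$ are not mirror images of each other. The reasonable guess for the cut is the following: it separates the central region containing the two non-separating $1$-dipoles of color $c$ (sharing a common face of colors $(c',c'')$) from the remaining ladder structure, which carries the four chains of lengths $m,n,p,q$. Cauchy--Schwarz then gives $\abs{\langle u,w\rangle}^2\leq \langle u,u\rangle\langle w,w\rangle$. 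Here $\langle u,u\rangle$ is the invariant of the graph obtained by gluing $u$ to its mirror $\bar u$, and similarly for $w$.

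\textbf{Step 2: identify the two doubled graphs.} Show that one of $u\cup\bar u$, $w\cup \bar w$ is $\RM_{2(m+n+p+q)}^{(c)}$ and the other is $H_{m,n,p,q}^{(c)}$. As a consistency check, use the combinatorial data \eqref{eq:comb_RM} and \eqref{eq:comb_H-G_mnpq}:
\begin{itemize}
\item vertex counts are additive, $2k(G)=k(\RM_{2(m+n+p+q)})+k(H)$, i.e.\ $4(m+n+p+q)+8 = 2(m+n+p+q) + 2(m+n+p+q)+6$, up to the vertices sitting on the cut. This check tells me the cut must be adjusted so that one doubled half loses two vertices relative to a naive split, as happened with $\Mel$ versus $\widetilde{\RM}$ in Lem.~\ref{lem:SpreadOne}. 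If it fails I would move the cut accordingly.
\item face counts must match: $2\kappa(G|_{c,c'}) = 2(m+n+p+q+1) = (m+n+p+q) + (m+n+p+q+1)$ plus a correction from faces crossing the cut. I would use this to fix the exact cut position.
\end{itemize}

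\textbf{Step 3: positivity.} Both $\tr_{\RM}$ and $\tr_H$ are norms squared by construction, so they are non-negative. In addition, $\tr_{\RM_{2k}^{(c)}}>0$ by Prop.~6.16 of \cite{Carrozza:2026qcf}. Taking square roots then yields $\abs{\tr_{G\sqcup\bar G}}\leq \abs{\tr_{\RM\sqcup H}}$.

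\textbf{Main obstacle.} The hard step is Step 2: choosing the cut of $G_{m,n,p,q}^{(c)}$ so that the mirror-doubled pieces are \emph{exactly} $\RM_{2(m+n+p+q)}^{(c)}$ and $H_{m,n,p,q}^{(c)}$. This is a purely graphical verification, which I would first carry out for small cases such as $(m,n,p,q)=(0,0,0,0)$ and $(1,0,0,0)$, and then extend by observing that each chain is copied symmetrically on both sides of the cut.
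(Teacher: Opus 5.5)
Your overall strategy (a Cauchy--Schwarz splitting $\tr_{G_{m,n,p,q}^{(c)}}(\ket\psi)=\langle u,w\rangle$, followed by recognizing the two mirror-doubled halves) is the paper's. However, Step~2 rests on a false premise, and with the target index $2(m+n+p+q)$ it cannot be completed.

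When a trace-invariant is written as an inner product of two open tensors, the cut runs through edges only. Every vertex therefore lies in exactly one of $u$, $w$, so $k(u\cup\bar u)+k(w\cup\bar w)=2k(G)$ for \emph{every} cut. No vertex ``sits on the cut''. Nothing is lost in Lem.~\ref{lem:SpreadOne} either, where $k(\Mel_{2n-1}^{(c)})+k(\widetilde{\RM}_{2n+1}^{(c)})=4n=2k(\RM_{2n}^{(c)})$. Hence the mismatch you found, $4s+8$ versus $4s+6$ with $s=m+n+p+q$, cannot be repaired by moving the cut: no Cauchy--Schwarz split of $G_{m,n,p,q}^{(c)}$ yields the pair $(\RM_{2s}^{(c)},H_{m,n,p,q}^{(c)})$.

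What your check really shows is that the realignment moment produced is $\RM_{2(s+1)}^{(c)}$. Then $(2s+2)+(2s+6)=4s+8$, and the $(c,c')$-faces also balance exactly, $(s+1)+(s+1)=2(s+1)$. This is what the paper's proof obtains, and it is the graph used in Def.~\ref{def:Qmnpq}, Prop.~\ref{prop:SpreadTwo_alpha_states} and Eq.~\eqref{eq:typical_value_Q0000}. Note also that $\RM_0^{(c)}$ is undefined, so the case $(0,0,0,0)$ you intended to check first would have exposed the problem. The subscript in the displayed statement should be read as $2(m+n+p+q+1)$. If you want the literal index $2s$ (for $s\geq 1$), add one step after Cauchy--Schwarz:
\begin{itemize}
\item $\tr_{\RM_{2k}^{(c)}}=\sum_i\sigma_i^{2k}$, where $\sigma_i$ are the singular values of the realigned reduced density matrix;
\item $\sum_i\sigma_i^2=\tr_{\RM_2^{(c)}}\leq 1$, so these moments are non-increasing in $k$;
\item together with $\tr_{H_{m,n,p,q}^{(c)}}\geq 0$, this gives the weaker bound.
\end{itemize}

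The same count also shows that your Step~1 guess for the cut cannot be right. The two halves must have $k(\RM_{2(s+1)}^{(c)})=2s+2$ and $k(H_{m,n,p,q}^{(c)})=2s+6$ vertices. So each unit increase of $s$ adds one separating $1$-dipole to \emph{each} half. The chains of lengths $m,n,p,q$ are thus shared by both halves (as your closing remark suggests) rather than confined to one side. The halves differ by exactly four vertices, namely the two non-separating $1$-dipoles of color $c$. The half whose double is $H_{m,n,p,q}^{(c)}$ contains these two dipoles. The other half is the same open graph with them contracted; its double is $\RM_{2(s+1)}^{(c)}$, recognizable by its two parallel $(c',c'')$-faces of equal length. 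These are exactly the two vectors whose proportionality is the saturation condition in Prop.~\ref{prop:move2}. For $(0,0,0,0)$ they are $\PT_3$ with one color-$c$ edge cut open (doubling to $\PT_3\flip_{(e,\bar e)}\overline{\PT_3}=H_{0,0,0,0}^{(c)}$) and the elementary two-vertex melon (doubling to $\RM_2^{(c)}$). With this cut and the corrected index, your Step~3 goes through as written.
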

\begin{proof}
    For any finite-dimensional pure state $\ket\psi = \vcenter{\hbox{\includegraphics[scale=.5]{images/psi.pdf}}}$, we can invoke the Cauchy-Schwarz inequality to write:
\begin{align}
        \abs{\tr_{G_{m,n,p,q}^{(c)}}(\ket\psi)}^2&= \abs{\, \vcenter{\hbox{\includegraphics[scale=.5]{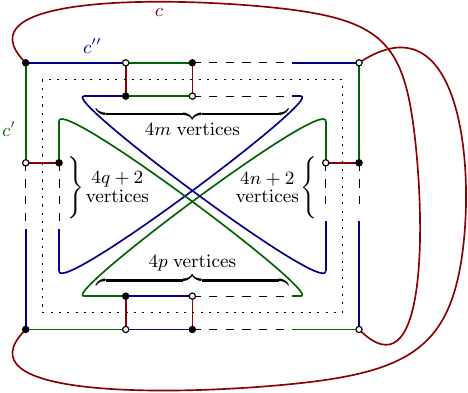}}} \,}^2 \label{eq:CS_move2} \\&\leq \vcenter{\hbox{\includegraphics[scale=.5]{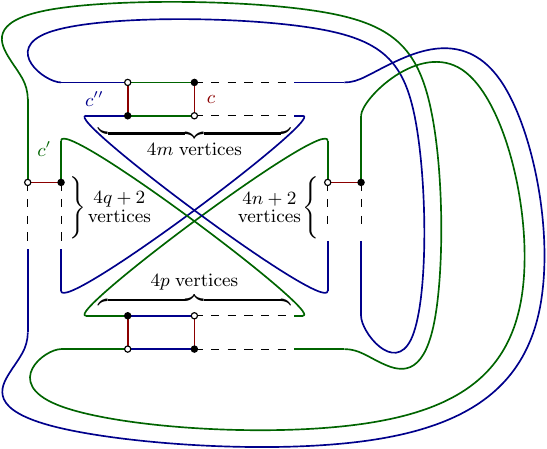}}} \times \vcenter{\hbox{\includegraphics[scale=.5]{images/Hmnpq.pdf}}} \label{eq:CS_move1_n3} \nonumber\\   
        &= \tr_{\RM_{2(m+n+p+q+1)}^{(c)}}(\ket\psi) \times \tr_{H_{m,n,p,q}^{(c)}}(\ket\psi) \nonumber \,,
    \end{align} 
where the dotted square in the first line indicates how we interpreted this tensor contraction as an inner product and, in going from the second to the third line, we recognized that the first graph of the second line is in fact isomorphic to $\RM_{2(m+n+p+q+1)}$ (to see this, one can first verify that this graph has exactly two faces of color $(c',c'')$, of equal lengths, and that they ``run parallel'' to one-another).    
\end{proof}

\begin{defi}\label{def:Qmnpq}
    For any $m,n,p,q \in \mathbb{N}$ and $c\in \{1,2,3\}$, we define
    \begin{equation}
    \SpreadTwo_{m,n,p,q}^{(c)} \eqdef R_{G_{m,n,p,q}^{(c)}\sqcup \overline{G_{m,n,p,q}^{(c)}}, \RM_{2(m+n+p+q+1)}^{(c)}\sqcup H_{m,n,p,q}^{(c)}}\,, \end{equation}
which, according to Lem.~\ref{lem:SpreadTwo}, is a well-defined entanglement spread. More concretely, for any finite-dimensional $D$-partite pure state $\ket\psi$, we have
\begin{equation}
    \SpreadTwo_{m,n,p,q}^{(c)} \left( \ket\psi \right) = \begin{cases} \ln \abs{\tr_{H_{m,n,p,q}^{(c)}} (\ket\psi)  } + \ln \left( \tr_{\RM_{2(m+n+p+q+1)}^{(c)}} (\ket\psi)  \right)
    - 2 \ln \abs{\tr_{G_{m,n,p,q}^{(c)}} (\ket\psi) } \; &\mathrm{if}\;\tr_{G_{m,n,p,q}^{(c)}} (\ket\psi)\neq 0\\
    + \infty \; &\mathrm{otherwise}
    \end{cases}
    \end{equation}
\end{defi}

\begin{prop}\label{prop:SpreadTwo_alpha_states}
    For any $c \in \{1,2,3\}$, any $m,n,p,q \in \mathbb{N}$ such that $m+n+p+q\neq 0$, and any HT state $\ket{\psi_\alpha}$ (with weight function $\alpha$), we have:
    \begin{equation}
        \SpreadTwo_{m,n,p,q}^{(c)} \left( \ket{\psi_\alpha} \right) = 0\,.
    \end{equation}
\end{prop}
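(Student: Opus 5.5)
The plan is to evaluate each of the three invariants entering $\SpreadTwo_{m,n,p,q}^{(c)}$ on $\ket{\psi_\alpha}$ by means of the HT evaluation formula. This is the same formula used in the proofs of Props.~\ref{prop:vertex_spread_HT} and \ref{prop:SpreadOne_alpha_states}: for any $G\in\cG_3^{\conn}$,
\begin{equation*}
\tr_G(\ket{\psi_\alpha})=\prod_{B\subset\{1,2,3\},\,\abs{B}\geq 2}\alpha(B)^{\kappa(G\vert_B)-k(G)}\,,
\end{equation*}
and it extends multiplicatively to disjoint unions. Since every $\alpha(B)\geq 1$, all these evaluations are strictly positive. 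In particular $\tr_{G_{m,n,p,q}^{(c)}}(\ket{\psi_\alpha})\neq 0$, so the spread is given by the finite logarithmic expression of Def.~\ref{def:Qmnpq}. It then suffices to check the identity
\begin{equation*}
\tr_{G_{m,n,p,q}^{(c)}}(\ket{\psi_\alpha})^2=\tr_{\RM_{2(s+1)}^{(c)}}(\ket{\psi_\alpha})\,\tr_{H_{m,n,p,q}^{(c)}}(\ket{\psi_\alpha})\,,\qquad s\eqdef m+n+p+q\,.
\end{equation*}
We do this by comparing, for each of the four blocks $B\in\{\{1,2,3\},\{c,c'\},\{c,c''\},\{c',c''\}\}$, the exponent of $\alpha(B)$ on both sides.

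The combinatorial input is already available. The data for $G_{m,n,p,q}^{(c)}$ and $H_{m,n,p,q}^{(c)}$ is Eq.~\eqref{eq:comb_H-G_mnpq}. For $\RM_{2(s+1)}^{(c)}$ we use Eq.~\eqref{eq:comb_RM} with $n\to s+1$: $k=2s+2$, $\kappa=1$, $\kappa(\cdot\vert_{c,c'})=\kappa(\cdot\vert_{c,c''})=s+1$ and $\kappa(\cdot\vert_{c',c''})=2$. Here the hypothesis $s\neq 0$ enters only to keep us within the range where these graphs and their face counts are as stated (for instance, where $\RM_{2s}$ in Lem.~\ref{lem:SpreadTwo} makes sense). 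The exponents then compare as follows.
\begin{itemize}
\item For $B=\{1,2,3\}$, the left-hand side gives $2(1-(2s+4))=-4s-6$, and the right-hand side gives $(1-(2s+2))+(1-(2s+6))=-4s-6$.
\item For $B=\{c,c'\}$ (and likewise $\{c,c''\}$), the left-hand side gives $2(s+1-(2s+4))=-2s-6$, and the right-hand side gives $(s+1-(2s+2))+(s+1-(2s+6))=-2s-6$.
\item For $B=\{c',c''\}$, the left-hand side gives $2(2-(2s+4))=-4s-4$, and the right-hand side gives $(2-(2s+2))+(2-(2s+6))=-4s-4$.
\end{itemize}
All exponents agree, so the identity holds and $\SpreadTwo_{m,n,p,q}^{(c)}(\ket{\psi_\alpha})=\ln 1=0$.

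The only step needing real care is confirming the face counts of $G_{m,n,p,q}^{(c)}$ and $H_{m,n,p,q}^{(c)}$ recorded in Eq.~\eqref{eq:comb_H-G_mnpq}, together with the identification of the first graph in the Cauchy--Schwarz bound with $\RM_{2(s+1)}^{(c)}$. I would verify these by tracing the bicolored cycles in the figures, in the same way as in the proof of Lem.~\ref{lem:SpreadTwo}. Once these counts are fixed, the rest is a direct exponent comparison.
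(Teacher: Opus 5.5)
Your proposal is correct and is essentially the paper's proof. Like the paper, you evaluate $\tr_{G_{m,n,p,q}^{(c)}}$, $\tr_{\RM_{2(s+1)}^{(c)}}$ and $\tr_{H_{m,n,p,q}^{(c)}}$ on $\ket{\psi_\alpha}$ from the face counts \eqref{eq:comb_H-G_mnpq} and \eqref{eq:comb_RM}, and your block-by-block exponents agree with the paper's. One minor remark: the computation never actually uses $s\neq 0$, and the same exponent matching goes through for $s=0$ with $\RM_2^{(c)}$, so your guess about where that hypothesis enters is unnecessary, though harmless.
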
    
\begin{proof}
    According to \eqref{eq:comb_H-G_mnpq} and \eqref{eq:comb_RM}, we have
\begin{align}
\tr_{G_{m,n,p,q}^{(c)}}\left(\ket{\psi_\alpha}\right) &= \alpha\left( \{1,2,3\}\right)^{-3-2(m+n+p+q)}  [ \alpha\left( \{c,c'\}\right) \alpha\left( \{c,c''\}\right)]^{-3-(m+n+p+q)}  \alpha\left( \{c',c''\}\right)^{-2-2(m+n+p+q)} \,,\\
\tr_{\RM_{2(m+n+p+q+1)}^{(c)}}\left(\ket{\psi_\alpha}\right) &= \alpha\left( \{1,2,3\}\right)^{-1-2(m+n+p+q)}  [ \alpha\left( \{c,c'\}\right) \alpha\left( \{c,c''\}\right)]^{-1-(m+n+p+q)}  \alpha\left( \{c',c''\}\right)^{-2(m+n+p+q)}
\,,\\
\tr_{H_{m,n,p,q}^{(c)}}\left(\ket{\psi_\alpha}\right) &= \alpha\left( \{1,2,3\}\right)^{-5-2(m+n+p+q)}  [ \alpha\left( \{c,c'\}\right) \alpha\left( \{c,c''\}\right)]^{-5-(m+n+p+q)}  \alpha\left( \{c',c''\}\right)^{-4-2(m+n+p+q)} \,,
\end{align}
from which we conclude that
\begin{equation}
    \abs{\tr_{G_{m,n,p,q}^{(c)}}\left(\ket{\psi_\alpha}\right)}^2 =\abs{\tr_{H_{m,n,p,q}^{(c)}}\left(\ket{\psi_\alpha}\right)} \cdot \tr_{\RM_{2(m+n+p+q+1)}^{(c)}}\left(\ket{\psi_\alpha}\right)\,, 
\end{equation}
and thus $\SpreadTwo_{m,n,p,q}^{(c)} \left( \ket{\psi_\alpha} \right) = 0$.
\end{proof}

\begin{prop}\label{prop:move2}
 Let $m,n,p,q\in \mathbb{N}$, $c\in \{1,2,3\}$ and $\ket\psi$ a finite-dimensional $3$-partite pure state. We have $\SpreadTwo_{m,n,p,q}^{(c)} \left( \ket\psi \right) = 0$ if and only if there exists $\lambda^{(m,n,p,q)}_c \in \mathbb{C}\setminus \{0\}$ such that $\ket\psi = \vcenter{\hbox{\includegraphics[scale=.5]{images/psi.pdf}}}$ obeys the following graphical equation:
 \begin{equation}\label{eq:move2_graphical_eq}
       \vcenter{\hbox{\includegraphics[scale=.5]{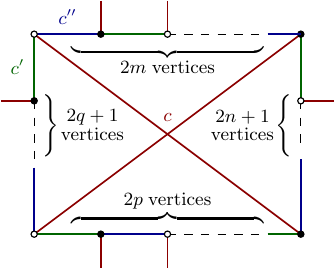}}} = \lambda^{(m,n,p,q)}_c    \vcenter{\hbox{\includegraphics[scale=.5]{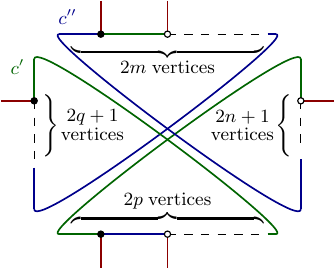}}}\,,
 \end{equation}
 which is associated with the contraction of two non-separating $1$-dipoles. 
 \end{prop}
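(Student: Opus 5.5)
The plan is to mirror the proof of Prop.~\ref{prop:graph_eq_move1}, with the Cauchy-Schwarz inequality \eqref{eq:CS_move2} from Lem.~\ref{lem:SpreadTwo} as the only input. Write $\tr_{G_{m,n,p,q}^{(c)}}(\ket\psi)=\langle u_\psi , w_\psi\rangle$, where $u_\psi$ and $w_\psi$ are the two tensors obtained by cutting $G_{m,n,p,q}^{(c)}$ along the dotted square in \eqref{eq:CS_move2}. Then $\|u_\psi\|^2=\tr_{\RM_{2(m+n+p+q+1)}^{(c)}}(\ket\psi)$ and $\|w_\psi\|^2=\tr_{H_{m,n,p,q}^{(c)}}(\ket\psi)$. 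With this notation, $\SpreadTwo_{m,n,p,q}^{(c)}(\ket\psi)=0$ holds exactly when $\tr_{G_{m,n,p,q}^{(c)}}(\ket\psi)\neq 0$ and the Cauchy-Schwarz inequality is an equality.

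For the forward direction, assume $\SpreadTwo_{m,n,p,q}^{(c)}(\ket\psi)=0$. The value is finite, so $\tr_{G_{m,n,p,q}^{(c)}}(\ket\psi)\neq0$. Equality in Cauchy-Schwarz means $u_\psi$ and $w_\psi$ are linearly dependent. Their inner product is non-zero, so neither vector vanishes. (Alternatively, $\tr_{\RM}$ never vanishes by Prop.~6.16 of \cite{Carrozza:2026qcf}, so $u_\psi\neq0$.) Hence $w_\psi=\lambda\, u_\psi$ for some $\lambda\neq0$, and this is Eq.~\eqref{eq:move2_graphical_eq} with $\lambda_c^{(m,n,p,q)}=\lambda$. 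Contracting with $u_\psi$ identifies the constant as $\lambda=\tr_{G}(\ket\psi)/\tr_{\RM}(\ket\psi)$.

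For the converse, assume \eqref{eq:move2_graphical_eq} holds with $\lambda\neq0$. Then $w_\psi=\lambda u_\psi$. Since $u_\psi\neq0$ (its squared norm is $\tr_{\RM}(\ket\psi)>0$), we get $\langle u_\psi,w_\psi\rangle=\lambda\|u_\psi\|^2\neq0$. So $\tr_G(\ket\psi)\neq0$, and in this case Cauchy-Schwarz is saturated, giving $|\tr_G|^2=\tr_{\RM}\,\tr_H$. Taking logarithms yields $\SpreadTwo_{m,n,p,q}^{(c)}(\ket\psi)=0$.

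The main obstacle is combinatorial rather than analytic. One must check that the two open graphs $u_\psi$ and $w_\psi$ in \eqref{eq:CS_move2} are exactly the two sides of \eqref{eq:move2_graphical_eq}. One must also justify the interpretation: the open graph corresponding to $w_\psi$ contains two non-separating $1$-dipoles of color $c$ lying along a common $(c',c'')$-face, and the open graph for $u_\psi$ is obtained by contracting them. I would argue this by tracking the $(c',c'')$-faces using the counts in \eqref{eq:comb_H-G_mnpq} and \eqref{eq:comb_RM}. Passing from $H$ to $\RM$ removes four vertices and decreases $\kappa(\cdot\vert_{c,c'})$ and $\kappa(\cdot\vert_{c,c''})$ by one each, while keeping two $(c',c'')$-faces. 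So each removed dipole edge joins vertices on the same $(c',c'')$-face, which is the non-separating condition, and the genus drops accordingly. Writing this out for general $(m,n,p,q)$ is done by the same pictorial argument as in the case $n=3$ of Prop.~\ref{prop:graph_eq_move1}.
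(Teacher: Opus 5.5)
The analytic part of your proposal is correct and is the paper's proof. Saturation of \eqref{eq:CS_move2} is equivalent to linear dependence, the $\RM$-vector is non-zero because $\tr_{\RM}$ never vanishes, and finiteness of $\SpreadTwo_{m,n,p,q}^{(c)}$ forces $\tr_{G_{m,n,p,q}^{(c)}}(\ket\psi)\neq0$, hence $\lambda\neq0$. You also write out the converse, which the paper leaves implicit. The paper does not try to justify the interpretive clause about non-separating $1$-dipoles. The face-counting argument you sketch for it in your last paragraph is wrong as stated.

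By \eqref{eq:comb_H-G_mnpq} and \eqref{eq:comb_RM}, $H_{m,n,p,q}^{(c)}$ and $\RM_{2(m+n+p+q+1)}^{(c)}$ have identical face counts for every color pair, namely $m+n+p+q+1$, $m+n+p+q+1$ and $2$. Only $k$ changes, dropping by $4$, so the genus drops from $2$ to $0$.

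This is forced by the local effect of the move. Contracting a color-$c$ $1$-dipole never changes $\kappa(\cdot\vert_{c,c'})$ or $\kappa(\cdot\vert_{c,c''})$, because the $(c,c')$-face through the dipole edge merely shortens. It changes $\kappa(\cdot\vert_{c',c''})$ by $+1$ if the dipole is non-separating and by $-1$ if it is separating.

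Hence the four contractions relating $H$ to $\RM$ (two in each half of the Cauchy--Schwarz splitting) cannot all be non-separating: that would produce six $(c',c'')$-faces. So keeping two $(c',c'')$-faces does not imply that each removed edge joins vertices on a common face.

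The correct picture is the one used in the proof of Thm.~\ref{thm:second_charac}. Within one half, the two crossing $c$-edges both have their endpoints on a single $(c',c'')$-face. Contracting the first dipole splits that face; the second dipole has then become separating, and contracting it re-merges the two pieces. The pair therefore leaves all face counts unchanged and lowers the genus by one.
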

 \begin{proof}
     The Cauchy-Schwarz inequality \eqref{eq:CS_move2} is saturated if and only if the vectors 
     \begin{equation}
         \vcenter{\hbox{\includegraphics[scale=.5]{images/CS_move2_vector1.pdf}}} \quad    \mathrm{and} \quad     \vcenter{\hbox{\includegraphics[scale=.5]{images/CS_move2_vector2.pdf}}}
     \end{equation}
are linearly-dependent. The norm of the second vector is $\tr_{\RM_{m,n,p,q}}(\ket\psi)\neq 0$, hence this vector is non-zero and we can find $\lambda^{(m,n,p,q)}_c \in \mathbb{C}$ such that \eqref{eq:move2_graphical_eq} holds. Finally, in order for $\SpreadTwo_{m,n,p,q}^{(c)} \left( \ket\psi \right)$ to vanish, one must have $\tr_{G_{m,n,p,q}^{(c)}} \left( \ket\psi \right) \neq 0$, which implies $\lambda^{(m,n,p,q)}_c \neq 0$. 
 \end{proof}
Evaluating the asymptotic expectation values of the family $\{\SpreadTwo^{(c)}_{m,n,p,q}\}$ of spreads in Haar-random states is not straightforward. Indeed, we strongly expect that any invariant $\tr_{G_{m,n,p,q}^{(c)}}$ or $\tr_{H_{m,n,p,q}^{(c)}}$ admits zeroes, and thus cannot follow a power law (and this could be checked explicitly, at least for small values of $m,n,p,q$). Moreover, we do not know whether they obey the large-$N$ factorization criterion in general. What we know is that: $G_{0,0,0,0}^{(c)}$ consists in the graph $\PT_3$ on which an elementary melon graph has been inserted, from which we can deduce that $\Delta(G_{0,0,0,0}^{(c)})= 0$. Therefore, $G_{0,0,0,0}^{(c)}$ does obey the large-$N$ factorization criterion. Similarly, $H_{0,0,0,0}^{(c)}$ can be generated by a mirror flip operation on $\PT_3$ (see Def.~\ref{def:mirror_flip}), \ie $H_{0,0,0,0}^{(c)} = \PT_3\flip_{(e,\bar e)} \overline{\PT_3}$ with $c$ any edge of color $c$ in $\PT_3$. Given that $\Delta(\PT_3)=0$, we conclude (by application of Prop.~\ref{prop:exchange_flip}) that $H_{0,0,0,0}^{(c)}$ also obeys the large-$N$ factorization criterion. As a result, one can show that, at large $N$, the \emph{typical value} of $\SpreadTwo_{0,0,0,0}^{(c)}$ in the Haar-random state of local dimension $N$ is\footnote{We informally mean by this that the random variable $\SpreadTwo_{0,0,0,0}^{(c)}(\ket{\varphi_N})$ takes a value of order $1$ with high probability, and a more precise way of quantifying this can be found \eg in Prop.~6.11 of \cite{Carrozza:2026qcf}.} 
\begin{equation}\label{eq:typical_value_Q0000}
\ln \langle\tr_{H_{0,0,0,0}^{(c)}} (\ket{\varphi_N})  \rangle + \ln \langle \tr_{\RM_{2}^{(c)}} (\ket{\varphi_N})  \rangle - 2 \ln \langle \tr_{G_{0,0,0,0}^{(c)}} (\ket{\varphi_N}) \rangle   \underset{N \to \infty}{=} o(1)\,.
\end{equation}
Indeed, one can check that 
\begin{equation}
    \langle\tr_{G_{0,0,0,0}^{(c)}} (\ket{\varphi_N})  \rangle \underset{N\to +\infty}{\sim} 3 N^{-4}\,,\quad  \langle \tr_{\RM_{2}^{(c)}} (\ket{\varphi_N})  \rangle \underset{N\to +\infty}{\sim} N^{-1}\,, \quad\mathrm{and} \quad\langle\tr_{H_{0,0,0,0}^{(c)}} (\ket{\varphi_N})  \rangle \underset{N\to +\infty}{\sim} 9 N^{-7}\,.
\end{equation}

For arbitrary values of $m,n,p,q$, we were able to establish that:
\begin{equation}
  \forall m, n , p , q \in \mathbb{N}^*\,, \qquad    \begin{cases}
     \Delta( G_{m,n,p,q}^{(c)} ) &\leq 6
     \\
     \Delta( H_{m,n,p,q}^{(c)} ) &\leq 7
  \end{cases}
\end{equation} 
Indeed, one can check that $\Delta_0( \widehat{G_{m,n,p,q}^{(c)}} )=6$ and $\Delta_0( \widehat{H_{m,n,p,q}^{(c)}} )=7$ for the two $(D+1)$-colored graphs $\widehat{G_{m,n,p,q}^{(c)}}$ and $\widehat{H_{m,n,p,q}^{(c)}}$ illustrated in Fig.~\ref{fig:Gmnpq_hat}. This observation is not by itself sufficient to determine the large-$N$ behaviour of $\SpreadTwo_{m,n,p,q}^{(c)}(\ket{\varphi_N})$ for arbitrary values of $m,n,p,q$, we therefore leave this question open for the future.
 
   \begin{figure}
        \centering
        \includegraphics[width = .6\textwidth]{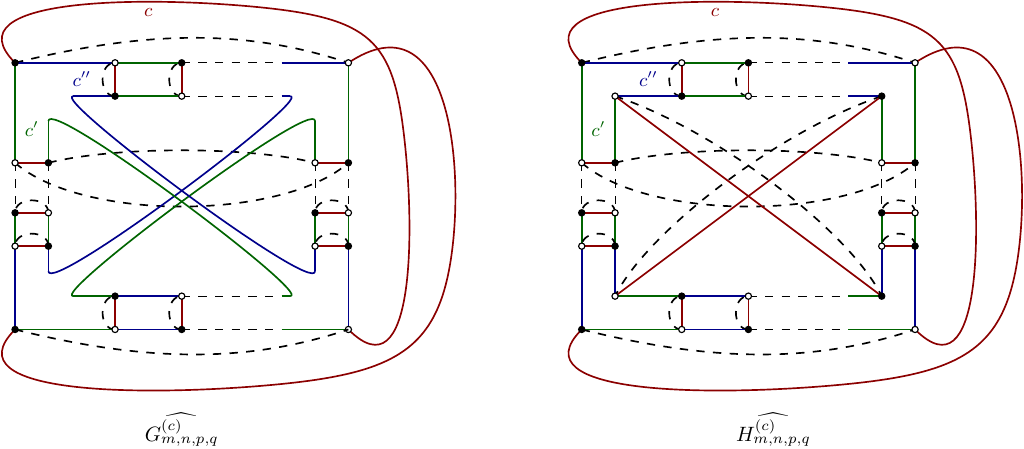}
        \caption{Configurations $\widehat{G_{m,n,p,q}^{(c)}}$ and $\widehat{H_{m,n,p,q}^{(c)}}$ allowing us to prove that $\Delta( G_{m,n,p,q}^{(c)} )\leq 6$ and
     $\Delta( H_{m,n,p,q}^{(c)} ) \leq 7$ for $m,n,p,q >1$.}
        \label{fig:Gmnpq_hat}
    \end{figure} 
 
\section{Characterization of tripartite hypergraph-tensor states as minimizers of entanglement spreads}\label{sec:HTstates}

\subsection{Preliminaries: bipartite setting}\label{sec:bipartite_charac}

It is well-known that the bipartite entanglement spread of a bipartite pure state vanishes if and only if that state has flat entanglement spectrum.
\begin{prop}\label{prop:flat_spectrum_2d}
    Let $m, n \in \mathbb{N}$ such that $2 \leq m < n$, and $\ket\psi$ a finite-dimensional $2$-partite pure state. We have: 
    \begin{align}
        \mathsf{D}_{m , n} (\ket\psi) = 0 &\quad \Leftrightarrow \quad \ket\psi \; \mathrm{has} \; \mathrm{flat} \; \mathrm{entanglement}\; \mathrm{spectrum} \\  
        &\quad \Leftrightarrow \quad \exists \alpha \in W_2\,, \quad \ket\psi \simLU \ket{\psi_\alpha} 
    \end{align}
\end{prop}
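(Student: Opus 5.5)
We would prove the two equivalences separately, working throughout with the Schmidt decomposition. Write $\ket\psi = \sum_{i=1}^N \sqrt{p_i}\, \ket{u_i}\otimes\ket{v_i}$ with $p_1,\dots,p_N>0$ and $\sum_i p_i = 1$. Then $\tr_{C_k}(\ket\psi) = \sum_i p_i^k$ for every $k$, and $\mathsf{D}_{m,n}(\ket\psi) = 0$ is equivalent to $R_{G,H}(\ket\psi)=0$ with $G,H$ as in Ex.~\ref{ex:entanglement_spread}, since the prefactor $(1-m)(1-n)$ in Eq.~\eqref{eq:ent_spread} is non-zero. This in turn is equivalent to equality in the H\"older inequality used in that example:
\begin{equation}
\sum_i p_i^{m} \leq \Big(\sum_i p_i^n\Big)^{\frac{m-1}{n-1}}.
\end{equation}

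\textbf{First equivalence.} The step needing most care is the equality case of H\"older. We apply H\"older with conjugate exponents $p=\frac{n-1}{m-1}$ and $q=\frac{n-1}{n-m}$, both strictly greater than $1$ because $2\le m<n$. Equality holds if and only if the vectors $(p_i^{(m-1/q)p})_i = (p_i^n)_i$ and $(p_i)_i$ are proportional. Since every $p_i>0$, this means $p_i^{n-1}$ is independent of $i$, i.e.\ all $p_i$ are equal, so the spectrum is flat. Conversely, a flat spectrum $p_i=1/N$ gives $\sum_i p_i^m = N^{1-m} = (N^{1-n})^{\frac{m-1}{n-1}}$, so equality holds and $\mathsf{D}_{m,n}=0$. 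As an alternative, one could avoid H\"older entirely by using the strict monotonicity of R\'enyi entropies in $k$ for non-uniform distributions.

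\textbf{Second equivalence.} For $D=2$ the only subset $B$ with $|B|\ge 2$ is $\{1,2\}$, so $W_2\simeq\mathbb{N}^*$. The state $\ket{\psi_\alpha}$ is then $\ket{\GHZ}_{\{1,2\},\alpha}$, the maximally entangled state with Schmidt coefficients all equal to $\alpha^{-1/2}$, whose spectrum is flat.
\begin{itemize}
\item If $\ket\psi\simLU\ket{\psi_\alpha}$, then $\ket\psi$ has the same spectrum, since local unitaries preserve Schmidt coefficients; hence its spectrum is flat.
\item If $\ket\psi$ has flat spectrum, then $p_i = 1/N$ for all $i$. Choose local unitaries sending $\ket{u_i}\mapsto\ket i$ and $\ket{v_i}\mapsto\ket i$ (after embedding $\H_c$ isometrically into $\H_{\mathrm{loc}}$, as allowed in the definition of HT states). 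This gives $\ket\psi\simLU\ket{\psi_\alpha}$ with $\alpha(\{1,2\})=N$.
\end{itemize}
An alternative route is to note that a flat spectrum gives $\tr_G(\ket\psi)=N^{\kappa(G)-k(G)}$ for every connected $G$, and then invoke Lem.~\ref{lem:alpha_integer_valued}. The direct Schmidt argument is simpler, so we will use it.
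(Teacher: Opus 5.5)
Your proof is correct, but it is not the route the paper takes. The paper treats the general statement as well known and rederives only the subcase $(m,n)=(2,3)$, deliberately avoiding the Schmidt decomposition:
\begin{itemize}
\item equality in the Cauchy--Schwarz bound $\tr_{C_2}(\ket\psi)^2\leq\tr_{C_3}(\ket\psi)$ is turned into the graphical eigen-equation \eqref{eq:graphical_eq_2d};
\item by recursion this gives $\tr_{C_k}(\ket\psi)=\rho^{k-1}$ for all $k$;
\item the resolvent identity $\tr\left(M(z-M)^{-1}\right)=(z-\rho)^{-1}$ for the reduced density matrix $M$ then forces $\spec(M)\setminus\{0\}=\{\rho\}$, with integer multiplicity $N=1/\rho$;
\item finally, $\ket\psi\simLU\ket{\psi_\alpha}$ with $\alpha(\{1,2\})=N$ follows because the cycle invariants $\{\tr_{C_k}\}_k$ separate bipartite $\LU$-orbits.
\end{itemize}

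Your argument uses the Schmidt decomposition, the equality case of H\"older with the exponents of Ex.~\ref{ex:entanglement_spread}, and explicit local unitaries. It is more elementary, covers all $2\leq m<n$ at once, and gives the $\LU$-equivalence constructively.

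What the paper's route buys is that the spectral information comes purely from identities among trace-invariants. The same template therefore carries over to $D=3$ in Thms.~\ref{thm:first_charac} and~\ref{thm:second_charac}, where no Schmidt decomposition is available. That template is: saturated Cauchy--Schwarz inequalities give local graphical moves, these give closed-form evaluations of all invariants, and integrality and $\LU$-equivalence then follow from Lem.~\ref{lem:alpha_integer_valued}. The alternative you mention at the end, via Lem.~\ref{lem:alpha_integer_valued}, is the part of your proposal closest in spirit to that approach.
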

We are interested in exploring generalizations of this simple observation to $D = 3$. As a warm-up, let us rederive  the following subcase of Prop.~\ref{prop:flat_spectrum_2d}, by means of largely combinatorial methods that will generalize well to the tripartite setting: for any finite-dimensional bipartite pure state $\ket\psi$
    \begin{align}
        \mathsf{D}_{2 , 3} (\ket\psi) = 0
        &\quad \Leftrightarrow \quad \exists \alpha \in W_2\,, \quad \ket\psi \simLU \ket{\psi_\alpha}\,.
        \end{align}
Let $\ket\psi = \vcenter{\hbox{\includegraphics[scale=.5]{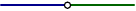}}}$ be a finite-dimensional $2$-partite pure state. We have the following Cauchy-Schwarz inequality:
\begin{equation}
    \abs{\tr_{C_2}(\ket\psi)}^2= \abs{ \vcenter{\hbox{\includegraphics[scale=.5]{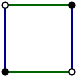}}} }^2 \leq \vcenter{\hbox{\includegraphics[scale=.5]{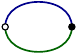}}}  \times \vcenter{\hbox{\includegraphics[scale=.5]{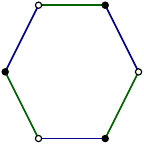}}}  = \vcenter{\hbox{\includegraphics[scale=.5]{images/C6.pdf}}} = \tr_{C_3}(\ket\psi)\,,
\end{equation}
with equality if and only if the vectors 
\begin{equation}
    \vcenter{\hbox{\includegraphics[scale=.5]{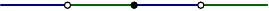}}} \quad \mathrm{and} \quad \vcenter{\hbox{\includegraphics[scale=.5]{images/psi_2d.pdf}}} 
\end{equation}
are linearly dependent. The second vector is $\ket\psi$ itself, hence it is a non-zero vector. We can thus find $\rho \in \mathbb{C}$ such that
\begin{equation}\label{eq:graphical_eq_2d}
    \vcenter{\hbox{\includegraphics[scale=.5]{images/vector1_2d.pdf}}} = \rho\; \vcenter{\hbox{\includegraphics[scale=.5]{images/psi_2d.pdf}}} \,. 
\end{equation}
By taking the inner product of the previous equation with $\ket\psi$, it is clear that we must have $\rho = \tr_{C_3}(\ket\psi)$, so that
\begin{equation}
    \vcenter{\hbox{\includegraphics[scale=.5]{images/vector1_2d.pdf}}} = \vcenter{\hbox{\includegraphics[scale=.5]{images/C6.pdf}}}\; \vcenter{\hbox{\includegraphics[scale=.5]{images/psi_2d.pdf}}} \,. 
\end{equation}
This type of graphical equation can be interpreted as a special case of Eq.~\eqref{eq:CS_minimizers}, and can also be obtained from Eq.~\eqref{eq:equality_move1_3} by suppressing all the edges of color $c$ (and taking $\rho_c^{(3)}=1$). Making repeated use of this equation, we infer that: for any integer $k \geq 2$,
\begin{equation}
    \tr_{C_k} (\ket\psi) = \rho \tr_{C_{k-1}} (\ket\psi) = \cdots = \rho^{k-1} \tr_{C_1} (\ket\psi) = \rho^{k-1}\,.  
\end{equation}
In addition, we notice that $\tr_{C_k} (\ket\psi)= \tr\left( M^k \right)$ with $M= \vcenter{\hbox{\includegraphics[scale=.5]{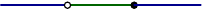}}}$ the reduced density matrix of $\ket\psi$ on one of the subsystems. $M$ being positive, Hermitian and of unit trace, it is in particular diagonalizable. Denoting by $n_\lambda \in \mathbb{N}^*$ the multiplicity of $\lambda \in \spec(M)$, we observe that: for any $z \in \mathbb{C}$ such that $\abs{z} > \underset{\lambda \in \spec(M)}{\max} \abs{\lambda}$, the resolvent $\left( z - M\right)^{-1}$ can be computed by a convergent series, leading to
\begin{equation}
    \sum_{\lambda \in \spec(M) \setminus\{0\}} \frac{\lambda n_\lambda}{z- \lambda} = \tr\left( M \left( z - M\right)^{-1} \right)= \sum_{k=1}^{+\infty} \frac{\Tr\left( M^k\right)}{z^k} = \sum_{k=1}^{+\infty} \frac{\rho^{k-1}}{z^k} =  \frac{1}{z- \rho}\,.
\end{equation}
It follows that $\spec(M) \setminus \{0\} = \{ \rho \}$ and $\rho n_\rho = 1$. Defining $N\eqdef n_\rho \in \mathbb{N}^*$, it results that:
\begin{equation}
    \rho = \frac{1}{N}\,,
\end{equation}
and therefore:
\begin{equation}
    \forall k \in \mathbb{N}^*\,, \quad \tr_{C_k} ( \ket\psi ) = N^{1-k}\,.
\end{equation}
Now, let $\ket{\psi_\alpha}$ denote the HT state of weight function $\alpha \in W_2$ defined by: $ \alpha(\{1,2\})\eqdef N$. It is easy to check that:
\begin{equation}
    \forall k \in \mathbb{N}^*\,, \quad \tr_{C_k} ( \ket{\psi_\alpha} ) = N^{1-k}\,.
\end{equation}
Since the family of $2$-colored graphs $\{C_k\}_{k \in \mathbb{N}^*}$ separates the $\LU$-orbits of bipartite pure states, we must therefore conclude that $\ket\psi \simLU \ket{\psi_\alpha}$. 

\subsection{Tripartite setting}\label{sec:tripartite_HT}
In this subsection, $D=3$.

\subsubsection{First characterization}

\begin{theo}\label{thm:first_charac}
    Let $\ket\psi$ be a finite-dimensional $3$-partite pure state. There exists $\alpha \in W_3$ such that $\ket\psi \simLU \ket{\psi_\alpha}$ if and only if the following conditions are obeyed:
    \begin{enumerate}
        \item for any $n \in \mathbb{N}^*$ and $c\in \{1,2,3\}$, $\SpreadOne_n^{(c)}(\ket\psi ) = 0$;
        \item there exists $N\in \mathbb{N}^*$ such that, for any maximally single-trace graph $G \in \cG_3$, $\tr_G (\ket\psi ) = N^{1-k(G)}$.
    \end{enumerate}
\end{theo}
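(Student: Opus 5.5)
The forward direction is immediate from earlier results. If $\ket\psi \simLU \ket{\psi_\alpha}$, then condition 1 is exactly Prop.~\ref{prop:SpreadOne_alpha_states}. For condition 2, recall that HT states satisfy $\tr_G(\ket{\psi_\alpha}) = \prod_B \alpha(B)^{\kappa(G\vert_B)-k(G)}$. For a maximally single-trace $G$ we have $\kappa(G\vert_B)=1$ for every $B$, so
\[
\tr_G(\ket{\psi_\alpha}) = N^{1-k(G)}, \qquad N \eqdef \prod_{B} \alpha(B) \in \mathbb{N}^*.
\]

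For the converse, I aim to invoke Lem.~\ref{lem:alpha_integer_valued}. It suffices to produce a function $\alpha:\{B : \abs{B}\geq 2\}\to\mathbb{R}_+^*$ such that $\tr_G(\ket\psi)=\prod_B \alpha(B)^{\kappa(G\vert_B)-k(G)}$ for every connected $3$-colored $G$.

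The strategy is a reduction on connected graphs using the graphical equations of Prop.~\ref{prop:graph_eq_move1}. Since all $\SpreadOne_n^{(c)}$ vanish, every separating $1$-dipole of color $c$ can be contracted at the cost of a multiplicative factor. This factor depends on the size of the face structure around the dipole, encoded by $n$. Each such contraction lowers $k$ by $1$. It preserves the number of $(c,c')$ and $(c,c'')$ faces, up to the pattern recorded in \eqref{eq:comb_RM}--\eqref{eq:comb_Mel}, and lowers $F_{c'c''}$ by $1$. Iterating, every connected $G$ reduces to a graph with no separating $1$-dipoles. For $3$-colored graphs this means all bicolored faces are single-trace: a connected $3$-colored graph with $F_{c'c''}\geq 2$ has two distinct $(c',c'')$ faces joined by some color-$c$ edge, which is a separating dipole. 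So the endpoint is a maximally single-trace graph, and condition 2 evaluates it as $N^{1-k}$.

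The key consistency step is to show the factors $\rho_c^{(n)}$ are independent of $n$ and of the form $\alpha(\{c,c'\})\alpha(\{c,c''\})\alpha(\{1,2,3\})$ for some fixed positive numbers. I will identify them by evaluating the graphical equation on small explicit graphs: cyclic graphs $C_k(\{c\})$, melons $\Mel_{2n-1}^{(c)}$, and $\RM_{2n}^{(c)}$. Taking inner products of Eq.~\eqref{eq:graph_eq_move1} with $\ket\psi$ gives $\rho_c^{(n)}=\tr_{\RM_{2n}^{(c)}}(\ket\psi)/\tr_{\Mel_{2n-1}^{(c)}}(\ket\psi)$, which is real positive. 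Reducing both sides fully to MST graphs expresses $\tr_{\RM_{2n}^{(c)}}$ and $\tr_{\Mel_{2n-1}^{(c)}}$ as products of $\rho$'s times powers of $N$.

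From the cycles $C_k(\{c\})$ I get $\tr_{C_k(\{c\})}=\beta_c^{\,k-1}$, exactly as in the bipartite warm-up. Here $\beta_c$ is the inverse dimension of the flat spectrum of $\rho_c$, and the resolvent argument of Sec.~\ref{sec:bipartite_charac} makes $\beta_c = 1/M_c$ with $M_c\in\mathbb{N}^*$ (since $\SpreadOne_1^{(c)}=2\mathsf{D}_{2,3}$). I then set $\rho_c$-factors equal to $N/\ldots$-type ratios and solve the linear system (in logarithms) for $\ln\alpha(B)$, using the four unknowns $\alpha(\{1,2,3\}),\alpha(\{1,2\}),\alpha(\{1,3\}),\alpha(\{2,3\})$. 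The constraints are $\prod_B\alpha(B)=N$ and $\prod_{B\ni c}\alpha(B)=M_c$. These four equations are independent and give positive solutions.

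I expect the main obstacle to be the well-definedness of the reduction. The final value must not depend on the order of dipole contractions, and the ratios $\rho_c^{(n)}$ for different $n$ must all agree with the single per-color factor. Concretely, one has to show that the factor picked up by a separating $1$-dipole contraction depends only on $c$, and not on $n$ or on the shape of the adjacent faces. My first attempt is to compare $\rho_c^{(n)}$ and $\rho_c^{(n+1)}$ by applying both equations to a common graph, for instance $\RM_{2n+2}^{(c)}$ reduced in two ways, and to use condition 2 at the end of each route to force $\rho_c^{(n)}=\rho_c^{(1)}=\beta_c$-related constant. Once this is in place, the exponent bookkeeping via $\kappa(G\vert_B)$ and $k(G)$ matches the HT formula for every connected $G$, and Lem.~\ref{lem:alpha_integer_valued} concludes.
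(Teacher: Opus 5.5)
Your architecture is the paper's. The forward direction is handled identically. For the converse you contract separating $1$-dipoles down to an MST graph, use condition~2 only at the endpoint, obtain $\rho_c=1/M_c$ from the $n=1$ (bipartite) case, and conclude with Lem.~\ref{lem:alpha_integer_valued}. Your log-linear system reproduces exactly the $\alpha$ of Eq.~\eqref{eq:alpha_1st_thm}, with $M_c=N_c$.

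The genuine gap is the step you flag yourself: you never prove that $\rho_c^{(n)}$ is independent of $n$, and the route you sketch cannot give it. In $\RM_{2n+2}^{(c)}$ every color-$c$ edge is a rung between the two $(c',c'')$ faces, which both have length $2n+2$. Contracting any color-$c$ edge therefore invokes only $\rho_c^{(n+1)}$, and there is no second route through $\rho_c^{(n)}$. Pushing routes down to MST graphs and invoking condition~2 only produces identities among products of the unknown factors $\rho_c^{(\cdot)},\rho_{c'}^{(\cdot)},\rho_{c''}^{(\cdot)}$ attached to various face lengths. Nothing in your plan isolates the ratio $\rho_c^{(n)}/\rho_c^{(n+1)}$. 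The formula $\rho_c^{(n)}=\tr_{\RM_{2n}^{(c)}}(\ket\psi)/\tr_{\Mel_{2n-1}^{(c)}}(\ket\psi)$ does not help either, since evaluating its right-hand side by reduction already requires those factors. Without $n$-independence, the reduction expresses $\tr_G$ as a product of factors depending on the face lengths encountered. That is not of the HT form required by Lem.~\ref{lem:alpha_integer_valued}.

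The missing idea is local and uses condition~1 alone. Take the open graph in which a color-$c$ edge $ab$ joins a $(c',c'')$ face of length $2m$ through $a$ to a $(c',c'')$ face of length $2n$ through $b$, with all other color-$c$ half-edges left open. Read Eq.~\eqref{eq:graph_eq_move1} with parameter $m$ on the face through $a$, and with parameter $n$ on the face through $b$. Both readings contract the same dipole, giving $\rho_c^{(m)}w=\rho_c^{(n)}w$. Here $w$ is a single $(c',c'')$ face of length $2(m+n-1)$ with open color-$c$ legs, and its norm squared is $\tr_{\RM_{2(m+n-1)}^{(c)}}(\ket\psi)\neq 0$. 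Hence $\rho_c^{(m)}=\rho_c^{(n)}$.

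This also disposes of your order-of-contraction worry. Every color-$c$ contraction then contributes $1/M_c$, and the MST endpoint has $k(H)$ determined by the face counts of $G$, so condition~2 fixes its value unambiguously.

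Two minor slips:
\begin{itemize}
\item The per-contraction factor is the reciprocal of $\alpha(\{1,2,3\})\alpha(\{c,c'\})\alpha(\{c,c''\})$, not that product itself.
\item $\rho_c^{(n)}$ is obtained by pairing Eq.~\eqref{eq:graph_eq_move1} with the contracted open graph, not with $\ket\psi$; the two agree only for $n=1$.
\end{itemize}
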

\begin{proof}
    Suppose that $\ket\psi \simLU \ket{\psi_\alpha}$ for some weight function $\alpha$. By Prop.~\ref{prop:SpreadOne_alpha_states}, condition 1 is verified. Furthermore, for any $G$ a maximally single-trace $3$-colored graph, we have $\kappa(G)=\kappa(G\vert_{1,2})= \kappa(G\vert_{1,3})=\kappa(G\vert_{2,3})=1$, leading to $\tr_G (\ket\psi)= N^{1-k(G)}$ with $N\eqdef \alpha(\{1,2,3\})\alpha(\{1,2\})\alpha(\{1,3\}) \alpha(\{2,3\})$. Hence, condition 2 is also verified.
    
    Reciprocally, let us assume that $\ket\psi = \vcenter{\hbox{\includegraphics[scale=.5]{images/psi.pdf}}}$ obeys condition 1. From Prop.~\ref{prop:graph_eq_move1}, the graphical equation \eqref{eq:graph_eq_move1} is obeyed for any $c$ and any $n$. As a result, for any $m,n \in \mathbb{N}^*$ and $c \in \{1,2,3\}$, we have:
    \begin{equation}
        \rho_c^{(m)}  \vcenter{\hbox{\includegraphics[scale=.5]{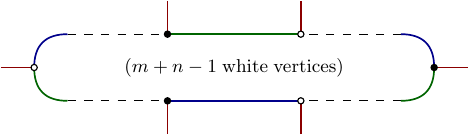}}} = \vcenter{\hbox{\includegraphics[scale=.5]{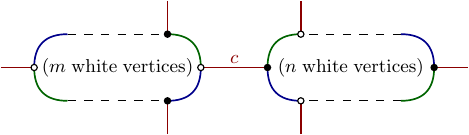}}} = \rho_c^{(n)}  \vcenter{\hbox{\includegraphics[scale=.5]{images/move1-mn-bis.pdf}}} 
    \end{equation}
Given that the vector
\begin{equation}
    \vcenter{\hbox{\includegraphics[scale=.5]{images/move1-mn-bis.pdf}}}
\end{equation}
is non-zero (because its norm squared is $\tr_{\RM_{2(m+n-1)}}(\ket\psi) \neq 0$), we conclude that $\rho_c^{(m)}=\rho_c^{(n)}$. Hence, for any $c \in \{1,2,3\}$, we have $\rho_c \in \mathbb{C}\setminus \{0\}$ such that:
 \begin{equation}\label{eq:graph_eq_move1_update1}
 \vcenter{\hbox{\includegraphics[scale=.5]{images/move1-1.pdf}}}= \rho_c \vcenter{\hbox{\includegraphics[scale=.5]{images/move1-2.pdf}}}
 \end{equation}
Specializing to $n=1$ gives:
\begin{equation}
 \vcenter{\hbox{\includegraphics[scale=.5]{images/move1_1a.pdf}}} = \rho_c \vcenter{\hbox{\includegraphics[scale=.5]{images/move1_1b.pdf}}} \,.  
\end{equation}
Viewing $\ket\psi$ as a bipartite state relative to the partition $\{1,2,3\}= \{c\}\sqcup \{c', c''\}$, this equation is an instance of \eqref{eq:graphical_eq_2d}. Invoking the same argument as in Sec.~\ref{sec:bipartite_charac}, we can therefore find $N_c \in \mathbb{N}^*$ such that $\rho_c = \frac{1}{N_c}$. For any connected graph $G \in \cG_3^{\conn}$, we can use \eqref{eq:graph_eq_move1_update1} recursively to compute $\tr_G (\ket\psi)$. Indeed, for any $c\in \{1,2,3\}$, if $\kappa(G\vert_{c',c''})\geq 2$ we can apply the graphical move from \eqref{eq:graph_eq_move1_update1} to infer that
\begin{equation}
    \tr_G \left( \ket\psi \right) =\frac{1}{N_c} \tr_{G'} \left( \ket\psi \right)\,,
\end{equation}
where $G'\in \cG_3^{\conn}$ is a $3$-colored graph obeying
\begin{equation}
    \kappa(G'\vert_{c',c''}) = \kappa(G\vert_{c',c''})-1\,,\quad \kappa(G'\vert_{c,c'}) = \kappa(G\vert_{c,c'})\,, \quad \kappa(G'\vert_{c,c''}) = \kappa(G\vert_{c,c''})\,, \quad \mathrm{and}\quad k(G')=k(G)-1\,.
\end{equation}
Iterating, we can find a maximally single-trace graph $H\in \cG_3^{\conn}$ such that
\begin{equation}
    \tr_G \left( \ket\psi \right) =\left( \prod_{c\in \{1,2,3\}} \left(\frac{1}{N_c}\right)^{\kappa(G\vert_{c',c''})-1}\right) \tr_{H} \left( \ket\psi \right) 
\end{equation}
and
\begin{equation}
    k(H)=k(G)-\sum_{c\in \{1,2,3\}} \left( \kappa(G\vert_{c',c''})-1\right) = 3-2k(G)- \sum_{c\in \{1,2,3\}} \left( \kappa(G\vert_{c',c''})-k(G)\right)\,.  
\end{equation}
At this point, we have only made use of condition 1. Assuming that condition 2 is also obeyed by $\ket\psi$, we can find some $N \in \mathbb{N}^*$ (that is independent from $G$ or $H$) such that
\begin{equation}
    \tr_H(\ket\psi)= N^{1-k(H)}= N^{-2+2k(G)+ \sum_{c\in \{1,2,3\}} \left( \kappa(G\vert_{c',c''})-k(G)\right)}
\end{equation}
All in all, we conclude that
\begin{equation}
    \tr_G(\ket\psi) = \prod_{c=1}^3 \left( \frac{N}{N_c}\right)^{\kappa(G\vert_{c',c''})-k(G)} \times \left(N^{-2}\prod_{c=1}^3 N_c \right)^{1-k(G)} = \prod_{B \subset\{1,2,3\}\,, \,\abs{B}\geq 2} \alpha(B)^{\kappa(G\vert_B) - k(G)}\,,
\end{equation}
where we have defined
\begin{equation}\label{eq:alpha_1st_thm}
\forall c \in\{1,2,3\}\,, \quad \alpha(\{1,2,3\}\setminus \{c\})\eqdef \frac{N}{N_c}\,, \qquad \mathrm{and} \qquad \alpha(\{1,2,3\})\eqdef N^{-2}\prod_{c=1}^3 N_c\,.
\end{equation}
By Lem.~\ref{lem:alpha_integer_valued}, the function $\alpha$ is integer-valued and $\ket\psi \simLU \ket{\psi_\alpha}$, which concludes the proof.
\end{proof}

\subsubsection{Second characterization}

\begin{theo}\label{thm:second_charac}
    Let $\ket\psi$ be a finite-dimensional $3$-partite pure state. There exists $\alpha \in W_3$ such that $\ket\psi \simLU \ket{\psi_\alpha}$ if and only if the following conditions are obeyed:
    \begin{enumerate}
        \item for any $n \in \mathbb{N}^*$ and $c\in \{1,2,3\}$, $\SpreadOne_n^{(c)}(\ket\psi ) = 0$;
        \item for any $m,n,p,q \in \mathbb{N}$ and $c\in \{1,2,3\}$, $\SpreadTwo_{m,n,p,q}^{(c)}(\ket\psi ) = 0$.
    \end{enumerate}
\end{theo}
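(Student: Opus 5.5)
The plan is to reduce Thm.~\ref{thm:second_charac} to Thm.~\ref{thm:first_charac}. Condition 1 is shared by both theorems, so it suffices to show two things. First, HT states satisfy condition 2. Second, conditions 1 and 2 of the present theorem together imply condition 2 of Thm.~\ref{thm:first_charac}: there is $N\in\mathbb{N}^*$ with $\tr_G(\ket\psi)=N^{1-k(G)}$ for every maximally single-trace $G\in\cG_3$.

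For the direct implication, Prop.~\ref{prop:SpreadTwo_alpha_states} already gives $\SpreadTwo_{m,n,p,q}^{(c)}(\ket{\psi_\alpha})=0$ whenever $m+n+p+q\neq 0$. The remaining case $m=n=p=q=0$ follows from the same computation using \eqref{eq:comb_H-G_mnpq} and \eqref{eq:comb_RM} with $n=1$. Equivalently, one can view $H_{0,0,0,0}^{(c)}=\PT_3\flip_{(e,\bar e)}\overline{\PT_3}$ and $G_{0,0,0,0}^{(c)}$ as $\PT_3$ with a melon inserted, and check the exponents directly.

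For the converse, assume conditions 1 and 2. From the proof of Thm.~\ref{thm:first_charac}, condition 1 already yields integers $N_c$ and the separating $1$-dipole move \eqref{eq:graph_eq_move1_update1}. Using this move, any connected $G$ reduces to a maximally single-trace graph $H$ with an explicit prefactor. Condition 2 gives, via Prop.~\ref{prop:move2}, constants $\lambda_c^{(m,n,p,q)}\neq 0$ and a move that removes a pair of non-separating $1$-dipoles of color $c$ lying along a common $(c',c'')$-face, which lowers the genus by one.

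The steps, in order, are as follows.
\begin{enumerate}
\item Show that $\lambda_c^{(m,n,p,q)}$ is independent of $(m,n,p,q)$. As for the $\rho_c^{(n)}$ in the proof of Thm.~\ref{thm:first_charac}, compare two applications of the move on a common configuration and use that the relevant vector has nonzero norm, a realignment-moment invariant that never vanishes by Prop.~6.16 of \cite{Carrozza:2026qcf}.
\item Show that $\lambda_c$ is also independent of $c$. Evaluate both sides on $\PT_3$-type graphs, which are color-symmetric, so the different colors' moves can be compared on the same graph. Call the common value $\lambda$.
\item Prove a combinatorial lemma: any maximally single-trace $3$-colored graph of positive genus contains, for some color $c$, a pair of non-separating $1$-dipoles of color $c$ in the configuration of the left-hand side of \eqref{eq:move2_graphical_eq}, possibly after separating $1$-dipole insertions or contractions that are allowed by condition 1. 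Iterating the moves then brings every graph down to a melonic (genus-$0$) graph. On melonic graphs, $\tr$ is determined by the $N_c$ through the bipartite argument.
\item Since each move is a rational relation, obtain $\tr_G(\ket\psi)=\prod_B \alpha(B)^{\kappa(G|_B)-k(G)}$ with $\alpha(\{c',c''\})=1/(N_c\cdot\text{const})$ and $\alpha(\{1,2,3\})$ expressed in terms of $\lambda$ and the $N_c$. Then conclude with Lem.~\ref{lem:alpha_integer_valued}, which also forces integrality and, in particular, recovers condition 2 of Thm.~\ref{thm:first_charac} with $N=\prod_B\alpha(B)$.
\end{enumerate}

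To pin down $\lambda$ in terms of the $N_c$, the plan is to apply the move to $G_{0,0,0,0}^{(c)}$ itself. This gives $\tr_{G_{0,0,0,0}^{(c)}}=\lambda\,\tr_{\text{reduced}}$, where the reduced graph is planar. Combined with Cauchy–Schwarz saturation, $\lambda=\tr_{H}/\tr_{G}$ expressed through the $N_c$, which should give $\alpha(\{1,2,3\})$ consistently.

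The main obstacle is step 3. One must prove that the family of genus-decreasing moves provided by $\{\SpreadTwo^{(c)}_{m,n,p,q}\}$, together with the separating moves, suffices to reduce every connected $3$-colored graph, equivalently every triangulated orientable surface, to genus zero. The first thing to try is the standard crystallization normal form: after contracting all separating $1$-dipoles one reaches a maximally single-trace graph, i.e.\ a single $(c',c'')$-face. Along that face, the color-$c$ edges form a chord diagram, and positive genus forces two crossing chords. Pairs of crossing chords with the four arcs between them of lengths $2m,2n,2p,q$ are exactly the left-hand side of $G_{m,n,p,q}^{(c)}$, which is presumably why four free parameters were introduced.
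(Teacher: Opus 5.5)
Your architecture is the same as the paper's. You extract the $N_c$ from condition 1, show $\lambda_c^{(m,n,p,q)}$ is constant in $(m,n,p,q)$ and then in $c$ via $\PT_3$, and use crossing color-$c$ chords on the single $(c',c'')$-face of a maximally single-trace graph. This yields $\tr_G(\ket\psi)=\lambda^{g(G)}\prod_c N_c^{1-\kappa(G\vert_{\{c',c''\}})}$. Your explicit treatment of $m=n=p=q=0$ in the direct implication is a useful extra bit of care.

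\textbf{The gap: positivity of $\lambda$.} There is a genuine gap at your step 4. Rewriting $\lambda^{g(G)}\prod_c N_c^{1-\kappa(G\vert_{\{c',c''\}})}$ in the form $\prod_B\alpha(B)^{\kappa(G\vert_B)-k(G)}$ forces $\alpha(\{c',c''\})=(\sqrt{\lambda}\,N_c)^{-1}$ and $\alpha(\{1,2,3\})=\lambda N_1N_2N_3$. The genus exponent brings in a half-power of $\lambda$. Lem.~\ref{lem:alpha_integer_valued} only applies to $\mathbb{R}_+^*$-valued $\alpha$, so you must prove $\lambda>0$, and nothing in your plan does this.

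Prop.~\ref{prop:move2} only gives $\lambda\in\mathbb{C}\setminus\{0\}$. Reality follows from $\lambda=\tr_{\PT_3}(\ket\psi)=\tr(M^3)$ with $M=\rho_{12}^{T_2}$ Hermitian, but the sign does not. The paper settles this with the partial transpose. One has $\tr(M^k)=\lambda^{(k-1)/2}$ for odd $k$ and $\lambda^{(k-2)/2}N_3^{-1}$ for even $k$. Hence the resolvent of $M$ has poles at $\pm\sqrt{\lambda}$, which must be real eigenvalues of a Hermitian matrix. A quicker route is that your genus formula gives $\tr(M^4)=\lambda/N_3$, while $\tr(M^4)=\tr((M^2)^2)>0$.

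The same step is needed even in your framing through condition 2 of Thm.~\ref{thm:first_charac}. That condition requires $\lambda=\tr_{\PT_3}(\ket\psi)=N^{-2}$ with $N\in\mathbb{N}^*$.

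\textbf{Why your substitute does not work.} You propose to pin down $\lambda$ in terms of the $N_c$ by applying the move to $G^{(c)}_{0,0,0,0}$ and using Cauchy--Schwarz saturation. This cannot succeed, for two reasons.

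First, the saturation relations are just the move restated, so the computation is circular. They are $\tr_{G^{(c)}_{0,0,0,0}}=\lambda\,\tr_{\RM^{(c)}_2}$ (up to conjugation) and $\tr_{H^{(c)}_{0,0,0,0}}=\abs{\lambda}^2\,\tr_{\RM^{(c)}_2}$. They determine neither $\lambda$ nor its sign.

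Second, $\lambda$ is not a function of $(N_1,N_2,N_3)$ at all. Take the HT state with $\alpha(\{1,2,3\})=4$ and all other weights $1$, and the HT state with $\alpha(\{1,2\})=\alpha(\{1,3\})=\alpha(\{2,3\})=2$ and $\alpha(\{1,2,3\})=1$. Both have $N_1=N_2=N_3=4$, yet $\lambda=\tr_{\PT_3}$ equals $1/16$ and $1/64$ respectively. So $\lambda$ is the fourth independent parameter (ultimately $\lambda=N^{-2}$, cf.\ Rem.~\ref{rem:resolvent_PT}). The value $\alpha(\{1,2,3\})=\lambda N_1N_2N_3$ needs no further determination: only the positivity of $\lambda$, after which Lem.~\ref{lem:alpha_integer_valued} supplies integrality.
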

\begin{proof}
    Suppose that $\ket\psi \simLU \ket{\psi_\alpha}$ for some weight function $\alpha$. By Prop.~\ref{prop:SpreadOne_alpha_states} and Prop.~\ref{prop:SpreadTwo_alpha_states}, conditions 1 and 2 are both verified. 
    
    Reciprocally, let us assume that $\psi= \vcenter{\hbox{\includegraphics[scale=.5]{images/psi.pdf}}}$ obeys both conditions 1 and 2. Using the same argument as in the proof of Thm.~\ref{thm:first_charac} (where we analyzed the implications of condition 1 in detail), we can find $N_1, N_2 , N_3 \in \mathbb{N}^*$ such that: for any $n \in \mathbb{N}^*$ and any $c\in \{1,2,3\}$,
    \begin{equation}\label{eq:graph_eq_move1_update2}
 \vcenter{\hbox{\includegraphics[scale=.5]{images/move1-1.pdf}}}= \frac{1}{N_c} \vcenter{\hbox{\includegraphics[scale=.5]{images/move1-2.pdf}}}\,.
 \end{equation}
 According to Prop.~\ref{prop:move2}, we also have non-zero complex numbers $\{\lambda_{m,n,p,q}^{(c)}\}_{m,n,p,q \in \mathbb{N}, 1\leq c\leq 3}$ such that the family of graphical equations \eqref{eq:move2_graphical_eq} holds. Let us prove that $\lambda_{m,n,p,q}^{(c)}$ is independent from $m,n,p,q$ and $c$. We proceed in two steps.
 \begin{itemize}
     \item First, for any $m,n,p,q\in \mathbb{N}$ and $c\in \{1,2,3\}$, we have 
     \begin{equation}
        \vcenter{\hbox{\includegraphics[scale=.5]{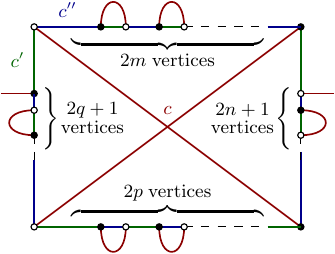}}} = \lambda_{c}^{(m,n,p,q)} \, \vcenter{\hbox{\includegraphics[scale=.5]{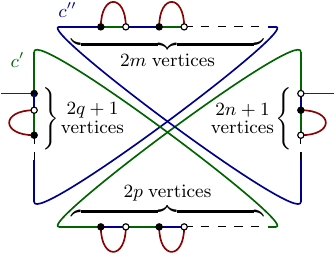}}} =  \lambda_{c}^{(m,n,p,q)} N_{c''}^{-m-n-q}N_{c'}^{-p} \, \vcenter{\hbox{\includegraphics[scale=.5]{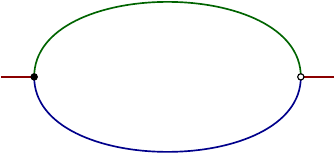}}} \,,
     \end{equation}
     where we first applied \eqref{eq:move2_graphical_eq} once, and then applied successively \eqref{eq:graph_eq_move1_update2} $m+n+p+q$ times. On the other hand, if first apply \eqref{eq:graph_eq_move1_update2} $m+n+p+q$ times and then \eqref{eq:move2_graphical_eq} once, we obtain instead
          \begin{equation}
        \vcenter{\hbox{\includegraphics[scale=.5]{images/move2_vector1_contracted.pdf}}} = N_{c''}^{-m-n-q}N_{c'}^{-p} \, \vcenter{\hbox{\includegraphics[scale=.5]{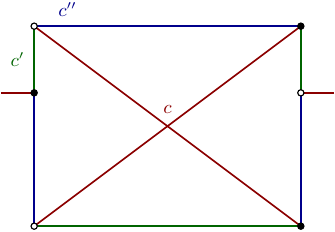}}}= N_{c''}^{-m-n-q}N_{c'}^{-p}  \lambda_c^{(0,0,0,0)} \, \vcenter{\hbox{\includegraphics[scale=.5]{images/melon_2pt.pdf}}}\,.
     \end{equation}
     Combining these two equations, it follows that $\lambda_c^{(m,n,p,q)}= \lambda_c^{(0,0,0,0)}$, which we denote from now on by $\lambda_c$.
    \item Second, we can prove that $\lambda_c$ does not depend on the value of $c$. Indeed, we have
    \begin{equation}
        \tr_{\PT_3}(\ket\psi)= \vcenter{\hbox{\includegraphics[scale=.5]{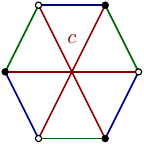}}} = \vcenter{\hbox{\includegraphics[scale=.5]{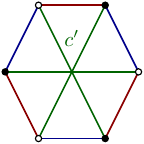}}} = \vcenter{\hbox{\includegraphics[scale=.5]{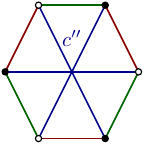}}}\,, 
    \end{equation}
which we can compute in three distinct way by means of \eqref{eq:move2_graphical_eq}, leading to
\begin{equation}
\lambda_1 \, \vcenter{\hbox{\includegraphics[scale=.5]{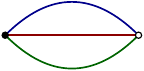}}} = \lambda_2 \, \vcenter{\hbox{\includegraphics[scale=.5]{images/elem_melon.pdf}}}    = \lambda_3 \, \vcenter{\hbox{\includegraphics[scale=.5]{images/elem_melon.pdf}}}\,.
\end{equation}
We thus conclude that $\lambda_1 = \lambda_2 = \lambda_3$, which we denote simply as $\lambda$ from now on.
 \end{itemize}
At this point, we can use the combinatorial relations \eqref{eq:graph_eq_move1_update2} and \eqref{eq:move2_graphical_eq} to determine the value of any trace-invariant in the state $\ket\psi$. Let $G\in \cG_3^{\conn}$. Proceeding as in the proof of Thm.~\ref{thm:first_charac}, we can find a maximally single-trace $H$ with $\displaystyle k(H)=3 - 2k(G)- \sum_{c=1}^3 \left( \kappa\left( G_{\vert_{c',c''}}\right)-k(G)\right)$ and such that
\begin{equation}\label{eq:tr_G-1}
    \tr_G\left( \ket\psi \right) = \left( \prod_{c=1}^3 N_c^{1- \kappa(G\vert_{c',c''})}\right) \Tr_H\left( \ket\psi \right)\,.
\end{equation}
Now, let us prove that $k(H)$ is odd and that:
\begin{equation}\label{eq:eval_MST_lambda}
    \Tr_H\left( \ket\psi \right) = \lambda^{(k(H) -1)/2}\,.
\end{equation}
We proceed by induction on the size of $H$ (as measured by $k(H)$). If $k(H)=1$, $k(H)$ is odd, $H$ is the two-vertex $3$-colored graph, and \eqref{eq:eval_MST_lambda} trivially holds. Suppose now that $H$ is a maximally single-trace graph with $k(H)>1$. Then $H$ necessarily admits a subgraph of the form\footnote{Indeed, there is a single face of color $(c',c'')$ visiting all the vertices of the graph, which we can use to define a preferred cyclic ordering of the vertices (as represented in the picture). Picking up an edge of color $c$ (\ie one of the two diagonals represented in the picture), which connects two nodes along this cycle, we can then follow the face of color $(c,c')$: it contains at least four edges (otherwise $H$ would reduce to the two-vertex graph) and must also visit all the vertices in the cycle. But this can only happen if there is a second red edge that ``crosses'' the first one (as represented by the second diagonal in the picture). The rest of the structure of the subgraph is fixed by the coloring constraints. Alternatively, one may justify this structure by noticing that the graph $H$ cannot be planar: indeed, according to footnote \ref{foot:genus}, $H$ is a maximally single-trace graph and its genus $g(H)$ obeys the relation $2-2g(H)=3 - k(H)\leq 1$. As a result, $g(H)>0$, which implies that the graph is non-planar (hence we necessarily have crossings of color-$c$ edges along the cycle of color $(c',c'')$ previously considered). }  
\begin{equation}
    \vcenter{\hbox{\includegraphics[scale=.5]{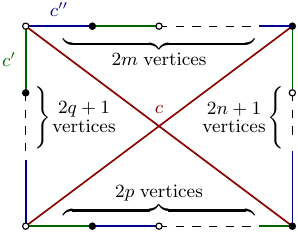}}}\,,
\end{equation}
for some values of $m,n,p,q\in \mathbb{N}$. We can therefore apply the relation \eqref{eq:move2_graphical_eq} to find a new graph $H'$ such that $k(H') = k(H)-2$ and
\begin{equation}
    \tr_H\left( \ket\psi \right) = \lambda \tr_{H'}\left( \ket\psi \right)\,. 
\end{equation}
More precisely, $H'$ is obtained from $H$ after implementing the following subgraph substitution:
\begin{equation}
        \vcenter{\hbox{\includegraphics[scale=.5]{images/CS_move2_vector1-bis.pdf}}} \longrightarrow     \vcenter{\hbox{\includegraphics[scale=.5]{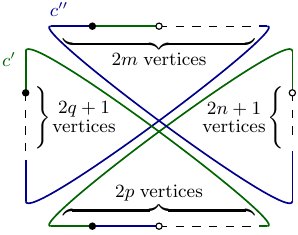}}}\,,
\end{equation}
which preserves the number of faces of each color. As a result, $H'$ is a maximally single-trace graph. Since it has $k(H') = k(H)-2< k(H)$, we 
can apply the induction hypothesis to deduce that: $k(H')$ is odd and
\begin{equation}
        \Tr_{H'}\left( \ket\psi \right) = \lambda^{(k(H') -1)/2}= \lambda^{(k(H) -1)/2 - 1} \,.
\end{equation}
As a result, $k(H)$ is also odd and 
\begin{equation}
        \Tr_{H}\left( \ket\psi \right) = \lambda \Tr_{H'}\left( \ket\psi \right) =  \lambda^{(k(H) -1)/2} \,.
\end{equation}
Combining the intermediate results \eqref{eq:tr_G-1} and \eqref{eq:eval_MST_lambda}, we finally obtain: for any $G \in \cG_3^{\conn}$,
\begin{equation}\label{eq:tr_G_interm}
    \tr_G\left( \ket\psi \right) = \lambda^{g(G)} \prod_{c=1}^3 N_c^{1 - \kappa\left( G\vert_{\{1,2,3\}\setminus \{c\}}\right)}\,,
\end{equation}
where $g(G) \in \mathbb{N}$ denotes the \emph{genus} of $G$ and is defined by the Euler relation\footnote{\label{foot:genus}This is indeed the standard Euler relation $2-2g(G)= F(G) - E(G) + V(G)$ where: the number of faces of $G$ is $F(G)\eqdef \sum_{c=1}^3 \kappa\left( G\vert_{\{1,2,3\}\setminus \{c\}}\right)$, its number of edges is $E(G)= 3k(G)$, and its number of vertices is $V(G)=2k(G)$. The relation $E(G)=3 k(G)$ is itself a consequence of the $3$-regular nature of $G$, which implies that $3V(G)=2E(G)$.}
\begin{equation}
    2-2g(G) \eqdef  \sum_{c=1}^3 \kappa\left( G\vert_{\{1,2,3\}\setminus \{c\}}\right) - 3k(G) + 2k(G) = \sum_{c=1}^3 \kappa\left( G\vert_{\{1,2,3\}\setminus \{c\}}\right) - k(G)\,.
\end{equation}
Next, let us prove that $\lambda >0$. To do so, we can rely on the properties of the partially-transposed reduced density matrix $M \eqdef \rho_{12}^{T_2}$ (see Eq.~\eqref{eq:PartialTransp_Realignm}). $M$ is Hermitian (hence diagonalizable) and, thanks to \eqref{eq:tr_G_interm}, its moments can be computed explicitly:
\begin{equation}
    \forall k \in \mathbb{N}^*\,, \qquad \tr\left( M^k \right) = \tr_{\PT_k^{(3)}} \left( \ket\psi\right)= \begin{cases}
    \lambda^{\frac{k-1}{2}} \quad &\mathrm{if}\quad k\;\mathrm{is}\;\mathrm{odd}\\
    \lambda^{\frac{k-2}{2}} N_3^{-1} \quad &\mathrm{if}\quad k\;\mathrm{is}\;\mathrm{even}
    \end{cases}
\end{equation}
As a result, we have:
for any $z \in \mathbb{C}$ such that $\abs{z} > \underset{x \in \spec(M)}{\max} \abs{x}$
\begin{align}\label{eq:spec_PTk}
    \sum_{x \in \spec(M) \setminus\{0\}} \frac{x m_x}{z- x} &= \tr\left( M \left( z - M\right)^{-1} \right)= \sum_{k=1}^{+\infty} \frac{\Tr\left( M^k\right)}{z^k} = \sum_{n\in \mathbb{N}}^{+\infty} \frac{\lambda^{n}}{z^{2n+1}} + N_3^{-1}\sum_{n \in \mathbb{N}} \frac{\lambda^n}{z^{2n+2}} \nonumber \\
    &= \frac{z}{z^2- \lambda} + \frac{N_3^{-1}}{z^2- \lambda}\,.
\end{align}
 where $m_x$ denotes the mutliplicity of the eigenvalue $x \in \spec(M)$. 
Since $\lambda \neq 0$,\footnote{This already known fact can be recovered from the present calculation: if we had $\lambda =0$, the expression in the second line of \eqref{eq:spec_PTk} would reduce to
 \begin{equation}
     \frac{z+N_3^{-1}}{z^2} \,,
 \end{equation}
 which has a double-pole at $z=0$, thus leading to an inconsistency with the left-hand-side of the same equation.} the previous expression has two simple poles, located at $\pm \sqrt{\lambda}$. It follows that $M$ has exactly two non-zero eigenvalues: 
\begin{equation}
    \spec(M)\setminus \{0\} = \{+\sqrt{\lambda}, - \sqrt{\lambda}\}\,.
\end{equation}
Given that $M$ is Hermitian, both $\sqrt{\lambda}$ and $-\sqrt{\lambda}$ must be real, which can only happen if $\lambda >0$.

The strict positivity of $\lambda$ allows one to reorganize equation \eqref{eq:tr_G_interm} to bring it to the following form:
\begin{equation}\label{eq:trG_int_2}
    \forall G \in \cG_3^{\conn} \,, \quad \tr_G\left( \ket\psi \right) = \prod_{B \subset\{1,2,3\}\,, \,\abs{B}\geq 2} \alpha(B)^{\kappa(G\vert_B) - k(G)}\,,
\end{equation}
where we have defined
\begin{equation}\label{eq:alpha_2nd_thm}
\forall c \in\{1,2,3\}\,, \quad \alpha(\{1,2,3\}\setminus \{c\})\eqdef \frac{1}{\sqrt{\lambda} N_c}\,, \qquad \mathrm{and} \qquad \alpha(\{1,2,3\})\eqdef \lambda\prod_{c=1}^3 N_c\,.
\end{equation}
Since $\lambda >0$, the function $\alpha$ is valued in $\mathbb{R}_+^*$, so we can apply Lem.~\ref{lem:alpha_integer_valued} to conclude that it in fact takes value in $\mathbb{N}^*$. It thus defines a weight function $\alpha \in W_3$, and \eqref{eq:trG_int_2} establishes that $\ket\psi \simLU \ket{\psi_\alpha}$.
\end{proof}
\begin{rem}\label{rem:resolvent_PT}
    We can push the analysis of equation \eqref{eq:spec_PTk} a little further to bring it in the form:
    \begin{equation}
        \sum_{x \in \{\sqrt{\lambda}, - \sqrt{\lambda}\}} \frac{x m_x}{z- x} = \frac{\sqrt{\lambda}+N_c^{-1}}{2\sqrt{\lambda}} \frac{1}{z-\sqrt{\lambda}} + \frac{\sqrt{\lambda}-N_c^{-1}}{2\sqrt{\lambda}} \frac{1}{z+\sqrt{\lambda}}\,.
    \end{equation}
This relation implies that 
\begin{align}
    \frac{N_c^{-1}+\sqrt{\lambda}}{2\lambda} &= m_{\sqrt{\lambda}} \in \mathbb{N^*} \,,\\
    \frac{N_c^{-1}-\sqrt{\lambda}}{2\lambda} &= m_{-\sqrt{\lambda}} \in \mathbb{N^*} \,,
\end{align}
from which it follows that
\begin{equation}
\frac{1}{\sqrt{\lambda}} =    m_{\sqrt{\lambda}} - m_{-\sqrt{\lambda}} \in \mathbb{N}^*
\end{equation}
In other words, we can find some $N \in \mathbb{N^*}$ such that
\begin{equation}
    \lambda = N^{-2}\,.
\end{equation}
The weight function $\alpha$ defined in the proof of  Thm.~\ref{thm:second_charac} (see Eq.~\eqref{eq:alpha_2nd_thm}) therefore takes the same functional form as the one introduced in  Thm.~\ref{thm:first_charac} (see Eq.~\eqref{eq:alpha_1st_thm}). In addition, formula \eqref{eq:tr_G_interm} can now be expressed as
\begin{equation}
\forall G \in \cG_3^{\conn}\,, \quad    \tr_G\left( \ket\psi \right) = N^{-2g(G)} \prod_{c=1}^3 N_c^{1 - \kappa\left( G\vert_{\{1,2,3\}\setminus \{c\}}\right)}\,.
\end{equation}
The factor $N^{-2g(G)}$ is reminiscent of the way in which graphs are weighted in random matrix models admitting a topological large $N$ expansion: only graphs of trivial topology ($g(G)=0$) contribute at leading order in the asymptotic large $N$ limit.
\end{rem}
As a consequence of Thm.~\ref{thm:second_charac}, an HT state $\ket{\psi_\alpha}$ is \emph{characterized}, in the space of all finite-dimensional $3$-partite pure states, by the family of graphical equations:
  \begin{align}
\forall n \in \mathbb{N}^*\,,\forall c\in\{1,2,3\}\,, \quad \vcenter{\hbox{\includegraphics[scale=.5]{images/move1-1.pdf}}}&= \frac{1}{N_c} \vcenter{\hbox{\includegraphics[scale=.5]{images/move1-2.pdf}}}\,,
     \\ 
     \forall m,n,p,q\in \mathbb{N}\,,\forall c\in\{1,2,3\}\,, \quad  \vcenter{\hbox{\includegraphics[scale=.5]{images/CS_move2_vector1.pdf}}} &= \frac{1}{N^2}    \vcenter{\hbox{\includegraphics[scale=.5]{images/CS_move2_vector2.pdf}}}\,.
\end{align}
Furthermore, the one-to-one relation between the sets of  parameters $$\{\alpha(\{1,2,3\}), \alpha(\{1,2\}), \alpha(\{1,3\}) , \alpha(\{2,3\})\}\quad  \mathrm{and} \quad \{N, N_1, N_2 , N_3\}$$ is given by:
\begin{equation}
    \displaystyle\begin{cases} 
     \alpha(\{1,2,3\}\setminus \{c\})&= \frac{N}{N_c} \,,\quad \forall c \\\hfill \alpha(\{1,2,3\})&= N^{-2}\prod_{c} N_c
    \end{cases} \quad \Leftrightarrow \quad
    \begin{cases} 
     N_c&= \alpha(\{1,2,3\}) \prod_{c'\neq c} \alpha(\{1,2,3\}\setminus \{c'\})\,, \quad \forall c \\\hfill N&= \alpha(\{1,2,3\})\prod_{c} \alpha(\{1,2,3\}\setminus \{c\})
    \end{cases}
\end{equation}

\section{Multipartite spectral hypergraph-tensor states and their associated entanglement spreads}\label{sec:SHTstates}

For bipartite quantum systems, entanglement spreads offer ways of characterizing states with flat entanglement spectra. Indeed, as presented before, we recall that given two integers $m,n\in \bb{N}^*$ with $2 \leq m < n$ and the entanglement spread $\mathsf{D}_{m,n}$ introduced in Eq.~\eqref{eq:ent_spread}, we have, for $\ket{\psi} \in \H_1 \ot \H_2$:
\begin{equation} \label{eq:bip_ES_SHT}
    \mathsf{D}_{m,n}(\ket{\psi}) = 0 \qquad \Longleftrightarrow \qquad \exists N \in \bb{N}^*,\, \forall k \in \bb{N}^*\,, \quad \Tr_{C_k}(\ket{\psi}) = N^{1-k} \,.
\end{equation}
In other words, $\mathsf{D}_{m,n}(\ket{\psi})=0$ if and only if the entanglement spectrum is flat. For instance, the entanglement spread of a separable state or of a maximally entangled state is zero.

Be that as it may, no direct analogue of the notion of entanglement spectrum is available in the multipartite setting. More precisely, what is missing is a \emph{universal} concept of spectrum which could be associated with any state. To circumvent this limitation, we will introduce a particular subclass of multipartite states whose $\LU$-orbits can in fact be classified in terms of generalized entanglement spectra. In this restricted context, we will show that certain multipartite entanglement spreads can be used to characterize states with \emph{flat} entanglement spectra. Even though limited in scope, this will provide a natural and interesting generalization of the bipartite result we just recalled, and also justify our choice of nomenclature: just like their bipartite cousins, there is a context in which multipartite entanglement spreads provide quantitative measures of the ``spread'' of some underlying entanglement spectrum. More concretely, we will generalize hypergraph-tensor states (HT states) by assigning a spectrum to each block of the corresponding hypergraph. We refer to these as \textit{spectral hypergraph-tensor states} (SHT states), denoted by $\ket{\psi_{\pmb{p}}}$. Our goal is then to construct a multipartite entanglement monotone $\mathsf{D}$ satisfying:
\begin{equation}
    \mathsf{D}\pa{\ket{\psi_{\pmb{p}}}} = 0 \qquad \Longleftrightarrow \qquad \ket{\psi_{\pmb{p}}} \rm{ is an HT state}\,,
\end{equation}
\lat{i.e.} the spectrum of \textit{every} block of the hypergraph is flat.

\subsection{Definition and properties of spectral hypergraph-tensor states}

\paragraph{Spectra.} In the next definition, we will assign a non-trivial \emph{spectrum} to each collection of subsystems in a $D$-partite system. For that purpose, it will be convenient to work with a notion of spectrum that can accommodate an arbitrary number of non-zero spectral elements. To this effect, we introduce $\Seq$
 as the set of sequences $\pmb{p}= (p_i)_{n \in \mathbb{N}^*} \in [0,1]^{\mathbb{N}^*}$ obeying the following three conditions: (a) $\pmb{p}$ is decreasing \ie spectral elements $p_i$ are ordered from highest to lowest values, as is usually done in this context; (b) $p_i$ is zero for large enough values of $i$ \ie there exists some $r \in \mathbb{N}^*$ such that $p_i=0$ for any $i \geq r+1$; and (c) $\displaystyle\sum_{i \in \mathbb{N}^*} p_i = 1$ \ie $\pmb{p}$ defines a discrete probability measure. In addition, we will denote by $\mathrm{rank}\left(\pmb{p}\right)$ the minimum value of $r$ in conditions (b), which also equals the number of non-zero spectral elements $p_i$ appearing in $\pmb{p}$; we will refer to this number as the \emph{rank} of $\pmb{p}$.

\paragraph{Definition of SHT states.} An SHT state is defined through the use of a \textit{spectral function} $\pmb{p}$, which assigns a spectrum $\pmb{p}(B)$ to any block $B$ of $\{1, \ldots , D\}$ (with $\abs{B}\geq 2$). Associated with it, we have a rank function $r_{\pmb{p}}$, which assigns a rank $r_{\pmb{p}}(B)\eqdef \mathrm{rank}\left( \pmb{p}(B) \right)$ to any block $B$. In this way, a spectral function $\pmb{p}$ provides a map
\begin{equation}
    \begin{array}{ccccc}
        \pmb{p} & : & \paa{ B \subset \paa{1,\ldots,D} \vert \abs{B} \geq 2} & \longrightarrow & \Seq \\
            & & B & \longmapsto & \pmb{p}(B) = \bigl(p_i(B)\bigr)_{i \in \mathbb{N}^*}=  \bigl(p_1(B),\dots,p_{r_{\pmb{p}}(B) }(B), 0, 0 , 0 , \ldots \bigr)  \\
    \end{array}
\end{equation}
We will denote by $\SpecFun_D$ the set of spectral functions associated with $D$ parties.

\begin{defi}\label{def:SHT_states}
    Let $D\geq 2$ and $\pmb{p}\in\SpecFun_D$ a spectral function. The corresponding $D$\emph{-partite spectral hypergraph-tensor state} $\ket{\psi_{\pmb{p}}}$ is defined by:
    \begin{equation}
        \ket{\psi_{\pmb{p}}} = \bigotimes_{B \subset \paa{1,\dots,D},\, \abs{B}\geq 2} \pa{\sum_{i\in\mathbb{N}^*} \sqrt{p_i(B)} \ket{i}^{\ot \abs{B}}} =  \bigotimes_{B \subset \paa{1,\dots,D},\, \abs{B}\geq 2} \pa{\sum_{i=1}^{r_{\pmb{p}}(B)} \sqrt{p_i(B)} \ket{i}^{\ot \abs{B}}} \,.
    \end{equation}
    For each block $B$ of the hypergraph, $\pmb{p}(B)$ is called the \emph{entanglement spectrum of $B$}. By extension, we will say that any state $\ket\psi \simLU \ket{\psi_{\pmb{p}}}$ is a SHT state of \emph{spectral function} $\pmb{p}$.\footnote{Besides, we will prove in Thm.~\ref{th:LUSHT} that a $\LU$-equivalence class of SHT states is characterized by a unique spectral function.}
\end{defi}
\begin{rem}
    For any weight function $\alpha \in W_D$, we have $\ket{\psi_\alpha}= \ket{\psi_{\pmb{p}}}$ with $\pmb{p}\in \SpecFun_D$ obeying: for any $B\subset\{1, \ldots , D\}$ such that $\abs{B}\geq 2$, $r_{\pmb{p}}(B)= \alpha(B)$, and $p_i (B)=\frac{1}{\alpha(B)}$ for any $i\in \{1, \ldots , \alpha(B)\}$. In other words, an HT state is a SHT state with flat entanglement spectrum on any block $B$.
\end{rem}

\paragraph{Evaluation of trace-invariants on SHT states.} Since trace-invariants are multiplicative under tensor products of quantum states, we only need to consider one block $B_0 \subset \paa{1,\ldots,D}$ with $\abs{B_0} \geq 2$. For a given colored graph $G \in \cG_D$, only the colors appearing in $B_0$ will play a role. Thus, the colored graph under study for computing the SHT state associated with the block $B_0$ will be the graph $G$ restricted to the colors in $B_0$, \ie $G\vert_{B_0}$. Trace-invariants being multiplicative under the disjoint union of graphs, we restrict our attention to one connected component of $G\vert_{B_0}$: $H \in \Comp(G\vert_B)$.

Recall that $\Tr_H(\ket{\psi})$ is a sum over indices of products of $k(H)$ copies of the tensor components $\psi$ and $k(H)$ copies of the complex conjugate of the tensor components $\bar{\psi}$. For the block $B_0$, the product will be zero except when all the tensor indices are the same, thus leading to the following expression:
\begin{equation}
    \Tr_H\pa{\sum_{i=1}^{r_{\pmb{p}}(B_0)} \sqrt{p_i(B_0)} \ket{i}^{\ot \abs{B_0}}} = \sum_{i=1}^{r_{\pmb{p}}(B_0)} \sqrt{p_i(B_0)}^{k(H)} \cdot \sqrt{p_i(B_0)}^{k(H)} = \sum_{i=1}^{r_{\pmb{p}}(B_0)} p_i(B_0)^{k(H)}\,.
\end{equation}

Finally, adding the connected components of $G\vert_{B_0}$ and considering all blocks of the SHT state leads to the following expression:
\begin{equation} \label{eq:Tr_SHT}
    \Tr_G\pa{\ket{\psi_{\pmb{p}}}} = \prod_{B \subset \paa{1,\dots,D},\, \abs{B}\geq 2} \pac{ \prod_{H \in \Comp(G\vert_B)} \pa{\sum_{i\in \mathbb{N}^*} p_i(B)^{k(H)}}} \,,
\end{equation}
where we recall that $\Comp(G\vert_B)$ denotes the set of connected components of $G\vert_B$.

\subsection{\texorpdfstring{$\LU$}{LU}-classification of spectral hypergraph-tensor states}\label{sec:LU_SHT}

The purpose of this section is to demonstrate that an SHT state is completely characterized by its spectral function. We begin with a review of the well-understood bipartite setting before extending our discussion to the multipartite case.

\paragraph{Bipartite case.} In this setting, all quantum states are $\LU$-equivalent to a spectral hypergraph-tensor state as demonstrated by the following lemma.

\begin{lem} \label{lem:D2LU_eq_1}
    Let $\ket{\psi}$ a finite-dimensional bipartite pure state. There exists a spectral function $\pmb{p}\in \SpecFun_2$ such that: 
    \begin{equation}
        \ket{\psi} \simLU \ket{\psi_{\pmb{p}}}\,.
    \end{equation}
    In other words, any finite-dimensional bipartite state is a SHT state.
\end{lem}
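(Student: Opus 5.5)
The plan is to use the Schmidt decomposition and to match it with the single-block structure of $\SpecFun_2$. When $D=2$, the only block with $\abs{B}\geq 2$ is $B=\{1,2\}$. A spectral function $\pmb{p}\in\SpecFun_2$ is therefore just one sequence $\pmb{p}(\{1,2\})\in\Seq$, and by Def.~\ref{def:SHT_states} the corresponding state is
\begin{equation}
\ket{\psi_{\pmb{p}}} = \sum_{i=1}^{r} \sqrt{p_i}\,\ket{i}\ot\ket{i}\,,\qquad r = \mathrm{rank}(\pmb{p})\,.
\end{equation}
This lives in $\H_1\ot\H_2$ with $\H_1=\H_2=\H_{\mathrm{loc}}$, i.e.\ $n_1=n_2=1$. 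The task thus reduces to showing that every finite-dimensional bipartite pure state is $\LU$-equivalent to a state of this Schmidt form.

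First, take $\ket\psi\in S(\H_1\ot\H_2)$ with $\dim\H_c = N_c$, and view its coefficient array $(\psi_{i_1i_2})$ as an $N_1\times N_2$ matrix. The singular value decomposition gives orthonormal families $(\ket{u_i})_{1\leq i\leq r}$ in $\H_1$ and $(\ket{v_i})_{1\leq i\leq r}$ in $\H_2$, together with positive numbers $s_1\geq\cdots\geq s_r>0$, such that $\ket\psi=\sum_{i=1}^r s_i\ket{u_i}\ot\ket{v_i}$. Normalization gives $\sum_i s_i^2=1$. Set $p_i\eqdef s_i^2$ for $i\leq r$ and $p_i\eqdef 0$ for $i>r$. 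The resulting sequence is decreasing, eventually zero and sums to $1$, so it lies in $\Seq$ and has rank $r$. Define $\pmb{p}(\{1,2\})\eqdef(p_i)_i$.

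Second, make the $\LU$-equivalence precise. The only point needing care is that $\ket{\psi_{\pmb{p}}}$ is formally a vector of $\H_{\mathrm{loc}}^{\ot 2}$, whereas $\ket\psi$ lives in $\H_1\ot\H_2$. Following the footnote on finite-dimensionality of HT states, I would work inside $\H_{\mathrm{loc},M}^{\ot 2}$ with $M\eqdef\max(N_1,N_2)$. I would then use isometric embeddings $V_c:\H_c\to\H_{\mathrm{loc},M}$ sending $\ket{u_i}\mapsto\ket i$, respectively $\ket{v_i}\mapsto\ket i$, for $i\leq r$, extended to arbitrary isometries on the orthogonal complements. These satisfy $(V_1\ot V_2)\ket\psi=\ket{\psi_{\pmb{p}}}$.

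Identifying $\H_c$ with its image, which is the convention implicitly used for HT states in Lem.~\ref{lem:alpha_integer_valued}, this is precisely an $\LU$ relation. Alternatively, one can note that all trace-invariants of $\ket\psi$ and $\ket{\psi_{\pmb{p}}}$ coincide: both equal $\sum_i p_i^{k}$ on $C_k$, by Eq.~\eqref{eq:Tr_SHT}. Since the cycles $\{C_k\}$ separate $\LU$-orbits, as already invoked in Sec.~\ref{sec:bipartite_charac}, this gives the equivalence as well. The only mild obstacle is this bookkeeping of ambient dimensions; the rest is the standard Schmidt decomposition.
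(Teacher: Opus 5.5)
Your proposal is correct, and your $\pmb{p}(\{1,2\})$ (the squared Schmidt coefficients) is exactly the paper's choice, namely the non-zero eigenvalues of $\rho_1$ in decreasing order.

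Where you differ is the argument for the $\LU$-equivalence. You construct the local isometries $V_1\ot V_2$ that send the Schmidt vectors to $\ket{i}\ot\ket{i}$. The paper constructs no map. It only checks that $\Tr_{C_k}(\ket\psi)=\Tr(\rho_1^k)=\sum_i p_i^k=\Tr_{C_k}(\ket{\psi_{\pmb{p}}})$ for all $k$. It then invokes the fact that trace-invariants separate $\LU$-orbits (Prop.~2.8 of Ref.~\cite{Carrozza:2026qcf}); in $D=2$ every connected colored graph is a cycle $C_k$, so the cycles suffice.

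What each route buys:
\begin{itemize}
\item Your route is elementary, constructive, and independent of that completeness theorem. The price is the ambient-dimension bookkeeping, which you handle correctly. The paper's route relies implicitly on the same convention of identifying $\H_c$ with its isometric image, since invariants cannot detect such embeddings.
\item The paper's route fits its general strategy of certifying $\LU$-equivalence through equality of trace-invariants, as in Lem.~\ref{lem:alpha_integer_valued}.
\end{itemize}

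Your alternative argument is precisely the paper's proof. One small imprecision there: $\Tr_{C_k}(\ket\psi)=\sum_i p_i^k$ comes from $\Tr(\rho_1^k)$ and the Schmidt decomposition, not from Eq.~\eqref{eq:Tr_SHT}, which applies only to $\ket{\psi_{\pmb{p}}}$.
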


\begin{proof}
    Let $\ket{\psi} \in \H_1 \ot \H_2$ a finite-dimensional bipartite pure state. Since the reduced density matrix $\rho_1 = \Tr_{\H_2}(\ket{\psi} \bra{\psi})\in \H_1 \otimes \H_1^*$ is a Hermitian positive semi-definite matrix, it has real non-negative eigenvalues. We let $r_{\pmb{p}}(\paa{1,2})$ be the number of non-zero eigenvalues, $p_1(\paa{1,2})\geq \cdots \geq p_{r_{\pmb{p}}(\paa{1,2})}(\paa{1,2})>0$ denote the non-zero eigenvalues themselves (in decreasing order), and we define $\pmb{p}(\paa{1,2}) = (p_1(\paa{1,2}),\dots,p_{r_{\pmb{p}}(\paa{1,2})}(\paa{1,2}), 0, 0, \ldots) \,.$
    Given that $\displaystyle\sum_{i=1}^{+\infty} p_i(\{1,2\})= \tr(\rho_1) = 1$, $\pmb{p}(\paa{1,2})$ is an element of $\Seq$, which we can equivalently understand as an element of  $\SpecFun_2 \simeq \Seq$. 
    
    We let $\ket{\psi_{\pmb{p}}}$ be the SHT state associated with the spectrum $\pmb{p}(\paa{1,2})$, \lat{i.e.}:
    \begin{equation}
        \ket{\psi_{\pmb{p}}} = \sum_{i=1}^{r_{\pmb{p}}(\paa{1,2})} \sqrt{p_i(\paa{1,2})} \ket{i} \ot \ket{i}\,.
    \end{equation}
    Recall that, for $k\geq 1$, $C_k$ is the connected $2$-colored graph with $2k$ vertices. Evaluating the trace-invariants associated with the graphs $\{C_k\}_{k \in \mathbb{N}^*}$ on $\ket{\psi}$ and $\ket{\psi_{\pmb{p}}}$ (by the use of Eq.~\eqref{eq:Tr_SHT}) leads to:
    \begin{equation}
        \Tr_{C_k}(\ket{\psi}) = \Tr(\rho_1^k) = \sum_{i=1}^{r_{\pmb{p}}(\paa{1,2})} p_i(\paa{1,2})^k \qquad \rm{and} \qquad \Tr_{C_k}\pa{\ket{\psi_{\pmb{p}}}} = \sum_{i=1}^{r_{\pmb{p}}(\paa{1,2})} p_i(\paa{1,2})^k \,.
    \end{equation}
    So, for all $k \geq 1$, $\Tr_{C_k}(\ket{\psi}) = \Tr_{C_k}\pa{\ket{\psi_{\pmb{p}}}}$, thus proving the $\LU$-equivalence (see \eg Prop.~2.8 of Ref.~\cite{Carrozza:2026qcf}). 
\end{proof}

\begin{lem} \label{lem:powersum}
    Let $\pmb{p},\pmb{q} \in \Seq$ and $r \in \mathbb{N}^*$ such that: $\mathrm{rank}\left(\pmb{p} \right) \leq r$ and $\mathrm{rank}\left(\pmb{q} \right) \leq r$. One has the equivalence: 
    \begin{equation}
        \forall k \in \{1, \ldots , r\}\,, \quad \sum_{i = 1}^{\mathrm{rank}(\pmb{p})} p_i^k = \sum_{i = 1}^{\mathrm{rank}(\pmb{q})} q_i^k \qquad \Longleftrightarrow \qquad \pmb{p} = \pmb{q}\,.
    \end{equation}
\end{lem}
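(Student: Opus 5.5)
The direction $\pmb{p}=\pmb{q}\Rightarrow$ equality of power sums is immediate. For the converse, the plan is to view the nonzero entries of $\pmb{p}$ and $\pmb{q}$, padded with zeros, as two multisets of $r$ complex numbers. Concretely, set $x_i \eqdef p_i$ and $y_i \eqdef q_i$ for $1\leq i\leq r$. By the rank assumptions, these $r$-tuples contain all nonzero entries of $\pmb{p}$ and $\pmb{q}$, and the remaining entries are zero. The hypothesis then reads $\sum_{i=1}^r x_i^k = \sum_{i=1}^r y_i^k$ for all $k\in\{1,\ldots,r\}$. Padding with zeros does not change power sums with $k\geq 1$, so this matches the stated hypothesis exactly.

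The key step is Newton's identities. For $r$ variables, the elementary symmetric polynomials $e_1,\ldots,e_r$ are determined recursively by the power sums $P_1,\ldots,P_r$ through
\begin{equation}
    k\, e_k = \sum_{j=1}^{k} (-1)^{j-1} e_{k-j} P_j\,, \qquad e_0 = 1\,.
\end{equation}
Since we work over $\mathbb{R}\subset\mathbb{C}$ (characteristic zero), dividing by $k$ is legitimate. By induction on $k$, we then get $e_k(x)=e_k(y)$ for all $k\leq r$. It follows that the monic polynomials $\prod_{i=1}^r (z-x_i)$ and $\prod_{i=1}^r (z-y_i)$ coincide. Hence the multisets $\{x_i\}$ and $\{y_i\}$ are equal, roots being counted with multiplicity.

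Finally, both sequences are decreasing by condition (a) in the definition of $\Seq$. A multiset determines its decreasing ordering uniquely, so $p_i=q_i$ for $1\leq i\leq r$. For $i>r$ we have $p_i=q_i=0$, by the rank bounds and condition (b). Therefore $\pmb{p}=\pmb{q}$, and in particular $\mathrm{rank}(\pmb{p})=\mathrm{rank}(\pmb{q})$.

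No step is a real obstacle. The only point needing care is the zero-padding: it guarantees that exactly $r$ power sums suffice for the bookkeeping. As an alternative route, one could invert the Vandermonde system on the distinct values, but Newton's identities are cleaner.
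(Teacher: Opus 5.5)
Your proposal is correct and follows essentially the same route as the paper: pad with zeros to $r$-tuples, use the Newton--Girard identities to match the elementary symmetric polynomials, deduce that the two monic degree-$r$ polynomials coincide, and identify their roots with multiplicity. You are slightly more explicit than the paper in invoking the decreasing ordering (condition (a) of the definition) to pass from equality of multisets to $p_i = q_i$ for each $i$.
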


\begin{proof}
    The implication ($\Leftarrow$) is direct. Regarding the implication ($\Rightarrow$), let us denote by $P_k(\pmb{t})$ the $k$-th power-sum of the $r$-tuple $\pmb{t} = (t_1,t_2,\dots,t_r) \in (\mathbb{R}_+)^{\times r}$, \ie:
    \begin{equation}
        P_k(\pmb{t}) = \sum_{i=1}^{r} t_i^k \,.
    \end{equation}
    By assumption, we have the equality $P_k(\pmb{p}) = P_k(\pmb{q})$, for any $k \in \{1, \ldots , r\}$. 
Moreover, given a $r$-tuple $\pmb{t}$, one can define $E_0(\pmb{t}) = 1$ and: for any $k \geq 1$, $E_k(\pmb{t}) = \displaystyle \frac 1 k \sum_{i=1}^k (-1)^{i-1} E_{k-i}(\pmb{t}) P_i(\pmb{t})$. By the assumed identities involving power sums, we observe that $E_k(\pmb{p}) = E_k(\pmb{q})$, for any $k \in \{0, \ldots , r\}$. Thus, by the Girard–Newton formula, one must have equality of the polynomials $Q_{\pmb{p}}$ and $Q_{\pmb{q}}$ defined as follows: 
    \begin{equation}
        \forall x \in \bb{R}, \quad Q_{\pmb{q}}(x) \eqdef \prod_{i = 1}^{r} (x - q_i) = \sum_{k = 0}^{r} (-1)^k E_k(\pmb{q}) x^{r - k} = \sum_{k = 0}^{r} (-1)^k E_k(\pmb{p}) x^{r - k} = \prod_{i = 1}^{r} (x - p_i) \eqdef Q_{\pmb{p}}(x)\,.
    \end{equation}
    Thus, since the roots and multiplicities must be the same for $Q_{\pmb{q}}$ and $Q_{\pmb{p}}$, one gets $p_i = q_i$ for any $i \in \{1,\ldots, r\}$. But given that $\mathrm{rank}(\pmb{p})$ and $\mathrm{rank}(\pmb{q})$ are both bounded by $r$, this implies $\pmb{p} = \pmb{q}$.
\end{proof}

In particular, the knowledge of the entanglement spectrum is equivalent to the knowledge of sufficiently many power sums. The next proposition then follows.

\begin{prop}
    Let $\pmb{p}, \pmb{q} \in \SpecFun_2$. Consider two finite-dimensional  bipartite pure states $\ket{\psi}$ and  $\ket{\varphi}$, with $\ket{\psi} \simLU \ket{\psi_{\pmb{p}}}$ and $\ket{\varphi} \simLU \ket{\psi_{\pmb{q}}}$. The following equivalence holds:
    \begin{equation}
        \ket{\psi} \simLU \ket{\varphi} \qquad \Longleftrightarrow \qquad \pmb{p}= \pmb{q}  \qquad \left(\,\Longleftrightarrow \, \pmb{p}(\paa{1,2}) = \pmb{q}(\paa{1,2})\,\right) \,,
    \end{equation}
    \ie $\ket\psi$ and $\ket\varphi$ are $\LU$-equivalent if and only if they have the same entanglement spectrum (resp. spectral function).
\end{prop}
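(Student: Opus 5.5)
We will reduce the statement to Lem.~\ref{lem:powersum}, using that the cyclic graphs $\{C_k\}_{k\in\mathbb{N}^*}$ separate $\LU$-orbits of bipartite pure states and that $\LU$-equivalence preserves every trace-invariant. The parenthetical equivalence is immediate: when $D=2$, the only block is $\{1,2\}$, so $\SpecFun_2\simeq\Seq$ and a spectral function is the same datum as the single spectrum $\pmb{p}(\{1,2\})$.

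For ($\Leftarrow$), we assume $\pmb{p}=\pmb{q}$. Then $\ket{\psi_{\pmb{p}}}=\ket{\psi_{\pmb{q}}}$ by Def.~\ref{def:SHT_states}, and $\ket\psi\simLU\ket{\psi_{\pmb{p}}}=\ket{\psi_{\pmb{q}}}\simLU\ket\varphi$. Since $\simLU$ is an equivalence relation (local unitaries form a group), transitivity gives $\ket\psi\simLU\ket\varphi$.

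For ($\Rightarrow$), we assume $\ket\psi\simLU\ket\varphi$. Trace-invariants are $\LU$-invariant, so chaining the three equivalences gives $\Tr_{C_k}(\ket{\psi_{\pmb{p}}})=\Tr_{C_k}(\ket{\psi_{\pmb{q}}})$ for every $k\in\mathbb{N}^*$. We then apply Eq.~\eqref{eq:Tr_SHT} with $D=2$. The graph $C_k$ is connected with $k(C_k)=k$, and the only block is $B=\{1,2\}$, for which $C_k|_B=C_k$. Hence
\begin{equation}
\sum_{i} p_i(\{1,2\})^k=\sum_i q_i(\{1,2\})^k \qquad \text{for all } k\in\mathbb{N}^*.
\end{equation}
We set $r=\max(\mathrm{rank}\,\pmb{p}(\{1,2\}),\mathrm{rank}\,\pmb{q}(\{1,2\}))$, which is finite by condition (b) in the definition of $\Seq$. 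Restricting to $k\in\{1,\dots,r\}$, Lem.~\ref{lem:powersum} yields $\pmb{p}(\{1,2\})=\pmb{q}(\{1,2\})$, that is $\pmb{p}=\pmb{q}$.

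None of these steps is a genuine obstacle. The only point needing care is that Lem.~\ref{lem:powersum} requires a common bound $r$ on both ranks, and finiteness of the ranks in $\Seq$ supplies exactly that.
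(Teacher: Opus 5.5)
Your proof is correct and takes essentially the same route as the paper, which simply cites the separation of bipartite $\LU$-orbits by cyclic graphs (Prop.~2.8 of Ref.~\cite{Carrozza:2026qcf}) together with Lem.~\ref{lem:D2LU_eq_1} and Lem.~\ref{lem:powersum}. Your version is slightly leaner. Transitivity of $\simLU$ settles ($\Leftarrow$), and ($\Rightarrow$) needs only the $\LU$-invariance of $\Tr_{C_k}$, so the orbit-separation result you invoke in your opening sentence is never actually used.
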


\begin{proof}
    The equivalence is immediate given Lem.~\ref{lem:D2LU_eq_1}, Prop.~2.8 of Ref.~\cite{Carrozza:2026qcf}, and Lem.~\ref{lem:powersum}.
\end{proof}

\paragraph{Multipartite case.} While all finite-dimensional bipartite quantum states are SHT states, such a statement does not hold in the multipartite context. Indeed, let us look at a concrete example with the so-called $\ket{\rm{W}}$ state, defined as \cite{Dur:2000zz}:
\begin{equation}
    \ket{\rm{W}} = \frac{1}{\sqrt 3} (\ket{112} + \ket{121} + \ket{211} ) \,.
\end{equation}
Let $C_k(B)$ be the cyclic graphs with $2k$ vertices where the colored edges in $B$ (resp.~in $\bar{B}$) are all parallel. They are, in particular, the colored graphs involved in the Rényi entropies associated with the bipartite state space $\H_B \ot \H_{\bar{B}}$ (see Fig.~\ref{fig:Cyclic}). Furthermore, we consider the third moment of the partial transpose, which is represented by the $3$-colored graph $\PT_{3}$ (see on the left of Fig.~\ref{fig:PT_RM}). The evaluation of the cyclic trace-invariants $\{\Tr_{C_k(\paa{c})}\}_{1\leq c \leq 3, k\in \mathbb{N}^*}$ as well as $\Tr_{\PT_3}$ on the state $\ket{\rm{W}}$ yields:
\begin{equation}
    \forall c \in \paa{1,2,3} \,, \forall{k \in \mathbb{N}^*} \quad \Tr_{C_k(\paa{c})}(\ket{\rm{W}}) = \pa{\frac{1}{3}}^k + \pa{\frac{2}{3}}^k \,, \qquad \rm{and} \qquad \Tr_{\PT_3}(\ket{\rm{W}}) = \frac{2}{9} \,,
\end{equation}
and one observes that no SHT states can reproduce both conditions. Indeed, the condition regarding the colored graphs $C_k(\paa{c})$ is satisfied by an SHT state $\ket{\psi_{\pmb{p}}}$ if and only if $\pmb{p}(\paa{1,2,3}) = (2/3,1/3,0,0, \ldots)$ and $\pmb{p}(B) = (1,0,0, \ldots)$ for $B \neq \paa{1,2,3}$. However, assuming this, the evaluation of $\Tr_{\PT_3}$ on this SHT state yields: $\Tr_{\PT_3}(\ket{\psi_{\pmb{p}}}) = 1/3 \neq 2/9 = \Tr_{\PT_3}(\ket{\rm{W}})$. As a conclusion, we have the following result:
\begin{equation}
    \forall \pmb{p} \in \SpecFun_3\,, \quad \ket{\rm{W}} \underset{\LU}{\not\sim} \ket{\psi_{\pmb{p}}}\,.
\end{equation}
In other words, the W state is \emph{not} an SHT state.

Although SHT states do not cover all multipartite quantum states, they offer a natural extension of HT states. As a matter of fact, the $\LU$-classification of SHT states is similar to that in the bipartite case as provided by the following theorem.

\begin{theo} \label{th:LUSHT}
    Let $D \geq 3$, and $\pmb{p}, \pmb{q} \in \SpecFun_D$. The following equivalence holds:
    \begin{equation}
        \ket{\psi_{\pmb{p}}} \simLU \ket{\psi_{\pmb{q}}} \qquad \Longleftrightarrow \qquad \pmb{p} = \pmb{q} \,.
    \end{equation}
Consequently, two SHT states are $\LU$-equivalent if and only if they have the same spectral function.    
\end{theo}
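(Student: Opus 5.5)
The implication ($\Leftarrow$) is immediate, since $\pmb{p}=\pmb{q}$ gives literally the same state. For ($\Rightarrow$), I will use that $\LU$-equivalent states take equal values on every trace-invariant. The plan is then to exhibit a family of graphs whose evaluations via Eq.~\eqref{eq:Tr_SHT} isolate the power sums $\sum_i p_i(B)^k$ block by block. Once this is done, Lem.~\ref{lem:powersum} (with $r$ at least the maximal rank over all blocks of both spectral functions) gives $\pmb{p}(B)=\pmb{q}(B)$ for every $B$.

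To produce such graphs, for each block $B$ with $\abs{B}\geq 2$ and each $k\in\mathbb{N}^*$, I take $G_k(B)\eqdef C_k(B)$, the cyclic graph associated with the bipartition $B\sqcup\bar B$. The case $\bar B=\emptyset$, i.e.\ $B=\{1,\dots,D\}$, is handled separately below. The key combinatorial observation concerns the restriction $G\vert_{B'}$ for a block $B'$. It has $k$ components, each with $k(H)=1$, whenever $B'\subset B$ or $B'\subset\bar B$. Otherwise $B'$ meets both $B$ and $\bar B$, and $G\vert_{B'}$ is connected with $k(H)=k$. Plugging this into Eq.~\eqref{eq:Tr_SHT}, all blocks of the first type contribute factors $(\sum_i p_i(B'))^k=1$. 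Hence
\begin{equation}
\Tr_{C_k(B)}(\ket{\psi_{\pmb{p}}})=\prod_{B'\,:\,B'\cap B\neq\emptyset,\ B'\cap\bar B\neq\emptyset} P_k\bigl(\pmb{p}(B')\bigr),
\end{equation}
where $P_k$ denotes the $k$-th power sum. This is a product of power sums over the blocks "cut" by the bipartition, which is the Rényi entropy of a tensor product of the relevant GHZ-like factors.

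The main obstacle is disentangling these products into individual power sums. My first attempt would be a Möbius-inversion over bipartitions, taking logarithms of $\Tr_{C_k(B)}$. This may fail because the map $B\mapsto\{B'\text{ cut by }B\}$ need not be invertible at the level of individual factors. I would therefore switch to a more flexible family: a connected graph $G_k^{(B_0)}$ designed so that $G\vert_{B'}$ is connected with $k$ vertex pairs exactly when $B'\supseteq B_0$, and splits into $k$ trivial components otherwise. A candidate is to make all colors of $B_0$ follow one common long cycle structure and all colors outside $B_0$ be parallel to a fixed color of $B_0$. Concretely, take $C_k$ on two colors $c_1,c_2\in B_0$. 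Make every other color of $B_0$ parallel to $c_1$ except one which parallels $c_2$, and every color outside $B_0$ parallel to $c_1$. This has to be tuned so that a restriction is connected iff it contains colors from both parallel classes; the tuning is the delicate point.

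Even if only an upward-closed structure is achievable, one obtains identities $\prod_{B'\supseteq B_0}P_k(\pmb{p}(B'))=\prod_{B'\supseteq B_0}P_k(\pmb{q}(B'))$ for all $B_0$ and $k$. Inverting over the Boolean lattice from the top, via ratios, then yields $P_k(\pmb{p}(B))=P_k(\pmb{q}(B))$ for each $B$, using that all power sums are strictly positive. Starting at $B_0=\{1,\dots,D\}$ handles the top block, for instance via an MST graph restricted appropriately, or via the products of cyclic graphs above. The inductive ratio step then works downward. Finally, Lem.~\ref{lem:powersum} concludes.
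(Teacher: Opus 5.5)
\textbf{There is a genuine gap: the graphs your whole argument rests on do not exist.} Your reduction of ($\Rightarrow$) to block-wise power-sum identities plus Lem.~\ref{lem:powersum} is sound. So is your downward ratio induction, provided its input identities were available. You are also right that the cyclic graphs $C_k(B)$ only give products over the blocks cut by a bipartition. For instance, $\ket{\psi_\alpha}$ with $\alpha(\{1,2,3\})=4$ (other weights $1$), and $\ket{\psi_\alpha}$ with $\alpha(\{1,2\})=\alpha(\{1,3\})=\alpha(\{2,3\})=2$ (top weight $1$), both give $\Tr_{C_k(\{c\})}=4^{1-k}$ for all $c,k$. Yet their spectral functions differ.

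The replacement family $G_k^{(B_0)}$ cannot be built. Suppose every restriction $G\vert_{B'}$ with $B'\not\supseteq B_0$ splits into $k$ two-vertex components. Then any two colors $c,c'$ with $\{c,c'\}\not\supseteq B_0$ carry the same perfect matching.
\begin{itemize}
\item If $\abs{B_0}\geq 3$, this applies to every pair of colors, so $G$ is a disjoint union of $k$ elementary melons.
\item If $B_0=\{a,b\}$, pick a third color $c$ (available since $D\geq 3$). Then $a$ and $b$ are both parallel to $c$, hence to each other, so $G\vert_{B_0}$ is trivial as well.
\end{itemize}
Either way $G\vert_{B_0}$ cannot be connected when $k\geq 2$.

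Your concrete candidate shows this. A graph whose colors fall into two parallel classes is exactly $C_k(A)$, with $A$ the class of $c_2$. It therefore returns the cut products you had already discarded. The top-block step fails for the same reason. Products of cyclic invariants cannot isolate $\pmb{p}(\{1,\dots,D\})$, as the example above shows. An MST graph gives $\prod_B P_k(\pmb{p}(B))$ over \emph{all} blocks. So the identities $\prod_{B'\supseteq B_0}P_k(\pmb{p}(B'))=\prod_{B'\supseteq B_0}P_k(\pmb{q}(B'))$ feeding your induction are never produced.

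\textbf{The missing idea.} The paper stops asking the unwanted blocks to contribute $1$. It only asks that their contribution be \emph{multiplicative} under concatenation.
\begin{itemize}
\item Take an open graph with two half-edges of one color, i.e.\ a positive matrix ($\tilde X$, $\tilde Z_1$, $\tilde Y_1$ in Fig.~\ref{fig:proofLU_SHT}). Close $n$ copies of it cyclically into $G_n$.
\item Blocks $B\supseteq B_0$ then give components whose size grows linearly in $n$. For $\abs{B_0}>2$ this is a single component with $mn$ vertex pairs, where $m=2(\abs{B_0}-1)$.
\item Every other block gives $n$ copies of a fixed configuration, hence a factor $c_B^{\,n}$. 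This factor cancels in the normalized moment $\tr_{G_n}/(\tr_{G_1})^n=\tr(Z_1(\pmb p)^n)$.
\item What survives is $\prod_{B\supseteq B_0}P_{mn}(\pmb p(B))/P_m(\pmb p(B))^n$. The factors with $B\supsetneq B_0$ cancel by the induction hypothesis.
\item The remaining factor is the $n$-th power sum of the rescaled sequence $\bigl((C\,p_i(B_0))^m\bigr)_i$, with $C^{-m}=P_m(\pmb p(B_0))$. This sequence lies in $\Seq$ precisely because the normalized matrix has unit trace.
\item Lem.~\ref{lem:powersum} then identifies the rescaled sequences. Taking $m$-th roots and summing over $i$ gives $C_{\pmb p}=C_{\pmb q}$, hence $\pmb p(B_0)=\pmb q(B_0)$.
\end{itemize}
This normalization also resolves a second problem your plan does not address. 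Such constructions only yield power sums of orders divisible by a fixed $m$, not all $P_k$ with $k\leq r$.
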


\begin{proof}
    Since the evaluation of trace-invariants on SHT states (see Eq.~\eqref{eq:Tr_SHT}) only relies on the spectral function, the implication ($\Leftarrow$) is direct. Let us therefore focus on the implication ($\Rightarrow$). Suppose we are given $\pmb{p}, \pmb{q} \in \SpecFun_D$ such that $\ket{\psi_{\pmb{p}}} \simLU \ket{\psi_{\pmb{q}}}$. Equivalently, we have
    \begin{equation}\label{eq:proofLU_graphs-G}
        \tr_G \left( \ket{\psi_{\pmb{p}}} \right) =\tr_G \left( \ket{\psi_{\pmb{q}}} \right)
    \end{equation}
    for any graph $G \in \cG_D$. For each subset of parties $B \subset \{1, \ldots , D\}$ with $\abs{B} \geq 2$, our strategy will consist in applying the previous identity to specially designed families of graphs, which will enable us to prove that $\pmb{p}(B) = \pmb{q}(B)$.
    \begin{figure}
        \centering
        \includegraphics[width = \textwidth]{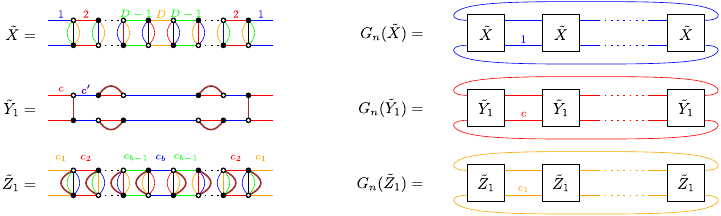}
        \caption{Matrices used in the proof of Thm.~\ref{th:LUSHT}, inspired from the matrices considered in Ref.~\cite{Carrozza:2026qcf}, see Fig.~34. Upper left corner: the open graph $\tilde{X}$, here drawn for $D=5$. Upper right corner: corresponding graph $G_n(\tilde X)$. Middle part: open graph $\tilde{Y}_1$ and its associated colored graph $G_n(\tilde Y_1)$; here, the thick line represents all colors in $\bar{\paa{c,c'}}$. Bottom left corner: open graph $\tilde{Z}_1$, with $B_0 = \paa{c_1,\dots,c_b}$, represented here for $D\geq 6$; the thick line represents the colors in $\bar B$. Bottom right corner: associated graph $G_n(\tilde Z_1)$.}
        \label{fig:proofLU_SHT}
    \end{figure}
    
    We start with the block $1_D = \paa{1,\dots,D}$. Let $\tilde{X}$ denote the open colored graph represented in the top left corner of Fig.~\ref{fig:proofLU_SHT}. For any $n \in \mathbb{N}^*$ we let $G_n(\tilde{X})$ denote the $D$-colored graph obtained by cyclically gluing $n$ copies of $\tilde{X}$, as illustrated in the top right corner of Fig.~\ref{fig:proofLU_SHT}. We can then introduce  
    \begin{equation}
        X(\pmb{p})\eqdef \frac{\tr_{\tilde{X}}\left( \ket{\psi_{\pmb{p}}} \right)}{\tr\left( \tr_{\tilde{X}}\left( \ket{\psi_{\pmb{p}}} \right) \right)} = \frac{\tr_{\tilde{X}}\left( \ket{\psi_{\pmb{p}}} \right)}{\tr_{G_1(\tilde{X})}\left( \ket{\psi_{\pmb{p}}} \right)}\,,
    \end{equation}
    which we view as a Hermitian positive operator of unit trace.\footnote{If $\ket{\psi_{\pmb{p}}} \in \H_1 \otimes \H_2 \otimes \H_3$, we can interpret $\tr_{\tilde{X}}\left( \ket{\psi_{\pmb{p}}} \right)$ as a Hermitian operator on $\H_1 \otimes \H_1^*$; after normalization, the statement follows.} For any $n \in \mathbb{N}^*$, we then have
    \begin{equation}
        \tr\left(X(\pmb{p})^n\right) = \frac{\tr_{G_n (\tilde X)}\left( \ket{\psi_{\pmb{p}}} \right)}{\tr_{G_1 (\tilde X)}\left( \ket{\psi_{\pmb{p}}} \right)^n}\,.
    \end{equation}
    We can evaluate this moment explicitly by means of Eq.~\eqref{eq:Tr_SHT}.  Observing that 1) $k(G_n(\tilde{X}))=2(D-1)n$, and 2) for any block $B \subsetneq 1_D$ (with $\abs{B}\geq 2$), $G_n(\tilde{X})|_B$ consists in the disjoint union of $n$ copies of $G_1(\tilde{X})|_B$, we are led to
    \begin{equation} \label{eq:tr_GtildeX}
        \Tr_{G_n(\tilde{X})}(\ket{\psi_{\pmb{p}}}) = \left( \sum_{i\in \mathbb{N}^*} p_i(1_D)^{2(D-1)n} \right) \cdot \pac{\prod_{B \subsetneq 1_D, \, \abs{B} \geq 2} \prod_{H \in \Comp\left( G_1(\tilde{X})\vert_B \right)} \sum_{j\in \mathbb{N}^*} p_j(B)^{k(H)}}^n \,.
    \end{equation}
    As a result, we have 
    \begin{equation}
        \tr\left(X(\pmb{p})^n\right) = \sum_{i \in \mathbb{N}^*} \pac{C_{1,\pmb{p}} p_i(1_D)}^{2(D-1)n}\,, \qquad \rm{with} \qquad C_{1,\pmb{p}}^{-2(D-1)} \eqdef \sum_{i = 1}^{r_{\pmb{p}} (1_D)} p_i(1_D)^{2(D-1)}\,.
    \end{equation}
    Similarly, we can introduce a Hermitian positive matrix $X(\pmb{q})$ of trace $1$, such that $\tr(X({\pmb{q}})^n) = \frac{\tr_{G_n (\tilde X)}\left( \ket{\psi_{\pmb{q}}} \right)}{\tr_{G_1 (\tilde X)}\left( \ket{\psi_{\pmb{q}}} \right)^n}$ for any $n \geq 1$. Applying \eqref{eq:proofLU_graphs-G} to the family of graphs $\{G_n(\tilde{X})\}_{n \geq 1}$, it follows that:
    \begin{equation}
        \sum_{i \in \mathbb{N}^*} \pac{C_{1,\pmb{p}} p_i(1_D)}^{2(D-1)n} = \tr\left(X(\pmb{p})^n\right) = \tr\left(X(\pmb{q})^n\right) = \sum_{i \in \mathbb{N}^*} \pac{C_{1,\pmb{q}} q_i(1_D)}^{2(D-1)n}\,, 
    \end{equation}
    for any $n \in \mathbb{N}^*$. Given that $X(\pmb{p})$ and $X(\pmb{q})$ are Hermitian, positive, of finite rank and of unit trace, we have $\left( \pac{C_{1,\pmb{p}} p_i(1_D)}^{2(D-1)} \right)_{i \in \mathbb{N}^*}\in \Seq$ and $\left( \pac{C_{1,\pmb{q}} q_i(1_D)}^{2(D-1)} \right)_{i \in \mathbb{N}^*}\in \Seq$. We can therefore apply Lem.~\ref{lem:powersum} to conclude that these two spectra are in fact equal. It follows that  
    \begin{equation}
        C_{1,\pmb{p}} p_i(1_D) = C_{1,\pmb{q}} q_i(1_D) 
    \end{equation}
    for any $i \in \mathbb{N}^*$. By summing this equation over $i$, we obtain in particular
    \begin{equation}
        C_{1,\pmb{p}}= \sum_{i \in \mathbb{N}^*} C_{1, \pmb{p}} p_i(1_D) =  \sum_{i \in \mathbb{N}^*} C_{1, \pmb{q}} q_i(1_D) = C_{1,\pmb{q}} \,.
    \end{equation}
    All in all, we conclude that $p_i(1_D)= q_i(1_D)$ for any $i \in \mathbb{N}^*$ or, in other words, that $\pmb{p}(1_D)= \pmb{q}(1_D)$.
    
    \
    
    For subsets $B \subset \{1,\dots,D\}$ with $2 < \abs{B} \leq D$, we proceed by downward induction on the size of $B$. The base case, \ie the equality between the entanglement spectra $\pmb{p}(1_D)$ and $\pmb{q}(1_D)$, follows directly from the previous paragraph.\\
    Let $B_0$ be a subset of $\{1,\dots,D\}$ such that $2 < \abs{B_0} < D$. By the induction hypothesis, we assume that the relation $\pmb{p}(B) = \pmb{q}(B)$ holds for all subsets $B$ satisfying $|B_0| < |B| \leq D$.
    We let $Z_1 (\pmb{p})$ be the Hermitian positive operator of unit trace defined by:
    \begin{equation}
        Z_1(\pmb{p}) \eqdef \frac{\tr_{\tilde{Z}_1}\left( \ket{\psi_{\pmb{p}}} \right)}{\tr\left( \tr_{\tilde{Z}_1}\left( \ket{\psi_{\pmb{p}}} \right) \right)} = \frac{\tr_{\tilde{Z}_1}\left( \ket{\psi_{\pmb{p}}} \right)}{\tr_{G_1(\tilde{Z}_1)}\left( \ket{\psi_{\pmb{p}}} \right)}\,,
    \end{equation}
    where the graphs $\tilde Z_1$ and $G_n (\tilde{Z}_1)$ are represented in the bottom part of Fig.~\ref{fig:proofLU_SHT}.
    A tedious but straightforward application of Eq.~\eqref{eq:Tr_SHT} shows that: for any $n \in \mathbb{N}^*$,
    \begin{align}
      \tr_{G_n (\tilde Z_1)}\left( \ket{\psi_{\pmb{p}}} \right) =& \pac{\sum_{i\in \mathbb{N}^*} p_i(B_0)^{2(\abs{B_0}-1)n}}  \cdot \prod_{B \supsetneq B_0} \sum_{j\in \mathbb{N}^*} p_j(B)^{2(\abs{B_0}-1)n} \nonumber \\ 
        &\cdot\pac{\prod_{\substack{B \subset \paa{1,\dots,D} \\ B \not\supset B_0, \, B\cap B_0 \neq \emptyset \\ \abs{B}\geq 2}} \,\, \prod_{H \in \Comp \left( G_1(\tilde Z_1)\vert_B \right) } \sum_{\ell\in \mathbb{N}^*} p_\ell(B)^{k(H)}}^n\,. 
    \end{align}
    Upon normalization, any term  appearing in the second line of the previous equation is cancelled, leading to
    \begin{equation}\label{eq:Z1p}
    \tr\left( Z_1 (\pmb p)^n \right) = \frac{\tr_{G_n (\tilde Z_1)}\left( \ket{\psi_{\pmb{p}}} \right)}{\tr_{G_1 (\tilde Z_1)}\left( \ket{\psi_{\pmb{p}}} \right)^n} =  \pac{\sum_{i \in \mathbb{N}^*} \pac{C_{3,\pmb{p}} p_i(B_0)}^{2(\abs{B_0}-1)n}} \cdot \prod_{B \supsetneq B_0}\pac{\sum_{j \in \mathbb{N}^*} \pa{ \frac{ p_j(B)^{2(\abs{B_0}-1)}}{\sum_{\ell \in \mathbb{N}^*} p_\ell (B)^{2(\abs{B_0}-1)}}}^n}\,,
    \end{equation}
    where $C_{3,\pmb{p}}^{2 - 2 \abs{B_0}} \eqdef \displaystyle \sum_{i \geq 1} p_i(B_0)^{2(\abs{B_0}-1)}$. Proceeding similarly for the spectral function $\pmb{q}$, we can invoke the identity $\tr\left( Z_1 (\pmb p)^n \right) = \tr\left( Z_1 (\pmb q)^n \right)$ together with the induction hypothesis to conclude that
    \begin{equation}
        \sum_{i \in \mathbb{N}^*} \pac{C_{3,\pmb{p}} p_i(B_0)}^{2(\abs{B_0}-1)n} = \sum_{i \in \mathbb{N}^*} \pac{C_{3,\pmb{q}} q_i(B_0)}^{2(\abs{B_0}-1)n}\,,
    \end{equation}
    for any $n \in \mathbb{N}^*$ (indeed, the last term of Eq.~\eqref{eq:Z1p} only involves blocks of size strictly greater than $\abs{B_0}$, and thus, by the induction hypothesis, it cancels out with a similar term in $\tr\left( Z_1 (\pmb q)^n \right)$). Finally, observing that $\left( \pac{C_{3,\pmb{p}} p_i(1_D)}^{2(\abs{B_0}-1)} \right)_{i \in \mathbb{N}^*}$ and $\left( \pac{C_{3,\pmb{q}} q_i(1_D)}^{2(\abs{B_0}-1)} \right)_{i \in \mathbb{N}^*}$ are elements of $\Seq$, we conclude by application of Lem.~\ref{lem:powersum} that they are equal. Arguing as in the first step of the proof (that was concerned with the block $1_D$), we conclude that $\pmb{p}(B_0) = \pmb{q}(B_0)$.   
    
    \
    
    Finally, we finish the proof with blocks of size $2$. We let $c,c' \in \paa{1,\dots,D}$ be two distinct integers, and focus on a block $\paa{c,c'}$. Introducing the graphs $\tilde Y_1$ and $G_n(\tilde{Y}_1)$ represented in the middle part of Fig.~\ref{fig:proofLU_SHT}, we let $Y_1 (\pmb{p})$ denote the Hermitian positive operator of  unit trace defined by
    \begin{equation} \label{eq:Y}
          Y_1(\pmb{p}) \eqdef \frac{\tr_{\tilde{Y}_1}\left( \ket{\psi_{\pmb{p}}} \right)}{\tr\left( \tr_{\tilde{Y}_1}\left( \ket{\psi_{\pmb{p}}} \right) \right)} = \frac{\tr_{\tilde{Y}_1}\left( \ket{\psi_{\pmb{p}}} \right)}{\tr_{G_1(\tilde{Y}_1)}\left( \ket{\psi_{\pmb{p}}} \right)}\,,
    \end{equation}
    and similarly for $\pmb{q}$. The moments of this matrix can be evaluated by means of the following relation, which is a consequence of \eqref{eq:Tr_SHT}: for any $n\geq 1$, 
    \begin{align} \label{eq:Tr_GYtilde}
        \Tr_{G_n(\tilde Y_1)}(\ket{\psi_{\pmb{p}}}) =& \pac{\sum_{k\geq1} p_k(\paa{i,j})^{3n}}^{2} \cdot \prod_{B \supsetneq \paa{i,j}} \sum_{k\geq 1} p_k(B)^{6n} \nonumber \\ 
        & \cdot \pac{\prod_{\substack{B \in \paa{1,\dots,D} \\
        i \in B, \, j \notin B \\ \abs{B} \geq 2}} \pa{\sum_{k\geq 1} p_k(B)^2} \prod_{\substack{B \in \paa{1,\dots,D} \\
        i \notin B, \, j \in B \\ \abs{B} \geq 2}} \pa{\sum_{k\geq 1} p_k(B)^6} }^n \,.
    \end{align}
    This, in turn, yields: for any $n \geq 1$
    \begin{equation} \label{eq:ij_LU}
        \Tr\left( Y_1 (\pmb{p})^n \right) = \pac{\sum_{l \geq 1} \pac{C_{2,\pmb{p}} p_l(\paa{i,j})}^{3n}}^2 \cdot \prod_{B \supsetneq \paa{i,j}} \pac{\sum_{l\geq 1} \pa{\frac{p_l(B)^{6}}{\sum_{k\geq 1} p_k(B)^{6}}}^n} \,,
    \end{equation}
    where $C_{2,\pmb{p}}^{-3} = \displaystyle \sum_{l = 1}^{r_{p}(\paa{i,j})} p_l(\paa{i,j})^{3}$, and similarly for $\pmb{p}$. Equating $\Tr\left( Y_1 (\pmb{p})^n \right)$ and $\Tr\left( Y_1 (\pmb{p})^n \right)$, the last term of \eqref{eq:ij_LU} (which only involves blocks $B$ with $\abs{B}\geq 2$) cancels out with an analogous term in $\Tr\left( Y_1 (\pmb{p})^n \right)$. It follows the power sums of $([C_{2,\pmb{p}} p_l(\paa{i,j})]^3)_{l \in \mathbb{N}^*}$ and $([C_{2,\pmb{q}} q_l(\paa{i,j})]^3)_{l \in \mathbb{N}^*}$ are identical. Since both sequences are in $\Seq$, we can apply Lem.~\ref{lem:powersum} to conclude that they are identical. Arguing as before, this finally implies $\pmb{p}(\{i,j\})= \pmb{q}(\{i,j\})$. 
\end{proof}

\begin{rem} 
    A comment on our choice of matrices $X$, $Y_1$, and $Z_1$ is in order. While many other trace-invariants could have been chosen for the preceding proof, our specific choice is motivated by their connection to the matrices used in Ref.~\cite{Carrozza:2026qcf} and App.~\ref{A:proofTechLem}  (see Fig.~\ref{fig:proof_integ}), namely $X$, $\{Y_m\}_{m\geq 1}$, and $\{Z_m\}_{m\geq 1}$. Moreover, the invariants $\tr_{G_n(\tilde{X})}$ and $\tr_{G_n(\tilde{Z}_1)}$ are closely related to so-called joint realignment moments, whereas $\tr_{G_n(\tilde{Y}_1)}$ can be obtained by insertion of melonic decorations on the graph of a realignment moment. Said differently, the family of realignment moments graphs could have been used instead of the family $\{G_n(\tilde{Y}_1)\}_{n\geq 1}$, with equal success.
\end{rem}

\subsection{Entanglement spread for spectral hypergraph-tensor states}

Let us recall that the colored graphs satisfying the property that $\kappa(G\vert_B) = 1$ for all subsets $B \subset \paa{1,\dots,D}$ with $\abs{B} \geq 2$ are known as \textit{maximally single-trace} (MST). Furthermore, the set of $D$-colored MST graphs was denoted in Sec.~\ref{sec:General} by $\cG_D^{\rm{(MST)}}$. Thanks to the following proposition, a characterization of the spectral function of an SHT state can be obtained by an entanglement spread constructed by the use of MST graphs. One could, furthermore, observe that this characterization is a natural generalization of the characterization of the entanglement spectrum of a bipartite quantum state provided by the entanglement spread $\mathsf{D}_{m,n}$. 

\begin{prop}\label{prop:def_Delta_G}
    Let $G \in \cG_D^{\mathrm{(MST)}}$ such that $k(G)>1$, and $v$ a marked vertex of $G$. For any finite-dimensional $D$-partite pure state $\ket{\psi}$, we define
    \begin{equation} \label{eq:def_spreadMST}
        \mathsf{D}_G^{\mathrm{(MST)}}(\ket{\psi}) \eqdef \begin{cases} \frac{1}{1-k(G)} \ln \abs{\Tr_G(\ket{\psi})} - \frac{1}{2} \cdot \frac{1}{1-k(G)} \ln \abs{\Tr_{G \op_{(v,\bar v)} \bar{G}}(\ket{\psi})} &\quad \mathrm{if}\; \tr_G(\ket\psi)\neq 0 \\
        +\infty &\quad \mathrm{otherwise.}
        \end{cases}
    \end{equation}
    $\mathsf{D}_G^{\mathrm{(MST)}}$ is a multipartite entanglement spread (in the sense of Def.~\ref{def:entanglement_spread}); in particular, it takes value in $[0,+\infty]$ and defines a $\LO$-monotone. 
\end{prop}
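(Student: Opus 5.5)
The plan is to identify $\mathsf{D}_G^{\mathrm{(MST)}}$ with a positive constant multiple of the vertex-contraction spread $C_G^{[v]}$ of Prop.~\ref{prop:Spread_contractionOp}. Everything else then follows from results already in hand.

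We write $k \eqdef k(G) > 1$, so that $\frac{1}{1-k} < 0$. For $\tr_G(\ket\psi)\neq 0$ we rearrange Eq.~\eqref{eq:def_spreadMST}:
\begin{equation}
\mathsf{D}_G^{\mathrm{(MST)}}(\ket\psi) = \frac{1}{2(k-1)}\left( \ln \abs{\Tr_{G \op_{(v,\bar v)} \bar G}(\ket\psi)} - 2\ln\abs{\Tr_G(\ket\psi)}\right) = \frac{1}{2(k-1)}\, R_{G\sqcup \bar G,\, G\op_{(v,\bar v)}\bar G}(\ket\psi)\,.
\end{equation}
In the second step we use $\tr_{G\sqcup\bar G}(\ket\psi)=\abs{\tr_G(\ket\psi)}^2$, as in the proof of Lem.~\ref{lem:dom_HopH}.

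When $\tr_G(\ket\psi)=0$, both sides equal $+\infty$. Indeed, in that case $\tr_{G\sqcup \bar G}(\ket\psi)=0$, so $R_{G\sqcup\bar G, G\op\bar G}(\ket\psi)=+\infty$ by Def.~\ref{def:entanglement_spread}. Hence $\mathsf{D}_G^{\mathrm{(MST)}} = \frac{1}{2(k-1)} C_G^{[v]}$ on all finite-dimensional pure states, with the convention $c\cdot(+\infty)=+\infty$ for $c>0$.

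The remaining facts come from earlier results. By Lem.~\ref{lem:dom_HopH}, $G\sqcup\bar G \dom G\op_{(v,\bar v)}\bar G$. Since the constant $\frac{1}{2(k-1)}$ is positive and nonzero, $\mathsf{D}_G^{\mathrm{(MST)}}$ is an entanglement spread in the sense of Def.~\ref{def:entanglement_spread}. Its values lie in $[0,+\infty]$, and Prop.~\ref{prop:LO_monotones_more} (equivalently Prop.~\ref{prop:Spread_contractionOp}) makes it an $\LO$-monotone, because multiplying by a positive constant preserves monotonicity.

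The only step needing care is the sign bookkeeping, specifically checking that the prefactor $\frac{1}{1-k}$ flips the difference into ``dominating minus dominated''. A second small point is that the absolute value on $\Tr_{G\op\bar G}$ is harmless, since that invariant is positive by Lem.~\ref{lem:dom_HopH}. The hypotheses $k(G)>1$ and $G$ being MST are used only to make the prefactor finite and positive; MST-ness plays no role in this proposition itself.
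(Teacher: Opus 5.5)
Your proposal is correct and is essentially the paper's own proof. The paper also writes $\mathsf{D}_G^{\mathrm{(MST)}} = \frac{1}{2(k(G)-1)}\, C_G^{[v]} = \frac{1}{2(k(G)-1)}\, R_{G\sqcup\bar G,\, G\circ_{(v,\bar v)}\bar G}$ and reads off positivity and $\LO$-monotonicity from Prop.~\ref{prop:Spread_contractionOp}; your checks of the prefactor's sign and of the $+\infty$ case just spell out what the paper leaves implicit.
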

\begin{proof}
    The entanglement spread $\mathsf{D}_G^{\mathrm{(MST)}}$ can be written in a form that makes its positivity and $\LO$-monotonicity on pure states explicit. Indeed, one observes that
    \begin{equation}
        \mathsf{D}_G^{\mathrm{(MST)}} = \frac{1}{2} \frac{1}{k(G) - 1} C_G^{[v]} = \frac{1}{2} \frac{1}{k(G) - 1} R_{G \sqcup \bar{G}, G \op_{(v,\bar v)} \bar{G}}\,,
    \end{equation}
    with $C_G^{[v]}$ a $\LO$-monotone taking value in $[0,+\infty]$ thanks to Prop.~\ref{prop:Spread_contractionOp}. 
\end{proof}    

The next proposition shows that, among SHT states, HT states are exactly those that minimize the entanglement spread $\mathsf{D}_G^{\mathrm{(MST)}}$ associated with \emph{any given} (non-trivial) MST graph. Furthermore, the latter characterization is equivalent to the fact that $\mathsf{D}_H^{\mathrm{(MST)}}$ is minimized for \emph{every} (non-trivial) MST graph $H$, a condition we can express directly in terms of the evaluation of the invariant $\tr_H$.
\begin{prop} \label{prop:ES_SHT}
    Let $D \geq 3$, $\pmb{p}\in \SpecFun_D$, and $G\in \cG_D^{\mathrm{(MST)}}$ such that $k(G)>1$. The following properties are equivalent:
    \begin{enumerate}
        \item $\mathsf{D}_G^{\mathrm{(MST)}}\pa{\ket{\psi_{\pmb{p}}}} = 0$;
        \item $\ket{\psi_{\pmb{p}}}$ is a HT state;
        \item there exists $N \in \mathbb{N}^*$ such that: $\quad \forall H \in \cG_D^{\mathrm{(MST)}}\,, \; \tr_{H}\left( \ket{\psi_{\pmb{p}}} \right)= N^{1- k(H)}$\,.
    \end{enumerate}
In particular, $\mathsf{D}_G^{\mathrm{(MST)}}$ is minimized by an SHT state if and only if the  entanglement spectrum of the latter is flat on each block of the hypergraph.
\end{prop}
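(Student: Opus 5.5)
The plan is to prove the cycle of implications $2 \Rightarrow 3 \Rightarrow 1 \Rightarrow 2$, with all the work done by the closed formula \eqref{eq:Tr_SHT} for trace-invariants on SHT states.

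\emph{Step 1: $2 \Rightarrow 3$.} If $\ket{\psi_{\pmb{p}}}$ is HT with weight function $\alpha$, then for any $H \in \cG_D^{\mathrm{(MST)}}$ every restriction $H\vert_B$ is connected with $k(H)$ vertex pairs. So $\tr_H(\ket{\psi_{\pmb p}}) = \prod_B \alpha(B)^{1-k(H)} = N^{1-k(H)}$ with $N \eqdef \prod_{B} \alpha(B)$.

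\emph{Step 2: $3 \Rightarrow 1$.} The vertex-contraction $G \op_{(v,\bar v)} \bar G$ of two MST graphs is again MST; this follows from $\kappa(G_1 \op G_2 \vert_B) = \kappa(G_1\vert_B)+\kappa(G_2\vert_B)-1 = 1$. It has $2k(G)-1$ vertex pairs. Condition 3 then gives $\tr_{G\op\bar G} = N^{2-2k(G)} = (\tr_G)^2$, and plugging into \eqref{eq:def_spreadMST} yields $\mathsf{D}_G^{\mathrm{(MST)}}=0$. Alternatively, $2 \Rightarrow 1$ follows directly from Prop.~\ref{prop:vertex_spread_HT}.

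\emph{Step 3: $1 \Rightarrow 2$, the main step.} Set $k = k(G) \geq 2$ and, for each block $B$, $P_m(B) \eqdef \sum_i p_i(B)^m$. Since $G$ and $G\op\bar G$ are MST, \eqref{eq:Tr_SHT} gives
\begin{equation}
\tr_G(\ket{\psi_{\pmb p}}) = \prod_B P_k(B)\,, \qquad \tr_{G\op\bar G}(\ket{\psi_{\pmb p}}) = \prod_B P_{2k-1}(B)\,.
\end{equation}
Both quantities are positive, so the vanishing of the spread reads $\prod_B P_k(B)^2 = \prod_B P_{2k-1}(B)$.

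Blockwise, Cauchy--Schwarz applied to the vectors $(p_i^{k-1/2})_i$ and $(p_i^{1/2})_i$, using $\sum_i p_i = 1$, gives $P_k(B)^2 \leq P_{2k-1}(B)\,P_1(B) = P_{2k-1}(B)$. This is exactly Ex.~\ref{ex:toyEx-bipartite}, i.e.\ $\mathsf{D}_{k,2k-1}\geq 0$ on each block. Since every factor satisfies the inequality and the product is an equality, with all factors positive, each blockwise inequality must be an equality.

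Equality in Cauchy--Schwarz forces $p_i^{k-1/2} \propto p_i^{1/2}$ on the support, i.e.\ $p_i^{k-1}$ is constant over the nonzero entries. Because $k \geq 2$, this means $\pmb p(B)$ is flat. Equivalently, one can invoke Prop.~\ref{prop:flat_spectrum_2d} with $(m,n) = (k,2k-1)$, noting that $2 \leq k < 2k-1$. Hence $p_i(B) = 1/r_{\pmb p}(B)$ for every block, so $\ket{\psi_{\pmb p}} = \ket{\psi_\alpha}$ with $\alpha = r_{\pmb p}$, which is HT.

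The final sentence of the statement is then a restatement of $1 \Leftrightarrow 2$, since the spread takes values in $[0,+\infty]$, so its minimum is $0$. The only genuine obstacle is checking that MST is preserved by the mirror vertex-contraction, and this is immediate from the $\kappa$ formula.
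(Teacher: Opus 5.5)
Your proof is correct. For $2\Rightarrow 3\Rightarrow 1$ it coincides with the paper's. You are in fact more explicit than the paper in checking that $G\op_{(v,\bar v)}\bar G$ is again maximally single-trace, which is what allows condition 3 to be applied to it.

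For the main step $1\Rightarrow 2$ you take a different route.

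\emph{The paper's route.} It expands $\prod_B \sum_i p_i(B)^{k}$ into a single sum over multi-indices $i=(i_B)_B$. It collects the products $P_i=\prod_B p_{i_B}(B)$ into one aggregated spectrum $\pmb{P}\in\Seq$ and observes that $\mathsf{D}_G^{\mathrm{(MST)}}(\ket{\psi_{\pmb p}})=\mathsf{D}_{k,2k-1}(\ket{\psi_{\pmb P}})$ for the bipartite state with spectrum $\pmb P$. It then invokes Prop.~\ref{prop:flat_spectrum_2d} to conclude that $\pmb P$ is flat. Finally it needs a disaggregation argument to recover flatness of each $\pmb p(B)$: it compares $P_i$ for multi-indices that differ from $(1,\dots,1)$ in a single slot.

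\emph{Your route.} You apply the Cauchy--Schwarz bound $P_k(B)^2\le P_{2k-1}(B)$ block by block. You then use the elementary fact that if positive numbers satisfy $a_B\le b_B$ for all $B$ and $\prod_B a_B=\prod_B b_B$, then $a_B=b_B$ for every $B$. Flatness of each $\pmb p(B)$ follows directly from the equality case of Cauchy--Schwarz, since $k\ge 2$.

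\emph{Comparison.} Your version is shorter and avoids the disaggregation step. It is also fully self-contained: it does not rely on Prop.~\ref{prop:flat_spectrum_2d}, which the paper quotes as well known and rederives only for $(m,n)=(2,3)$. The paper's version buys a conceptual statement: as far as maximally single-trace invariants are concerned, an SHT state behaves exactly like a bipartite state whose spectrum is the product of all block spectra. On SHT states, $\mathsf{D}_G^{\mathrm{(MST)}}$ therefore literally reduces to a bipartite entanglement spread.
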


\begin{proof}
    Let $\pmb{p}\in \SpecFun_D$. We will show that $1 \Rightarrow 2 \Rightarrow 3 \Rightarrow 1$. The main difficulty of the proof consists in establishing the first of these implications.
    
    \medskip
    
    We first assume that $\mathsf{D}_G^{\mathrm{(MST)}}\pa{\ket{\psi_{\pmb{p}}}} = 0$. Let us prove that the second property holds. The graph $G$ being maximally single-trace, a short computation gives: 
    \begin{equation} \label{eq:Tr_G_MST}
        \Tr_G\pa{\ket{\psi_{\pmb{p}}}} = \prod_{\substack{B \in \paa{1,\dots,D}\\ \abs{B} \geq 2}} \,\sum_{i \in \mathbb{N}^*} p_i(B)^{k(G)}\,.
    \end{equation}
    For short-hand, let us denote by $\pos_D \eqdef \{B \subset \{1, \ldots , D\}\,\vert \, \abs{B}\geq 2\}$ the set of blocks $B$ appearing in this equation, which has cardinal $\abs{\pos_D}=2^D-D-1$. We can exchange the sum and product appearing in \eqref{eq:Tr_G_MST} to obtain a sum indexed by multi-indices (or maps) $i=(i_B)_{B \in \pos_D}\in (\mathbb{N}^*)^{\pos_D}$, namely:
    \begin{equation}
        \Tr_G\pa{\ket{\psi_{\pmb{p}}}} = \sum_{i \in (\mathbb{N}^*)^{\pos_D}} \prod_{B \in \pos_D} p_{i_B}(B)^{k(G)}\,.
    \end{equation}
    Defining 
    \begin{equation}
        P_i \eqdef \prod_{B \in \pos_D} p_{i_B}(B) 
    \end{equation}
    for any $i \in (\mathbb{N}^*)^{\pos_D}$, we note that: (a) $P_i \geq 0$, (b) $P_i > 0$ for finitely-many values of the multi-index $i$, and (c) the sequence $(P_i)_{i \in (\mathbb{N}^*)^{\pos_D}}$ is normalized in the sense that
    \begin{equation}
        \sum_{i \in (\mathbb{N}^*)^{\pos_D}} P_i = \prod_{B \in \pos_D} \sum_{i\in \mathbb{N}^*} p_i(B)= 1\,.
    \end{equation}
    As a result, there exists a one-to-one map between $(\mathbb{N}^*)^{\pos_D}$ and $\mathbb{N}^*$ which allows us to identify  $(P_i)_{i \in (\mathbb{N}^*)^{\pos_D}}$ to some \emph{spectrum} $\pmb{P}= (P_i)_{i\in \mathbb{N}^*}\in \Seq$. Let $\ket{\psi_{\pmb{P}}}$ denote the \emph{bipartite} HT state with entanglement spectrum $\pmb{P}$. We then note that 
    \begin{align}
            0 = \mathsf{D}_G^{(\rm{MST})} (\ket{\psi_{\pmb{p}}})&= \frac{1}{1 - k(G)} \ln \abs{\Tr_G(\ket{\psi_{\pmb{p}}})} - \frac{1}{2 - 2k(G)} \ln \abs{\Tr_{G \op_{(v , \bar v)} \bar G}(\ket{\psi_{\pmb{p}}})} \,, \label{eq:2}\\
             &= \frac{1}{1 - k(G)} \ln \left(\sum_{i\in \mathbb{N}^*} (P_i)^{k(G)}\right) - \frac{1}{2 - 2k(G)} \ln\left( \sum_{i\in \mathbb{N}^*} (P_i)^{2k(G)-1}\right) \,, \\
             &= \mathsf{D}_{k(G),2k(G)-1}\pa{\ket{\psi_{\pmb{P}}}} = \mathsf{D}_{m,n}\pa{\ket{\psi_{\pmb{P}}}} \,,
        \end{align}
    where we have used the fact that $k(G \op_{(v,\bar v)} \bar{G}) = 2 k(G) - 1$ and defined $m\eqdef k(G)$, $n\eqdef 2 k(G) - 1$. Since we assumed $k(G)>1$, we have $m>n$ and thus $\mathsf{D}_{m,n}$ defines a bipartite entanglement spread. As a result, it follows from $\mathsf{D}_{m,n}(\ket{\psi_{\pmb{P}}})=0$ that $\ket{\psi_{\pmb{P}}}$ has a flat entanglement spectrum $\pmb{P}$. 
    
    To conclude, let us prove that this last property in turn implies that $\pmb{p}(B)$ is flat for any $B\in \pos_D$. We first notice that, $\pmb{P}$ being flat, there exists some constant $\rho>0$ such that: for any $i \in (\mathbb{N}^*)^{\pos_D}$, either $P_i=\rho$ or $P_i = 0$. Moreover, denoting by $1\eqdef (1)_{B\in \pos_D}$ the constant multi-index (or map) equal to $1$, it is clear that $P_{1}\neq 0$ and thus $P_{1}=\rho$. Let us fix $B \in \pos_D$ and prove that, likewise, there exists $\rho_B>0$ such that: for any $j \in \mathbb{N}^*$, $p_{j}(B)= \rho_B$ or $p_{j}(B)=0$. To see this, let us fix $j\in \mathbb{N}^*$ and introduce the multi-index $i=(i_C)_{C\in \pos_D}$ such that: $i_B =j$, and $i_C =1$ for any $C\in \pos_D\setminus\{ B\}$. If $P_i = \rho$, this equation can be explicitly written as $p_j(B)\prod_{C\neq B}p_1(C)=p_1(B)\prod_{C\neq B}p_1(C)$, which leads to $p_j(B)=p_1(B)$. On the other hand, if $P_i = 0$, then $p_j(B)\prod_{C\neq B}p_1(C)=0$, which implies $p_j(B)=0$. As a result, $\pmb{p}(B)$ defines a flat entanglement spectrum, with non-zero eigenvalue $\rho_B \eqdef p_1(B)$. 

    This concludes the proof that $1\Rightarrow 2$.

    \medskip
    
    Next, let us prove that the second property implies the third one. If $\ket{\psi_{\pmb{p}}}$ is a HT state, we can find a weight function $\alpha$ such that $\ket{\psi_{\pmb{p}}}\simLU \ket{\psi_\alpha}$. Introducing 
     \begin{equation}
         N\eqdef \prod_{B \in \paa{1,\dots,D}\,, \, \abs{B} \geq 2} \alpha(B)\in \mathbb{N}^*\,,
     \end{equation}
     we then find that
        \begin{equation}
            \forall H \in \cG_D^{\rm{(MST)}}\,, \quad \Tr_H(\ket{\psi_{\pmb{p}}}) = \Tr_H(\ket{\psi_\alpha}) = \prod_{B \in \paa{1,\dots,D}\,, \, \abs{B} \geq 2} \alpha(B)^{1 - k(H)} = N^{1-k(H)}\,.
        \end{equation}
    
    \medskip
    
    Finally, the third property implies the first one. Indeed, suppose there exists $N \in \bb{N}^*$ such that, for any MST graph $H$, $\Tr_H(\ket{\psi_{\pmb{p}}}) = N^{1-k(H)}$. Then, since $G$ is itself a MST graph, one has:
        \begin{equation}
            \mathsf{D}_G^{\rm{(MST)}}(\ket{\psi_{\pmb{p}}}) = \frac{1}{1-k(G)}\ln N^{1-k(G)} - \frac{1}{2 - 2k(G)} \ln N^{1-(2k(G)-1)} = 0\,.
        \end{equation}
    
    \medskip
    
    This concludes the proof.
\end{proof}

\section{Outlook on random tensor networks}

In this paper, the random distribution of $D$-partite pure states we focused on when evaluating expectation values of multipartite entanglement spreads of the form $\langle R_{G, H}\rangle$ was the Haar-random state $\ket{\varphi_N}$ on $\left(\mathbb{C}^N\right)^{\otimes D}$. Going beyond this first natural example, it would be interesting to compute these expectation values for arbitrary \textit{random tensor networks}. We recall that a random tensor network is a random quantum state associated with a (fixed) graph or network, which is obtained by assigning independent random tensors (such as Haar or Gaussian states) with \textit{bond dimension $N$} to vertices and contracting their indices according to the geometry of the network. They are particularly relevant in the context of holographic dualities, quantum error-correcting codes, \lat{etc.} (see \eg Refs.~\cite{Akers:2024pgq,Akers:2022zxr,Akers:2021pvd,Hayden:2016cfa,Cheng:2022ori,Pastawski:2015qua,Hu:2026bhg}). Hence, studying random tensor networks could pave the way for constructing entanglement spreads whose expectation values are related to geometric constraints of the network, thereby leading to a characterization of the network itself. \\
As a concrete example, we present a tripartite entanglement spread constructed from the realignment moments and the even moments of the partial transpose. Let $n \in \bb{N}^*$ be an integer and $c\in \{1,2,3\}$, and consider the following Cauchy-Schwarz inequality (remember that, in these graphical equations, our convention is that $\ket\psi = \vcenter{\hbox{\includegraphics[scale=.5]{images/psi.pdf}}}$):
\begin{equation}
    \abs{\Tr_{\PT_{2n}^{(c)}}(\ket{\psi})}^2 = \abs{\vcenter{\hbox{\includegraphics[scale=1]{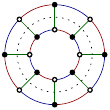}}}}^2 \leq \vcenter{\hbox{\includegraphics[scale=1]{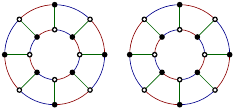}}} = \pa{\Tr_{\RM_{2n}^{(c)}}(\ket{\psi})}^2\,.
\end{equation}
So, we have that $\PT_{2n}^{(c)}$ is dominated by $\RM_{2n}^{(c)}$. These trace-invariants have been extensively studied in the context of random tensor networks, particularly for $n=2$ (see Refs.~\cite{Dong:2021clv,Penington:2022dhr}). Indeed, consider a tensor network whose boundary vertices are divided into three groups, one for each subsystem. The relevant geometric quantities for computing the expectation values are the so-called \textit{min-cuts} $\gamma_1$, $\gamma_2$ and $\gamma_3$ and the \textit{minimal tripartition} $\tau_{123}$ that we divide in three branches $(t_{12},t_{13},t_{23})$, \ie $t_{12}$ (resp.~$t_{13}$ and $t_{23}$) is the number of cuts in the minimal tripartition $\tau_{123}$ between the parts $1$ and $2$ (resp.~parts $1$ and $3$ and parts $2$ and $3$). A generic tripartite tensor network is presented on the left-hand side of Fig.~\ref{fig:RTN_trip}.

\begin{figure}
    \centering
    \includegraphics[width = .6\textwidth]{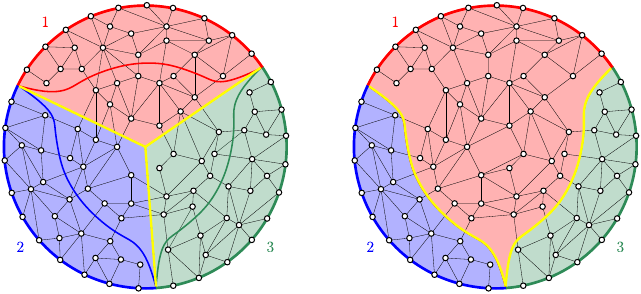}
    \caption{Left: example of a tensor network in which the min-cuts are represented by the blue, red, and green thick lines, while the minimal tripartition is represented by the yellow thick line. We count $\gamma_1 = 9$, $\gamma_2 = 7$, $\gamma_3 = 8$, $t_{12} = 5$, $t_{13} = 5$, $t_{23} = 4$ and $\tau_{123} = t_{12} + t_{13} + t_{23} = 14$. Right: Configuration of tensor networks for which $\mean{R_{\PT_4,\RM_4}^{(\rm{sym})}(\ket{\Psi_N})} \underset{N \to \infty}{=} O(1)$.}
    \label{fig:RTN_trip}
\end{figure}

Let $(\ket{\Psi_N})_{N \in \bb{N}^*}$ be a sequence of tripartite random tensor network states with bond dimension $N$ associated with the same fixed network, as well as the same tripartite subdivision of the boundary vertices of the network. It was shown in Refs.~\cite{Penington:2022dhr,Dong:2021clv} that both $\Tr_{\RM_4}$ and $\Tr_{\PT_4^{(c)}}$ obey the large-$N$ factorization criterion in the random pure state $\ket{\Psi_N}$ ,\footnote{The reader can consult Sec.~6.3.1 of \cite{Carrozza:2026qcf} for a precise definition of this criterion in this context, as well as a discussion of its implications.} and as a result, the distribution of the random variable $\ln \Tr_{\RM_4} \left( \ket{\Psi_N} \right)$ (resp.~$\ln \Tr_{\PT_4^{(c)}} \left( \ket{\Psi_N} \right)$) concentrate around a typical value in the asymptotic regime. Furthermore, it was recently shown in \cite{Carrozza:2026qcf} that those trace-invariants also follow a power-law, and thus, the expectation value $\mean{\ln \Tr_{\RM_4} \left( \ket{\Psi_N} \right)}$ (resp.~$\langle\ln \Tr_{\PT_4^{(c)}} \left( \ket{\Psi_N} \right)\rangle$) is asymptotically equal to the previously computed typical value (\ie no large deviations contribute significantly to the mean). Concretely, one finds in the large bond dimension limit that
\begin{equation}    
    \mean{\ln\Tr_{\RM_4}(\ket{\Psi_N})} \underset{N \to \infty}{=} - 2 \tau_{123} \ln (N) + O(1) 
\end{equation} 
and
\begin{equation}    
    \mean{\ln\Tr_{\PT_4^{(c)}}(\ket{\Psi_N})} \underset{N \to \infty}{=} - \pa{2\gamma_c + \sum_{i \neq c} \gamma_i} \ln (N) + O(1) \,.
\end{equation} 
Hence, one can define the ``symmetric'' multipartite entanglement spread 
\begin{equation}
    R_{\PT_4,\RM_4}^{(\rm{sym})} \eqdef \frac{1}{6} \cdot \sum_{c=1}^3 R_{\PT_4^{(c)},\RM_4}\,,
\end{equation}
which, in the large bond dimension limit, concentrates around the following expectation value:
\begin{equation}\label{eq:R_PT_RM_RTN}
    \mean{R_{\PT_4,\RM_4}^{(\rm{sym})}(\ket{\Psi_N})} \underset{N \to \infty}{=} \pa{ \frac{2}{3}(\gamma_1 + \gamma_2 + \gamma_3) - \tau_{123}} \ln (N) + O(1) \,.
\end{equation}

\begin{rem} \label{rem:RTN}
    Albeit not straightforward when expressed in this form, the leading order of Eq.~\eqref{eq:R_PT_RM_RTN} is non-negative (as required, since $R_{\PT_4,\RM_4}^{(\rm{sym})}$ is positive by construction). Indeed, the minimal tripartition being inside the region defined by the min-cuts (if not, the $\gamma_c$'s would not be minimal, thus they would not be min-cuts), we have, for all triplets of distinct integers $i,j,k \in \paa{1,2,3}$, the following inequalities:
    \begin{equation}
        \gamma_i + \gamma_j \geq \tau_{123} \,, \quad \rm{ and } \quad \gamma_i \leq t_{ij} + t_{ik} \,.
    \end{equation}
    Combining these inequalities leads to the bounds:
    \begin{equation} \label{eq:bound_RTN}
        \frac{1}{2} \pa{\gamma_1 + \gamma_2 + \gamma_3} \leq \tau_{123} \leq \frac{2}{3} \pa{\gamma_1 + \gamma_2 + \gamma_3} \,. 
    \end{equation}
    Thus proving the non-negativity of the dominant term of Eq.~\eqref{eq:R_PT_RM_RTN}.
\end{rem}

$R_{\PT_4,\RM_4}^{(\rm{sym})}$ being a sum of entanglement spreads, which are all individually positive, if $\langle R_{\PT_4,\RM_4}^{(\rm{sym})}(\ket{\Psi_N})\rangle \underset{N \to \infty}{=} O(1)$ then $\langle R_{\PT_4^{(c)},\RM_4}(\ket{\Psi_N})\rangle \underset{N \to \infty}{=} O(1)$ for any $c \in \paa{1,2,3}$. This implies, in particular, that:
\begin{equation}
    \gamma \eqdef \gamma_1 = \gamma_2 = \gamma_3 \qquad \rm{and} \qquad \tau_{123} = 2 \gamma \,.
\end{equation}
Said differently, the minimal tripartition glues to either one of three pairs of two min-cuts $(\gamma_1,\gamma_2)$, $(\gamma_1,\gamma_3)$ and $(\gamma_2, \gamma_3)$ as illustrated on the right-hand side of Fig.~\ref{fig:RTN_trip}.

\begin{rem}
    The condition $\langle R_{\PT_4^{(3)},\RM_4}(\ket{\Psi_N})\rangle \underset{N \to \infty}{=} O(1)$ partially constrains the geometry of the network (and thus, cannot be achieved on an arbitrary network). Indeed, this condition implies that: 
    \begin{equation}
        (\gamma_1 + \gamma_3) + (\gamma_2 + \gamma_3) = 2 \tau_{123}\,.
    \end{equation}
    However, from Rem.~\ref{rem:RTN}, we have $\gamma_1 + \gamma_3 \geq \tau_{123}$ and $\gamma_2 + \gamma_3 \geq \tau_{123}$, thus leading to $\gamma_1 + \gamma_3 = \gamma_2 + \gamma_3 = \tau_{123}$. The latter relation implies that the minimal tripartition glues to either one of the min-cuts $(\gamma_1,\gamma_3)$ and $(\gamma_2,\gamma_3)$. Said differently, we have 
    \begin{equation}
        \tau_{123} = t_{12} + t_{13} + t_{23} = \gamma_1 + 0 + \gamma_3 = \gamma_2 + \gamma_3 + 0  \,.
    \end{equation}
    Hence, since $\gamma_1 = \gamma_2 = \tau_{123} - \gamma_3$, we define $\gamma \eqdef \gamma_1 = \gamma_2$. Furthermore, the upper bound on $\tau_{123}$ presented in Eq.~\eqref{eq:bound_RTN} can be rewritten as: $\tau_{123} \leq 2 \gamma$. \\
    Assume, for definiteness, that the minimal tripartition is gluing to $(\gamma_2,\gamma_3)$, \ie $t_{12} = \gamma$, $t_{13} = \gamma_3$ and $t_{23} = 0$. Then, the inequalities relating the min-cuts and the branches of the minimal tripartition -- presented in Rem.~\ref{rem:RTN} -- lead to:
    \begin{equation}
        \gamma \leq \tau_{123} \leq 2 \gamma\,,
    \end{equation}
    where the upper bound was obtained earlier. In other words, $0 \leq \gamma_3 \leq \gamma$.
\end{rem}

\paragraph{Statement on AI usage.} This paper was written by humans for a human readership; no LLM was involved at any stage of this work.

\paragraph{Acknowledgements.} This work was supported by the  ANR JCJC project ``RTFPQuEnt" (ANR-25-CE40-5465). The IMB receives support from the EIPHI Graduate School (contract ANR-17-EURE-0002). L.L. also acknowledges support  from the  ANR PRC project  ``TAGADA" (ANR-25-CE40-5672).

\appendix

\section{Proof of Lem.~\ref{lem:alpha_integer_valued}} \label{A:proofTechLem}

We begin by restating the lemma.

\begin{lem}
    Let $\ket\psi$ a finite-dimensional $D$-partite pure state. Suppose there exists a function $\alpha : \{B \subset\{1,\dots,D\}\,\vert \, \abs{B}\geq 2\}\to \mathbb{R}_+^*$ such that:
    \begin{equation}
        \forall G \in \cG_D^{\conn}\,, \quad \tr_G(\ket{\psi})= \prod_{B \subset\{1,\dots,D\}\,, \,\abs{B}\geq 2} \alpha(B)^{\kappa(G\vert_B) - k(G)}\,. 
    \end{equation}
    Then, the function $\alpha$ necessarily takes value in $\mathbb{N}^*$, and identifying it to a weight function $\alpha  : \{B \subset\{1,\dots,D\}\,\vert \, \abs{B}\geq 2\}\to\mathbb{N}^*$, we have:
    \begin{equation}
        \ket\psi \simLU \ket{\psi_\alpha}\,.
    \end{equation}
\end{lem}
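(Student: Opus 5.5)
The plan is to show that $\alpha(B)$ is a positive integer for every block $B$. We do this by exhibiting, for each $B$, a positive Hermitian operator built from $\ket\psi$ whose nonzero spectrum is forced to be flat with eigenvalue $1/\alpha(B)$. The $\LU$-equivalence then follows from the fact that trace-invariants separate $\LU$-orbits (Prop.~2.8 of \cite{Carrozza:2026qcf}). Indeed, once $\alpha\in W_D$, the HT state $\ket{\psi_\alpha}$ satisfies $\tr_G(\ket{\psi_\alpha})=\prod_B \alpha(B)^{\kappa(G\vert_B)-k(G)}$ for every connected $G$, the same formula as for $\ket\psi$. Multiplicativity under disjoint unions then gives equality of all trace-invariants.

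The integrality step mimics the proof of Thm.~\ref{th:LUSHT}, using the open graphs $\tilde X$, $\tilde Z_m$, $\tilde Y_m$ (Fig.~\ref{fig:proof_integ}, as in Thm.~5.3 of \cite{Carrozza:2026qcf}). For an open graph $\tilde A$ with two half-edges of a color $c$, $\tr_{\tilde A}(\ket\psi)$ is a positive operator on $\H_c$ when $\tilde A$ has a mirror symmetry, as the chosen ones do. Its moments are $\tr_{G_n(\tilde A)}(\ket\psi)$, where $G_n(\tilde A)$ is the cyclic gluing of $n$ copies. The hypothesis gives these moments in closed form. Cyclic gluing adds $n$ copies of each vertex, and for each block $B$ the components of $G_n(\tilde A)\vert_B$ are either disjoint copies (contributing $n$ times a constant) or a single component threading all copies. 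Hence
\begin{equation}
\tr\!\left(M_A^n\right)=\prod_B \alpha(B)^{\epsilon_B(A)-n\,\ell_B(A)},
\end{equation}
where $M_A$ is the normalized operator, $\ell_B(A)\geq 0$ are integers, and $\epsilon_B(A)\in\{0,1\}$ records whether the $B$-restriction merges into one component. By the resolvent argument of Sec.~\ref{sec:bipartite_charac}, a positive operator whose moments are $\beta\,\mu^n$ has nonzero spectrum $\{\mu\}$ with multiplicity $\beta$, so $\beta\in\mathbb{N}^*$. We therefore get $\prod_{B:\epsilon_B(A)=1}\alpha(B)\in\mathbb{N}^*$.

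Choosing $\tilde X$ isolates $\alpha(1_D)$ (only the full block merges), giving $\alpha(1_D)\in\mathbb{N}^*$. Next, proceed by downward induction on $|B_0|$. The graph $\tilde Z_1$ adapted to $B_0$ merges exactly the blocks $B\supseteq B_0$, so it yields $\prod_{B\supseteq B_0}\alpha(B)\in\mathbb{N}^*$. This alone does not give integrality of $\alpha(B_0)$, since a quotient of integers need not be an integer. This is the main obstacle, and it is why the families $\{Z_m\}_m$, $\{Y_m\}_m$ with an extra parameter $m$ are used.

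The fix is to take the $m$-th member, for which the merged blocks $B\supsetneq B_0$ appear with exponent $m$ while $B_0$ appears with exponent one. We then conclude $\alpha(B_0)\prod_{B\supsetneq B_0}\alpha(B)^m\in\mathbb{N}^*$ for all $m$. With $K=\prod_{B\supsetneq B_0}\alpha(B)$, which is a positive integer by induction, we get $\alpha(B_0)K^m\in\mathbb{N}$ for all $m\geq1$. A separate relation gives $\alpha(B_0)$ rational; one option is the $Z$-type operator that merges $B_0$ with multiplicity one and no supersets. Writing $\alpha(B_0)=a/b$ in lowest terms, we get $b\mid K^m$ for all $m$, which alone does not force $b=1$.

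So the refinement I would try first is to design the $m$-family so that the multiplicity is $\alpha(B_0)^{m}K$, that is, $B_0$ is threaded $m$ times and the supersets once. Then $(a/b)^mK\in\mathbb{N}$ for all $m$ forces $b^m\mid K$ for all $m$, hence $b=1$. The blocks of size $2$ are handled identically with $\tilde Y_m$. I expect the combinatorial verification of the component counts $\epsilon_B,\ell_B$ for these open graphs to be the delicate part. Everything else is the resolvent argument already in the paper.
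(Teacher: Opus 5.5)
Your treatment of $1_D$, of the blocks with $2<\abs{B_0}<D$ in your refined form, and of the final $\LU$-equivalence agrees with the paper. In that refined form you thread $B_0$ exactly $m$ times and each strict superset once, so that $\alpha(B_0)^mK\in\mathbb{N}^*$ for all $m$. Two small corrections are needed there.

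\begin{itemize}
\item Rationality of $\alpha(B_0)$ is automatic, since $\alpha(B_0)=\alpha(B_0)^2K/(\alpha(B_0)K)$. The ``separate relation'' you propose, which threads $B_0$ but no superset, cannot exist: a winding cycle of $G_1\vert_{B_0}$ is also a winding cycle of $G_1\vert_B$ for every $B\supset B_0$.
\item With a single pair of half-edges of color $c$, \emph{every} block containing $c$ threads, because a $\abs{B}$-regular bipartite graph with $\abs{B}\geq 2$ has no bridge. So $\tilde X$ and $\tilde Z_m$ must carry several pairs of half-edges, and $\epsilon_B\in\{0,1\}$ has to be replaced by a threading number in $\mathbb{N}$.
\end{itemize}

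The genuine gap is the sentence ``the blocks of size $2$ are handled identically with $\tilde Y_m$''. In the paper, $G_n(\tilde Y_m)\vert_{\{i,j\}}$ has $2m$ threading components. The argument therefore only gives $\alpha(\{i,j\})^{2m}b(\{i,j\})\in\mathbb{N}^*$, i.e.\ $\alpha(\{i,j\})^2\in\mathbb{N}^*$, which does not exclude e.g.\ $\alpha(\{i,j\})=\sqrt2$.

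This factor $2$ cannot be designed away in any cyclic gluing. Pick $k\notin\{i,j\}$, and write $G_1=G_1(\tilde A)$.
\begin{itemize}
\item Each component of $G_1\vert_{\{i,j,k\}}$ is an orientable surface whose coherently oriented face boundaries sum to zero as a $1$-chain.
\item All $\{i,j\}$-faces traverse their color-$i$ edges in the same direction between black and white vertices.
\item The $\{i,k\}$- and $\{j,k\}$-faces must not wind, since their weights are not yet known to be integers.
\end{itemize}
Hence the windings of the $\{i,j\}$-faces sum to zero. Each threading face must wind exactly $\pm1$ for the moments to be of the form $\beta\mu^n$, so the number of threading $\{i,j\}$-faces is even.

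You therefore need an independent proof that $\alpha(\{i,j\})\in\mathbb{Q}$, and your plan does not provide one. The paper obtains it from the partial transpose. The hypothesis gives $\tr\big((\rho_{ij}^{T_j})^k\big)=N^{1-k}$ for odd $k$ and $\alpha(\{i,j\})N^{1-k}$ for even $k$, with $N$ an explicit product of weights. The resolvent then shows that the non-zero spectrum lies in $\{\pm 1/N\}$, with multiplicities $(\alpha(\{i,j\})\pm1)N/2$. So $N$ and $\alpha(\{i,j\})N$ are integers and $\alpha(\{i,j\})$ is rational. Combined with $\alpha(\{i,j\})^2\in\mathbb{N}^*$, this gives $\alpha(\{i,j\})\in\mathbb{N}^*$.
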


The proof of this result relies on the following property of finite-dimensional Hermitian matrices.

\begin{lem}\label{lem:Resolvent_pap}
    Let $A$ be a finite-dimensional Hermitian matrix and $a \in \mathbb{R}_+^*$ a non-negative real number. If
    \begin{equation}\label{eq:constraint_moments}
        \forall n \in \mathbb{N}^*\,, \qquad \tr\left( A^n\right) = a^{1-n}\,,
    \end{equation}
    then $a \in \mathbb{N}^*$.
\end{lem}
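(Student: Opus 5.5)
The plan is to mimic the resolvent argument already used in Sec.~\ref{sec:bipartite_charac} and in the proof of Thm.~\ref{thm:second_charac}. Since $A$ is Hermitian, it is diagonalizable with real spectrum. For $x \in \spec(A)$, write $m_x \in \mathbb{N}^*$ for the multiplicity of $x$. The hypothesis with $n=1$ gives $\tr(A) = a^0 = 1$, so $A \neq 0$ and $\spec(A)\setminus\{0\}$ is a non-empty finite set. The goal is to show that this set is the single point $\{1/a\}$, and that its multiplicity equals $a$.

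First, fix $z \in \mathbb{C}$ with $\abs{z} > \max\bigl(\max_{x\in\spec(A)}\abs{x},\, 1/a\bigr)$. Expanding the resolvent as a convergent Neumann series gives
\begin{equation}
\sum_{x \in \spec(A)\setminus\{0\}} \frac{x\, m_x}{z-x} \;=\; \tr\left( A (z-A)^{-1}\right) \;=\; \sum_{n=1}^{+\infty} \frac{\tr(A^n)}{z^n} \;=\; a \sum_{n=1}^{+\infty} \frac{1}{(az)^n} \;=\; \frac{a}{az-1} \;=\; \frac{1}{z-1/a}\,.
\end{equation}
Both extreme sides are rational functions of $z$ that agree on a neighbourhood of infinity. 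By the identity theorem they therefore coincide on all of $\mathbb{C}$ minus the poles.

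Next, compare poles and residues, using uniqueness of partial fraction decompositions. The left-hand side has a simple pole at each $x \in \spec(A)\setminus\{0\}$, with non-zero residue $x\,m_x$. The right-hand side has a single simple pole, at $1/a$, with residue $1$. Hence $\spec(A)\setminus\{0\} = \{1/a\}$ and $m_{1/a}/a = 1$. This yields $a = m_{1/a} \in \mathbb{N}^*$, as claimed.

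I expect the only point requiring care to be the pole comparison. One must make sure that distinct non-zero eigenvalues cannot cancel each other, and this holds because the residues $x\,m_x$ are all non-zero and the poles are distinct. As a purely algebraic alternative, one can use the power sums directly: the identity $\sum_{x\neq 0} m_x x^n = a\,(1/a)^n$ for all $n\geq 1$, combined with the invertibility of a Vandermonde system on the distinct values $\{x\}\cup\{1/a\}$, gives the same conclusion.
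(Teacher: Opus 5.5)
Your proposal is correct and follows the paper's route: expand $\tr\bigl(A(z-A)^{-1}\bigr)$ as a Neumann series for large $\abs{z}$, identify it with $1/(z-1/a)$, and read off that $1/a$ is the unique non-zero eigenvalue of $A$, with multiplicity $a$. Your explicit convergence condition $\abs{z}>1/a$ and your pole/residue comparison (as well as the Vandermonde alternative) fill in details that the paper's brief summary leaves implicit.
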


While the proof is provided in Ref.~\cite{Carrozza:2026qcf} (see Lem.~5.4), we briefly summarize it here for completeness.

\begin{proof}
    The resolvent $R: z \mapsto \pa{z\id - A}^{-1}$ can be defined by a convergent series for any $z \in \bb{C}$ with $\displaystyle\abs{z} > \max_{\lambda \in \spec(A)} \abs{\lambda}$ (thanks to the spectral theorem). For any such $z$, one can then prove, as in Sec.~\ref{sec:bipartite_charac}, that $\tr\left(AR(z)\right)= \frac{1}{z-a^{-1}}$. It follows that $a^{-1}$ is the unique non-zero eigenvalue of $A$, and that it has multiplicity $a$ (which is thus an integer).
\end{proof}

With these preliminaries in place, we are now ready to prove Lem.~\ref{lem:alpha_integer_valued}.

\begin{proof} 
    Let $\ket\psi$ a finite-dimensional $D$-partite pure state, and $\alpha : \{B \subset\{1,\dots,D\}\,\vert \, \abs{B}\geq 2\}\to \mathbb{R}_+^*$ satisfying:
    \begin{equation}
        \forall G \in \cG_D^{\conn}\,, \quad \tr_G(\ket{\psi})= \prod_{B \subset\{1,\dots,D\}\,, \,\abs{B}\geq 2} \alpha(B)^{\kappa(G\vert_B) - k(G)}\,. 
    \end{equation}
    Let $B_0 \subset \{1,\ldots, D\}$ with $\abs{B_0}\geq 2$. We will prove that $\alpha(B_0)\in \mathbb{N}^*$ by downward induction on $\abs{B_0}\in\{2, \ldots, D\}$.
    
    \medskip
    
    If $\abs{B_0}=D$, then $B_0 = \{1, \ldots , D\} \eqdef 1_D$; let us therefore prove that $\alpha(1_D) \in \bb{N}^*$. Let $\tilde X$ denote the open $D$-colored graph represented in Fig.~\ref{fig:proof_integ} (as well as Fig.~\ref{fig:proofLU_SHT}), and for any $n\in \mathbb{N}^*$, let $G_n(\tilde X)$ denote the (closed) $D$-colored graphs obtained by cyclic concatenation of $n$ copies of $\tilde{X}$ (see Fig.~\ref{fig:proofLU_SHT}). With this data at hand, we introduce a Hermitian positive operator $X$ of unit trace, defined by:
        \begin{equation}
    X\eqdef \frac{\tr_{\tilde{X}}\left( \ket{\psi} \right)}{\tr\left( \tr_{\tilde{X}}\left( \ket{\psi} \right) \right)} = \frac{\tr_{\tilde{X}}\left( \ket{\psi} \right)}{\tr_{G_1(\tilde{X})}\left( \ket{\psi} \right)}\,.
    \end{equation}
For any $n\in \mathbb{N}^*$, the $n^{\rm th}$ moment of $X$ can be expressed in terms of trace-invariants as     
    \begin{equation}
        \tr\left(X^n\right) = \frac{\tr_{G_n (\tilde X)}\left( \ket{\psi} \right)}{\tr_{G_1 (\tilde X)}\left( \ket{\psi} \right)^n}\,.
    \end{equation}
Noting that, for any $B \subset \paa{1,\dots,D} \backslash 1_D$,  $\kappa(G_n(\tilde X) \vert_B) \, \propto \, n$, most contributions cancel out in this quotient, leading to:
    \begin{equation}
       \tr\left(X^n\right) = \alpha(1_D)^{1-n} \,.
    \end{equation}
    The matrix $X$ being Hermitian, we use Lem.~\ref{lem:Resolvent_pap} to conclude that $\alpha(1_D) \in \bb{N}^*$.
    
    \begin{figure}
        \centering
        \includegraphics[width = \textwidth]{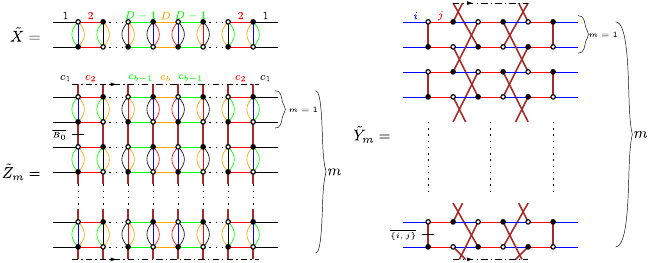}
        \caption{Matrices used in the proof of Lem.~\ref{lem:alpha_integer_valued} inspired from the matrices considered in Ref.~\cite{Carrozza:2026qcf}, see Fig.~34. In the upper left corner, the matrix $\tilde{X}$, here represented for $D=5$. On the right, the matrix $\tilde{Y}_m$, where the upper edges are connected to the bottom ones, as encoded by the dashed-dotted lines. The thick brown line represents all colors in $\bar{\paa{i,j}}$. In the bottom left corner, the matrix $\tilde{Z}_m$ is presented, with $B_0 = \paa{c_1,\dots,c_b}$. Again, the upper edges are connected to the bottom ones as encoded by the dashed-dotted lines. The thick brown line represents all colors in $\bar{B_0}$. The matrix $\tilde{Z}_m$ is here drawn for $D \geq 6$.}
        \label{fig:proof_integ}
    \end{figure}
    
    \ 
    
    Let us now consider the case $2\leq \abs{B_0}<D$, assuming by induction hypothesis that: for any $B \subset \paa{1,\dots,D}$ with $\abs{B_0}< \abs{B} \leq D$, $\alpha(B) \in \bb{N}^*$.
    
    \medskip
    
    Let us first focus on the subcase $2 < \abs{B_0} < D$. Let $m,n \in \bb{N}^*$ and consider the open $D$-colored graph $\tilde Z_m$. Cyclically gluing $n$ copies of this graph (in the horizontal direction), we obtain the closed $D$-colored graph $G_n(\tilde Z_m)$. We then introduce the Hermitian positive operator of unit trace  
    \begin{equation}
         Z_m\eqdef \frac{\tr_{\tilde{Z}_m}\left( \ket{\psi} \right)}{\tr\left( \tr_{\tilde{Z}_m}\left( \ket{\psi} \right) \right)} = \frac{\tr_{\tilde{Z}_m}\left( \ket{\psi} \right)}{\tr_{G_1(\tilde{Z}_m)}\left( \ket{\psi} \right)}\,,   
    \end{equation}
which obeys
\begin{equation}
     \tr\left((Z_m)^n\right) = \frac{\tr_{G_n (\tilde Z_m)}\left( \ket{\psi} \right)}{\tr_{G_1 (\tilde Z_m)}\left( \ket{\psi} \right)^n}\,.
\end{equation}
    To evaluate this moment, one observes that the graph $G_n(\tilde Z_m)$ was designed to obey the following combinatorial relations: for any $B\subset\{1,\ldots, D\}$ with $\abs{B}\geq 2$, 
    \begin{equation}
    \kappa(G_n(\tilde Z_m)\vert_B) = 
        \begin{cases}
            m \,& \rm{ if } B = B_0 \,,\\
            1 \, & \rm{ if } B \supsetneq B_0\,, \\
            \mu_{B,m} n \,& \rm{ otherwise,}
        \end{cases}
    \end{equation}
    where $\mu_{B,m} \in \bb{N}^*$ is a constant that does not depend on $n$. As a result, one finds that:
    \begin{equation} 
     \tr\left((Z_m)^n\right) = \pac{\alpha(B_0)^m \prod_{B \supsetneq B_0} \alpha(B)}^{1-n} \,.
    \end{equation}
   Introducing the short-hand notation $b(B_0) \eqdef \displaystyle \prod_{B \supsetneq B_0} \alpha(B)$, we remark on the one hand that $b(B_0) \in \bb{N}^*$ (by the induction hypothesis). On the other hand, from Lem.~\ref{lem:Resolvent_pap}, we also have
    \begin{equation}\label{eq:a^m_b}
        \forall m \in \bb{N}^* \,, \qquad \alpha(B_0)^m b(B_0) \in \bb{N}^*\,.
    \end{equation}
    Since both $\alpha(B_0)^2 b(B_0)$ and $\alpha(B_0) b(B_0)$ are positive integers, their quotient is a positive rational number, namely $\alpha(B_0) \in \bb{Q}_+^*$. Hence, we can find positive integers $r, s \in \bb{N}^*$ with $\gcd(r,s)=1$ such that     
    \begin{equation}
        \alpha(B_0) = \frac{r}{s} \,.
    \end{equation}
    Having established this property, we can rewrite Eq.~\eqref{eq:a^m_b} as     
    \begin{equation} \label{eq:sdivr}
        \forall m \in \bb{N}^* \,, \qquad s^m \divides r^m b(B_0)\,.
    \end{equation}    
    Since $\gcd(r,s)=1$, this implies that $s^m \divides b(B_0)$ for any $m\in \mathbb{N}^*$, which can only hold if $s=1$. Hence, $\alpha(B_0)=r\in \mathbb{N}^*$.
    
    \medskip 
    
Finally, suppose that $\abs{B_0}=2$, \ie we have distinct $i,j \in \paa{1,\dots,D}$ such that $B_0 =\paa{i,j}$. Reasoning similarly as before, for any $m,n \in \mathbb{N}^*$, we let $\tilde Y_m$ denote the open $D$-colored graph shown on the right panel of Fig.~\ref{fig:proof_integ}, and define $G_n(\tilde{Y})\in \cG_D^{\conn}$ as the cyclic contraction of $n$ copies of that graph (in the horizontal direction, relative to the depiction of $\tilde Y_m$ in Fig.~\ref{fig:proof_integ}). This allows us to introduce the Hermitian positive operator of unit trace
\begin{equation}
         Y_m\eqdef \frac{\tr_{\tilde{Y}_m}\left( \ket{\psi} \right)}{\tr\left( \tr_{\tilde{Y}_m}\left( \ket{\psi} \right) \right)} = \frac{\tr_{\tilde{Y}_m}\left( \ket{\psi} \right)}{\tr_{G_1(\tilde{Y}_m)}\left( \ket{\psi} \right)}\,,   
\end{equation}
whose moment of order $n$ is
\begin{equation}\label{eq:def_Ym}
     \tr\left((Y_m)^n\right) = \frac{\tr_{G_n (\tilde Y_m)}\left( \ket{\psi} \right)}{\tr_{G_1 (\tilde Y_m)}\left( \ket{\psi} \right)^n}\,.
\end{equation}
A combinatorial analysis shows that:
    \begin{equation}\label{eq:comb_G2nm}
        \kappa(G_n(\tilde Y_m)\vert_B) = 
        \begin{cases}
            2m  &\rm{ if } B = \paa{i,j} \,,\\
            1 &\rm{ if } B \supsetneq \paa{i,j}\,, \\
            \mu_{B,m} n &\rm{ otherwise}\,,
        \end{cases}
    \end{equation}
    with $\mu_{B,m} \in \bb{N}^*$ independent from $n$. The normalization in \eqref{eq:def_Ym} will eliminate all contributions from bubbles whose number of connected components is linear in $n$, thus leading to
    \begin{equation}
        \tr\left((Y_m)^n\right) = \pac{\alpha(\paa{i,j})^{2m} b(\paa{i,j})}^{1-n} \qquad \rm{with} \quad b(\paa{i,j}) \eqdef \prod_{B \supsetneq \paa{i,j}} \alpha(B)\,.
    \end{equation}
    By the induction hypothesis, we have that $b(\paa{i,j}) \in \bb{N}^*$. Reasoning as in the previous paragraph, it follows that $\alpha(\paa{i,j})^2 \in \bb{N}^*$. To prove that $\alpha(\paa{i,j})$ itself is an integer we will first prove that it is rational, and then conclude.    
    
    Let us prove that $\alpha(\paa{i,j}) \in \bb{Q}_+^*$, by invoking two additional families of $D$-colored graphs. For the first one, we start by partitioning the Hilbert space $\H = \H_1 \otimes \cdots \ot \H_D$ into three subsystems as
    \begin{equation}
        \H = \H_i \ot \H_j \ot \pa{\bigotimes_{c \neq i,j} \H_c}  = \H_i \ot \H_j \ot \H_{\bar{\paa{i,j}}} \,.
    \end{equation}
    We then consider the moments of the partial transpose of $\rho_{ij} \eqdef \Tr_{\H_{\bar{\paa{i,j}}}}(\ket{\psi}\bra{\psi})$ which is a Hermitian matrix. We recall that these moments were associated with the family of colored graphs $\{\PT_k^{(\bar{\paa{i,j}})}\}_{k\geq 1}$ in Sec.~\ref{sec:General} and Ref.~\cite{Carrozza:2026qcf}, \ie $\Tr((\rho_{ij}^{T_j})^k) = \Tr_{\PT_k^{(\bar{\paa{i,j}})}}(\ket{\psi})$ for any $k \geq 1$. To simplify our calculations, we introduce the notation:
    \begin{equation}
        n(i,j) \eqdef \prod_{\substack{B \subset\{1, \ldots, D\} \\  \abs{B}>2 \,,\, \{i,j\} \cap B\neq \emptyset}} \alpha(B) \,.
    \end{equation}
    We remark that this is a positive integer, as a consequence of our induction hypothesis. Furthermore, using the assumption of Lem.~\ref{lem:alpha_integer_valued}, the moments of $\rho_{ij}^{T_j}$ can be computed explicitly, leading to the following result: for any $k \in \mathbb{N}^*$, 
        \begin{equation}
        \Tr_{\PT_k^{(\bar{\paa{i,j}})}}(\ket{\psi}) = \begin{cases}
            \displaystyle\pac{\alpha(\paa{i,j})\cdot \left(  \prod_{c\in\{1, \ldots, D\} \setminus\{i,j\}} \alpha(\paa{c,i}) \alpha(\paa{c,j}) \right) \cdot 
            n(i,j)
            }^{1-k} & \rm{ if } k \rm{ is odd}, \\
            & \\
            \displaystyle\alpha(\paa{i,j}) \pac{\alpha(\paa{i,j}) \cdot \left(  \prod_{c\in\{1, \ldots, D\} \setminus\{i,j\}} \alpha(\paa{c,i}) \alpha(\paa{c,j}) \right) \cdot  
            n(i,j)
            }^{1-k} & \rm{ if } k \rm{ is even}.
        \end{cases}
    \end{equation}
    This result allows us to compute the resolvent $R(z) \eqdef \pa{z\id - \rho_{ij}^{T_j}}^{-1}$ (for $z\in \mathbb{C}$ taken in its convergent domain) explicitly. To simplify the presentation, let us use the short-hand 
    \begin{equation}
    N \eqdef \alpha(\paa{i,j})\cdot \left(  \prod_{c\in\{1, \ldots, D\} \setminus\{i,j\}} \alpha(\paa{c,i}) \alpha(\paa{c,j}) \right) \cdot n(i,j)\,.   
    \end{equation}
    A straightforward computation of the resolvent yields (see the proof of Thm.~\ref{thm:second_charac} as well as Rem.~\ref{rem:resolvent_PT} for a more detailed account of this calculation) 
    \begin{equation}
        \sum_{x \in \spec(\rho_{ij}^{T_j}) \setminus\{0\}} \frac{x m_x}{z- x} = \tr\left( \rho_{ij}^{T_j} R\left( z\right) \right)= \sum_{k=1}^{+\infty} \frac{\Tr\left( (\rho_{ij}^{T_j})^k\right)}{z^k} = \frac{1+ \alpha(\{i,j\})}{2} \frac{1}{z- \frac{1}{N}} + \frac{1- \alpha(\{i,j\})}{2} \frac{1}{z+ \frac{1}{N}} \,,
    \end{equation}
where $m_x \in \mathbb{N}^*$ denotes the multiplicity of the eigenvalue $x \in \spec(\rho_{ij}^{T_j})$. By identification, it follows that $\spec(\rho_{ij}^{T_j})\setminus \{0\}=\{\frac{1}{N}, \frac{-1}{N}\}$, and
\begin{equation}
    m_{\frac{1}{N}}= \frac{\left(\alpha(\{i,j\})+1\right) N}{2}\,, \qquad m_{-\frac{1}{N}}= \frac{\left(\alpha(\{i,j\})-1\right) N}{2}\,.
\end{equation}
This allows us to conclude that 
\begin{align}
    N &= m_{\frac{1}{N}}-m_{-\frac{1}{N}} \in \mathbb{N}^*\,, \\
    \alpha\left( \{i,j\}\right) N &= m_{\frac{1}{N}}+m_{-\frac{1}{N}} \in \mathbb{N}^*\,, 
\end{align}
and, therefore, that
\begin{equation}
    \alpha\left( \{i,j\}\right) = \frac{\alpha\left( \{i,j\}\right) N}{N}\in \mathbb{Q}_+^*\,.
\end{equation}

    Now that we know $\alpha(\paa{i,j})$ to be rational, we can find $r,s \in \mathbb{N}^*$ such that: $\alpha(\paa{i,j})= \frac{r}{s}$ and $\gcd(r,s)=1$. It follows that $\frac{r^2}{s^2} = \alpha(\paa{i,j})^2\in \mathbb{N}^*$, and given that $\gcd(r^2,s^2)$, this implies $s=1$ \ie $\alpha(\paa{i,j})=r \in \mathbb{N}^*$.
    
\end{proof}

\addcontentsline{toc}{section}{References}

\printbibliography

\end{document}